\DeclareMathOperator*{\argmin}{arg\,min}
\def\mfd{\MM_{\br}}
\def\X{\boldsymbol{\mathcal{X}}}
\def\C{\boldsymbol{\mathcal{C}}}
\def\S{\boldsymbol{\mathcal{S}}}
\def\T{\boldsymbol{\mathcal{T}}}
\def\G{\boldsymbol{\mathcal{G}}}
\def\W{\boldsymbol{\mathcal{W}}}
\def\Y{\boldsymbol{\mathcal{Y}}}
\def\Z{\boldsymbol{\mathcal{Z}}}
\def\T{\boldsymbol{\mathcal{T}}}
\def\D{\boldsymbol{\mathcal{D}}}
\def\A{\boldsymbol{\mathcal{A}}}
\def\M{\boldsymbol{\mathcal{M}}}
\def\r{\boldsymbol{r}}
\def\errorE{\boldsymbol{\mathcal{E}}}
\def\frakD{\mathfrak{D}}
\def\frakE{\mathfrak{E}}
\def\kprune{\textsf{k}_{\textsf{pr}}}
\def\frakl{\mathfrak{l}}
\def\kdinf{\textsf{k}_{\infty}}
\def\errinf{\textsf{Err}_{\infty}}
\def\errrank{\textsf{Err}_{2\br}}
\def\opM{\mathcal{M}}
\def\opH{\mathscr{H}^{{\rm HO}}_{\br}}
\def\E{\mathbb{E}}
\def\frakL{\mathfrak{L}}
\def\calE{{\cal  E}}
\def\calM{{\cal  M}}
\def\calP{{\cal  P}}
\def\calS{{\cal  S}}
\newcommand{\bfm}[1]{\ensuremath{\mathbf{#1}}}
   \def\bA{\bfm A}  
     \def\BB{\mathbb{B}}
   \def\bC{\bfm C}  
\def\be{\bfm e}   \def\bE{\bfm E}  \def\EE{\mathbb{E}}
   \def\bI{\bfm I}  
     \def\JJ{\mathbb{J}}
   \def\bL{\bfm L}  
   \def\bM{\bfm M}  \def\MM{\mathbb{M}}
     \def\NN{\mathbb{N}}
     \def\OO{\mathbb{O}}
     \def\PP{\mathbb{P}}
\def\br{\bfm r}   \def\bR{\bfm R}  \def\RR{\mathbb{R}}
   \def\bS{\bfm S}  \def\SS{\mathbb{S}}
\def\bt{\bfm t}   \def\bT{\bfm T}  \def\TT{\mathbb{T}}
   \def\bU{\bfm U}  \def\UU{\mathbb{U}}
\def\bv{\bfm v}   \def\bV{\bfm V}  
   \def\bW{\bfm W}  
\def\bx{\bfm x}   \def\bX{\bfm X}  
   \def\bZ{\bfm Z}
\def\bSigma{\bfm \Sigma}
\def\bLambda{\bfm \Lambda}
\def\R{\mathbb{R}}
\def\pro{\mathcal{P}}
\def\cS{\mathcal{S}}
\def\usigma{\underline{\lambda}}
\def\bsigma{\bar{\lambda}}
\def\rmax{\bar{r}}
\def\dmax{\bar{d}}
\def\dmin{\underline{d}}
\def\hat{\widehat}
\def\wt{\widetilde}
\def\no{\notag}
\def\reshape{\textsf{reshape}}
\newcommand\inp[2]{\langle #1, #2 \rangle}
\newcommand{\RN}[1]{%
  \textup{\uppercase\expandafter{\romannumeral#1}}%
}
\newcommand\lr[1]{^{\langle #1 \rangle}}
\newcommand\fro[1]{\| #1 \|_{\rm F}}
\newcommand\op[1]{\|#1\|}
\newcommand\nuc[1]{\|#1\|_{\ast}}
\newcommand\subg[1]{\| #1 \|_{\psi_2}}
\newcommand\ps[1]{#1^{\dagger}}
\newcommand{\ijk}[1]{[#1]_{\omega}}
\newcommand{\pol}[1]{\calP_{\Omega_l}(#1)}
\newcommand{\poll}[1]{\calP_{\Omega^*\backslash\Omega_l}(#1)}
\newcommand{\subw}[1]{[#1]_{\omega}}
\newtheorem{theorem}{Theorem}
\newtheorem{assumption}{Assumption}
\newtheorem{lemma}[theorem]{Lemma}
\newtheorem{corollary}[theorem]{Corollary}
\numberwithin{theorem}{section}
\numberwithin{equation}{section}
\def\trunc{\textsf{Trunc}}
\def\maxl{\bar{\lambda}}
\def\minl{\underline{\lambda}}
\newtheorem{exmp}{Example}[section]
\begin{document}

\title{Generalized Low-rank plus Sparse Tensor Estimation by Fast Riemannian Optimization}

\author{Jian-Feng Cai, Jingyang Li and Dong Xia\thanks{
    Jian-Feng Cai's research was partially supported by Hong Kong RGC Grant GRF 16310620 and GRF 16309219. Dong Xia's research was partially supported by Hong Kong RGC Grant ECS 26302019 and GRF 16303320, 16300121.}\hspace{.2cm}\\
    {\small  Hong Kong University of Science and Technology}}

\date{(\today)}

\maketitle

\begin{abstract}

We investigate a generalized framework to estimate a latent low-rank plus sparse tensor, where the low-rank tensor often captures the multi-way principal components and the sparse tensor accounts for potential model mis-specifications or heterogeneous signals that are unexplainable by the low-rank part. The framework flexibly covers both linear and generalized linear models, and can easily handle continuous or categorical variables. We propose a fast algorithm by integrating the Riemannian gradient descent and a novel gradient pruning procedure. Under suitable conditions, the algorithm converges linearly and can simultaneously estimate both the low-rank and sparse tensors. The statistical error bounds of final estimates are established in terms of the gradient of loss function. The error bounds are generally sharp under specific statistical models, e.g., the sub-Gaussian robust PCA and Bernoulli tensor model. Moreover, our method achieves non-trivial error bounds for heavy-tailed tensor PCA whenever the noise has a finite $2+\varepsilon$ moment. We apply our method to analyze the international trade flow dataset and the statistician hypergraph co-authorship network, both yielding new and interesting findings.

\end{abstract}

\section{Introduction}
In recent years, massive {\it multi-way} datasets, often called {\it tensor} data, have routinely arisen in diverse fields. An $m$th-order tensor is a multilinear array with $m$ ways, e.g., matrices are second order tensors. These multi-way structures often emerge when, to name a few, information features are collected from distinct domains \citep{bi2020tensors,liu2017characterizing,han2020optimal,bi2018multilayer,zhang2020denoising,wang2019multiway}, the multi-relational interactions or higher-order interactions of entities are present \citep{ke2019community,jing2020community, luo2020tensor,wang2020learning,pensky2019spectral}, or the higher-order moments of data are explored \citep{anandkumar2014tensor, sun2017provable,hao2020sparse}. There is an increasing demand for effective methods to analyze large and complex tensorial datasets. 
Low-rank tensor models are a class of statistical models for describing and analyzing tensor datasets. At its core is the assumption that the observed data obeys a distribution that is characterized by a {\it latent} low-rank tensor $\T^{\ast}$. Oftentimes, analyzing tensor datasets boils down to estimating the low-rank $\T^{\ast}$. This procedure is usually referred to as the {\it low-rank tensor estimation}. Together with specifically designed algorithms, low-rank tensor methods have demonstrated encouraging performances on many real-world applications and datasets such as the spatial and temporal pattern analysis of human brain developments \citep{liu2017characterizing}, community detection on multi-layer networks and hypergraph networks \citep{jing2020community,ke2019community,wang2020learning}, multi-dimensional recommender system \citep{bi2018multilayer},  learning the hidden components of mixture models \citep{anandkumar2014tensor}, analysis of brain dynamic functional connectivity\citep{sun2019dynamic}, image denoising and recovery \citep{xia2017statistically} and etc. 

However, the exact low-rank assumption is stringent and sometimes untrue, making low-rank tensor methods vulnerable under model misspecification or in the existence of outliers or heterogeneous signals. While low-rank structure underscores the multi-way principal  components, it fails to capture the dimension-specific outliers or heterogeneous signals that often carry distinctive and useful information.  Consider the international trade flow dataset (see Section~\ref{sec:real_app}) that forms a third-order tensor by the dimensions ${\rm countries}\times {\rm countries}\times {\rm commodities}$. On the one hand we observe that the low-rank structure is capable to reflect the shared similarities among countries such as their geographical locations and economic structures, but on the other hand the low-rank structure tends to disregard the heterogeneity in the trading flows of different countries. This vital heterogeneity often reveals distinctive trading patterns of certain commodities for some countries. Moreover, the heterogeneous signals are usually full-rank and strong that can deteriorate the estimates of the multi-way low-rank principal components. We indeed observe that by filtering out these outliers or heterogeneous signals, the resultant low-rank estimates become more insightful. It is therefore advantageous to decouple the low-rank signal and the heterogeneous one in the procedure of low-rank tensor estimation. Fortunately, these outliers or heterogeneous signals are usually representable by a {\it sparse} tensor, which, is identifiable in generalized low-rank tensor models under suitable conditions. 

In this paper, we propose a generalized low-rank plus sparse tensor model to analyze tensorial datasets. Our fundamental assumption is that the observed data is sampled from a statistical model characterized by the latent tensor $\T^{\ast}+\S^{\ast}$. We assume $\T^{\ast}$ to be low-rank capturing the multi-way principal components, and $\S^{\ast}$ to be sparse (the precise definition of being ``sparse" can be found in Section~\ref{sec:model}) addressing potential model mis-specifications, outliers or heterogeneous signals that are unexplainable by the low-rank part.  Our framework is very flexible which covers both linear and generalized linear models, and can easily handle both quantitative and categorical data.  Compared with existing literature on low-rank tensor methods \citep{gu2014robust, xia2017statistically, zhang2018tensor, yuan2017incoherent, sun2017provable,hao2020sparse,xia2019normal}, our framework and method are more robust, particularly when the latent tensor is only {\it approximately} low-rank or when the noise have heavy tails.   A special case of our model, robust sub-Gaussian tensor PCA, was proposed in \cite{gu2014robust}. Their method was based on matrix unfolding and is thus statistically sub-optimal. The robust tensor PCA model studied by \cite{zhou2017outlier} was based on low tubal-rank and their method is sub-optimal for treating low Tucker-rank tensors. 
The Bernoulli tensor model introduced in \cite{wang2020learning,yu2016learning}, which cannot handle sparse corruptions, is also a special case of our model. See Table~\ref{table:comparison} for the comparison with related works. Compared with the aforementioned works on robust tensor estimation, our model is more general covering a much wider spectrum of tensor-related applications and our method deals with nearly all kinds of tensor data -- Poisson (if intensity is strong), Bernoulli, heavy-tailed data, to name but a few. For instance, to our best knowledge, we derive the first non-trivial convergence rate for heavy-tailed tensor PCA. Meanwhile, our method is robust to model mis-specification up to sparse corruptions.   See numerical comparison results in Section~\ref{sec:numerical}.  We note that the generalized low-rank plus sparse matrix model has been investigated by \cite{zhang2018unified} and \cite{robin2020main}.  However, the estimating procedure is more involved for tensors,  the technical proofs are more challenging,  and treating tensors by matrix unfolding is generally statistically sub-optimal. 

With a properly chosen loss function $\frakL(\cdot)$,  our estimating procedure is formulated into an optimization framework, which aims at minimizing $\frakL(\T+\S)$ subject to the low-rank and sparse constraints on $\T$ and $\S$, respectively.  We propose a new and fast algorithm to solve for the underlying tensors of interest. The algorithm is iterative and consists of two main ingredients: the {\it Riemannian gradient descent} and the {\it gradient pruning}. By viewing the low-rank solution as a point on the Riemannian manifold, we adopt Riemannian gradient descent to update the low-rank estimate. Basically, the Riemannian gradient is the projection of the {\it vanilla gradient} $\nabla \frakL$ onto 
the tangent space of a Riemannian manifold. Unlike the vanilla gradient that is usually full-rank, the Riemannian gradient is often low-rank which can significantly boost up the speed of updating the low-rank estimate. Provided with a reliable estimate of the low-rank tensor $\T^{\ast}$, the gradient pruning is a fast procedure to update our estimate of the sparse tensor $\S^{\ast}$. It is based on the belief that, under suitable conditions, if the current estimate $\hat \T$ is close to $\T^{\ast}$ entry-wisely, the entries of the gradient $\nabla \frakL(\hat\T)$ should have small magnitudes on the complement of the support of $\S^{\ast}$. Then it suffices to run a screening of the entries of $\nabla \frakL(\hat\T)$, locate its entries with large magnitudes and choose $\hat\S$ to minimize the magnitudes of those entries of $\nabla \frakL(\hat\T+\hat\S)$. The procedure looks like pruning the gradient $\nabla \frakL(\hat\T)$ -- thus the name gradient pruning. The algorithm alternates between Riemannian gradient descent and gradient pruning until reaching a locally optimal solution.  

\subsection{Our Contributions}
We propose a novel and generalized framework to analyze tensor datasets.  Our framework allows the latent tensor to be high-rank, as long as it is within the sparse perturbations of a low-rank tensor. The sparse tensor can account for potential mis-specifications of the exact low-rank tensor models,  making our framework robust to outliers, heavy-tailed distributions, heterogeneous signals, etc. Meanwhile, our framework flexibly covers both linear and generalized linear models, and is applicable to both continuous and categorical variables. 
Details are summarize in Table \ref{table:comparison}.
\begin{table}[h!]
	\small
	\centering
	\begin{tabular}{c||c|c|c|c|c} 
		\hline
		Methods & Model & \thead{Sparse outlier} & \thead{SG-RPCA error rate} & \thead{Heavy-tailed PCA}& Poisson RPCA \\ [0.5ex] 
		\hline\hline
		TBM \cite{wang2020learning}& Binary&No&N/A&No&No\\
		\hline
		tubal-tRPCA \cite{lu2016tensor}&RPCA&Yes&Only noiseless&No&No\\
		\hline
		Convex \cite{gu2014robust}&RPCA&Yes&$O(rd^{m-1}+|\Omega^{\ast}|)$&No&No\\
		\hline
		Projected GD \cite{chen2019non}&GLM&No&N/A & No&No\\
		\hline
		Jointly GD \cite{han2020optimal}&GLM&No& $O(mrd)$ when $|\Omega^{\ast}|=0$ & No&No\\
		\hline
		{\bf RGrad (this paper)} & {\bf GLM} & {\bf Yes} & $O(mrd+|\Omega^{\ast}|)$ & {\bf Yes}&{\bf Yes}\\
		\hline 
	\end{tabular}
	\caption{Comparison with related literature. Here, GLM stands for generalized linear model. For the sub-Gaussian robust PCA (SG-RPCA) error rate, we assume $d_j\asymp d, j\in[m]$ for simplicity.  We remark that a strong Poisson intensity is required by RGrad.  See Lemma 8.2 in the supplement.  }
	\label{table:comparison}
\end{table}

We develop a new and fast algorithm which can simultaneously estimate both the low-rank and the sparse tensors. The algorithm is based on the integration of Riemannian gradient descent and a novel gradient pruning procedure.  Our proposed method works for both linear and generalized linear models, adapts to additional sparse perturbations, and is reliable in the existence of stochastic noise. We prove, in a general framework, that our algorithm converges fast even with fixed step-sizes, and establish the statistical error bounds of final estimates. The error bounds are sharp and proportional to the intrinsic degrees of freedom under many specific statistical models.

To showcase the superiority of our methods, we consider applying our framework to interesting examples. The first application is on the sub-Gaussian robust tensor principal component analysis (SG-RPCA) where the observation is simply $\T^{\ast}+\S^{\ast}$ with additive sub-Gaussian noise. We show that our method can recover both $\T^{\ast}$ and $\S^{\ast}$ with sharp error bounds, and recover the support of $\S^{\ast}$ under fairly weak conditions. 
The second example is on the tensor PCA when the noise has heavy tails. We show that our framework is naturally immune to the potential outliers caused by the heavy-tailed noise, and demonstrate that our method achieves non-trivial error bounds as long as the noise have a finite $2+\varepsilon$ moment. This bridges a fundamental gap in the understanding of tensor PCA since the existing methods are usually effective only under sub-Gaussian or sub-Exponential noise. We then apply our framework to learn the latent low-rank structure $\T^{\ast}$ from a binary tensorial observation, assuming the Bernoulli tensor model with a general link function, e.g., the logistic and probit link. Compared with the existing literature, our method is robust and allows an arbitrary but sparse corruption. Finally, our method is applied to Poisson tensor RPCA under a strong intensity condition.  To our best knowledge, our results are the first in these three applications. We also provide computationally fast methods to obtain good initializations.  

Lastly, we employ our method to analyze two real-world datasets: the international commodity trade flow network (continuous variables) and the statistician hypergraph co-authorship network (binary variables). We observe that the low-rank plus sparse tensor framework yields intriguing and new findings that are unseen by the exact low-rank tensor methods. The sparse tensor can nicely capture informative patterns which are overlooked by the multi-way principal components. 

\subsection{Notations and Preliminaries of Tensor}\label{sec:notation}
We use calligraphic-font bold-face letters (e.g. $\T, \X, \T_1$) to denote tensors, bold-face capital letters (e.g. $\bT,\bX,\bT_1$) for matrices, bold-face lower-case letters (e.g. $\bt,\bx,\bt_1$) for vectors and blackboard bold-faced letters (e.g. $\RR, \MM, \UU, \TT$) for sets. We use square brackets with subscripts (e,g. $[\T]_{i_1,i_2,i_3}, [\bT]_{i_1,i_2}, [\bt]_{i_1}$) to represent corresponding entries of tensors, matrices and vectors, respectively. Denote $[\T]_{i_1,:,:}$ and $[\bT]_{i_1,:}$ the $i_1$-th frontal-face and $i_1$-th row-vector of $\T$ and $\bT$, respectively. 
Denote $\|\cdot\|_{\rm F}$ the Frobenius norm of matrices and tensors, and denote $\|\cdot\|_{\ell_p}$ the $\ell_p$-norm of vectors or vectorized tensors for $0\leq p\leq \infty$. Thus, $\|\bv\|_{\ell_0}$ represents the number of non-zero entries of $\bv$, and $\|\bv\|_{\ell_\infty}$ denotes the largest magnitude of the entries of $\bv$. The $j$-th canonical basis vector is written as $\be_j$ whose actual dimension might vary at different appearances. We denote $C, C_1, C_2, c, c_1, c_2\cdots$ some absolute constants whose actual values can change at different lines.

An $m$-th order tensor is an $m$-way array, e.g., $\T\in \RR^{d_1\times\cdots\times d_m}$ means that its $j$-th dimension has size $d_j$. Thus, $\T$ has in total $d_1\cdots d_m$ entries. 
The $j$-th matricization (also called unfolding) $\calM_j(\cdot):\RR^{d_1\times \cdots\times d_m}\mapsto \RR^{d_j\times d_j^{-}}$ with $d_j^-=(d_1\cdots d_m)/d_j$ is a linear mapping so that, for example if $m=3$, $[\calM_1(\T)\big]_{i_1, (i_2-1)d_3+i_3}=[\T]_{i_1,i_2,i_3}$ for $\forall i_j\in[d_j]$. Then, the collection ${\rm rank}(\T):=\big({\rm rank}(\calM_1(\T)),\cdots, {\rm rank}(\calM_m(\T))\big)^{\top}$ is called the multi-linear ranks or {\it Tucker ranks} of $\T$. Given a matrix $\bW_j\in \RR^{p_j\times d_j}$ for any $j\in[m]$, the multi-linear product, denoted by $\times_j$,  between $\T$ and $\bW_j$ is defined by 
$
[\T\times_j \bW_j]_{i_1,\cdots,i_m}:=\sum\nolimits_{k=1}^{d_j}[\T]_{i_1,\cdots,i_{j-1},k,i_{j+1},\cdots,i_m}\cdot [\bW_j]_{i_j,k}, \ \forall i_{j'}\in [d_{j'}] \textrm{ for } j'\neq j; \forall i_j\in[p_j]. 
$
If $\T$ has Tucker ranks $\br=(r_1,\cdots, r_m)^{\top}$, there exist $\C\in\RR^{r_1\times\cdots\times r_m}$ and $\bU_j\in\RR^{d_j\times r_j}$ satisfying $\bU_j^{\top}\bU_j=\bI_{r_j}$ for all $j\in[m]$ such that $
\T=\bC\cdot\llbracket\bU_1,\cdots,\bU_m\rrbracket:=\C\times_1\bU_1\times_2\cdots\times_m \bU_m.
$
This  is referred to as the Tucker decomposition of a low-rank tensor. Tucker ranks and decomposition are well-defined. Readers are suggested to refer \citep{kolda2009tensor} for more details and examples on tensor decomposition and tensor algebra.

\section{General Low-rank plus Sparse Tensor Model} \label{sec:model}
Suppose that we observe data $\frakD$, which can be, for instance, simply a tensorial observation such as the binary adjacency tensor of a hypergraph network or multi-layer network \citep{ke2019community, jing2020community,luo2020tensor,wang2020learning,jin2015fast,ji2016coauthorship}; a real-valued tensor describing multi-dimensional observations \citep{han2020optimal,  sun2017provable, sun2019dynamic,liu2017characterizing}; or a collection of pairs of tensor covariate and real-valued response \citep{hao2020sparse, zhang2020islet, xia2020inference, raskutti2019convex, chen2019non}. 
At the core of our model is the assumption that the observed $\frakD$ is sampled from a distribution characterized by a latent large tensor, denoted by $\T^{\ast}+\S^{\ast}$, where $\T^{\ast}$ has small multi-linear ranks and $\S^{\ast}$ is sparse. Unlike the exact low-rank tensor models, the additional sparse tensor $\S^{\ast}$ can account for potential model mis-specifications and outliers. Consider that $\T^{\ast}$ has multi-linear ranks $\br=(r_1,\cdots,r_m)^{\top}$ with $r_j\ll d_j$ so that $\T^{\ast}\in\MM_{\br}$ where 
$\MM_{\br}:=\big\{\W\in\RR^{d_1\times\cdots\times d_m}: {\rm rank}\big(\calM_j(\W)\big)\leq r_j,\ \forall j\in[m]\big\}.$ 
As for the sparse tensor, we assume that each slice of $\S^{\ast}$ has at most $\alpha$-portion of entries being non-zero for some $\alpha\in(0,1)$. We write $\S^{\ast}\in\SS_{\alpha}$ where the latter is defined by 
$\SS_{\alpha}:=\big\{\S\in\RR^{d_1\times\cdots\times d_m}: \|\be_i^{\top}\calM_j(\S)\|_{\ell_0}\leq \alpha d_j^-,\ \forall j\in[m], i\in[d_j]\big\},$
 where $\be_i$ denotes the $i$-th canonical basis vector whose dimension varies at different appearances.

When the low-rank tensor $\T^{\ast}$ is also sparse, it is generally impossible to distinguish between $\T^{\ast}$ and its sparse counterpart $\S^{\ast}$. To make $\T^{\ast}$ and $\S^{\ast}$ identifiable, we assume that $\T^{\ast}$ satisfies the {\it spikiness condition} meaning that the information it carries spreads fairly across nearly all its entries. Put differently, the spikiness condition enforces $\T^{\ast}$ to be dense -- thus distinguishable from the sparse $\S^{\ast}$. 
This is a typical condition in robust matrix estimation \citep{candes2011robust, chen2020bridging} and tensor completion \citep{xia2019polynomial,xia2017statistically,cai2019nonconvex}. 
For exact low-rank tensor models where $\S^{\ast}$ is absent, this assumption is generally not required. See Section~\ref{sec:exact_lowrank} in the supplementary file for more details. 

\begin{assumption}\label{assump:spikiness}
	Let $\T^{\star}\in \MM_{\br}$, and suppose there exists $\mu_1 > 0$ such that the following holds:
	$
	\textsf{Spiki}(\T^{\ast}):=(d^*)^{1/2}\|\T^*\|_{\ell_{\infty}} /\fro{\T^*} \leq \mu_1,
	$
	where $d^{\ast}=d_1\cdots d_m$.
\end{assumption}
We denote $\UU_{\br, \mu_1}:=\big\{ \T\in \MM_{\br}: \textsf{Spiki}(\T)\leq \mu_1\big\}$ the set of low-rank tensors with spikiness bounded by $\mu_1$.

\textit{Relation between spikiness condition and incoherence condition.}
Let $\T^{\ast}\in \MM_{\br}$ admit a Tucker decomposition $\T^{\ast}=\C^{\ast}\cdot\llbracket \bU_1^{\ast},\cdots,\bU_m^{\ast}\rrbracket$ with $\C^{\ast}\in\RR^{r_1\times\cdots\times r_m}$ and $\bU_j^{\ast}\in\RR^{d_j\times r_j}$ satisfying $\bU_j^{\ast\top}\bU_j^{\ast}=\bI_{r_j}$ for all $j\in [m]$. Suppose that there exists $\mu_0>0$ so that 
$
\mu(\T^{\ast}):=\max_{j\in[m]}\ \max_{i\in [d_j]}\ \|\be_i^{\top}\bU_j^{\ast}\|_{\ell_2}\cdot (d_j/r_j)^{1/2}\leq \sqrt{\mu_0}. 
$ 
Then, $\T^{\ast}$ is said to satisfy the incoherence condition with constant $\mu_0$. 
 The spikiness condition implies the incoherence condition {\it and vice versa}. See Lemma~\ref{lemma:spikiness_incoherence}. 

After observing data $\frakD$, our goal is to estimate the underlying $(\T^{\ast},\S^{\ast})\in (\UU_{\br,\mu_1}, \SS_{\alpha})$. Oftentimes, the problem is formulated as an optimization program equipped with a properly chosen loss function. More specifically, let $\frakL(\cdot):=\frakL_{\frakD}(\cdot): \RR^{d_1\times\cdots\times d_m}\mapsto \RR$ be a smooth (see Assumption~\ref{assump:lowrank}) loss function whose actual form depends on the particular applications. The estimators of $(\T^{\ast}, \S^{\ast})$ are then defined by
$(\hat \T_{\gamma},\ \hat \S_{\gamma}):=\argmin_{\T\in \UU_{\br,\mu_1}, \S\in \SS_{\gamma\alpha}}\ \frakL(\T+\S),$ 
where $\gamma>1$ is a tuning parameter determining the desired sparsity level of $\hat \S_{\gamma}$. For ease of exposition, we tentatively assume that the true ranks are known. In real-world applications, $\br, \alpha$ can be selected by a BIC-type criterion (\ref{eq:BIC}). See Section~\ref{sec:numerical} for more details.  
This generalized framework covers many interesting and important examples as special cases. These examples are investigated more closely in Section~\ref{sec:app}.

\begin{exmp}(Tensor robust principal component analysis)\label{exp:rpca}
For tensor RPCA, the data observed is simply a tensor $\A\in\RR^{d_1\times\cdots\times d_m}$. The basic assumption of tensor PCA is the existence of a low-rank tensor $\T^{\ast}$, called the ``signal", planted inside of $\A$. See, e.g. \citep{zhang2018tensor, richard2014statistical} and references therein. The exact low-rank condition on the ``signal" is sometimes stringent. Tensor robust PCA \citep{lu2016tensor, robin2020main} relaxes this condition by assuming that the ``signal" is the sum of a low-rank tensor $\T^{\ast}$ and a sparse tensor $\S^{\ast}$. With additional additive stochastic noise, the Sub-Gaussian RPCA (SG-RPCA) model assumes $\A=\T^{\ast}+\S^{\ast}+\Z$ with $(\T^{\ast},\S^{\ast})\in(\UU_{\br,\mu_1}, \SS_{\alpha})$ and $\Z$ being a noise tensor having i.i.d. random centered sub-Gaussian entries. We reserve RPCA exclusively for SG-RPCA in the subsequent chapters. 
Given $\A$, the goal is to estimate $\T^{\ast}$ and $\S^{\ast}$. A suitable loss function is 
$\frakL(\T+\S):= \frac{1}{2}\|\T+\S-\A\|_{\rm F}^2$, which measures the goodness-of-fit by $\T+\S$ to data. The estimator $(\hat\T_{\gamma}, \hat \S_{\gamma})$ is thus defined by
\begin{align}\label{eq:rpca_loss}
(\hat\T_{\gamma}, \hat \S_{\gamma}):=\argmin_{\T\in \UU_{\br,\mu_1}, \S\in \SS_{\gamma\alpha}} \frac{1}{2}\|\T+\S-\A\|_{\rm F}^2.
\end{align}
\end{exmp}

\begin{exmp}(Learning low-rank structure from binary tensor)\label{exp:binary_tensor}
In many applications, the observed data $\A$ is merely a binary tensor. Examples include the adjacency tensor in multi-layer networks \citep{jing2020community, paul2020spectral}, brain structural connectivity networks \citep{wang2019common,wang2020learning} and etc. Following the Bernoulli tensor model proposed in \citep{wang2020learning} or generalizing the $1$-bit matrix completion model \citep{davenport20141}, we assume that there exist $(\T^{\ast},\S^{\ast})\in (\UU_{\br, \mu_1}, \SS_{\alpha})$ satisfying
$[\A]_{\omega}\stackrel{{\rm ind.}}{\sim} {\rm Bernoulli}\big(p([\T^{\ast}+\S^{\ast}]_{\omega})\big), \forall \omega\in[d_1]\times\cdots\times [d_m],$ 
where $p(\cdot):\RR\mapsto [0,1]$ is a suitable inverse link function. Popular choices of $p(\cdot)$ include the logistic link $p(x)=(1+e^{-x/\sigma})^{-1}$ and probit link $p(x)=1-\Phi(-x/\sigma)$ where $\sigma>0$ is a scaling parameter. We note that, due to potential symmetry in networks, the entry independence statement might only hold for a subset of its entries (e.g., upper-triangular entries in a single-layer undirected network). Compared with the exact low-rank Bernoulli tensor model \citep{wang2020learning}, ours is more robust to model mis-specifications and outliers. For any pair $(\T, \S)\in (\UU_{\br,\mu_1}, \SS_{\gamma\alpha})$, a suitable loss function is the negative log-likelihood. By maximizing the log-likelihood, we define
\begin{align}\label{eq:hatT_gamma_binary_tensor}
(\hat \T_{\gamma}, \hat \S_{\gamma}):=\argmin_{\T\in\UU_{\br,\mu_1}, \S\in\SS_{\gamma\alpha}} -\sum_{\omega}\big([\A]_{\omega}\log p([\T+\S]_{\omega})+\big(1-[\A]_{\omega}\big)\log \big(1-p([\T+\S]_{\omega})\big)\big).
\end{align}
\end{exmp}

\section{Estimating by Non-convex Optimization}\label{sec:method}

Suppose that a pair\footnote{We will show, in Section~\ref{sec:theory}, that obtaining a good initialization for $\S$ is, under suitable conditions, easy once a good initialization for $\T$ is available.} of initializations near the ground truth is provided. Our estimating procedure adopts a gradient-based iterative algorithm to search for a local minimum of the loss. Since the problem is a constrained optimization, the major difficulty is on the enforcement of constraints during gradient descent updates. To ensure low-rankness, we apply the Riemannian gradient descent algorithm that is fast and simple to implement. Meanwhile, we enforce the sparsity constraint via a gradient-based pruning algorithm. 

\subsection{Riemannian Gradient Descent}\label{sec:rgd}
Provided with $(\hat \T_l, \hat\S_l)$ at the $l$-th iteration, the {\it vanilla} gradient of the loss function is $ \G_l=\nabla \frakL(\hat \T_l+\hat \S_l)$. The naive gradient descent updates the low-rank part to  $\hat \T_l -\beta \G_l$ with a carefully chosen stepsize $\beta>0$, and then projects it back into the set $\MM_{\br}$. This procedure is sometimes referred to as the projected gradient descent (PGD) \citep{chen2019non}. 
Oftentimes, the gradient $\G_l$ has full ranks and thus the subsequent low-rank projection is computationally expensive. Observe that $\hat\T_{l}$ is an element in the smooth manifold $\MM_{\br}$. Meanwhile, due to the smoothness of loss function, it is well recognized that the optimization problem can be solved by Riemannian optimization \citep{edelman1998geometry,kressner2014low} on the respective smooth manifold. Therefore, instead of using the vanilla gradient $\G_l$, it suffices to take the {\it Riemannian gradient}, which corresponds to the steepest descent of the loss but is restricted to the tangent space of $\MM_{\br}$ at the point $\hat \T_l$. The Riemannian gradient is low-rank rendering amazing computational speed-up. See numerical comparison in Section~\ref{sec:numerical}.

 An essential ingredient of Riemannian gradient descent is to project the vanilla gradient onto the tangent space of $\MM_{\br}$. 
 Let $\TT_l$ denote the tangent space of $\MM_{\br}$ at $\hat \T_l$. Suppose that $\hat\T_l$ admits a Tucker decomposition $\hat\T_l=\hat\C_l\cdot \llbracket \hat\bU_{l,1},\cdots,\hat\bU_{l,m}\rrbracket$. The tangent space $\TT_l$ \citep{cai2020provable} has an explicit form written as
$\TT_{l}=\big\{\D_l\times_{i\in[m]}\hat\bU_{l, i}+\sum\nolimits_{i=1}^m \hat \C_l\times_{j\in[m]\backslash i} \hat \bU_{l,j}\times_{i}\bW_{i}: \D_l\in\RR^{\br}, \bW_i\in\RR^{d_i\times r_i}, \bW_i^{\top}\hat\bU_{l,i}={\bf 0} \big\} $. 
Clearly, all elements in $\TT_l$ has their multi-linear ranks upper bounded by $2\br$. Given the vanilla gradient $\G_l$, its projection onto $\TT_l$ is defined by $\calP_{\TT_l}(\G_l):=\argmin_{\X\in \TT_l} \|\G_l-\X\|_{\rm F}^2$. The summands in $\TT_l$ are all orthogonal to each other, allowing fast computation for $\calP_{\TT_l}(\G_l)$.

By choosing a suitable stepsize $\beta>0$, the update by Riemannian gradient descent yields
$
\W_l:=\hat\T_l-\beta \calP_{\TT_l} \G_l.
$ But $\W_l$ may fail to be an element in $\MM_{\br}$. To enforce the low-rank constraint, another key step in Riemannian optimization is the so-called {\it retraction},  which projects a general tensor $\W_l$ back to the smooth manifold $\MM_{\br}$. This procedure amounts to a low-rank approximation of the tensor $\W_l$. In addition, we also need to enforce the spikiness (or incoherent) condition on the low-rank estimate.  Towards that end, we first truncate $\W_l$ entry-wisely by $\zeta_{l+1}/2$ for some easily chosen threshold $\zeta_{l+1}$ and obtain $\wt{\W}_l$, and then retract the truncated tensor $\wt\W_l$ back to the manifold $\MM_{\br}$. We show that a low-rank approximation of $\wt\W_l$ by a simple higher order singular value decomposition (HOSVD) guarantees the convergence of Riemannian gradient descent algorithm. More specifically, for all $j\in[m]$, compute $\bV_{l,j}$ which is the top-$r_j$ left singular vectors of $\calM_j(\wt\W_l)$. The HOSVD approximation of $\wt\W_l$ with multi-linear ranks $\br$ is obtained by 
$	\opH(\wt\W_l):=(\wt\W_l\times_{j=1}^m \bV_{l,j}^{\top})\cdot \llbracket\bV_{l,1},\cdots,\bV_{l,m}\rrbracket $. 
Basically, retraction by HOSVD is the generalization of low-rank matrix approximation by singular value thresholding, although HOSVD is generally not the optimal low-rank approximation of $\wt\W_l$. See, e.g. \citep{zhang2018tensor, xia2019sup, liu2017characterizing, richard2014statistical} for more explanations. Now put these two steps together and we define a trimming operator $\textsf{Trim}_{\zeta,\br}$.
\vspace{-0.2cm}
\begin{align}\label{eq:trim2}
	\textsf{Trim}_{\zeta, \br}(\W):=\opH(\wt\W),\quad \textrm{ where } [\wt\W]_{\omega}=\begin{cases}
		(\zeta/2)\cdot {\rm Sign}([\W]_{\omega}),& \textrm{ if }|[\W]_{\omega}|>\zeta/2\\
		[\W]_{\omega},& \textrm{otherwise}
	\end{cases}
\end{align}
Equipped by the retraction and the entry-wise truncation, the Riemannian gradient descent algorithm updates the low-rank estimate by 
$\hat\T_{l+1}=\textsf{Trim}_{\zeta_{l+1}, \br}(\W_l),$ 
with a properly chosen $\zeta_{l+1}$. 

\subsection{Gradient Pruning}\label{sec:ghd}
The next step is to update the estimate of sparse tensor $\S^{\ast}$. 
Provided with the updated $\hat\T_l$ at the $l$-th iteration, an ideal estimator of the sparse tensor $\S^{\ast}$ is to find
$\argmin_{\S\in \SS_{\gamma\alpha}}\ \frakL(\hat\T_l+\S).$ 
Solving this problem is NP-hard for a general loss function. 
Interestingly, if the loss function is entry-wise meaning that $\frakL(\T)=\sum_{\omega}\frakl_{\omega}([\T]_{\omega})$ where $\frakl_{\omega}(\cdot):\RR\mapsto \RR$ for each $\omega\in[d_1]\times\cdots\times [d_m]$, the computation of sparse estimate becomes tractable. 
More exactly, given a tensor $\G\in \RR^{d_1\times\cdots\times d_m}$, we denote $|\G|^{(n)}$ the value of its $n$-th largest entry in absolute value for $\forall n\in[d_1\cdots d_m]$. Thus, $|\G|^{(1)}$ denotes its largest entry in absolute value. The {\it level-$\alpha$ active indices} of $\G$ is defined by 
$
\textsf{Level-$\alpha$ AInd}(\G):=\big\{\omega=(i_1,\cdots,i_m): \big|[\G]_{\omega}\big|\geq \max\nolimits_{j\in[m]}\big|\be_{i_j}^{\top}\calM_j(\G)\big|^{(\lfloor\alpha d_j^{-} \rfloor)} \big\}. 
$
By definition, the {\it level-$\alpha$ active indices} of $\G$ are those entries whose absolute value is no smaller than the $(1-\alpha)$-th percentile in absolute value on each of its corresponding slices. Clearly, for any $\S\in\SS_{\alpha}$, the support of $\S$ belongs to the $\textsf{Level-$\alpha$ AInd}(\S)$.

We compute the vanilla gradient $\hat\G_l=\nabla \frakL(\hat \T_l)$ so that $[\hat\G_l]_{\omega}=\frakl'_{\omega}([\hat\T_l]_{\omega})$ and find $\JJ=\textsf{Level-$\alpha$ AInd}(\hat\G_l)$. The indices in $\JJ$ have the greatest potential in decreasing the value of loss function. The gradient pruning algorithm sets $[\hat\S_l]_{\omega}=0$ if $\omega\notin \JJ$. On the other hand, for $\omega\in \JJ$, ideally, the entry $[\hat\S_l]_\omega$ is chosen to vanish the gradient in that $\frakl_{\omega}'\big([\hat \T_l+\hat \S_l]_{\omega}\big)=0$.  However, for functions with always-positive gradient (e.g. $e^{x}$), it is impossible to vanish the gradient. Generally, we choose a pruning parameter $\kprune>0$ and set 
\begin{align}\label{eq:prune}
[\hat\S_l]_{\omega}:=\argmin\nolimits_{s:|s+[\hat\T_l]_{\omega}|\leq \kprune}\ \big| \frakl_{\omega}'([\hat\T_l]_{\omega}+s)\big|,\quad \forall \omega\in \JJ.
\end{align}
Basically, eq. (\ref{eq:prune}) chooses $[\hat \S_l]_{\omega}$ from the closed interval $\big[-\kprune-[\hat\T_l]_{\omega}, \kprune-[\hat\T_l]_{\omega}\big]$ to minimize the gradient. 
For a properly selected loss function $\frakl_{\omega}(\cdot)$, searching for the solution $[\hat\S_{l}]_{\omega}$ is usually fast. 
Moreover, for entry-wise square loss, the pruning parameter $\kprune$ can be $\infty$ and $[\hat \S_l]_{\omega}$ has a closed-form solution. See Section~\ref{sec:app} for more details. 
The procedure of gradient pruning is summarized in Algorithm~\ref{algo:hd_thre}.  

\begin{algorithm}
\caption{Gradient Pruning for Sparse Estimate}\label{algo:hd_thre}
\begin{algorithmic}
\STATE {\bf Input: } $\hat\T_l$ and parameters $\gamma>1, \alpha, \kprune>0$
\STATE {Calculate the gradient} $\hat\G_l=\nabla \frakL(\hat \T_l)$ and find  $\JJ=\textsf{Level-$\gamma\alpha$ AInd}(\hat\G_l)$
\FOR{$\omega\in [d_1]\times\cdots\times [d_m]$}
\STATE{
 $$
 [\hat\S_l]_{\omega}=\begin{cases} \textrm{by (\ref{eq:prune})}, &\textrm{ if }\omega\in\JJ\\
 0,&\textrm{ if }\omega\notin \JJ
 \end{cases}
 $$}
\ENDFOR
\STATE {\bf Output: } $\hat\S_l$
\end{algorithmic}
\end{algorithm}

\textit{Final algorithm}. 
Putting together the Riemannian gradient descent and the gradient pruning algorithm, 
we propose the following Algorithm~\ref{algo:lowrank+sparse}. The algorithm alternatingly updates the low-rank estimate and the sparse estimate. We emphasize that the notations $\alpha$ and $\mu_1$ in Algorithm~\ref{algo:lowrank+sparse} do not have to be exactly the model parameters $\alpha$ and $\mu_1$. In theory, we only require them to be larger than the true model parameters $\alpha$ and $\mu_1$, respectively. See Section~\ref{sec:numerical} for more details.


\begin{algorithm}
\caption{Riemannian Gradient Descent and Gradient Pruning}\label{algo:lowrank+sparse}
\begin{algorithmic}
\STATE{\textbf{Initialization: } $\hat\T_0\in\mfd$}, stepsize $\beta$ and parameters $\alpha,\gamma, \mu_1,\kprune>0$
\STATE{Apply Algorithm~\ref{algo:hd_thre} with input $\hat\T_0$ and parameters $\alpha,\gamma,\kprune$  to obtain $\hat\S_0$}
\FOR{$l=0,1,\cdots, l_{\max}-1$}
\STATE{$\G_l = \nabla \frakL(\hat\T_l+\hat\S_l)$}
\STATE{$\W_l = \hat \T_l - \beta \pro_{\TT_l}\G_l$}
\STATE{$\zeta_{l+1} = \frac{16}{7}\mu_1\frac{\fro{\W_l}}{\sqrt{d^*}}$}
\STATE{$\hat\T_{l+1} = \textsf{Trim}_{\zeta_{l+1},\br}(\W_l)$}
\STATE{Apply Algorithm~\ref{algo:hd_thre} with input $\hat\T_{l+1}$ and parameters $\alpha, \gamma,\kprune$  to obtain $\hat\S_{l+1}$}
\ENDFOR
\STATE{\bf Output:} $\hat\T_{l_{\max}}$ and $\hat\S_{l_{\max}}$
\end{algorithmic}
\end{algorithm}

\textit{Rank, sparsity and algorithmic parameters selection}. 
For applications where the true ranks are small, we can simply run Algorithm~\ref{algo:lowrank+sparse} for multiple times with distinct choices of these ranks and decide the best ones according to certain criterion, e.g., interpretability if no ground truth \citep{jing2020community,fan2021alma} or the mis-clustering rate if ground truth is available \citep{ke2019community, zhou2013tensor, wang2020learning}. Sometimes, it suffices to take the singular values of the matricizations and decide the cut-off point by the famous {\it scree plot} \citep{cattell1966scree}. For generalized linear models, selecting the best $\br$ and $\alpha$ is challenging. Nevertheless, we suggest to minimize the following BIC-type criterion: 
\begin{equation}\label{eq:BIC}
{\rm BIC}(\br,\alpha):=\big(\|\hat \S_{\br,\alpha}\|_{\ell_0}+\sum\nolimits_{i=1}^m r_id_i\big)\cdot \ln(d^{\ast})-2\ln(\hat L_{\br,\alpha})
\end{equation}
where $\hat\S_{\br,\alpha}$ is the estimated sparse tensor and $\hat L_{\br,\alpha}$ denotes the respective value of likelihood function, i.e., $-2\ln (\hat L_{\br, \alpha})=d^{\ast}\log(\|\A-\hat \S_{\br,\alpha}-\hat \T_{\br,\alpha}\|_{\rm F}^2)$ for Example~\ref{exp:rpca} (assuming Gaussian noise with unknown variance) and $-\ln(\hat L_{\br,\alpha})$ is the RHS of (\ref{eq:hatT_gamma_binary_tensor}) for Example \ref{exp:binary_tensor}.   Criterion (\ref{eq:BIC}) works reasonably well for Example~\ref{exp:rpca}  and \ref{exp:binary_tensor}, and yields interesting outcomes on international commodity trade flows data. We also propose practical guideline on choosing the algorithmic parameters $\gamma, \mu_1$ and $\kprune$. See Section~\ref{sec:numerical} and the supplement for more details.

\section{General Convergence and Statistical Guarantees}\label{sec:theory}
In this section, we investigate the local convergence of Algorithm~\ref{algo:lowrank+sparse} in a general framework, and characterize the error of final estimates in terms of the gradient of loss function. Their applications on more specific examples are collected in Section~\ref{sec:app}. Our theory relies crucially on the regularity of loss function. Recall that Algorithm~\ref{algo:lowrank+sparse} involves: routine 1. Riemannian gradient descent for the low-rank estimate; and routine 2. gradient pruning for the sparse estimate. It turns out that these two routines generally require different regularity conditions on the loss function, although these conditions can be equivalent in special cases (e.g. see Section~\ref{sec:rpca}). Recall that $\gamma>1$ is the tuning parameter in Algorithm~\ref{algo:lowrank+sparse} which {\it only} plays a role in $\gamma\alpha$, i.e., the desired sparsity.

\begin{assumption}\label{assump:lowrank} (Needed for Low-rank Estimate)
There exist $b_l, b_u>0$ such that $\frakL(\cdot)$ is $b_l$-strongly convex and $b_u$-smooth in a subset $\BB_2^{\ast}\subset \{\T+\S: \T\in\MM_{\br}, \S\in\SS_{\gamma\alpha}\}$ meaning that
\begin{align}
    \inp{\X - (\T^*+\S^{\ast})}{\nabla \frakL(\X) - \nabla \frakL(\T^*+\S^{\ast})} \geq b_l\|\X-\T^*-\S^{\ast}\|_{\rm F}^2\label{eq:lowrank_strcvx}\\
    \|\nabla \frakL(\X) - \nabla \frakL(\T^*+\S^{\ast})\|_{\rm F} \leq b_u \|\X-\T^*-\S^{\ast}\|_{\rm F}\label{eq:lowrank_smooth}
\end{align}
for all $\X \in \BB_2^{\ast}$. Note that $b_l$ and $b_u$ may depend on $\BB_{2}^{\ast}$. 
\end{assumption}
Note that the explicit form of subset $\BB_2^{\ast}$ in Assumption~\ref{assump:lowrank} is usually determined by the actual problems (see examples in Section~\ref{sec:app}). For the main theorem in this section (Theorem~\ref{thm:lowrank+sparse}), we consider $\BB_2^{\ast}$ to be a small neighbour around the truth $\T^{\ast}+\S^{\ast}$. In this case, 
Assumption~\ref{assump:lowrank} requires the loss function to be {\it locally} strongly convex and smooth.  

\begin{assumption}(Needed for Sparse Estimate)\label{assump:sparse}
Suppose that $\frakL$ is an entry-wise loss meaning $\frakL(\T)=\sum_{\omega}\frakl_{\omega}([\T]_{\omega})$ where $\frakl_{\omega}(\cdot): \RR\mapsto \RR$ for any $\omega\in[d_1]\times\cdots\times [d_m]$. There exist a subset $\BB_{\infty}^{\ast}\subset \{\T+\S: \T\in\MM_{\br}, \S\in\SS_{\gamma\alpha}\}$  and $b_l, b_u>0$ such that 
\begin{align}
\langle [\X]_{\omega}-[\Z]_{\omega}, \nabla \frakl_{\omega}([\X]_{\omega})&-\nabla\frakl_{\omega}([\Z]_{\omega})\rangle \geq b_l |[\X-\Z]_{\omega}|^2\label{eq:lowrank+sparse_cond1}\\
\big| \nabla \frakl_{\omega}([\X]_{\omega})-\nabla\frakl_{\omega}([\Z]_{\omega})\big|& \leq b_u\big|[\X-\Z]_{\omega}\big|
\label{eq:lowrank+sparse_cond2}
\end{align}
for  $\forall \omega\in[d_1]\times\cdots\times[d_m]$ and any $\X,\Z\in\BB^{\ast}_{\infty}$. Similarly, $b_l$ and $b_u$ may depend on $\BB^{\ast}_{\infty}$.
\end{assumption}

The gradient pruning Algorithm~\ref{algo:hd_thre} operates on entries of the gradients. Intuitively, entry-wise loss not only simplifies the computation but also helps characterize the performance of gradient pruning algorithm. If the sparse component is absent in our model, i.e. the underlying tensor is exactly low-rank, Assumption~\ref{assump:sparse} will be unnecessary. See Section~\ref{sec:exact_lowrank} in the supplement for more details. 
Notice that the same parameters $b_l, b_u$ are both used in Assumption~\ref{assump:lowrank} and Assumption~\ref{assump:sparse}. This slightly abuse of notations is for the ease of exposition. These parameters are not necessarily equal. 

For an entry-wise loss, condition (\ref{eq:lowrank+sparse_cond1}) and (\ref{eq:lowrank+sparse_cond2}) imply the condition (\ref{eq:lowrank_strcvx}) and (\ref{eq:lowrank_smooth}), respectively. Therefore, Assumption~\ref{assump:lowrank} can be a by-product of Assumption~\ref{assump:sparse}, if we ignore the possible differences between the two neighbours $\BB_2^{\ast}$ and $\BB_{\infty}^{\ast}$. In this way, these two assumptions can be merged into one single assumption. However, we state them separately for several purposes. First, they highlight the differences of theoretical requirements between Riemannian gradient descent and gradient pruning algorithms. Second, the neighbours in these assumptions ($\BB_2^{\ast}$ and $\BB_{\infty}^{\ast}$) can be drastically different. Third, keeping them separate eases subsequent applications for special cases (e.g., for exact low-rank estimate in Section~\ref{sec:exact_lowrank}).

The signal strength $\usigma$ is defined by $\usigma:=\lambda_{\min}(\T^{\ast}):=\min_{j\in[m]} \lambda_{r_j}\big(\calM_j(\T^{\ast})\big)$. Here $\lambda_r(\cdot)$ denotes the $r$-th largest singular value of a matrix. Thus, $\usigma$ represents the smallest non-zero singular value among all the matricizations of $\T^{\ast}$. Similarly, denote $\overline{\lambda}:=\lambda_{\max}(\T^{\ast}):=\max_{j\in[m]} \lambda_1\big(\calM_j(\T^{\ast})\big)$ and define $\kappa_0:=\overline{\lambda}\usigma^{-1}$ to be the condition number of $\T^{\ast}$.

Define 
\begin{align}\label{eq:errrank}
\textsf{Err}_{2\br}:=\sup\nolimits_{\M\in\MM_{2\br}, \|\M\|_{\rm F}\leq 1} \big<\nabla \frakL(\T^{\ast}+\S^{\ast}), \M\big>
\end{align}
and
$
\textsf{Err}_{\infty}:= \max\big\{\|\nabla \frakL(\T^{\ast}+\S^{\ast})\|_{\ell_\infty}, \min\nolimits_{\|\X\|_{\ell_{\infty}}\leq \kprune} \|\nabla \frakL(\X)\|_{\ell_\infty}\big\},
$
where $\kprune$ is the tuning parameter in gradient pruning Algorithm~\ref{algo:hd_thre}.  
The quantity $\textsf{Err}_{2\br}$ is typical in the aforementioned literature in exact low-rank matrix and tensor estimation. But the special quantity $\textsf{Err}_{\infty}$ appears in our paper for investigating the performance of gradient pruning algorithm. The first term in $\errinf$ comes from the gradient of loss function at the ground truth characterizing the stochastic error in many statistical models, while the second term is due to the setting of tuning parameter $\kprune$. 

Theorem~\ref{thm:lowrank+sparse} displays the general performance bounds of Algorithm~\ref{algo:lowrank+sparse}. For simplicity, we denote $\Omega^{\ast}$ the support of $\S^{\ast}$, $\rmax=\max_j r_j, \dmax=\max_j d_j, \dmin=\min_j d_j, r^{\ast}=r_1\cdots r_m$ and $d^{\ast}:=d_jd_j^{-}=d_1\cdots d_m$.  
Let $\|\cdot\|_{\ell_\infty}$ denote the vectorized $\ell_\infty$-norm of tensors. 
\begin{theorem}\label{thm:lowrank+sparse}
Let $\gamma>1, \kprune>0$ be the parameters used in Algorithm~\ref{algo:lowrank+sparse}. 
Suppose that Assumptions \ref{assump:spikiness}, \ref{assump:lowrank} and \ref{assump:sparse} hold with $\BB_{2}^{\ast}=\{\T+\S: \|\T+\S-\T^{\ast}-\S^{\ast}\|_{\rm{F}}\leq C_{0,m}\usigma, \T\in\MM_{\br}, \S\in\SS_{\gamma\alpha}\}$, $\BB_{\infty}^{\ast}=\big\{\T+\S: \|\T+\S-\T^{\ast}-\S^{\ast}\|_{\ell_\infty}\leq \kdinf, \T\in\MM_{\br}, \S\in\SS_{\gamma\alpha}\big\}$ where $\kdinf = C_{1,m}\mu_1^{2m}(\rmax^{m-1}/\dmin^{m-1})^{1/2}\usigma+\kprune+\|\S^{\ast}\|_{\ell_\infty}$, $0.36b_l(b_u^2)^{-1}\leq 1$ and  $b_ub_l^{-1} \leq 0.4(\sqrt{\delta})^{-1}$ for some $\delta\in(0, 1]$ and large absolute constants $C_{0,m}, C_{1,m}>0$ depending only on $m$. Assume that
\vspace{-0.2cm}
\begin{enumerate}[label=(\alph*)]
\item {\it Initialization}: $\|\hat\T_0-\T^*\|_{\rm F} \leq c_{1,m}\usigma\cdot\min\big\{\delta^2\rmax^{-1/2}, (\kappa_0^{2m}\rmax^{1/2})^{-1}\big\}$, $\hat\T_0\in\BB_{\infty}^{\ast}$ and $\hat\T_0$ is $(2\mu_1\kappa_0)^2$-incoherent
\vspace{-0.3cm}
\item {\it Signal-to-noise ratio}:
$
\textsf{Err}_{2\br}/\usigma+\textsf{Err}_{\infty}(b_u+1)(|\Omega^{\ast}|+\gamma\alpha d^{\ast})^{1/2}/(b_l\usigma)\leq c_{2,m}\cdot\min\Big\{\delta^2\rmax^{-1/2}, (\kappa_0^{2m}\rmax^{1/2})^{-1}\Big\}
$
\vspace{-0.8cm}
\item {\it Sparsity condition}: $\alpha\leq c_{3,m}(\kappa_0^{4m}\mu_1^{4m}\rmax^mb_u^4b_l^{-4})^{-1}$ and $\gamma \geq 1+(4m)^{-1}b_u^4b_l^{-4}$
\end{enumerate}
\vspace{-0.2cm}
where $c_{1,m},c_{2,m},c_{3,m}>0$ are small constants depending only on $m$. If the stepsize $\beta$ is between $ [0.005b_l/(b_u^2), 0.36b_l/(b_u^2)]$,  we have
\begin{align}
\fro{\hat\T_{l+1} - \T^*}^2 &\leq (1-\delta^2)\fro{\hat\T_l - \T^*}^2 + C_{1,\delta}\textsf{Err}_{2\br}^2+ C_{1,b_u,b_l}\left(|\Omega^*| + \gamma\alpha d^{\ast} \right)\textsf{Err}_{\infty}^2 \label{eq:hatS+1_err} \\
\fro{\hat\S_{l+1}-\S^*}^2 &\leq \frac{b_u^2}{b_l^2}\left(C_{2,m}\frac{1}{\gamma-1} +C_{3,m} (\mu_1\kappa_0)^{4m}\rmax^m\alpha \right)\fro{\hat\T_{l+1}-\T^*}^2 + \frac{C_1}{b_l^2}(|\Omega^{\ast}| + \gamma\alpha d^{\ast})\errinf^2 \notag
\end{align}
where $C_{1,\delta} = 6\delta^{-1}$ and $C_{1,b_u,b_l} = (C_2+ C_3 b_u + C_4 b_u^2)b_l^{-2}$ for absolute constants $C_1,\cdots,C_4>0$ and $C_{2,m}, C_{3,m}>0$ depending only on $m$. 
Therefore, for all $l\in[l_{\max}]$, we have
\begin{align}\label{eq:hatTl_err}
\fro{\hat\T_l - \T^*}^2 \leq (1-\delta^2)^l \fro{\hat\T_0 - \T^*}^2 + \frac{C_{1,\delta}\textsf{Err}_{2\br}^2+ C_{1,b_u,b_l}\left(|\Omega^*| + \gamma\alpha d^{\ast} \right)\textsf{Err}_{\infty}^2}{\delta^2}.
\end{align} 
\end{theorem}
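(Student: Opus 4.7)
The plan is to establish the one-step bounds in (\ref{eq:hatS+1_err}) by an inductive argument, then iterate to obtain (\ref{eq:hatTl_err}). The inductive invariants at step $l$ are: (i) $\hat\T_l$ is $(2\mu_1\kappa_0)^2$-incoherent; (ii) $\hat\T_l+\hat\S_l\in\BB_2^\ast\cap\BB_\infty^\ast$ so that Assumptions~\ref{assump:lowrank}--\ref{assump:sparse} apply; and (iii) $\hat\T_l-\T^\ast$ satisfies the radius bound dictated by the initialization and the SNR condition. All three hold at $l=0$ by assumption (a).

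For the low-rank update, I would write $\hat\T_{l+1}=\opH(\tilde\W_l)$ with $\tilde\W_l$ the entry-wise truncation of $\W_l=\hat\T_l-\beta\calP_{\TT_l}\G_l$ at level $\zeta_{l+1}/2$, and decompose
\begin{equation*}
\hat\T_{l+1}-\T^\ast = \bigl(\opH(\tilde\W_l)-\tilde\W_l\bigr) + (\tilde\W_l-\W_l) + (\W_l-\T^\ast).
\end{equation*}
The retraction remainder $\|\opH(\tilde\W_l)-\tilde\W_l\|_{\rm F}$ is controlled by the quasi-optimality of HOSVD, paying at most a factor of $\sqrt{m}$ relative to the best rank-$\br$ approximation error $\|\tilde\W_l-\T^\ast\|_{\rm F}$. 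The truncation error is negligible: by the spikiness of $\T^\ast$ and the choice $\zeta_{l+1}=\tfrac{16}{7}\mu_1\|\W_l\|_{\rm F}/\sqrt{d^\ast}$, entries of $\T^\ast$ are left untouched and only outliers in $\W_l-\T^\ast$ are clipped, giving $\|\tilde\W_l-\T^\ast\|_{\rm F}\leq\|\W_l-\T^\ast\|_{\rm F}$. Expanding
\begin{equation*}
\|\W_l-\T^\ast\|_{\rm F}^2 = \|\hat\T_l-\T^\ast\|_{\rm F}^2 - 2\beta\langle\calP_{\TT_l}\G_l,\hat\T_l-\T^\ast\rangle + \beta^2\|\calP_{\TT_l}\G_l\|_{\rm F}^2,
\end{equation*}
I use the identity $\calP_{\TT_l}(\hat\T_l-\T^\ast)=\hat\T_l-\T^\ast-(\T^\ast-\calP_{\TT_l}\T^\ast)$ together with the standard curvature bound $\|\T^\ast-\calP_{\TT_l}\T^\ast\|_{\rm F}\lesssim\usigma^{-1}\|\hat\T_l-\T^\ast\|_{\rm F}^2$ to replace the inner product by $\langle\G_l,\hat\T_l-\T^\ast\rangle$ up to a higher-order remainder. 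Self-adjointness of $\calP_{\TT_l}$ pairs $\nabla\frakL(\T^\ast+\S^\ast)$ with the rank-$2\br$ object $\calP_{\TT_l}(\hat\T_l-\T^\ast)$, generating an $\errrank\cdot\|\hat\T_l-\T^\ast\|_{\rm F}$ statistical term. Local strong convexity (\ref{eq:lowrank_strcvx}) at $\X=\hat\T_l+\hat\S_l$ yields a lower bound $b_l\|\hat\T_l+\hat\S_l-\T^\ast-\S^\ast\|_{\rm F}^2$ minus a Cauchy--Schwarz cross term with $\hat\S_l-\S^\ast$, tamed using its support cap $|\Omega^\ast|+\gamma\alpha d^\ast$ and bounded by $\errinf\sqrt{|\Omega^\ast|+\gamma\alpha d^\ast}\cdot\|\hat\S_l-\S^\ast\|_{\rm F}$-type contributions. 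Smoothness (\ref{eq:lowrank_smooth}) bounds $\|\calP_{\TT_l}\G_l\|_{\rm F}^2$ by $b_u^2\|\hat\T_l+\hat\S_l-\T^\ast-\S^\ast\|_{\rm F}^2+\errrank^2$. Choosing $\beta\asymp b_l/b_u^2$ inside the prescribed window and applying Young's inequality produces the contraction (\ref{eq:hatS+1_err}).

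For the sparse update, let $\JJ=\textsf{Level-$\gamma\alpha$ AInd}(\nabla\frakL(\hat\T_{l+1}))$ and partition the indices as $(\Omega^\ast\cap\JJ)\cup(\Omega^\ast\setminus\JJ)\cup(\JJ\setminus\Omega^\ast)$. On $\Omega^\ast\cap\JJ$, the pruning rule (\ref{eq:prune}) drives $\frakl'_\omega([\hat\T_{l+1}+\hat\S_{l+1}]_\omega)$ to zero when feasible, so entry-wise strong convexity (\ref{eq:lowrank+sparse_cond1}) and smoothness (\ref{eq:lowrank+sparse_cond2}) yield $|\hat\S_{l+1}[\omega]-\S^\ast[\omega]|^2\lesssim b_l^{-2}\bigl(b_u^2|[\hat\T_{l+1}-\T^\ast]_\omega|^2+\errinf^2\bigr)$, with the $\kprune$-cap absorbed into the second summand of $\errinf$. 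On $\Omega^\ast\setminus\JJ$, $\hat\S_{l+1}[\omega]=0$ but $|\nabla\frakL(\hat\T_{l+1})[\omega]|$ fails the level-$\gamma\alpha$ screen; since $\gamma-1\gtrsim(b_u/b_l)^4/m$ and the per-slice sparsity is $\alpha$, a slice-by-slice counting argument shows the missed mass obeys $\sum_{\omega\in\Omega^\ast\setminus\JJ}|\S^\ast[\omega]|^2\lesssim(\gamma-1)^{-1}(b_u/b_l)^2\|\hat\T_{l+1}-\T^\ast\|_{\rm F}^2+(\gamma-1)^{-1}(|\Omega^\ast|+\gamma\alpha d^\ast)\errinf^2$. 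On $\JJ\setminus\Omega^\ast$ (size $\leq\gamma\alpha d^\ast$), $\S^\ast[\omega]=0$ and the per-entry bound above applies; converting $\sum_{\omega\in\JJ\setminus\Omega^\ast}|[\hat\T_{l+1}-\T^\ast]_\omega|^2$ into $\|\hat\T_{l+1}-\T^\ast\|_{\rm F}^2$ uses the incoherence invariant, which controls $\|\hat\T_{l+1}-\T^\ast\|_{\ell_\infty}$ for a rank-$\leq 2\br$ incoherent tensor by $(\mu_1\kappa_0)^{2m}(\rmax^m/d^\ast)^{1/2}\|\hat\T_{l+1}-\T^\ast\|_{\rm F}$; summing over the $\gamma\alpha d^\ast$ active entries produces the $(\mu_1\kappa_0)^{4m}\rmax^m\alpha$ factor in (\ref{eq:hatS+1_err}).

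The main obstacle is the two-way coupling between the updates. Because $\G_l=\nabla\frakL(\hat\T_l+\hat\S_l)$, the contraction for $\hat\T_{l+1}$ inherits error from $\hat\S_l-\S^\ast$ and, after substituting the sparse bound back, the composite contraction factor must remain strictly less than one; this is precisely what forces the quantitative conditions $\gamma\geq 1+(4m)^{-1}(b_u/b_l)^4$ and $\alpha\leq c_{3,m}(\kappa_0^{4m}\mu_1^{4m}\rmax^m(b_u/b_l)^4)^{-1}$ in (c). Equally delicate is closing the induction on the invariants: the SNR condition (b) together with (\ref{eq:hatS+1_err}) keeps $\hat\T_{l+1}+\hat\S_{l+1}\in\BB_2^\ast\cap\BB_\infty^\ast$, while preservation of $(2\mu_1\kappa_0)^2$-incoherence under $\textsf{Trim}_{\zeta_{l+1},\br}$ uses the spikiness-based threshold in conjunction with Lemma~\ref{lemma:spikiness_incoherence}. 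Once (\ref{eq:hatS+1_err}) is available, (\ref{eq:hatTl_err}) follows by telescoping the geometric series in $(1-\delta^2)^l$ and bounding the remaining summation of the noise-type terms by $\delta^{-2}$ times their one-step size.
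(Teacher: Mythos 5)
Your overall plan mirrors the paper's: alternate an RGD bound for the low-rank iterate with a gradient-pruning bound for the sparse iterate, carry inductive invariants (incoherence, membership in $\BB_2^\ast\cap\BB_\infty^\ast$, radius of $\hat\T_l-\T^\ast$), and telescope. Your handling of the gradient expansion via $\calP_{\TT_l}$, the strong-convexity/smoothness pairing, the $\errrank$ term through the rank-$2\br$ projection, and the three-way index partition for the sparse update are all in the same spirit as the paper's argument. However, two of the technical steps would fail as stated.

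First, the retraction bound is too crude. You propose controlling $\|\opH(\tilde\W_l)-\tilde\W_l\|_{\rm F}$ by HOSVD quasi-optimality, which incurs a multiplicative $\sqrt{m}$ against the best rank-$\br$ approximation error, leading to $\|\hat\T_{l+1}-\T^\ast\|_{\rm F}\leq(\sqrt{m}+1)\|\tilde\W_l-\T^\ast\|_{\rm F}$ plus lower-order terms. Since the pre-retraction contraction from the gradient step is only by a factor $\sqrt{1-\delta}$ for arbitrary $\delta\in(0,1]$, the composed map contracts only when $(\sqrt{m}+1)^2(1-\delta)<1$, i.e.\ when $\delta$ is extremely close to $1$. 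The paper avoids this by a perturbation-theoretic bound on HOSVD (Lemma~\ref{lem:tensorest}): in the regime $\usigma\gtrsim\sigma_{\max}(\D)$ one has $\|\opH(\T^\ast+\D)-\T^\ast\|_{\rm F}\leq\|\D\|_{\rm F}+C_m\sqrt{\rmax}\,\|\D\|_{\rm F}^2/\usigma$, with leading coefficient exactly $1$; that quadratic correction is what keeps the contraction for all $\delta$.

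Second, your conversion $\sum_{\omega\in\JJ\setminus\Omega^\ast}|[\hat\T_{l+1}-\T^\ast]_\omega|^2 \lesssim \gamma\alpha\,(\mu_1\kappa_0)^{4m}\rmax^m\|\hat\T_{l+1}-\T^\ast\|_{\rm F}^2$ rests on treating $\hat\T_{l+1}-\T^\ast$ as an incoherent rank-$\leq 2\br$ tensor and applying an $\ell_\infty$-to-Frobenius comparison with ratio $(\rmax^m/d^\ast)^{1/2}$. Neither half of that is available: the difference of two incoherent tensors need not be incoherent (its small singular values can blow up the row norms of its singular vectors), and the sharp entry-wise bound for such differences (Lemma~\ref{lemma:entrywise}) carries a $\rmax^{m/2}\dmin^{-(m-1)/2}$ factor, not $\rmax^{m/2}(d^\ast)^{-1/2}$; naively summing over $\gamma\alpha d^\ast$ entries then overshoots by $d^\ast/\dmin^{m-1}$, which is unbounded when the tensor is rectangular. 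The paper instead sums slice-wise (Lemma~\ref{lemma:est:proTl}), using the fact that each mode-$i$ index appears at most $\alpha d_i^-$ times in an $\alpha$-fraction set, so the $d_i^-$ factors cancel exactly against the per-mode incoherence scaling. Your proposal needs both of these replacements to go through.
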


By eq. (\ref{eq:hatTl_err}), after suitably chosen $l_{\max}$ iterations and treating $b_l, b_u,\delta$ as constants, we conclude with the following error bounds:
\begin{align}\label{eq:hatTlmax}
\fro{\hat\T_{l_{\max}}-\T^{\ast}}^2\leq C_1\textsf{Err}_{2\br}^2+C_2(|\Omega^{\ast}|+\gamma\alpha d^{\ast})\textsf{Err}^2_{\infty}
\end{align}
and
\begin{align*}
\fro{\hat\S_{l_{\max}}-\S^{\ast}}\leq \frac{\alpha(\mu_1\kappa_0)^{4m}\rmax^m(\gamma-1)+1}{\gamma-1}\cdot\big(C_5\textsf{Err}_{2\br}^2+C_6(|\Omega^{\ast}|+\gamma\alpha d^{\ast})\textsf{Err}^2_{\infty}\big)+C_7(|\Omega^{\ast}|+\gamma\alpha d^{\ast})\textsf{Err}^2_{\infty}.
\end{align*}
There exist two types of error as illustrated on the RHS of (\ref{eq:hatTlmax}). The first term $\textsf{Err}_{2\br}^2$ comes from the model complexity of low-rank $\T^{\ast}$, and the term $|\Omega^{\ast}|\textsf{Err}_{\infty}^2$ is related to the model complexity of sparse $\S^{\ast}$. These two terms both reflect the intrinsic complexity of our model. On the other hand, the last term $\gamma\alpha d^{\ast}\textsf{Err}_{\infty}^2$ is a human-intervened complexity which originates from the tuning parameter $\gamma$ in the algorithm design. If the cardinality of $\Omega^{\ast}$ happens to be of the same order as $\alpha d^{\ast}$ (it is the worse-case cardinality of $\Omega^{\ast}$ for $\S^{\ast}\in\SS_{\alpha}$), the error bound is simplified into the following corollary. It is an immediate result from Theorem~\ref{thm:lowrank+sparse} and we hence omit the proof.

\begin{corollary}\label{cor:lowrank+sparse}
Suppose that the conditions of Theorem~\ref{thm:lowrank+sparse} hold and assume that  $|\Omega^*| \asymp \alpha d^{\ast}$. Then for all $l=1,\cdots,l_{\max}$,
$$
\fro{\hat\T_l - \T^*}^2 \leq (1-\delta^2)^l \fro{\hat\T_0 - \T^*}^2 + \frac{C_{1,\delta}\textsf{Err}_{2\br}^2+ C_{1,b_u,b_l}|\Omega^*| \textsf{Err}_{\infty}^2}{\delta^2}.
$$
\end{corollary}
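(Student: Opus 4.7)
The plan is to derive Corollary~\ref{cor:lowrank+sparse} as an immediate algebraic consequence of the per-iteration bound (\ref{eq:hatTl_err}) from Theorem~\ref{thm:lowrank+sparse}. First I would invoke the theorem in full: since the corollary inherits verbatim all the hypotheses of Theorem~\ref{thm:lowrank+sparse} (Assumptions \ref{assump:spikiness}, \ref{assump:lowrank}, \ref{assump:sparse}, together with the initialization, SNR, and sparsity conditions (a)--(c)), the bound
\begin{align*}
\fro{\hat\T_l - \T^{\ast}}^2 \leq (1-\delta^2)^l \fro{\hat\T_0 - \T^{\ast}}^2 + \frac{C_{1,\delta}\textsf{Err}_{2\br}^2 + C_{1,b_u,b_l}\bigl(|\Omega^{\ast}| + \gamma\alpha d^{\ast}\bigr)\textsf{Err}_{\infty}^2}{\delta^2}
\end{align*}
holds for every $l \in [l_{\max}]$.

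The only substantive step is to show that the composite factor $|\Omega^{\ast}| + \gamma\alpha d^{\ast}$ is comparable to $|\Omega^{\ast}|$ once the additional hypothesis $|\Omega^{\ast}|\asymp\alpha d^{\ast}$ is imposed. The tuning parameter $\gamma$ is chosen in advance subject only to the lower bound $\gamma \geq 1 + (4m)^{-1}b_u^4 b_l^{-4}$ in condition (c) of Theorem~\ref{thm:lowrank+sparse}; in the standard regime where $b_l, b_u$, and $m$ are treated as constants, $\gamma$ itself is a constant. Combined with $\alpha d^{\ast}\asymp |\Omega^{\ast}|$, this yields $\gamma\alpha d^{\ast}\lesssim |\Omega^{\ast}|$, and hence $|\Omega^{\ast}| + \gamma\alpha d^{\ast}\lesssim |\Omega^{\ast}|$.

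Substituting this comparison into the displayed bound and absorbing the resulting multiplicative factor into the constant $C_{1,b_u,b_l}$ (legitimate because that constant is already permitted to depend on $b_u$ and $b_l$, and implicitly on the chosen $\gamma$) gives exactly the assertion of the corollary. There is no probabilistic step, no new geometric estimate, and no need to revisit the Riemannian retraction or the gradient-pruning analysis. Honestly, there is no meaningful obstacle here: the ``hard part'' is already buried in Theorem~\ref{thm:lowrank+sparse}, and the only subtlety worth flagging is the implicit $\gamma$-dependence of the absorbed constant, which follows the same convention adopted throughout the main theorem.
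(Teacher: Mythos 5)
Your proposal is correct and is exactly the argument the paper has in mind; the paper itself states that the corollary ``is an immediate result from Theorem~\ref{thm:lowrank+sparse}'' and omits the proof. You also correctly flag the one minor notational wrinkle: the $\gamma$ and $\asymp$ factors are silently absorbed into $C_{1,b_u,b_l}$, which is harmless under the paper's working convention that $\gamma$ is chosen near its lower bound $1+(4m)^{-1}b_u^4b_l^{-4}$ and hence depends only on $b_u, b_l, m$.
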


\textit{Remarks on the conditions of Theorem~\ref{thm:lowrank+sparse}}. 
The initialization is required to be as close to $\T^{\ast}$ as $o(\usigma)$, if $b_l ,b_u, \kappa_0$ and $\rmax$ are all $O(1)$ constants. It is a common condition for non-convex methods for low-rank matrix and tensor related problems.  Concerning the signal-to-noise ratio condition, Theorem~\ref{thm:lowrank+sparse} requires $\usigma$ to dominate $\textsf{Err}_{2\br}$ and $(|\Omega^{\ast}|+\gamma \alpha d^{\ast})^{1/2}\textsf{Err}_{\infty}$ if $b_l ,b_u, \kappa_0,\rmax=O(1)$. This condition is mild and perhaps minimal.  The sparsity requirement on $\S^{\ast}$ is also mild. Assuming $b_l, b_u, \kappa_0, \rmax, \mu_1=O(1)$, Theorem~\ref{thm:lowrank+sparse} merely requires $\alpha\leq c$ for a sufficiently small $c>0$ which depends only on $m$,  implying that Algorithm~\ref{algo:lowrank+sparse} allows a wide range of sparsity on $\S^{\ast}$. Similarly, Theorem~\ref{thm:lowrank+sparse} only requires $\gamma\geq C$ for a sufficiently large $C>0$ which depends on $m$ only.

We now investigate the recovery of the support of $\S^{\ast}$.  Algorithm~\ref{algo:lowrank+sparse} usually over-estimates the size of the support of $\S^{\ast}$ since the Level-${\gamma\alpha}$ active indices are used for a $\gamma$ strictly greater than $1$.  

\begin{theorem}\label{thm:hatS_infty}
Suppose conditions of Theorem~\ref{thm:lowrank+sparse} hold, $b_l, b_u=O(1)$, $|\Omega^{\ast}|\asymp \alpha d^{\ast}$ and $l_{\max}$ is chosen such that (\ref{eq:hatTlmax}) holds. Then, 
\begin{equation}\label{eq:hatS-Sstar-supnorm}
\|\hat\S_{l_{\max}}-\S^{\ast}\|_{\ell_\infty}\leq C_{1,m}\kappa_0^{2m}\mu_1^{2m}\Big(\frac{\rmax^m}{\dmin^{m-1}}\Big)^{1/2}\cdot \big(\textsf{Err}_{2\br}+(\gamma|\Omega^{\ast}|)^{1/2}\textsf{Err}_{\infty}\big)+C_{2,m}\textsf{Err}_{\infty},
\end{equation}
where $C_{1,m},C_{2,m}>0$ only depend on $m$.
\end{theorem}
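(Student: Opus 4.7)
The plan is to first upgrade the Frobenius bound on $\hat\T_{l_{\max}} - \T^{\ast}$ from (\ref{eq:hatTlmax}) to an entrywise $\ell_\infty$ bound, then bound $\hat\S_{l_{\max}} - \S^{\ast}$ entrywise through a case analysis on whether each index belongs to the level-$\gamma\alpha$ active set $\JJ$ selected in the last round of gradient pruning.

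For the first step, observe that $\hat\T_{l_{\max}},\T^{\ast}\in\UU_{\br,\mu_1}$ since the trim operator enforces spikiness at every iteration. Hence the difference has Tucker ranks at most $2\br$ and its matricization subspaces lie in the direct sum of two incoherent subspaces. Via the spikiness--incoherence equivalence (Lemma~\ref{lemma:spikiness_incoherence}) combined with the standard entrywise bound $|\X_\omega|\leq\prod_{j}\|\be_{i_j}^{\top}\bU_j\|_{\ell_2}\cdot\fro{\C}$ applied to a Tucker factorization of $\hat\T_{l_{\max}}-\T^{\ast}$, one should obtain
\[
\|\hat\T_{l_{\max}}-\T^{\ast}\|_{\ell_\infty}\leq C_m\kappa_0^{2m}\mu_1^{2m}(\rmax^m/\dmin^{m-1})^{1/2}\cdot \fro{\hat\T_{l_{\max}}-\T^{\ast}}.
\]
Invoking Corollary~\ref{cor:lowrank+sparse} (with $|\Omega^{\ast}|\asymp\alpha d^{\ast}$ and $l_{\max}$ large enough that the initialization term is dominated) then controls the right-hand side by the leading term of (\ref{eq:hatS-Sstar-supnorm}).

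For the second step, set $\G=\nabla\frakL(\hat\T_{l_{\max}})$ and note that $|\frakl'_\omega([\T^{\ast}+\S^{\ast}]_\omega)|\leq\textsf{Err}_\infty$ for every $\omega$ by the definition of $\textsf{Err}_\infty$, together with the one-dimensional inequality $b_l|x-y|\leq|\frakl'_\omega(x)-\frakl'_\omega(y)|\leq b_u|x-y|$ derived from Assumption~\ref{assump:sparse}. When $\omega\in\JJ$, the second term in the definition of $\textsf{Err}_\infty$ furnishes a feasible perturbation achieving $|\frakl'_\omega|\leq\textsf{Err}_\infty$, so the pruning rule (\ref{eq:prune}) yields $|\frakl'_\omega([\hat\T_{l_{\max}}+\hat\S_{l_{\max}}]_\omega)|\leq\textsf{Err}_\infty$. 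Strong convexity then gives $|[\hat\T_{l_{\max}}+\hat\S_{l_{\max}}-\T^{\ast}-\S^{\ast}]_\omega|\leq 2\textsf{Err}_\infty/b_l$, so $|[\hat\S_{l_{\max}}-\S^{\ast}]_\omega|\leq\|\hat\T_{l_{\max}}-\T^{\ast}\|_{\ell_\infty}+2\textsf{Err}_\infty/b_l$. When $\omega\notin\JJ$, we have $[\hat\S_{l_{\max}}]_\omega=0$, so the task reduces to bounding $|[\S^{\ast}]_\omega|$. The crux is a slicewise threshold bound: each slice $\be_{i_j}^{\top}\calM_j(\G)$ contains at most $\alpha d_j^{-}$ indices in $\Omega^{\ast}$, while every entry $\omega'$ outside $\Omega^{\ast}$ in the same slice obeys $|[\G]_{\omega'}|\leq b_u\|\hat\T_{l_{\max}}-\T^{\ast}\|_{\ell_\infty}+\textsf{Err}_\infty$ by smoothness (\ref{eq:lowrank+sparse_cond2}). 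Since $\gamma>1$ forces at least one non-$\Omega^{\ast}$ entry into the top $\lfloor\gamma\alpha d_j^{-}\rfloor$, the slice threshold is likewise bounded, and taking the max over $j$ yields $|[\G]_\omega|\leq b_u\|\hat\T_{l_{\max}}-\T^{\ast}\|_{\ell_\infty}+\textsf{Err}_\infty$. Applying strong convexity to $[\hat\T_{l_{\max}}]_\omega$ versus $[\T^{\ast}+\S^{\ast}]_\omega$ together with the reverse triangle inequality then produces $|[\S^{\ast}]_\omega|\leq(1+b_u/b_l)\|\hat\T_{l_{\max}}-\T^{\ast}\|_{\ell_\infty}+2\textsf{Err}_\infty/b_l$.

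Maximizing the resulting per-entry estimates over $\omega$ and inserting the first step delivers (\ref{eq:hatS-Sstar-supnorm}). I expect the main obstacle to be the first step: obtaining the precise scaling $\kappa_0^{2m}\mu_1^{2m}(\rmax^m/\dmin^{m-1})^{1/2}$ requires carefully tracking how the spikiness parameters of $\hat\T_{l_{\max}}$ and $\T^{\ast}$, together with the condition-number factor entering through the spikiness--incoherence equivalence, compound when the Tucker subspaces of the two tensors are merged into the rank-$2\br$ subspace of the difference. A secondary subtlety is the slicewise threshold argument, which critically depends on $\gamma>1$ to guarantee a non-$\Omega^{\ast}$ entry in the top $\lfloor\gamma\alpha d_j^{-}\rfloor$ and on the entrywise (rather than global) Lipschitz and strong-convexity conditions of Assumption~\ref{assump:sparse}.
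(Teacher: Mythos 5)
Your overall strategy matches the paper's: first convert the Frobenius bound (\ref{eq:hatTlmax}) on $\hat\T_{l_{\max}}-\T^{\ast}$ into an entrywise bound, then run a case analysis on whether $\omega$ lies in the active set $\JJ$ chosen at the final pruning step. Step two of your sketch is correct and in places a touch cleaner than the paper's: comparing $\hat\T_{l_{\max}}+\hat\S_{l_{\max}}$ directly against $\T^{\ast}+\S^{\ast}$ when $\omega\in\JJ$ gives a coefficient $1$ on $\|\hat\T_{l_{\max}}-\T^{\ast}\|_{\ell_\infty}$ instead of the $b_u/b_l$ the paper carries through, and the slicewise argument for $\omega\in\Omega^{\ast}\setminus\JJ$ (exploiting $\gamma>1$ to find a non-$\Omega^{\ast}$ entry at or above the threshold) is the right idea.

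The genuine gap is in step one. You propose to take a Tucker factorization $\hat\T_{l_{\max}}-\T^{\ast}=\C\cdot\llbracket\bU_1',\dots,\bU_m'\rrbracket$ with $\bU_j'$ of at most $2r_j$ columns and argue that because the column space of each $\calM_j(\hat\T_{l_{\max}}-\T^{\ast})$ is contained in the sum of two incoherent subspaces, the $\bU_j'$ are themselves incoherent, whence $|\X_\omega|\leq\prod_j\|\be_{i_j}^{\top}\bU_j'\|_{\ell_2}\fro{\C}$ yields the claim. This does not go through: the sum of two incoherent subspaces need not be incoherent when the two subspaces are nearly aligned, because $P_{V_1+V_2}\not\preceq P_{V_1}+P_{V_2}$ in general. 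Concretely, if $\bU$ is an incoherent orthonormal basis and $\bV$ is a slightly rotated copy with $\bV-\bU$ proportional to a spiky direction, then $V_1+V_2$ contains that spiky direction. So merging the two Tucker subspaces and then applying the naive entrywise bound can silently lose control of the row norms $\|\be_{i_j}^{\top}\bU_j'\|_{\ell_2}$. The paper circumvents exactly this through Lemma~\ref{lemma:entrywise}, which decomposes
$\hat\T_l-\T^{\ast}=(\C_l-\S^{\ast})\cdot(\bU_1,\dots,\bU_m)+\sum_i\S^{\ast}\cdot(\bV_1^{\ast},\dots,\bV_{i-1}^{\ast},\bU_i-\bV_i^{\ast},\bU_{i+1},\dots,\bU_m)$
so that every factor appearing in the entrywise product is either an incoherent orthonormal basis ($\bU_k$ or $\bV_k^{\ast}$, each separately controlled) or the difference $\bU_i-\bV_i^{\ast}$, whose row norms are then aggregated into $\fro{\bU_i-\bV_i^{\ast}}\lesssim\fro{\hat\T_l-\T^{\ast}}/\usigma$ via Lemma~\ref{lem:est:tl-t}. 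This is why the $(\mu_1\kappa_0)^{2m}$ scaling you were unsure about appears, and it is not recoverable from the merged-subspace heuristic. A secondary imprecision: the trim operator does not force $\hat\T_{l_{\max}}\in\UU_{\br,\mu_1}$; what Lemma~\ref{lemma:nbhd_spikiness_incoherence} actually delivers is $(2\mu_1\kappa_0)^2$-incoherence of $\hat\T_{l_{\max}}$, which is what Lemma~\ref{lemma:entrywise} consumes. Replacing your merged-subspace step by a citation of Lemma~\ref{lemma:entrywise} closes the gap and the rest of your argument stands.
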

If the non-zero entries of $\S^{\ast}$ satisfy $|[\S^{\ast}]_{\omega}|> 2\delta^{\ast}$ for all $\omega\in\Omega^{\ast}$ where $\delta^{\ast}$ is the RHS of (\ref{eq:hatS-Sstar-supnorm}),  we obtain $\hat\S$ by a final-stage hard thresholding on $\hat\S_{l_{\max}}$ so that
$
[\hat \S]_{\omega}:=[\hat \S_{l_{\max}}]_{\omega}\cdot \mathbbm{1}\big(|[\hat \S_{l_{\max}}]_{\omega}|>\delta^{\ast}\big). $ 
By Theorem~\ref{thm:hatS_infty}, we get ${\rm supp}(\hat \S)=\Omega^{\ast}$ and thus recovering the support of $\S^{\ast}$. The lower bound on the outliers is necessary  for distinguishing the noise and outliers. If an entry in the outliers is of small magnitude, then it might be considered as noise.  When the noise does not exist,  we can set $\kprune = \infty$ implying $\delta^{\ast} = 0$.

\section{Applications}\label{sec:app}
We now apply the established results in Section~\ref{sec:theory} to more specific examples and elaborate the respective statistical performances.  Our framework certainly covers many other interesting examples but we do not intend to exhaust them.

\subsection{Sub-Gaussian Tensor Robust PCA with i.i.d.  Noise}\label{sec:rpca}
As introduced in Example~\ref{exp:rpca}, the goal of SG-RPCA is to extract low-rank \textit{signal} from a noisy tensor observation $\A\in\RR^{d_1\times\cdots\times d_m}$.  Due to the linearity, we use the loss function $\frakL(\T+\S):= \frac{1}{2}\|\T+\S-\A\|_{\rm F}^2$.  Clearly, this loss is an entry-wise loss function, and satisfies the strongly-convex and smoothness conditions of Assumptions~\ref{assump:lowrank} and \ref{assump:sparse} with constants $b_l=b_u=1$ within any subsets $\BB_2^{\ast}$ and $\BB_{\infty}^{\ast}$, or simply $\BB_{2}^{\ast}=\BB_{\infty}^{\ast}=\RR^{d_1\times\cdots\times d_m}$. 
 As a result, Theorem~\ref{thm:lowrank+sparse} and Theorem~\ref{thm:hatS_infty} are readily applicable by choosing $\delta=0.15$, and setting the tuning parameter $\kprune=\infty$.  

\begin{theorem}\label{thm:rpca}
Suppose Assumption~\ref{assump:spikiness} holds and there exists $\sigma_z>0$ such that $\EE\exp\{t[\Z]_{\omega}\}\leq \exp\{t^2\sigma_z^2/2\}$ for $\forall t\in\RR$ and $\forall \omega\in[d_1]\times\cdots\times [d_m]$. Let $r^{\ast}=r_1\cdots r_m$ and $\gamma>1$ be the tuning parameter in Algorithm~\ref{algo:lowrank+sparse}. Assume $|\Omega^{\ast}|\asymp \alpha d^{\ast}$ and 
\vspace{-0.2cm}
\begin{enumerate}[label=(\alph*)]
\item {\it Initialization}: $\|\hat\T_0-\T^*\|_{\rm F} \leq c_{1,m}\usigma\cdot(\kappa_0^{2m}\rmax^{1/2})^{-1}$ and $\hat\T_0$ is $(2\mu_1\kappa_0)^2$-incoherent
\vspace{-0.3cm}
\item {\it Signal-to-noise ratio}: $\usigma/\sigma_z\geq C_{1,m}\kappa_0^{2m}\rmax^{1/2}\cdot (\dmax\rmax+r^{\ast}+\gamma|\Omega^{\ast}|\log\dmax)^{1/2}$
\vspace{-0.3cm}
\item {\it Sparsity condition}: $\alpha\leq c_{2,m}(\mu_1^{4m}\kappa_0^{4m} \rmax^m)^{-1}$ and $\gamma\geq 1+4m$
\end{enumerate}
\vspace{-0.2cm}
where $c_{1,m},c_{2,m},C_{1,m}>0$ are constants depending only on $m$. If the step size $\beta\in[0.005, 0.36]$, then after $l_{\max}>1$ iterations, with probability at least $1-\dmax^{-2}$, we have 
\begin{align}\label{eq:rpca_hatTlmax}
\|\hat\T_{l_{\max}}-\T^{\ast}\|_{\rm F}^2\leq& 0.98^{l_{\max}} \|\hat\T_0-\T^{\ast}\|_{\rm F}^2+C_{2,m}(\dmax\rmax+r^{\ast}+\gamma |\Omega^{\ast}|\log\dmax)\sigma_z^2\\
\|\hat\S_{l_{\max}}-\S^{\ast}\|_{\rm F}^2\leq& \left(C_{3,m}\alpha\rmax^m\mu_1^{4m}\kappa_0^{4m}+C_{4,m}(\gamma-1)^{-1}\right)\cdot \|\hat \T_{l_{\max}}-\T^{\ast}\|_{\rm F}^2+C_{5,m}\sigma_z^2\cdot\gamma|\Omega^{\ast}|\log\dmax\notag
\end{align}
where $C_{2,m},C_{3,m},C_{4,m},C_{5,m}>0$ are constants depending only on $m$. Moreover, If $l_{\max}$ is chosen large enough such that the second term on RHS of (\ref{eq:rpca_hatTlmax}) dominates and assume $\mu_1^{4m}\kappa_0^{4m}\rmax^{m}(\rmax\dmax+r^{\ast})\leq C_{9,m}\dmin^{m-1}$, we get with probability at least $1-\dmax^{-2}$ that
\begin{align*}
\|\hat \S_{l_{\max}}-\S^{\ast}\|_{\ell_\infty}\leq& \left(C_{6,m} \kappa_0^{2m}\mu_1^{2m}  \rmax^{m/2}(\gamma|\Omega^{\ast}|)^{1/2}/\dmin^{(m-1)/2} + C_{7,m}\right)\cdot \sigma_z\log^{1/2}\dmax
\end{align*}
where $C_{6,m},C_{7,m}>0$ are constants depending only on $m$.
\end{theorem}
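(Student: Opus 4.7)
} The plan is to recognize that the squared-loss $\frakL(\T+\S)=\tfrac12\|\T+\S-\A\|_{\rm F}^2$ is separable and quadratic, so Assumptions~\ref{assump:lowrank} and \ref{assump:sparse} hold \emph{globally} with $b_l=b_u=1$ and with $\BB_2^{\ast}=\BB_\infty^{\ast}=\RR^{d_1\times\cdots\times d_m}$. Since $\nabla \frakL(\T+\S)=\T+\S-\A$, we have $\nabla \frakL(\T^{\ast}+\S^{\ast})=-\Z$. Choosing $\kprune=\infty$ forces the second term in the definition of $\errinf$ to vanish (pick $\X=\A$), so $\errinf=\|\Z\|_{\ell_\infty}$. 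The strategy is then simply to verify the three hypotheses of Theorem~\ref{thm:lowrank+sparse} (with $\delta=0.15$, so that $1-\delta^2\le 0.98$) by plugging in high-probability stochastic bounds on $\errrank$ and $\errinf$, and to read off the conclusions by substitution.

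The two probabilistic inputs to obtain are: (i) $\errinf=\|\Z\|_{\ell_\infty}\le C\sigma_z(m\log\dmax)^{1/2}$ with probability $1-\dmax^{-3}$, a standard sub-Gaussian maximal inequality over $d_1\cdots d_m\le\dmax^m$ entries; and (ii) a restricted-rank operator norm bound
\[
\errrank=\sup_{\M\in\MM_{2\br},\,\|\M\|_{\rm F}\le 1}\inp{\Z}{\M}\ \le\ C_m\sigma_z\sqrt{\dmax\rmax+r^{\ast}}
\]
with probability $1-\dmax^{-3}$. The latter is obtained by a Tucker-decomposition/epsilon-net argument: writing $\M=\C\times_1\bU_1\cdots\times_m\bU_m$ with $\|\C\|_{\rm F}\le 1$ and semi-orthonormal $\bU_j\in\RR^{d_j\times 2r_j}$, we have $\inp{\Z}{\M}=\inp{\Z\times_{j=1}^m\bU_j^\top}{\C}$, so the sup is the maximum Frobenius norm of the core projection. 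A $1/4$-net over each Stiefel manifold $\mathrm{St}(d_j,2r_j)$ has size $\exp(Cd_j r_j)$, and the core has $r^{\ast}$ Gaussian-type entries after conditioning; a standard union bound and concentration of the $\chi^2$-type tail produce the claimed rate.

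Once these two bounds are installed, I plan to check the hypotheses line by line. Assumption~\ref{assump:spikiness} is assumed; the initialization condition reduces to the single rate $c_{1,m}\usigma/(\kappa_0^{2m}\rmax^{1/2})$ (the other side of the min in Theorem~\ref{thm:lowrank+sparse} is strictly larger since $\delta=0.15$ is constant and $\kappa_0\ge 1$); membership $\hat\T_0\in\BB_\infty^{\ast}$ is trivial because $\BB_\infty^{\ast}=\RR^{d_1\times\cdots\times d_m}$ in this case. The SNR condition of Theorem~\ref{thm:lowrank+sparse} becomes, using $|\Omega^{\ast}|\asymp\alpha d^{\ast}$,
\[
\frac{\sigma_z\sqrt{\dmax\rmax+r^{\ast}}}{\usigma}+\frac{(\gamma|\Omega^{\ast}|)^{1/2}\sigma_z(\log\dmax)^{1/2}}{\usigma}\ \lesssim_m\ (\kappa_0^{2m}\rmax^{1/2})^{-1},
\]
which is equivalent to condition (b) of Theorem~\ref{thm:rpca}. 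The sparsity condition specializes to $\alpha\le c_{3,m}(\mu_1^{4m}\kappa_0^{4m}\rmax^m)^{-1}$ and $\gamma\ge 1+(4m)^{-1}$, both implied by (c). With these in hand, the contraction estimate (\ref{eq:hatTl_err}) with $1-\delta^2\le 0.98$ and the substitution of the two error rates yield the first two displays of (\ref{eq:rpca_hatTlmax}) after absorbing constants.

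For the final $\ell_\infty$ bound on $\hat\S_{l_{\max}}-\S^{\ast}$, I apply Theorem~\ref{thm:hatS_infty} directly: substituting $\errrank\lesssim\sigma_z\sqrt{\dmax\rmax+r^{\ast}}$ and $\errinf\lesssim\sigma_z(\log\dmax)^{1/2}$ gives
\[
C\,\kappa_0^{2m}\mu_1^{2m}\bigl(\rmax^m/\dmin^{m-1}\bigr)^{1/2}\bigl(\sigma_z\sqrt{\dmax\rmax+r^{\ast}}+\sigma_z\sqrt{\gamma|\Omega^{\ast}|\log\dmax}\bigr)+C\sigma_z\sqrt{\log\dmax}.
\]
The extra assumption $\mu_1^{4m}\kappa_0^{4m}\rmax^m(\dmax\rmax+r^{\ast})\le C_{9,m}\dmin^{m-1}$ is exactly what is needed to absorb the first summand into $C\sigma_z(\log\dmax)^{1/2}$, producing the stated closed form. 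The main obstacle, and the only step that is not a mechanical specialization of Section~\ref{sec:theory}, is the uniform restricted-rank bound (ii): it requires care in the union bound over the three Stiefel nets and the core, and in handling the factor $2\br$ rather than $\br$. Everything else is book-keeping plus a union bound to combine the two high-probability events into the final $1-\dmax^{-2}$ guarantee.
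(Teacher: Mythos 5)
Your proposal is correct and follows essentially the same route as the paper: the paper's proof consists of exactly two probabilistic bounds — $\errrank\le C_m\sigma_z(\dmax\rmax+r^{\ast})^{1/2}$ (quoted as Lemma~\ref{lem:8}) and $\errinf=\|\Z\|_{\ell_\infty}\le C_m'\sigma_z\log^{1/2}\dmax$ (Lemma~\ref{lem:maxofsubg}) — followed by a union bound and a direct invocation of Theorem~\ref{thm:lowrank+sparse} and Theorem~\ref{thm:hatS_infty}. The only cosmetic difference is that you re-derive the restricted-rank bound in (ii) via an $\varepsilon$-net over Stiefel manifolds, which is precisely the standard argument hidden inside the cited lemma, whereas the paper simply invokes it. Your identification of $b_l=b_u=1$, $\kprune=\infty$, $\errinf=\|\Z\|_{\ell_\infty}$, the choice $\delta=0.15$, the observation that the $\min$ in the initialization condition reduces to the $\kappa_0^{2m}\rmax^{1/2}$ factor, and the absorption of the $(\dmax\rmax+r^{\ast})^{1/2}$ term into the $\ell_\infty$ bound via the extra dimension hypothesis all match the intended mechanics.
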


Theorem~\ref{thm:rpca} has several interesting implications. If the noise is absent meaning $\sigma_z=0$, eq. (\ref{eq:rpca_hatTlmax}) implies that, for an arbitrary $\varepsilon>0$, after $l_{\max}\asymp \log(\varepsilon^{-1})$ iterations, Algorithm~\ref{algo:lowrank+sparse} outputs a $\hat\T_{l_{\max}}$ satisfying $\|\hat \T_{l_{\max}}-\T^{\ast}\|_{\rm F}=O(\varepsilon)$. Therefore, Algorithm~\ref{algo:lowrank+sparse} can exactly recover the low-rank and sparse component, separately.  
On the other hand, if $\sigma_z>0$ and $l_{\max}\asymp \log (\usigma \sigma_z^{-1})$, eq. (\ref{eq:rpca_hatTlmax}) implies that Algorithm~\ref{algo:lowrank+sparse} produces, with probability at least $1-\dmax^{-2}$, 
\begin{align}\label{eq:rpca_hatTlmax_opt}
\|\hat \T_{l_{\max}}-\T^{\ast}\|_{\rm F}^2\leq C_{2,m}\sigma_z^2(\dmax\rmax+r^{\ast}+\gamma|\Omega^{\ast}|\log \dmax).
\end{align}
Since the intrinsic model complexity is of order $\dmax\rmax+r^{\ast}+|\Omega^{\ast}|$, the bound (\ref{eq:rpca_hatTlmax_opt}) is sharp up to logarithmic factors. Similar bounds also hold for $\|\hat \S_{l_{\max}}-\S^{\ast}\|_{\rm F}^2$. In addition, if $\mu_1^{4m}\kappa_0^{4m}\rmax^m(\dmax\rmax+r^{\ast}+\gamma|\Omega^{\ast}|)\leq C_{9,m}\dmin^{m-1}$, we get with probability at least $1-\dmax^{-2}$,
\begin{align}\label{eq:rpca_hatSlmax_opt}
\|\hat \S_{l_{\max}}-\S^{\ast}\|_{\ell_\infty}\leq C_{7,m}\sigma_z\log^{1/2}\dmax. 
\end{align}
Bound (\ref{eq:rpca_hatSlmax_opt}) is nearly optimal. To see it, consider the simpler model that $\T^{\ast}={\bf 0}$.  Then, it is equivalent to estimate a sparse tensor from the data $\S^{\ast}+\Z$. Without further information, $O(\sigma_z\log^{1/2}\dmax)$ is the best sup-norm performance one can expect in general. 

\textit{Comparison with existing literature.}
In \cite{lu2016tensor,zhou2017outlier}, the authors studied noiseless RPCA assuming low tubal rank and proved that a convex program can exactly recover the underlying parameters. Their method works only for third order tensor and is not applicable to low Tucker-rank tensors, and there exists no statistical guarantee for the noisy setting.  
 In \cite{gu2014robust}, the authors proposed the convex relaxation by unfolding a tensor into matrices. Their method is statistically sub-optimal and  computationally more demanding. See Table~\ref{table:comparison} in Introduction.   
Interestingly, by setting $|\Omega^{\ast}|=0$, our Theorem~\ref{thm:rpca} degrades to well-established results for tensor PCA in the literature, e.g.,  higher order orthogonal iteration in \cite{zhang2018tensor} and regularized jointly gradient descent in \cite{han2020optimal}.

\textit{Initialization.}
We verify that the initialization conditions required by Theorem \ref{thm:rpca} can be satisfied under mild conditions.  Recall that we denote the $\lfloor pd^*\rfloor $-th largest entry of $\A$ in absolute value by $|\A|^{(\lfloor pd^*\rfloor)}$.
The idea of the initialization process is to first truncate the observed tensor $\A$ defined by $[\trunc_{\tau}(\A)]_{\omega}=\tau\cdot {\rm Sign}([\A]_{\omega})\cdot \mathbbm{1}(|[\A]_{\omega}|> \tau)+[\A]_{\omega}\cdot \mathbbm{1}(|[\A]_{\omega}|\leq \tau)$.  
We then apply the higher-order orthogonal iteration (HOOI) to $[\trunc_{\tau}(\A)]_{\omega}$. The choice of $\tau$ is given in Algorithm \ref{alg:init:rpca}.  The details of HOOI can be found in the supplement.  We remark that, in practice, the spikiness-related parameter $\mu_1$ can be set to $2^m+\log \bar d$ and gradually double it if Algorithm~\ref{algo:lowrank+sparse} fails to converge.  The theoretical guarantee is provided by the following lemma.

\begin{lemma}\label{lemma:init:rpca}
	Suppose Assumption \ref{assump:spikiness} holds, the support of $\S^*$ is $\Omega^*$ with cardinality $|\Omega^*|$, $\Z$ has i.i.d entries with $\EE[\Z]^2_{\omega}=\sigma_z^2$ and $\EE\exp\{t[\Z]_{\omega}\}\leq \exp\{c_1t^2\sigma_z^2\}$ for $\forall t\in \RR$ and some absolute constant $c_1>0$. There exist absolute constants $C_{1,m}, C_{2,m}, C_{3,m}, c_{2,m}>0$ such that if the maximum iteration of {\rm HOOI} $t_{\max} \geq C_{1,m}\big(\log(\dmax\kappa_0)\vee 1\big)$ and
\vspace{-0.3cm}
\begin{enumerate}[label=(\alph*)]
\item {\it Sparsity condition}: $|\Omega^*| \leq  c_{2,m}\kappa_0^{-4m-2}\rmax^{-2}\mu_1^{-4}\log^{-2}(\dmax)\cdot\min\{(\minl/\sigma_z)^2,d^*\}$,
\vspace{-0.3cm}
\item {\it Signal-to-noise ratio}: $\minl/\sigma_z \geq C_{2,m}\max\{\kappa_0^{2m}\rmax^{1/2}\mu_1[(r^*)^{1/2} + (\dmax\rmax)^{1/2}]\log(\dmax), (d^*)^{1/4}\}$.,
\end{enumerate}
\vspace{-0.3cm}
the output of Algorithm~\ref{alg:init:rpca} satisfies the initialization condition in Theorem \ref{thm:rpca} with probability at least $1-C_{3,m}\dmax^{-2}$.
\end{lemma}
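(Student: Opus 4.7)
The plan is to analyze Algorithm~\ref{alg:init:rpca} by writing the truncated observation as $\trunc_\tau(\A)=\T^*+\errorE$ and bounding the perturbation $\errorE$ in spectral norm so that the standard HOOI perturbation theory yields the desired accuracy. The spikiness Assumption~\ref{assump:spikiness} gives $\|\T^*\|_{\ell_\infty}\le \mu_1\fro{\T^*}/\sqrt{d^*}\le \mu_1\sqrt{r^*}\,\bsigma/\sqrt{d^*}$, so choosing the truncation level $\tau$ of order $\mu_1\bsigma(r^*/d^*)^{1/2}+C\sigma_z\sqrt{\log\dmax}$ leaves essentially every entry of $\T^*$ untouched while capping both sparse outliers and sub-Gaussian tails of $\Z$.

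First, I would decompose $\errorE$ into three pieces: (i) a truncated noise tensor $\tilde\Z$ with entries bounded by $2\tau$ that remains sub-Gaussian with variance $O(\sigma_z^2)$, (ii) a contribution supported on $\Omega^*$ whose nonzero entries are bounded in magnitude by $\tau+\|\S^*\|_{\ell_\infty}$, and (iii) a deterministic truncation bias supported only on entries where $|[\A]_\omega|>\tau$. On the high-probability event $\{\max_\omega|[\Z]_\omega|\le C\sigma_z\sqrt{\log\dmax}\}$, obtained from a union bound over $d^*$ sub-Gaussian tails, the bias in piece (iii) vanishes outside $\Omega^*$ and can be absorbed into piece (ii).

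Second, I would bound $\op{\calM_j(\errorE)}$ for each $j\in[m]$. For piece (i), matrix Bernstein for bounded-entry random matrices yields $\op{\calM_j(\tilde\Z)}\lesssim \sigma_z(\sqrt{\dmax}+\sqrt{d_j^-})\log^{1/2}\dmax$. For piece (ii), the crude Frobenius bound on an $|\Omega^*|$-sparse matrix with entrywise magnitude at most $\tau+\|\S^*\|_{\ell_\infty}$ gives $\op{\cdot}\le \sqrt{|\Omega^*|}\,(\tau+\|\S^*\|_{\ell_\infty})$. Combining and using the assumed sparsity bound (a) together with SNR bound (b) produces $\max_j\op{\calM_j(\errorE)}\le c\,\usigma/(\kappa_0^{m}\sqrt{\dmax\rmax})$ for a small enough $c$. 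This is exactly the regime in which $t_{\max}\gtrsim\log(\dmax\kappa_0)$ iterations of HOOI converge, delivering $\fro{\hat\T_0-\T^*}\lesssim \sqrt{r^*}\max_j\op{\calM_j(\errorE)}\le c_{1,m}\usigma(\kappa_0^{2m}\rmax^{1/2})^{-1}$.

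Finally, to verify the incoherence requirement $\mu(\hat\T_0)\le 2\mu_1\kappa_0$, I would apply a Wedin-type bound to the leading left singular vectors of $\calM_j(\trunc_\tau(\A))$ relative to those of $\calM_j(\T^*)$; converting the sine-theta distance into a row-wise $\ell_2$ perturbation costs a $\kappa_0$ factor, and the small operator-norm perturbation established above keeps this well within the factor-of-two slack. Alternatively, one may add a final projection of $\hat\T_0$ onto $\UU_{\br,\mu_1}$ (via entrywise clipping and HOSVD, as in the $\trim$ operator of Section~\ref{sec:rgd}), which does not increase the Frobenius error because $\T^*$ itself lies in the ball. The main technical obstacle is the calibration of $\tau$: it must simultaneously be large enough that spikiness plus sub-Gaussian tails prevent non-outlier entries from being clipped, small enough that each clipped sparse entry contributes at most $\tau$ to $\errorE$, and tuned so that the resulting spectral-norm bound sits below the HOOI signal-to-noise threshold $\usigma/\sqrt{\dmax\rmax}$. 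These three requirements together are precisely what force the sparsity and SNR conditions stated in the lemma.
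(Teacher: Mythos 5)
Your truncation-calibration argument and the decomposition of the perturbation into a bounded-noise part and an $\Omega^*$-supported part are essentially the same as the paper's, and your observation that the last $\textsf{Trim}$ step restores incoherence is also correct. The genuine gap is in the conclusion you draw from HOOI convergence: you claim
$\fro{\hat\T_0-\T^*}\lesssim \sqrt{r^*}\max_j\op{\calM_j(\errorE)}$,
with $\op{\calM_j(\tilde\Z)}\lesssim \sigma_z(\sqrt{\dmax}+\sqrt{d_j^-})\log^{1/2}\dmax$. But $\sqrt{d_j^-}\approx (d^*/d_j)^{1/2}\approx d^{(m-1)/2}$, so this one-shot spectral bound forces a requirement of roughly $\usigma/\sigma_z\gtrsim \kappa_0^m\sqrt{\rmax\,\dmax}\,d^{(m-1)/2}$, i.e.\ order $d^{m/2}$. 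The lemma's SNR condition (b) only gives $\usigma/\sigma_z\gtrsim (d^*)^{1/4}\approx d^{m/4}$ and $\usigma/\sigma_z\gtrsim \kappa_0^{2m}\rmax\mu_1\sqrt{\dmax}\log\dmax$, neither of which controls a $d^{(m-1)/2}$ noise term once $m\ge 3$. So the inequality chain you wrote cannot hold under the stated hypotheses.

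What you are missing is that the analysis must be genuinely two-staged, and the key noise terms must be bounded only after projection onto low-dimensional subspaces, not in raw operator norm. The paper first uses $\op{\bX_i\bZ_i^{\top}}$ and $\op{\bZ_i\bZ_i^{\top}-d_i^-\sigma_z^2\bI}$ together with the $(d^*)^{1/4}$ condition to get the HOSVD spectral initializations $\hat\bU_i^0$ within sine-distance $1/2$ of the truth (events $\calE_2,\calE_3$). It then runs the HOOI contraction, and at that point the relevant noise quantity is $\op{\bZ_1(\hat\bU_2^t\otimes\cdots\otimes\hat\bU_m^t)}$, which by an $\epsilon$-net argument over the small subspaces (event $\calE_4$) is of order $\sigma_z(\sqrt{\dmax\rmax}+\rmax^{(m-1)/2})$ — polynomially smaller than $\sigma_z\sqrt{d_j^-}$. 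This is exactly what produces the final rate $C_m|\Omega^*|^{1/2}\tau_u + C_m(\sqrt{r^*}+\sqrt{\dmax\rmax})\sigma_z$ and explains why the $(d^*)^{1/4}$ and $\sqrt{\dmax\rmax}$ conditions appear in (b) separately: the former is a coarse spectral-initialization threshold, the latter the post-contraction accuracy. Replacing your one-shot $\op{\calM_j(\errorE)}$ bound with this two-stage HOOI argument (and bounding the outlier piece $\errorE$ and the noise $\Z$ as separate perturbations, since they require different concentration tools) closes the gap.
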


\begin{algorithm}
	\caption{Initialization for SG-RPCA}\label{alg:init:rpca}
	\begin{algorithmic}
		\STATE{Take $p = \min\{(8\mu_1^2)^{-1},(64m\log\dmax)^{-1}\}$, and set $\tau_0 = |\A|^{(\lfloor pd^* \rfloor)}$.}
		\STATE{Define $\A_0$ by $[\A_0]_{\omega} = \begin{cases}
				&[\A]_{\omega}, \text{~if~}|[\A]_{\omega}|\leq \tau_0\\
				&0,\text{~otherwise.}
			\end{cases}$}
		\STATE{Set $\wt\A = \trunc_{\tau}(\A)$ with $\tau = 10\sqrt{m\log(\dmax)}\mu_1\fro{\A_0}/\sqrt{d^*}$.}
		\STATE{$(\hat\T; \hat\bU_1,\cdots,\hat\bU_m)= \textrm{HOOI}(\wt\A)$.}
		\STATE{$\hat\T_0 = \textsf{Trim}_{\eta,\br}(\hat\T)$ with $\eta = 16\mu_1\fro{\hat\T}/(7\sqrt{d^*})$.}
	\end{algorithmic}
\end{algorithm}

\subsection{Tensor PCA under Heavy-tailed Noise}\label{sec:heavy_tail}
Most aforementioned literature in Section~\ref{sec:rpca} on tensor PCA focus on sub-Gaussian \citep{vershynin2011spectral,vershynin2018high,pan2018covariate,li2018tucker}.  Nowadays, heavy-tailed noise routinely arise  in diverse fields.   However, the performances of most existing approaches for tensor PCA significantly deteriorate when noise have heavy tails. 
Interestingly, tensor PCA under heavy-tailed noise can be regarded as a special case of SG-RPCA. Suppose that the observed tensorial data $\A$ satisfies $\A=\T^{\ast}+\Z$ with $\T^{\ast}\in\UU_{\br,\mu_1}$. The noise tensor $\Z$ satisfies the following tail assumption.

\begin{assumption}($\theta$-tailed noise)\label{assump:heavy_tail}
The entries of $\Z$ are $i.i.d.$ with $\EE [\Z]_{\omega}=0$ and ${\rm Var}([\Z]_{\omega})\leq \sigma_z^2$. There exist $\theta>2$  such that $\PP(|[\Z]_{\omega}|\geq \sigma_z\cdot t )\leq t^{-\theta}$ for all $t>1$. 
\end{assumption}
If $\theta$ is only moderately large (e.g., $\theta= 3$), many entries of $\Z$ can have large magnitudes such that the typical concentration properties of $\Z$ (e.g., the bounds of $\errrank$ and $\errinf$ in Section~\ref{sec:rpca}) disappear.  
Fix any $\alpha>1$, we decompose $[\Z]_{\omega}=[\S_{\alpha}]_{\omega}+[\wt\Z]_{\omega}$ such that 
$$
[\wt \Z]_{\omega}=\mathbbm{1}(|[\Z]_{\omega}|\geq \alpha \sigma_z)\cdot (\alpha \sigma_z)\textrm{sign}([\Z]_{\omega})+\mathbbm{1}(|[\Z]_{\omega}|< \alpha \sigma_z)\cdot [\Z]_{\omega},\quad \forall \omega\in[d_1]\times\cdots\times [d_m].
$$
By definition, the entry $[\S_{\alpha}]_{\omega}\neq 0$ if and only if $|[\Z]_{\omega}|> \alpha \sigma_z$.  Now, we write 
\begin{align}\label{eq:heavy_tailed_PCA}
\A=\T^{\ast}+\S_{\alpha}+\wt \Z
\end{align}

\begin{lemma}\label{lem:heavy_tail}
Suppose Assumption~\ref{assump:heavy_tail} holds and the distribution of $[\Z]_{\omega}$ is symmetric. For any $\alpha>1$, we have $\EE \wt\Z={\bf 0}$. There exists an event $\frakE_1$ with $\PP(\frakE_1)\geq 1-\dmax^{-2}$ such that $\S_{\alpha}\in \SS_{\alpha'}$ in the event $\frakE_1$ where $\alpha'=\max\{2\alpha^{-\theta}, 10(\dmax/d^{\ast})\log(m\dmax^3)\}$.
\end{lemma}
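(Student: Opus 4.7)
The plan is to establish the two claims separately: the mean-zero property of $\wt\Z$ follows almost immediately from symmetry, while the sparsity claim for $\S_\alpha$ is a concentration-plus-union-bound argument over all slices.

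For the first claim, observe that the map $f(z) = \mathbbm{1}(|z| \geq \alpha\sigma_z)\cdot (\alpha\sigma_z)\,\mathrm{sign}(z) + \mathbbm{1}(|z| < \alpha\sigma_z)\cdot z$ is odd, i.e.\ $f(-z) = -f(z)$. Since the distribution of $[\Z]_\omega$ is symmetric about $0$, $[\wt\Z]_\omega = f([\Z]_\omega)$ has the same distribution as $-[\wt\Z]_\omega$, so $\EE[\wt\Z]_\omega = 0$ entry-wise.

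For the sparsity claim, I will control the number of non-zero entries of $\S_\alpha$ lying in each slice $\be_i^\top \calM_j(\S_\alpha)$. Define the Bernoulli indicators $X_\omega := \mathbbm{1}\bigl(|[\Z]_\omega| > \alpha\sigma_z\bigr)$, which are i.i.d.\ with success probability $p := \PP(|[\Z]_\omega| > \alpha\sigma_z) \leq \alpha^{-\theta}$ by Assumption~\ref{assump:heavy_tail}. For any fixed $j\in[m]$ and $i\in[d_j]$, the number of non-zero entries in that slice equals $N_{j,i} := \sum_\omega X_\omega$ where the sum runs over the $d_j^- = d^*/d_j$ indices with $j$-th coordinate equal to $i$. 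A standard Chernoff/Bernstein bound for sums of independent Bernoullis yields
\begin{equation*}
\PP\Bigl(N_{j,i} \geq \max\bigl\{2 p\, d_j^-,\; c\log(m\bar d_{\max}^3)\bigr\}\Bigr) \leq \bar d_{\max}^{-3}/m
\end{equation*}
for a suitable absolute constant $c$, by separating the two regimes $p d_j^- \geq \log(m\bar d_{\max}^3)$ (multiplicative Chernoff with factor $2$) and $pd_j^- < \log(m\bar d_{\max}^3)$ (Poisson-type tail of a sum of rare Bernoullis).

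A union bound over the $\sum_{j=1}^m d_j \leq m\bar d_{\max}$ slices produces an event $\frakE_1$ with $\PP(\frakE_1) \geq 1-\bar d_{\max}^{-2}$ on which every slice count satisfies $N_{j,i} \leq \max\{2p\, d_j^-, c\log(m\bar d_{\max}^3)\}$. Converting to a slice-sparsity fraction and using $d_j^- \geq d^*/\bar d_{\max}$ gives
\begin{equation*}
\frac{N_{j,i}}{d_j^-} \leq \max\Bigl\{2\alpha^{-\theta},\; c\,\frac{\bar d_{\max}}{d^*}\log(m\bar d_{\max}^3)\Bigr\},
\end{equation*}
which matches $\alpha'$ up to the stated constant $10$. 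Hence $\S_\alpha \in \SS_{\alpha'}$ on $\frakE_1$.

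The main obstacle is not analytic but book-keeping: one must be careful to use a two-regime tail bound that is tight both when $p d_j^- \ll 1$ (where the expected slice count is subconstant and the $\log$-term dominates) and when $p d_j^- \gg 1$ (where the multiplicative $2p d_j^-$ bound is the binding one), and to track the constants so that the failure probability $\bar d_{\max}^{-3}/m$ at each slice combines with the number of slices $\leq m\bar d_{\max}$ to yield the claimed $\bar d_{\max}^{-2}$.
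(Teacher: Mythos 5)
Your proposal is correct and follows essentially the same route as the paper: the mean-zero claim is the same observation that winsorizing is an odd function of a symmetric variable, and the sparsity claim uses exactly the same two-regime Chernoff/union-bound argument over slices, splitting on whether the expected slice count $p d_j^-$ exceeds a logarithmic threshold before dividing by $d_j^-$ and invoking $d_j^-\ge d^\ast/\dmax$. The paper simply carries out the two Chernoff computations with explicit constants ($3$ and $10$) where you leave a generic $c$; there is no substantive difference.
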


By Lemma~\ref{lem:heavy_tail}, in the event $\frakE_1$, model (\ref{eq:heavy_tailed_PCA}) satisfies the SG-RPCA model  such that $\T^{\ast}\in\UU_{\br,\mu_1}, \S_{\alpha}\in \SS_{\alpha'}$. Meanwhile, the entries of $\wt \Z$ are sub-Gaussian for being uniformly bounded by $\alpha \sigma_z$. Therefore, conditioned on $\frakE_1$, Theorem~\ref{thm:rpca} is readily applicable.

\begin{theorem}\label{thm:heavy_tail}
Suppose Assumption~\ref{assump:spikiness} and the conditions of Lemma~\ref{lem:heavy_tail} hold. Choose $\alpha\asymp (d^{\ast}/\dmax)^{1/\theta}, \gamma>1$ as the tuning parameters in Algorithm~\ref{algo:lowrank+sparse}. Assume $(\dmax/d^{\ast})\log(m\dmax^3)\leq c_{2,m}(\mu_1^{4m}\kappa_0^{4m} \rmax^m)^{-1}$, $\gamma\geq 1+4m$ and
\vspace{-0.3cm}
\begin{enumerate}[label=(\alph*)]
\item {\it Initialization}: $\|\hat\T_0-\T^*\|_{\rm F} \leq c_{1,m}\usigma\cdot(\kappa_0^{2m}\rmax^{1/2})^{-1}$ and $\hat\T_0$ is $(2\mu_1\kappa_0)^2$-incoherent
\vspace{-0.3cm}
\item {\it Signal-to-noise ratio}: $\usigma/(\alpha\sigma_z)\geq C_{1,m}\kappa_0^{2m}\rmax^{1/2}\cdot \big(\dmax\rmax+r^{\ast}+\gamma\dmax\log(m\dmax)\big)^{1/2}$
\vspace{-0.3cm}
\end{enumerate}
where $c_{1,m},c_{2,m},C_{1,m}>0$ are constants depending only on $m$. If the step size $\beta\in[0.005, 0.36]$, then after $l_{\max}>1$ iterations, with probability at least $1-2\dmax^{-2}$, we have 
\begin{align}\label{eq:heavy_tail_hatTlmax}
\|\hat\T_{l_{\max}}-\T^{\ast}\|_{\rm F}^2\leq& 0.98^{l_{\max}} \|\hat\T_0-\T^{\ast}\|_{\rm F}^2+C_{2,m}\gamma\big(\dmax\rmax+r^{\ast}+\dmax \log(m\dmax)\big)\alpha^2\sigma_z^2,
\end{align}
where $C_{2,m}>0$ is a constant depending only on $m$. 
\end{theorem}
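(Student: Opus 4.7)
\textit{Proof proposal for Theorem~\ref{thm:heavy_tail}.} The strategy is a direct reduction to the sub-Gaussian RPCA result (Theorem~\ref{thm:rpca}) via the truncation decomposition introduced before Lemma~\ref{lem:heavy_tail}. Fix the truncation level $\alpha\sigma_z$ with $\alpha\asymp (d^\ast/\bar d)^{1/\theta}$, and write $\A=\T^\ast+\S_\alpha+\wt\Z$, where $\S_\alpha$ collects the residuals coming from entries of $\Z$ whose absolute value exceeds $\alpha\sigma_z$, and $\wt\Z$ is the truncated remainder. By Lemma~\ref{lem:heavy_tail}, the symmetry hypothesis gives $\EE\wt\Z={\bf 0}$, and there is an event $\frakE_1$ with $\PP(\frakE_1)\geq 1-\bar d^{-2}$ on which $\S_\alpha\in\SS_{\alpha'}$ for $\alpha'=\max\{2\alpha^{-\theta},10(\bar d/d^\ast)\log(m\bar d^3)\}$. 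With the selected $\alpha$, the first quantity in the max is of order $\bar d/d^\ast$, so the logarithmic term dominates and $\alpha'\asymp (\bar d/d^\ast)\log(m\bar d^3)$, giving $|\Omega^\ast|\leq \alpha' d^\ast\lesssim \bar d\log(m\bar d)$. Moreover $\wt\Z$ is uniformly bounded by $\alpha\sigma_z$ and centered, hence each of its entries is $O(\alpha\sigma_z)$-sub-Gaussian.

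Working throughout on the event $\frakE_1$, the representation $\A=\T^\ast+\S_\alpha+\wt\Z$ now falls exactly inside the SG-RPCA model of Section~\ref{sec:rpca} with ``noise'' scale $\alpha\sigma_z$ and ``sparse signal'' $\S_\alpha$. The core of the proof is therefore to verify, one at a time, that the hypotheses of Theorem~\ref{thm:rpca} are implied by those of Theorem~\ref{thm:heavy_tail}: (a) the initialization assumption is identical; (b) the sparsity condition $\alpha'\leq c_{2,m}(\mu_1^{4m}\kappa_0^{4m}\rmax^m)^{-1}$ is precisely the condition $(\bar d/d^\ast)\log(m\bar d^3)\leq c_{2,m}(\mu_1^{4m}\kappa_0^{4m}\rmax^m)^{-1}$ stated in the theorem (up to absorbing constants); and (c) the SNR condition of Theorem~\ref{thm:rpca}, with $\sigma_z$ replaced by $\alpha\sigma_z$ and $|\Omega^\ast|$ replaced by $O(\bar d\log(m\bar d))$, becomes exactly $\usigma/(\alpha\sigma_z)\geq C_{1,m}\kappa_0^{2m}\rmax^{1/2}(\bar d\rmax+r^\ast+\gamma\bar d\log(m\bar d))^{1/2}$, which is the hypothesis imposed.

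Once these translations are in place, Theorem~\ref{thm:rpca} applied on $\frakE_1$ yields, with probability at least $1-\bar d^{-2}$ inside $\frakE_1$,
\begin{equation*}
\fro{\hat\T_{l_{\max}}-\T^\ast}^2\leq 0.98^{l_{\max}}\fro{\hat\T_0-\T^\ast}^2 + C_{2,m}\big(\bar d\rmax+r^\ast+\gamma|\Omega^\ast|\log\bar d\big)(\alpha\sigma_z)^2.
\end{equation*}
Substituting the bound $|\Omega^\ast|\lesssim \bar d\log(m\bar d)$ and absorbing the extra logarithmic factor into the constant $C_{2,m}$ gives the stated bound, and a union bound over $\frakE_1$ and the SG-RPCA failure event produces the overall probability $1-2\bar d^{-2}$.

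\emph{Main obstacle.} The delicate point is the tuning of $\alpha$. Increasing $\alpha$ reduces $|\Omega^\ast|\lesssim 2\alpha^{-\theta}d^\ast$ (so the sparse corruption is easier to handle by Theorem~\ref{thm:rpca}), but inflates the effective noise level $\alpha\sigma_z$ entering the final error bound. The choice $\alpha\asymp (d^\ast/\bar d)^{1/\theta}$ is exactly the one that makes $2\alpha^{-\theta}$ match the intrinsic floor $10(\bar d/d^\ast)\log(m\bar d^3)$ in $\alpha'$ (up to logs), so that the sparsity condition of SG-RPCA cannot be improved further by tuning $\alpha$, while the SNR condition remains manageable. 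The symmetry assumption on $\Z$ is also crucial, since without it the truncated noise $\wt\Z$ would be biased and Theorem~\ref{thm:rpca} would pick up an additional deterministic bias term.
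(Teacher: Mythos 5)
Your reduction strategy is the paper's: condition on the event $\frakE_1$ from Lemma~\ref{lem:heavy_tail}, view $\A=\T^\ast+\S_\alpha+\wt\Z$ as an SG-RPCA instance with bounded, centered noise, and hand things to Theorem~\ref{thm:rpca}. Your choice of $\alpha$, the accounting for $\alpha'$, the role of symmetry, and the translation of the initialization and sparsity conditions are all correct.

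The final substitution, however, has a genuine gap. If you invoke Theorem~\ref{thm:rpca} as a black box and replace $\sigma_z\mapsto\alpha\sigma_z$ and $|\Omega^\ast|\asymp\alpha' d^\ast\asymp\dmax\log(m\dmax)$, the term $\gamma|\Omega^\ast|\log\dmax\,\sigma_z^2$ becomes of order $\gamma\,\dmax\log(m\dmax)\cdot\log\dmax\cdot(\alpha\sigma_z)^2$, which exceeds the stated $\gamma\,\dmax\log(m\dmax)\cdot(\alpha\sigma_z)^2$ by a factor of $\log\dmax$. You assert this can be ``absorbed into $C_{2,m}$,'' but $C_{2,m}$ may depend only on $m$, so a $\log\dmax$ factor cannot be hidden there; the same extra $\log\dmax$ appears when translating the SNR condition, so your verification of condition (b) is also off. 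The resolution, which the paper signals by writing ``replace $\sigma_z$ with $\alpha\sigma_z$ and $|\Omega^\ast|\log\dmax$ with $\alpha' d^\ast$,'' is that the $\log\dmax$ in Theorem~\ref{thm:rpca} comes solely from bounding $\errinf=\|\Z\|_{\ell_\infty}\lesssim\sigma_z\log^{1/2}\dmax$ for unbounded sub-Gaussian noise. Here $\wt\Z$ is bounded, so $\errinf\leq\alpha\sigma_z$ deterministically with no logarithm, and re-running the argument (or applying Theorem~\ref{thm:lowrank+sparse} directly with $\errrank\lesssim(\dmax\rmax+r^\ast)^{1/2}\alpha\sigma_z$ and $\errinf\leq\alpha\sigma_z$) replaces the whole unit $|\Omega^\ast|\log\dmax$ by $\alpha' d^\ast$, yielding the stated bound and SNR exactly.
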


By Theorem~\ref{thm:heavy_tail} and (\ref{eq:heavy_tail_hatTlmax}), if $\kappa_0,\gamma=O(1)$, $d_j\asymp d$ with $\alpha\asymp d^{(m-1)/\theta}$ and $l_{\max}$ is properly chosen, we get with probability at least $1-2d^{-2}$ that 
\begin{align}\label{eq:heavy_tail_hatTlmax_eq1}
\|\hat\T_{l_{\max}}-\T^{\ast}\|_{\rm F}^2\leq C_{2,m}\big(d\rmax+r^{\ast}+d\log(md)\big) d^{2(m-1)/\theta}\cdot \sigma_z^2.
\end{align}
The bound (\ref{eq:heavy_tail_hatTlmax_eq1}) decreases as $\theta$ increases implying that the final estimate $\hat\T_{l_{\max}}$ becomes more accurate as the noise tail gets lighter. 
Moreover, if Assumption~\ref{assump:heavy_tail} holds with $\theta=2(m-1)\log d$ so that $d^{2(m-1)/\theta}=O(1)$, bound (\ref{eq:heavy_tail_hatTlmax_eq1}) implies $\|\hat\T_{l_{\max}}-\T^{\ast}\|_{\rm F}^2/\sigma_z^2=O\big(d\rmax+r^{\ast}+d\log(md)\big)$ which is sharp up to logarithmic factors. Similarly as Theorem~\ref{thm:rpca}, under Assumption~\ref{assump:spikiness}, it is possible to derive an $\ell_\infty$-norm bound for $\hat \T_{l_{\max}}-\T^{\ast}$. 

\textit{Initialization}. 
The initialization can be attained by Algorithm \ref{alg:init:rpca}. Under bounded spikiness condition, we initially take $\mu_1=2^m+\log(\dmax)$ and gradually double it if Algorithm~\ref{algo:lowrank+sparse} fails to converge. Similarly by Lemma \ref{lemma:init:rpca}, we have the following initialization guarantee for heavy-tailed tensor PCA.
\begin{lemma}\label{lemma:init:heavytail}
Under the setting of Theorem \ref{thm:heavy_tail}, there exist absolute constants $C_{1,m}, C_{2,m}, C_{3,m}>0$ such that if  the signal-noise-ratio 
	$$
	\minl/(\alpha\sigma_z) \geq C_{1,m}\max\left\{\kappa_0^{2m+1}\mu_1^2\rmax^{1/2}[(r^*)^{1/2} + (\dmax\rmax)^{1/2}]\log^2(\dmax),(d^*)^{1/4}\right\},
	$$
after $t_{\max} \geq C_{2,m}\big(\log(\dmax\kappa_0)\vee 1\big)$ iterations, the output of Algorithm \ref{alg:init:rpca} satisfies the initialization condition in Theorem \ref{thm:heavy_tail} with probability at least $1-C_{3,m}\dmax^{-2}$. 
\end{lemma}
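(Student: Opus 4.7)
The plan is to reduce the heavy-tailed tensor PCA problem to an instance of SG-RPCA (conditional on a high-probability event) and then directly invoke Lemma~\ref{lemma:init:rpca}. First, I would work on the event $\frakE_1$ from Lemma~\ref{lem:heavy_tail}, which has probability at least $1-\dmax^{-2}$. On $\frakE_1$ we have the decomposition $\A=\T^{\ast}+\S_{\alpha}+\wt\Z$ with $\S_{\alpha}\in\SS_{\alpha'}$ for $\alpha'=\max\{2\alpha^{-\theta},10(\dmax/d^{\ast})\log(m\dmax^3)\}$ and $\EE\wt\Z={\bf 0}$. The key observation is that with the choice $\alpha\asymp (d^{\ast}/\dmax)^{1/\theta}$, the first term in $\alpha'$ is $\asymp \dmax/d^{\ast}$, so $\alpha'\lesssim (\dmax/d^{\ast})\log(m\dmax)$, giving an effective support size $|\Omega^{\ast}_{\alpha}|:=|\mathrm{supp}(\S_{\alpha})|\leq \alpha' d^{\ast}\lesssim \dmax\log(m\dmax)$.

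Next, I would verify the noise $\wt\Z$ fits the sub-Gaussian setting of Lemma~\ref{lemma:init:rpca}. Because the truncation is applied entrywise to i.i.d.\ $[\Z]_{\omega}$, the entries of $\wt\Z$ are i.i.d.; they are bounded by $\alpha\sigma_z$ in absolute value and have mean zero (by the symmetry assumption in Lemma~\ref{lem:heavy_tail}). Hoeffding's lemma then gives sub-Gaussian parameter $\sigma_z'\leq c\alpha\sigma_z$, so $(\T^{\ast},\S_{\alpha},\wt\Z)$ forms a valid SG-RPCA instance with effective noise scale $\alpha\sigma_z$ and effective sparsity $\lesssim \dmax\log(m\dmax)$.

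At this point I would invoke Lemma~\ref{lemma:init:rpca} with the substitutions $\sigma_z \leftarrow \alpha\sigma_z$ and $|\Omega^{\ast}|\leftarrow \dmax\log(m\dmax)$. Two conditions must be checked. The SNR condition of Lemma~\ref{lemma:init:rpca}, namely $\minl/(\alpha\sigma_z)\gtrsim \kappa_0^{2m}\mu_1\rmax^{1/2}[(r^{\ast})^{1/2}+(\dmax\rmax)^{1/2}]\log(\dmax)\vee (d^{\ast})^{1/4}$, is directly implied (with room to spare) by the SNR assumed in Lemma~\ref{lemma:init:heavytail}. For the sparsity condition, we need
\[
\dmax\log(m\dmax)\ \lesssim\ \kappa_0^{-4m-2}\rmax^{-2}\mu_1^{-4}\log^{-2}(\dmax)\cdot\min\{(\minl/(\alpha\sigma_z))^2,d^{\ast}\}.
\]
Squaring the SNR bound in Lemma~\ref{lemma:init:heavytail} yields $(\minl/(\alpha\sigma_z))^2\gtrsim \kappa_0^{4m+2}\mu_1^4\rmax\cdot \dmax\rmax\log^4(\dmax)$, which after multiplying by the prefactor $\kappa_0^{-4m-2}\rmax^{-2}\mu_1^{-4}\log^{-2}(\dmax)$ dominates $\dmax\log(m\dmax)$ by at least a factor $\rmax^0 \log(\dmax)$. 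Thus the extra factor $\kappa_0\mu_1\log\dmax$ separating the two SNR requirements is precisely what absorbs the truncation-induced sparsity into the sparsity condition of Lemma~\ref{lemma:init:rpca}.

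Finally, Lemma~\ref{lemma:init:rpca} yields an $\hat\T_0$ meeting the initialization hypotheses of Theorem~\ref{thm:heavy_tail} with probability at least $1-C_{3,m}\dmax^{-2}$ on $\frakE_1$. A union bound over $\frakE_1^{c}$ and the event of Lemma~\ref{lemma:init:rpca} gives the stated probability. The main obstacle, as indicated above, is the bookkeeping in Step~3: choosing $\alpha\asymp (d^{\ast}/\dmax)^{1/\theta}$ so that $\alpha^{-\theta}d^{\ast}$ is no larger than the logarithmic floor in $\alpha'$, and then carefully matching the exponents of $\kappa_0,\mu_1,\rmax,\log\dmax$ so the single SNR hypothesis of Lemma~\ref{lemma:init:heavytail} subsumes both the SNR and the sparsity conditions of Lemma~\ref{lemma:init:rpca}.
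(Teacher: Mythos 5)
Your proposal is correct and takes the same route as the paper's proof: condition on $\frakE_1$ from Lemma~\ref{lem:heavy_tail}, note that the choice $\alpha\asymp(d^{\ast}/\dmax)^{1/\theta}$ forces the truncation-induced sparsity to be bounded by $\alpha'd^{\ast}\lesssim\dmax\log(m\dmax)$, and then apply Lemma~\ref{lemma:init:rpca} with $\sigma_z$ replaced by $\alpha\sigma_z$. You spell out the exponent bookkeeping (verifying that the SNR hypothesis of Lemma~\ref{lemma:init:heavytail} implies both conditions (a) and (b) of Lemma~\ref{lemma:init:rpca}) in more detail than the paper, which simply asserts this, but the argument is the same.
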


\subsection{Bernoulli Tensor Robust PCA}\label{sec:binary_tensor}
Following the Bernoulli tensor model in Example~\ref{exp:binary_tensor},  based on a binary tensorial observation $\A\in\{0,1\}^{d_1\times\cdots\times d_m}$,  we choose the loss function to be the {\it negative} log-likelihood (without loss of generality, we set the scale parameter $\sigma=1$ for ease of exposition)
\begin{align}\label{eq:binary_loss}
\frakL(\T+\S)=-\sum\nolimits_{\omega} \Big([\A]_{\omega}\log p([\T+\S]_{\omega})+(1-[\A]_{\omega})\log\big(1-p([\T+\S]_{\omega})\big)\Big).
\end{align}
The RHS of (\ref{eq:binary_loss}) is an entry-wise loss, and Assumptions~\ref{assump:lowrank} and \ref{assump:sparse} are determined by the entry-wise second order derivatives. For $\forall\zeta>0$, define
$$
b_{u,\zeta}:=\max\bigg\{ \sup_{|x|\leq \zeta} \frac{(p'(x))^2}{p^2(x)}-\frac{p''(x)}{p(x)},\   \sup_{|x|\leq \zeta} \frac{(p'(x))^2}{(1-p(x))^2}+\frac{p''(x)}{1-p(x)}   \bigg\}
$$
$$
b_{l,\zeta}:=\min\bigg\{ \inf_{|x|\leq \zeta} \frac{(p'(x))^2}{p^2(x)}-\frac{p''(x)}{p(x)},\   \inf_{|x|\leq \zeta} \frac{(p'(x))^2}{(1-p(x))^2}+\frac{p''(x)}{1-p(x)}   \bigg\}
$$
Assuming $b_{l,\zeta}, b_{u,\zeta}>0$, then the loss function (\ref{eq:binary_loss}) satisfies Assumptions~\ref{assump:lowrank} and \ref{assump:sparse} with constants $b_{l,\zeta}$ and $b_{u,\zeta}$ for $\BB_{2}^{\ast}=\BB_{\infty}^{\ast}=\{\T+\S: \|\T+\S\|_{\ell_\infty}\leq \zeta, \T\in\MM_{\br}, \S\in\SS_{\gamma\alpha}\}$ (more precisely, the low-rank and sparse conditions are unnecessary). 

Notice that $b_{l,\zeta}$ and $b_{u,\zeta}$ can be extremely sensitive to large $\zeta$. For instance \citep{wang2020learning}, we have
\begin{align*}
b_{l,\zeta}=\begin{cases}
\frac{e^{\zeta}}{(1+e^{\zeta})^2},& \textrm{ if } p(x)=(1+e^{-x})^{-1}\\
\gtrsim \frac{\zeta+1/6}{\sqrt{2\pi}}e^{-\zeta^2},& \textrm{ if } p(x)=\Phi(x)
\end{cases}
\quad {\rm and}
\quad
b_{u,\zeta}=\begin{cases}
\frac{1}{4},& \textrm{ if } p(x)=(1+e^{-x})^{-1}\\
\geq 0.6,& \textrm{ if } p(x)=\Phi(x) 
\end{cases} 
\end{align*}
implying that $b_{u,\zeta}b_{l,\zeta}^{-1}$ increases very fast as $\zeta$ becomes larger. Toward that end, we impose the following assumption which implies $\|\T^{\ast}\|_{\ell_\infty}\leq \zeta/2$ so that $\|\S^{\ast}+\T^{\ast}\|_{\ell_\infty}\leq \zeta$. 
\begin{assumption}\label{assump:binary_tensor}
There exists a small $\zeta>0$ such that $\|\S^{\ast}\|_{\ell_\infty}\leq \zeta/2$, $\T^*$ satisfies Assumption~\ref{assump:spikiness} and its largest singular value
$\overline{\lambda}\leq c_m(\mu_1\kappa_0)^{-m}(\sqrt{d^*/r^*})\cdot\zeta$
where $r^{\ast}=r_1\cdots r_m$ and $d^{\ast}=d_1\cdots d_m$.
\end{assumption}
Meanwhile, we shall guarantee that the iterates $(\hat \T_l, \hat \S_l)$ produced by our algorithm satisfy $\|\hat\T_l\|_{\ell_\infty}, \|\hat\S_l\|_{\ell_\infty}=O(\zeta)$. The infinity norm bound for the sparse part is ensured by the choice of $\kprune$. 
And the low rank part is guaranteed by the following lemma.

\begin{lemma}\label{lem:trim2}
Suppose that Assumptions~\ref{assump:spikiness} and \ref{assump:binary_tensor} hold. Given any $\W$ such that $\|\W-\T^{\ast}\|_{\rm F}\leq \usigma/8$, if we choose $\eta = 16\mu_1\fro{\W}/(7\sqrt{d^*})$, then $\|\textsf{Trim}_{\eta,\br}(\W)\|_{\ell_{\infty}}\leq (9\zeta/16)\cdot (\kappa_0\mu_1)^m$.
\end{lemma}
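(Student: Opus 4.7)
My plan is to bound $\|\opH(\wt\W)\|_{\ell_\infty}$ via the Tucker representation $\opH(\wt\W)=\calC\cdot\llbracket \bV_1,\ldots,\bV_m\rrbracket$, where $\bV_k$ denotes the top-$r_k$ left singular vectors of $\calM_k(\wt\W)$ and the core $\calC=\wt\W\times_{k=1}^m\bV_k^\top$ satisfies $\|\calC\|_{\rm F}=\|\opH(\wt\W)\|_{\rm F}\le\|\wt\W\|_{\rm F}\le\|\W\|_{\rm F}$. Cauchy--Schwarz applied to the identity $[\opH(\wt\W)]_{i_1,\ldots,i_m}=\calC\times_{k=1}^m(\be_{i_k}^\top\bV_k)$ yields
\[
\big|[\opH(\wt\W)]_\omega\big|\;\le\;\|\W\|_{\rm F}\,\prod_{k=1}^m\|\be_{i_k}^\top\bV_k\|_{\ell_2},
\]
so the task reduces to (i) controlling $\|\W\|_{\rm F}$ in terms of $\overline{\lambda}$ and (ii) establishing an incoherence-type bound $\|\be_{i_k}^\top\bV_k\|_{\ell_2}\lesssim \mu_1\kappa_0\sqrt{\rmin/d_k}$.

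Item (i) is immediate: $\|\T^*\|_{\rm F}^2=\sum_i\sigma_i^2(\calM_k(\T^*))\le r_k\overline{\lambda}^2$ for any $k$ gives $\|\T^*\|_{\rm F}\le\sqrt{\rmin}\,\overline{\lambda}$, and then $\|\W-\T^*\|_{\rm F}\le\usigma/8\le\overline{\lambda}/8$ yields $\|\W\|_{\rm F}\le(9/8)\sqrt{\rmin}\,\overline{\lambda}$. For (ii) I first show that entrywise truncation does not increase Frobenius distance to $\T^*$. Since $\|\T^*\|_{\rm F}\ge\overline{\lambda}\ge\usigma$ we also have $\|\W\|_{\rm F}\ge(7/8)\|\T^*\|_{\rm F}$, so the threshold satisfies $\eta/2=(8/7)\mu_1\|\W\|_{\rm F}/\sqrt{d^*}\ge\mu_1\|\T^*\|_{\rm F}/\sqrt{d^*}\ge\|\T^*\|_{\ell_\infty}$ by Assumption~\ref{assump:spikiness}. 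Hence whenever $|[\W]_\omega|>\eta/2$ the truncated value $(\eta/2){\rm sign}([\W]_\omega)$ lies between $[\T^*]_\omega$ and $[\W]_\omega$, giving $|[\wt\W-\W]_\omega|\le|[\W-\T^*]_\omega|$; summing produces $\|\wt\W-\W\|_{\rm F}\le\|\W-\T^*\|_{\rm F}\le\usigma/8$ and thus $\|\wt\W-\T^*\|_{\rm F}\le\usigma/4$. Weyl's inequality on each matricization then delivers $\sigma_{r_k}(\calM_k(\wt\W))\ge 3\usigma/4$, while the SVD identity $\bV_k=\calM_k(\wt\W)\bW_k\bSigma_k^{-1}$ (with $\bW_k,\bSigma_k$ the top-$r_k$ right singular vectors and values) combined with the slice-norm bound $\|\be_i^\top\calM_k(\wt\W)\|_{\ell_2}\le\sqrt{d_k^-}\cdot\eta/2$ produces
\[
\|\be_{i_k}^\top\bV_k\|_{\ell_2}\;\le\;\frac{\sqrt{d_k^-}\,\eta/2}{\sigma_{r_k}(\calM_k(\wt\W))}\;\le\;\frac{(32/21)\,\mu_1\|\W\|_{\rm F}}{\sqrt{d_k}\,\usigma}\;\le\;\frac{(12/7)\,\mu_1\kappa_0\sqrt{\rmin}}{\sqrt{d_k}}.
\]

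Assembling (i) and (ii) in the Cauchy--Schwarz bound gives
\[
\|\opH(\wt\W)\|_{\ell_\infty}\;\le\;C_m\,(\mu_1\kappa_0)^m\,\rmin^{(m+1)/2}\,\overline{\lambda}/\sqrt{d^*},
\]
and invoking Assumption~\ref{assump:binary_tensor}, namely $\overline{\lambda}\le c_m(\mu_1\kappa_0)^{-m}\sqrt{d^*/r^*}\,\zeta$, cancels the two $(\mu_1\kappa_0)^{\pm m}$ factors and leaves the residual $C_m\,c_m\,\rmin^{(m+1)/2}/\sqrt{r^*}\,\zeta\le C_m c_m\sqrt{\rmin}\,\zeta$, which is absorbed into $(9\zeta/16)(\kappa_0\mu_1)^m$ by choosing $c_m$ sufficiently small (using $\mu_1\kappa_0\ge 1$). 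The main obstacle is step (ii): transferring the $\ell_\infty$ budget on $\wt\W$ through a slice-wise $\ell_2$ inequality on $\calM_k(\wt\W)$ to an incoherence bound on $\bV_k$ relies simultaneously on the truncation (which caps slice norms) and on the Frobenius proximity $\|\wt\W-\T^*\|_{\rm F}\le\usigma/4$ (which supplies the Weyl-type spectral gap that allows safe inversion of $\bSigma_k$); these two ingredients depend on each other, so the truncation-to-Frobenius comparison must precede and feed into the spectral-perturbation argument.
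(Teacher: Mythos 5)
Your overall strategy mirrors the paper's: put $\opH(\wt\W)$ in Tucker form, bound an entry by Cauchy--Schwarz, derive incoherence of the singular vector matrices from the truncation level together with Weyl's inequality, and close by invoking Assumption~\ref{assump:binary_tensor}. The paper actually shortcuts the incoherence step by citing Lemma~\ref{lemma:nbhd_spikiness_incoherence} rather than re-deriving it, but that is a cosmetic difference.

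The genuine gap is in the very last step. Your Cauchy--Schwarz bound uses the \emph{Frobenius} norm of the core, $\|\calC\|_{\rm F}\le\|\W\|_{\rm F}\le\tfrac{9}{8}\sqrt{\rmin}\,\overline{\lambda}$, and this extra $\sqrt{\rmin}$ survives into the final expression as the factor $\rmin^{(m+1)/2}/\sqrt{r^*}$ (which equals $\sqrt{r}$ in the balanced case $r_1=\cdots=r_m=r$). You then claim this can be absorbed into $(9\zeta/16)(\kappa_0\mu_1)^m$ ``by choosing $c_m$ sufficiently small,'' but $c_m$ in Assumption~\ref{assump:binary_tensor} is a constant depending only on $m$; it cannot depend on $\br$, so it cannot dominate an unbounded $\sqrt{\rmin}$. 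The fix is to replace Frobenius with an operator norm of a single matricization: write
\[
[\opH(\wt\W)]_\omega=(\be_{\omega_1}^\top\bV_1)\,\opM_1(\calC)\,\bigl(\bV_m^\top\be_{\omega_m}\otimes\cdots\otimes\bV_2^\top\be_{\omega_2}\bigr),
\]
so that $\bigl|[\opH(\wt\W)]_\omega\bigr|\le\|\opM_1(\calC)\|\prod_{k}\|\be_{\omega_k}^\top\bV_k\|_{\ell_2}$, and then bound
\[
\|\opM_1(\calC)\|=\|\opM_1(\opH(\wt\W))\|\le\|\opM_1(\wt\W)\|\le\|\opM_1(\T^{\ast})\|+\fro{\wt\W-\T^{\ast}}\le\tfrac{9}{8}\overline{\lambda},
\]
which carries no rank factor. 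Combined with the incoherence estimate $\|\be_{\omega_k}^\top\bV_k\|_{\ell_2}\le(2\mu_1\kappa_0)\sqrt{r_k/d_k}$ from Lemma~\ref{lemma:nbhd_spikiness_incoherence}, the product yields $\tfrac{9}{8}\overline{\lambda}\,(2\mu_1\kappa_0)^m\sqrt{r^*/d^*}$, and the $\sqrt{r^*/d^*}$ now cancels the $\sqrt{d^*/r^*}$ in Assumption~\ref{assump:binary_tensor} exactly, leaving $\tfrac{9}{8}2^m c_m\zeta\le(9\zeta/16)(\kappa_0\mu_1)^m$ once $c_m\le 2^{-m-1}$.
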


By Lemma~\ref{lem:trim2}, if $\kappa_0\mu_1, m=O(1)$ and $\big\|\W_l-\T^{\ast}\|_{\rm F}\leq \usigma/8$, the trimming operator guarantees $\|\hat\T_{l+1}\|_{\ell_{\infty}} 
=O(\zeta)$. Equipped with (\ref{eq:trim2}) and by setting $\kprune=C_1\zeta$ for some absolute $C_1>1$ depending only on $\kappa_0\mu_1$ and $m$, we apply Algorithm~\ref{algo:lowrank+sparse} to minimize the RHS of (\ref{eq:binary_loss}).

Similarly, the error of final estimate relies on $\errrank$ and $\errinf$, both of which are related to the gradient of loss (\ref{eq:binary_loss}). Denote $
	L_{\zeta}:=\sup\nolimits_{|x|\leq \zeta} |p'(x)|/|p(x)\big(1-p(x)\big)| .
$ 
Since $\kprune=C_1 \zeta>\zeta$, by definition of $\errinf$, we have 
\begin{align}\label{est:bionomial:errinf}
\errinf\leq \max\bigg\{L_{\zeta},  \min_{|x|\leq \kprune}\bigg|\frac{p'(x)}{p(x)\big(1-p(x)\big)} \bigg|\bigg\}\leq L_{\zeta}. 
\end{align}
In practice, due to sparsity, the value $\zeta$ is often small and it suffices to set $\kprune=1$ in Algorithm \ref{algo:lowrank+sparse} and for a cleaner bound of Theorem~\ref{thm:binary_tensor}. 
\begin{theorem}\label{thm:binary_tensor}
Let $\gamma>1, \kprune:=C_1\zeta$ be the parameters used in Algorithm~\ref{algo:lowrank+sparse} for a constant $C_1>1$ depending only on $\kappa_0\mu_1$ and $m$ via Lemma~\ref{lem:trim2}.  
Suppose Assumptions~\ref{assump:spikiness} and \ref{assump:binary_tensor} hold. Assume $|\Omega^{\ast}|\asymp \alpha d^{\ast}$, $0.36b_{l,\zeta'}b_{u,\zeta'}^{-2}\leq 1$, $b_{u,\zeta'}b_{l,\zeta'}^{-1}\leq 0.4(\sqrt{\delta})^{-1}$ for some $\delta\in(0,1]$ and $\zeta' = (2C_1+1)\zeta$, and 
\vspace{-0.3cm}
\begin{enumerate}[label=(\alph*)]
\item {\it Initialization}: $\|\hat\T_0-\T^*\|_{\rm F} \leq c_{1,m}\usigma\cdot\min\big\{\delta^2\rmax^{-1/2}, (\kappa_0^{2m}\rmax^{1/2})^{-1}\big\}$,
$\|\hat\T_0\|_{\ell_\infty}\leq c_{2,m}\zeta$
and $\hat\T_0$ is $(2\mu_1\kappa_0)^2$-incoherent
\vspace{-0.3cm}
\item {\it Signal-to-noise ratio}: 
\vspace{-.2cm}
$$
\usigma\cdot \min\big\{\delta^2\rmax^{-1/2}, (\kappa_0^{2m}\rmax^{1/2})^{-1}\big\}\geq C_{2,m}\Big(\sqrt{\dmax\rmax+r^{\ast}}+\gamma|\Omega^{\ast}|\frac{1+b_{u,\zeta'}}{b_{l,\zeta'}}\Big)\cdot L_{\zeta}
$$
\vspace{-1cm}
\item {\it Sparsity condition}: $\alpha\leq c_{3,m}b_{l,\zeta'}^4(b_{u,\zeta'}^4\kappa_0^{4m}\mu_1^{4m} \rmax^m)^{-1}$ and $\gamma\geq 1+4m\cdot b_{u,\zeta'}^4b_{l,\zeta'}^{-4}$
\end{enumerate}
\vspace{-0.2cm}
where $c_{1,m},c_{2,m},c_{3,m}, C_{2,m}>0$ are some constants depending on $m$ only. If the stepsize $\beta\in[0.005b_{l,\zeta'}/(b_{u,\zeta'})^2, 0.36b_{l,\zeta'}/(b_{u,\zeta'})^2]$, after $l_{\max}$ iterations, with probability at least $1-\dmax^{-2}$,
\begin{align}\label{eq:binary_hatTlmax_fro}
\|\hat\T_{l_{\max}}-\T^{\ast}\|_{\rm F}^2&\leq (1-\delta^2)^{l_{\max}}\cdot \|\hat\T_0-\T^{\ast}\|_{\rm F}^2+C_{3}L_{\zeta}^2\cdot(\dmax\rmax+r^{\ast}+\gamma |\Omega^{\ast}|)\\
\|\hat\S_{l_{\max}}-\S^{\ast}\|_{\rm F}^2&\leq \frac{b_{u,\zeta'}^2}{b_{l,\zeta'}^2}\big(C_{4,m}\alpha\rmax^m(\mu_1\kappa_0)^{4m}+C_{5,m}(\gamma-1)^{-1}\big)\|\hat\T_{l_{\max}}-\T^{\ast}\|_{\rm F}^2+\frac{C_{6,m}}{b_{l,\zeta'}^2}L_{\zeta}^2\cdot\gamma|\Omega^{\ast}|\notag
\end{align}
where $C_{3}>0$ depends only on $\delta,b_{l,\zeta'}, b_{u,\zeta'},m$ , and $C_{4,m},C_{5,m},C_{6,m}>0$ are constants depending only on $m$. Moreover, if $l_{\max}$ is chosen large enough such that the second term on RHS of (\ref{eq:binary_hatTlmax_fro}) dominates and assume $\kappa_0^{4m}\mu_1^{4m}\rmax^{m}(\rmax\dmax+r^{\ast})\lesssim_{m}O(\dmin^{m-1})$, we get with probability at least $1-\dmax^{-2}$ that
\begin{align*}
\|\hat\T_{l_{\max}}-\T^{\ast}\|_{\ell_\infty}\leq& C_{6}\kappa_0^{2m}\mu_1^{2m}(\rmax^m/\dmin^{m-1})^{1/2} (\dmax\rmax+r^{\ast}+\gamma |\Omega^{\ast}|)^{1/2}\cdot L_{\zeta}\\
\|\hat\S_{l_{\max}}-\S^{\ast}\|_{\ell_\infty}\leq& \Big(C_{7}\kappa_0^{2m}\mu_1^{2m}\rmax^{m/2}|\Omega^{\ast}|^{1/2}/\dmin^{(m-1)/2}+C_{8}\Big)\cdot L_{\zeta}
\end{align*}
where $C_{6}, C_{7}, C_{8}>0$ depend only on $\delta, b_{l,\zeta'}, b_{u,\zeta'},m$. 
\end{theorem}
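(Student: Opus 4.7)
The strategy is to invoke the general Theorem~\ref{thm:lowrank+sparse} and Theorem~\ref{thm:hatS_infty} for the entry-wise loss (\ref{eq:binary_loss}). Three tasks need to be verified: (i) Assumptions~\ref{assump:lowrank}--\ref{assump:sparse} hold with constants $b_{l,\zeta'}, b_{u,\zeta'}$ on a suitable neighbourhood of the truth; (ii) the iterates produced by Algorithm~\ref{algo:lowrank+sparse} remain in that neighbourhood throughout; and (iii) $\errrank$ and $\errinf$ admit explicit bounds in terms of $L_\zeta$. For (i), entry-wise differentiation of $\frakl_\omega(x) = -[\A]_\omega \log p(x) - (1-[\A]_\omega)\log(1-p(x))$ gives a Hessian whose infimum and supremum over $|x|\leq \zeta'$ are exactly $b_{l,\zeta'}$ and $b_{u,\zeta'}$ by definition, so Assumption~\ref{assump:sparse} holds on $\BB_\infty^* := \{\T+\S: \|\T+\S\|_{\ell_\infty} \leq \zeta',\ \T\in\mfd,\ \S\in\SS_{\gamma\alpha}\}$, and Assumption~\ref{assump:lowrank} holds on any $\BB_2^*\subseteq \BB_\infty^*$ by inheritance.

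For (ii), I argue inductively. The pruning rule with $\kprune = C_1\zeta$ immediately forces $\|\hat\S_l\|_{\ell_\infty}\leq C_1\zeta$. For the low-rank iterate, the initialization assumption together with the admissible stepsize range keeps $\fro{\W_l-\T^*}\leq \usigma/8$, so Lemma~\ref{lem:trim2} yields $\|\hat\T_{l+1}\|_{\ell_\infty}\leq (9/16)\zeta(\kappa_0\mu_1)^m \leq C_1\zeta$ with $C_1$ chosen as in that lemma. The geometric contraction promised by Theorem~\ref{thm:lowrank+sparse} propagates the base-case bound $\fro{\hat\T_l-\T^*}\leq \usigma/8$ to every $l$, closing the induction. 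Combining with Assumption~\ref{assump:binary_tensor} ($\|\T^*+\S^*\|_{\ell_\infty}\leq \zeta$) gives $\|\hat\T_l+\hat\S_l\|_{\ell_\infty}\leq 2C_1\zeta \leq \zeta'=(2C_1+1)\zeta$, placing the truth and all iterates in $\BB_\infty^*$.

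For (iii), the bound $\errinf\leq L_\zeta$ is inequality (\ref{est:bionomial:errinf}): writing $[\nabla \frakL(\T^*+\S^*)]_\omega = (p-[\A]_\omega)\cdot p'/[p(1-p)]$ with $p$ evaluated at $[\T^*+\S^*]_\omega$, the factor $|[\A]_\omega-p|$ is at most $1$, and the choice $\kprune>\zeta$ forces the second term in the definition of $\errinf$ to be bounded by $L_\zeta$ as well. For $\errrank$, note that $\nabla \frakL(\T^*+\S^*)$ has independent, mean-zero entries bounded by $L_\zeta$ (since $\EE[\A]_\omega=p([\T^*+\S^*]_\omega)$). Parametrizing $\M\in \MM_{2\br}$ with $\fro{\M}\leq 1$ via Tucker decomposition $\M=\C\times_{j=1}^m \bV_j$ with $\|\C\|_F\leq 1$ and orthonormal $\bV_j\in\RR^{d_j\times 2r_j}$, the identity $\errrank = \sup_{\bV_j}\fro{\nabla \frakL(\T^*+\S^*)\times_{j=1}^m \bV_j^\top}$ combined with an $\epsilon$-net over the Stiefel factors (log-cardinality $O(\dmax\rmax)$) and a Hoeffding bound for fixed $\bV_j$ (the argument is a $2r_1\times\cdots\times 2r_m$ tensor of sub-Gaussian scalars with proxy $L_\zeta$) yields $\errrank \lesssim L_\zeta \sqrt{\dmax\rmax + r^*}$ with probability at least $1-\dmax^{-2}$. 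This concentration is the main technical obstacle; the rest of the proof is deterministic substitution.

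Substituting $b_l\mapsto b_{l,\zeta'}$, $b_u\mapsto b_{u,\zeta'}$, $\errinf\leq L_\zeta$, $\errrank\lesssim L_\zeta\sqrt{\dmax\rmax+r^*}$ and $|\Omega^*|\asymp \alpha d^*$ into Theorem~\ref{thm:lowrank+sparse}, the stated initialization, signal-to-noise, and sparsity hypotheses of Theorem~\ref{thm:binary_tensor} become exactly the specializations of conditions (a)--(c) of Theorem~\ref{thm:lowrank+sparse}, and the Frobenius estimate (\ref{eq:binary_hatTlmax_fro}) follows directly. Applying Theorem~\ref{thm:hatS_infty} with the same substitutions and using the regime assumption $\kappa_0^{4m}\mu_1^{4m}\rmax^m(\rmax\dmax+r^*)\lesssim \dmin^{m-1}$ to absorb the cross term then yields the two $\ell_\infty$ bounds after routine simplification.
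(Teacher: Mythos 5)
Your proposal is correct and follows essentially the same route as the paper: verify Assumptions~\ref{assump:lowrank}--\ref{assump:sparse} with $b_{l,\zeta'},b_{u,\zeta'}$ on $\BB_\infty^*=\BB_2^*=\{\|\T+\S\|_{\ell_\infty}\leq\zeta'\}$, use Lemma~\ref{lem:trim2} and the choice $\kprune=C_1\zeta$ to keep iterates in that set, bound $\errinf\leq L_\zeta$ by (\ref{est:bionomial:errinf}), bound $\errrank\lesssim L_\zeta\sqrt{\dmax\rmax+r^*}$, and then invoke Theorems~\ref{thm:lowrank+sparse} and \ref{thm:hatS_infty}. The one cosmetic difference is that you unpack the $\errrank$ concentration step explicitly via a Tucker-factor $\epsilon$-net plus Hoeffding, whereas the paper compresses this into a single citation to Lemma~\ref{lem:8}, which is exactly the sub-Gaussian version of the bound you sketch (the entries $[\nabla\frakL(\T^*+\S^*)]_\omega$ are independent, mean-zero, and bounded by $L_\zeta$, hence sub-Gaussian with proxy $O(L_\zeta)$); both arguments deliver the same rate with the same probability.
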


By Theorem~\ref{thm:binary_tensor}, after a properly chosen $l_{\max}$ iterations and treating $\gamma$ as a bounded constant,  bound (\ref{eq:binary_hatTlmax_fro}) implies $\|\hat \T_{l_{\max}}-\T^{\ast}\|_{\rm F}^2=O\big(L_{\zeta}^2\cdot (\dmax \rmax+r^{\ast}+|\Omega^{\ast}|)\big)$. Note that the term $\dmax\rmax+r^{\ast}+|\Omega^{\ast}|$ is the model complexity and thus this rate is sharp in general. If $\S^{\ast}={\bf 0}$ so that $|\Omega^{\ast}|=0$, this rate is comparable to the existing ones in generalized low-rank tensor estimation \citep{wang2020learning, han2020optimal}. Additionally, for the matrix case ($m=2$), this rate matches the well-known results in \citep{davenport20141, robin2020main}. 

\textit{Initialization}.
The initialization for treating binary tensorial data can be obtained by unfolding $\A$ into a $(d_1\cdots d_{m_0})\times (d_{m_0+1}\cdots d_{m})$ matrix and applying $1$ bit matrix estimation \citep{davenport20141}. Here $m_0=\lfloor m/2\rfloor$. It is based on the solution to a convex program. See  the supplement for more details. Its theoretical performance is guaranteed by Lemma~\ref{lemma:init:binary}. We write $d_1^{\ast}=d_1\cdots d_{m_0}$, $d_2^{\ast}=d_{m_0+1}\cdots d_{m}$, and $\beta_{\zeta} = \sup_{|x|\leq \zeta} |p(x)(1-p(x))| / (p'(x))^2$.

\begin{lemma}\label{lemma:init:binary}
Suppose that Assumptions~\ref{assump:spikiness} and \ref{assump:binary_tensor} hold and $\S^*\in\SS_{\alpha}$. There exist absolute constants $C_{1,m}, C_{2,m}, C_{3,m}>0$ such that if 
\begin{enumerate}[label=(\alph*)]
		\item {\it Sparsity of $\S^*$}: $|\Omega^*| \leq \min\bigg\{\frac{d^*r}{\min(d_1^*,d_2^*)},\ C_{1,m}\zeta^{-2}\bsigma^2\cdot\min\big\{\delta^4\rmax^{-1}, \kappa_0^{-4m}\rmax^{-1}\big\}\bigg\}$ ,
		\item {\it Signal-to-noise ratio}: $\minl^2\cdot\min\big\{\delta^4\rmax^{-1}, \kappa_0^{-4m}\rmax^{-1}\big\}\geq C_{2,m}\zeta L_{\zeta}\beta_{\zeta}[r(d_1^{\ast}+d_2^{\ast})d^{\ast}]^{1/2}$, 
\end{enumerate}	
the output of Algorithm \ref{alg:init:binary} satisfies the initialization in Theorem \ref{thm:binary_tensor} with probability at least $1 - C_{3,m}(d^{\ast})^{-1}$. 
\end{lemma}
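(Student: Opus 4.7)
The plan is to reduce the initialization problem to a fully-observed $1$-bit matrix estimation problem via tensor unfolding. First I would unfold $\A$ into the $d_1^*\times d_2^*$ binary matrix $\calM(\A)$ (the $m_0$-th reshaping unfolding), whose entries are independent $\mathrm{Bernoulli}(p([\calM(\T^*+\S^*)]_{ij}))$. Since $\T^*\in\MM_{\br}$, $\calM(\T^*)$ has rank at most $r:=r_1\cdots r_{m_0}\wedge r_{m_0+1}\cdots r_m$, and by Assumption~\ref{assump:binary_tensor} we have $\|\T^*+\S^*\|_{\ell_\infty}\leq\zeta$. Applying the standard nuclear-norm-constrained MLE guarantee of \citep{davenport20141} in the fully-observed regime ($n=d^*$) gives, with probability at least $1-C(d^*)^{-1}$,
\begin{equation*}
\|\hat M-\calM(\T^*+\S^*)\|_{\rm F}^2 \;\leq\; C_1\,\zeta L_\zeta \beta_\zeta \sqrt{r(d_1^*+d_2^*)d^*}.
\end{equation*}
Peeling off $\S^*$ via the triangle inequality and $\|\calM(\S^*)\|_{\rm F}^2\leq |\Omega^*|\zeta^2/4$, the sparsity cap $|\Omega^*|\leq d^*r/\min(d_1^*,d_2^*)$ keeps this perturbation at the same order, so $\|\hat M-\calM(\T^*)\|_{\rm F}^2$ obeys the same bound up to a constant.

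Next I would fold $\hat M$ back into a tensor $\hat\W$, apply the HOSVD retraction $\opH(\cdot)$ (which is quasi-optimal up to an $m$-dependent constant), and trim via $\textsf{Trim}_{\eta,\br}$ with $\eta=16\mu_1\|\hat\W\|_{\rm F}/(7\sqrt{d^*})$ to enforce spikiness, exactly as in Lemma~\ref{lem:trim2}. The second sparsity bound in (a) together with the SNR bound in (b) secure $\|\hat\W-\T^*\|_{\rm F}\leq\usigma/8$, which by quasi-optimality of HOSVD and an $m$-dependent inflation from trimming yields $\|\hat\T_0-\T^*\|_{\rm F}\leq c_{1,m}\usigma\cdot\min\{\delta^2\rmax^{-1/2},(\kappa_0^{2m}\rmax^{1/2})^{-1}\}$, verifying the Frobenius initialization required by Theorem~\ref{thm:binary_tensor}. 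The entry-wise bound $\|\hat\T_0\|_{\ell_\infty}\leq c_{2,m}\zeta$ then follows directly from Lemma~\ref{lem:trim2} applied to $\hat\W$, and the $(2\mu_1\kappa_0)^2$-incoherence of $\hat\T_0$ follows from its trimmed spikiness via the spikiness-incoherence equivalence (Lemma~\ref{lemma:spikiness_incoherence}), the condition-number inflation being absorbed into the factor $2$ once $\|\hat\T_0-\T^*\|_{\rm F}\ll\usigma$.

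The main obstacle is the scale conversion between matrix and tensor regimes: the matrix-level error naturally scales with $\sqrt{r(d_1^*+d_2^*)d^*}$ weighted by the link-function factors $\zeta L_\zeta\beta_\zeta$, whereas Theorem~\ref{thm:binary_tensor} demands the much tighter tensor-level radius $\usigma\cdot\min\{\delta^2\rmax^{-1/2},(\kappa_0^{2m}\rmax^{1/2})^{-1}\}$. The SNR condition in (b) is calibrated precisely to bridge this gap. The two-sided sparsity cap in (a) keeps the matrix perturbation $\calM(\S^*)$ controllable both in Frobenius magnitude and in its impact on the nuclear-norm estimator, and in particular guarantees that the statistical rate is not inflated by a large $\|\calM(\S^*)\|_{\rm F}$ once we fold back. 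Tracking the constants through the HOSVD and trimming steps, without losing the sharp tensor-level initialization radius, is the most delicate bookkeeping part of the argument.
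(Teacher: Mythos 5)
Your high-level structure matches the paper's: unfold to a $d_1^*\times d_2^*$ binary matrix, run a nuclear-norm-constrained $1$-bit MLE \`{a} la Davenport et al., fold back, trim to obtain the initializer, and invoke Lemma~\ref{lem:trim2} and Lemma~\ref{lemma:nbhd_spikiness_incoherence} for the incoherence and $\ell_\infty$ bounds. However, there is a substantive gap in how you use the two sparsity caps, and it conceals a step you never carry out.

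The Davenport-type guarantee requires the \emph{true} parameter matrix to lie in the feasible set of the convex program. Here the true parameter is $\bM=\bT^*+\bS^*$, not $\bT^*$, so before citing that theorem one must verify that $\nuc{\bM}$ is within the nuclear-norm radius. The paper does exactly this: it bounds $\nuc{\bM}\leq\nuc{\bT^*}+\nuc{\bS^*}$, and the first cap $|\Omega^*|\leq d^*r/\min(d_1^*,d_2^*)$ is calibrated to make $\nuc{\bS^*}\leq\frac{\zeta}{2}|\Omega^*|^{1/2}\min(d_1^*,d_2^*)^{1/2}\lesssim\zeta(rd^*)^{1/2}$, so that the constraint radius $\zeta\sqrt{rd^*}$ in \eqref{prob:binary:cvx} is indeed feasible. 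You skip this feasibility check entirely. Instead you redeploy the same first cap to argue that $\fro{\calM(\S^*)}^2\leq|\Omega^*|\zeta^2/4$ is of the same order as the Davenport error, but that comparison does not hold in general: it collapses to $\zeta\sqrt{r/\min(d_1^*,d_2^*)}\lesssim L_\zeta\beta_\zeta$, which is not implied by the hypotheses. In the paper, the peeling-off term $\fro{\bS^*}\leq|\Omega^*|^{1/2}\zeta/2$ is instead controlled by the \emph{second} cap $|\Omega^*|\leq C_{1,m}\zeta^{-2}\bsigma^2\min\{\delta^4\rmax^{-1},\kappa_0^{-4m}\rmax^{-1}\}$ (in tandem with the SNR condition), precisely so that both summands of $\fro{\hat\T-\T^*}\leq C_\zeta^{1/2}[r(d_1^*+d_2^*)d^*]^{1/4}+|\Omega^*|^{1/2}\zeta/2$ sit below the required initialization radius. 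So: you must add the nuclear-norm feasibility argument (this is what the first cap is for), and you must use the second cap, not the first, when bounding the residual contribution of $\S^*$. A secondary, smaller point: Algorithm~\ref{alg:init:binary} applies $\textsf{Trim}_{\eta,\br}$ directly to the reshaped $\hat\T$; there is no separate HOSVD step preceding the trim, since $\textsf{Trim}$ already truncates entrywise and then retracts via HOSVD internally.
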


Compared with Theorem~\ref{thm:binary_tensor}, the required sparsity of $\S^{\ast}$ and signal-to-noise ratio are more stringent to guarantee a warm initialization. It is typical that, oftentimes, the signal-to-noise ratio condition required by warm initialization is the primary bottleneck in tensor-related problems. See, for instance, \cite{xia2017statistically, zhang2018tensor}.

\subsection{Poisson Tensor Robust PCA}
In this section, we consider the Poisson tensor RPCA model. Suppose we observe $\Y\in\NN^{d_1\times\cdots \times d_m}$ that satisfies
$$\forall \omega \in [d_1]\times\cdots\times[d_m], ~[\Y]_{\omega}\sim \text{Poisson}(I\exp([\T^*]_{\omega} + [\S^*]_{\omega})) \text{~independently},$$
where $(\T^*,\S^*)\in(\UU_{\br, \mu_1},\SS_{\alpha})$ are the low rank part and sparse part respectively and $I>0$ is the intensity parameter that is revealed as in \cite{han2020optimal}. We choose the loss function to be the negative log-likelihood with scaling
$$\frakL(\T+\S) = \frac{1}{I}\sum_{\omega} \left(-[\Y]_{\omega}[\T+\S]_{\omega} + I\exp([\T+\S]_{\omega})\right).$$
This is an entry-wise loss, and simple calculation shows Assumptions \ref{assump:lowrank} and \ref{assump:sparse} are satisfied with $\BB_{2}^* = \BB_{\infty}^* = \{\T+\S:\|\T+\S\|_{\ell_{\infty}} \leq \zeta, \T\in\MM_{\br},\S\in\SS_{\gamma\alpha}\}$ with $b_{l,\zeta} = e^{-\zeta}, b_{u,\zeta} = e^{\zeta}$. Since the parameter will become trivial in an unbounded set, we impose the following assumption which implies $\|\T^*\|_{\ell_{\infty}} \leq \frac{\zeta}{2}$ and thus $\|\T^*+\S^*\|_{\ell_{\infty}}\leq\zeta$.
\begin{assumption}\label{assump:poisson}
	There exists a small $\zeta > 0$ such that $\|\S^*\|_{\infty}\leq\frac{\zeta}{2}$, $\T^*$ satisfies Assumption \ref{assump:spikiness} with its largest singular value $\maxl\leq c_m(\kappa_0\mu_1)^{-m}\sqrt{\frac{d^*}{r^*}}\zeta$ where $d^* = d_1\cdots d_m$ and $r^* = r_1\cdots r_m$.
\end{assumption}
Similar with the binary case, we also need to show $\|\hat\T_l\|_{\ell_{\infty}},\|\hat\S_l\|_{\ell_{\infty}} = O(\zeta)$. These are guaranteed by choosing $\kprune = C_1\zeta$ for some $C_1>1$ depending only on $\kappa_0\mu_1,m$ and from Lemma \ref{lem:trim2}, when $\kappa_0\mu_1,m = O(1)$ and $\fro{\hat \T_l-\T^*}\leq \minl/8$, we have $\|\hat\T_{l+1}\|_{\ell_{\infty}} = O(\zeta)$. We summarize the result in the following Theorem.
\begin{theorem}\label{thm:poisson}
	Let $\gamma>1, \kprune:=C_1\zeta$ be the parameters used in Algorithm~\ref{algo:lowrank+sparse} for a constant $C_1>1$ depending only on $\kappa_0\mu_1$ and $m$ via Lemma~\ref{lem:trim2}.  
	Suppose Assumptions~\ref{assump:spikiness} and \ref{assump:poisson} hold. Assume $|\Omega^{\ast}|\asymp \alpha d^{\ast}$, $e^{2\zeta'}\leq 0.4(\sqrt{\delta})^{-1}$ for some $\delta\in(0,1]$ and $\zeta' = (2C_1+1)\zeta$, and 
	\vspace{-0.3cm}
	\begin{enumerate}[label=(\alph*)]
		\item {\it Initialization}: $\|\hat\T_0-\T^*\|_{\rm F} \leq c_{1,m}\usigma\cdot\min\big\{\delta^2\rmax^{-1/2}, (\kappa_0^{2m}\rmax^{1/2})^{-1}\big\}$,
		$\|\hat\T_0\|_{\ell_\infty}\leq c_{2,m}\zeta$
		and $\hat\T_0$ is $(2\mu_1\kappa_0)^2$-incoherent
		\vspace{-0.3cm}
		\item {\it Signal-to-noise ratio}: 
		\vspace{-.2cm}
		\begin{align*}
			\usigma\cdot \min\big\{\delta^2\rmax^{-1/2}, (\kappa_0^{2m}\rmax^{1/2})^{-1}\big\}\geq C_{2,m}\Big(\gamma|\Omega^{\ast}|\frac{1+e^{\zeta'}}{e^{-\zeta'}}\cdot e^{\zeta} + \sqrt{(r^*+\dmax\rmax)e^{\zeta}/I}\Big),
			\text{~and~}I\geq Ce^{\zeta}\log(d^*)
		\end{align*}
		\vspace{-1cm}
		\item {\it Sparsity condition}: $\alpha\leq c_{3,m}e^{-8\zeta'}(\kappa_0^{4m}\mu_1^{4m} \rmax^m)^{-1}$ and $\gamma\geq 1+(4m)^{-1}\cdot e^{8\zeta'}$
	\end{enumerate}
	\vspace{-0.2cm}
	where $c_{1,m},c_{2,m},c_{3,m}, C_{2,m}>0$ are some constants depending on $m$ only. If the stepsize $\beta\in[0.005e^{-3\zeta'}, 0.36e^{-3\zeta'}]$, after $l_{\max}$ iterations, with probability at least $1-\frac{2}{d^*}$,
	\begin{align*}
		\|\hat\T_{l_{\max}}-\T^{\ast}\|_{\rm F}^2&\leq (1-\delta^2)^{l_{\max}}\cdot \|\hat\T_0-\T^{\ast}\|_{\rm F}^2+ C_{1,\delta}\frac{r^* + \dmax\rmax}{I/e^{\zeta}}+ C_{3}e^{2\zeta}\cdot\gamma |\Omega^{\ast}|\\
		\|\hat\S_{l_{\max}}-\S^{\ast}\|_{\rm F}^2&\leq e^{4\zeta'}\big(C_{4,m}\alpha\rmax^m(\mu_1\kappa_0)^{4m}+C_{5,m}(\gamma-1)^{-1}\big)\|\hat\T_{l_{\max}}-\T^{\ast}\|_{\rm F}^2+C_{6,m}e^{2\zeta+2\zeta'}\cdot\gamma|\Omega^{\ast}|
	\end{align*}
	where $C_{3}>0$ depends only on $\delta,\zeta,m$ , and $C_{4,m},C_{5,m},C_{6,m}>0$ are constants depending only on $m$. Moreover, if $l_{\max}$ is chosen large enough such that the second term on RHS of (\ref{eq:binary_hatTlmax_fro}) dominates and assume $\kappa_0^{4m}\mu_1^{4m}\rmax^{m}(\rmax\dmax+r^{\ast})\lesssim_{m}O(\dmin^{m-1})$, we get with probability at least $1-\frac{2}{d^*}$ that
	\begin{align*}
		\|\hat\T_{l_{\max}}-\T^{\ast}\|_{\ell_\infty}\leq& C_{6}\kappa_0^{2m}\mu_1^{2m}(\rmax^m/\dmin^{m-1})^{1/2}\Big(\frac{r^*+\dmax\rmax}{I}+\gamma|\Omega^*|\Big)^{1/2}\\
		\|\hat\S_{l_{\max}}-\S^{\ast}\|_{\ell_\infty}\leq& C_{7}\kappa_0^{2m}\mu_1^{2m}\rmax^{m/2}/\dmin^{(m-1)/2}\cdot\Big(\sqrt{(r^*+\dmax\rmax)/I} + |\Omega^*|^{1/2}\Big)+C_{8}
	\end{align*}
	where $C_{6}, C_{7}, C_{8}>0$ depend only on $\gamma, \delta, \zeta,m$. 
\end{theorem}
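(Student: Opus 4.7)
The plan is to instantiate the master Theorems~\ref{thm:lowrank+sparse} and~\ref{thm:hatS_infty} on the Poisson loss $\frakL(\T+\S) = I^{-1}\sum_\omega\big(-[\Y]_\omega[\T+\S]_\omega + Ie^{[\T+\S]_\omega}\big)$. Once the regularity constants are identified and the two noise proxies $\errrank$ and $\errinf$ are bounded, the four displayed inequalities of the theorem follow by direct substitution. Because $\frakl''_\omega(x) = e^x$, on the slab $\BB_2^{\ast} = \BB_\infty^{\ast} = \{\T+\S:\|\T+\S\|_{\ell_\infty}\leq\zeta',\,\T\in\MM_{\br},\,\S\in\SS_{\gamma\alpha}\}$ Assumptions~\ref{assump:lowrank} and~\ref{assump:sparse} hold with $b_{l,\zeta'}=e^{-\zeta'}$ and $b_{u,\zeta'}=e^{\zeta'}$, so the general conditions $0.36\,b_l/b_u^2\leq 1$ and $b_u/b_l\leq 0.4/\sqrt\delta$ reduce to the hypothesis $e^{2\zeta'}\leq 0.4/\sqrt\delta$.

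Next, I would verify that the iterates never leave $\BB_\infty^{\ast}$. Lemma~\ref{lem:trim2} gives $\|\hat\T_{l+1}\|_{\ell_\infty}\leq (9\zeta/16)(\kappa_0\mu_1)^m$ whenever $\|\W_l-\T^{\ast}\|_{\rm F}\leq\usigma/8$, while the pruning rule~(\ref{eq:prune}) together with $\kprune = C_1\zeta$ forces $\|\hat\S_{l+1}\|_{\ell_\infty}\leq \kprune$. Choosing $C_1>1$ large enough relative to $(\kappa_0\mu_1)^m$ (as in Lemma~\ref{lem:trim2}) and summing yields $\|\hat\T_{l+1}+\hat\S_{l+1}\|_{\ell_\infty}\leq (2C_1+1)\zeta = \zeta'$. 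The Frobenius-radius hypothesis $\|\W_l-\T^{\ast}\|_{\rm F}\leq\usigma/8$ is inherited inductively from the contraction (\ref{eq:hatS+1_err}) once the initialization and SNR conditions are in force, exactly mirroring the argument in the binary case of Theorem~\ref{thm:binary_tensor}.

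The main technical step is to control the two noise quantities. Each entry $[\nabla\frakL(\T^{\ast}+\S^{\ast})]_\omega = e^{[\T^{\ast}+\S^{\ast}]_\omega}-[\Y]_\omega/I$ is mean-zero and, since $[\Y]_\omega\sim\mathrm{Poisson}(Ie^{[\T^{\ast}+\S^{\ast}]_\omega})$ with rate at most $Ie^\zeta$, sub-exponential with variance proxy $e^\zeta/I$ and scale $1/I$. Bernstein's inequality combined with a union bound over the $d^{\ast}$ entries gives $\errinf\lesssim \sqrt{e^\zeta\log(d^{\ast})/I}$ with probability at least $1-(d^{\ast})^{-2}$, provided the intensity condition $I\gtrsim e^\zeta\log(d^{\ast})$ holds so that the quadratic regime of Bernstein dominates; the second term in the definition of $\errinf$ is no larger than the first because the choice $\X = \T^{\ast}+\S^{\ast}$ is admissible under $\kprune = C_1\zeta>\zeta$. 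For $\errrank$, I would use a standard covering argument on $\{\M\in\MM_{2\br}:\fro{\M}\leq 1\}$, whose metric entropy is $O((\dmax\rmax + r^{\ast})\log(1/\varepsilon))$, combined with Bernstein applied pointwise to $\sum_\omega[\M]_\omega\big(e^{[\T^{\ast}+\S^{\ast}]_\omega}-[\Y]_\omega/I\big)$ whose variance is bounded by $e^\zeta/I$, yielding $\errrank^2\lesssim (\dmax\rmax + r^{\ast})e^\zeta/I$.

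Substituting $b_{l,\zeta'}=e^{-\zeta'}$, $b_{u,\zeta'}=e^{\zeta'}$, the above bounds on $\errrank$ and $\errinf$, and $|\Omega^{\ast}|\asymp\alpha d^{\ast}$ into Theorem~\ref{thm:lowrank+sparse} delivers the Frobenius-norm contraction and the two displayed $\|\hat\T_{l_{\max}}-\T^{\ast}\|_{\rm F}^2$ and $\|\hat\S_{l_{\max}}-\S^{\ast}\|_{\rm F}^2$ inequalities (the $\log d^{\ast}$ factor hidden in $\errinf^2$ is absorbed into the stated intensity term because $I\gtrsim e^\zeta\log d^{\ast}$). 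The two $\ell_\infty$ bounds then come from Theorem~\ref{thm:hatS_infty} by the same substitution. The main obstacle I expect is the concentration step: Poisson noise is only sub-exponential, so the sub-exponential Bernstein bound with rate $e^\zeta/I$ is unavoidable, and the intensity condition $I\gtrsim e^\zeta\log(d^{\ast})$ is exactly what pushes the tail into its sub-Gaussian regime; without it the $\errrank$/$\errinf$ rates deteriorate and the downstream application of Theorem~\ref{thm:lowrank+sparse} breaks.
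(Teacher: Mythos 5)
Your proposal matches the paper's proof in structure: identify $b_{l,\zeta'}=e^{-\zeta'}$, $b_{u,\zeta'}=e^{\zeta'}$ on the $\ell_\infty$-slab $\BB_2^{\ast}=\BB_\infty^{\ast}$, confirm via Lemma~\ref{lem:trim2} and the choice of $\kprune$ that the iterates stay in that slab so the entry-wise Assumption~\ref{assump:sparse} supplies Assumption~\ref{assump:lowrank} as well, bound $\errrank$ and $\errinf$, and substitute into Theorems~\ref{thm:lowrank+sparse} and~\ref{thm:hatS_infty}. The one place your route diverges is the $\errinf$ estimate. The paper uses the crude boundedness argument: with $I\geq Ce^\zeta\log d^{\ast}$ a Poisson tail and union bound give $\|\Y\|_{\ell_\infty}\leq 10Ie^\zeta$, so by the triangle inequality $\errinf\leq\|\nabla\frakL(\T^{\ast}+\S^{\ast})\|_{\ell_\infty}\leq 11e^\zeta$. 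You instead exploit that each gradient entry is mean-zero and sub-exponential, applying Bernstein to obtain $\errinf\lesssim\sqrt{e^\zeta\log d^{\ast}/I}$. Your bound is strictly sharper (under the stated intensity condition it is $O(1)\leq e^{2\zeta}$, so it implies the paper's $e^{2\zeta}\gamma|\Omega^{\ast}|$ term after absorbing constants), but it buys nothing beyond what the theorem asks for, and requires the extra sentence that your tighter estimate dominates the advertised one. For $\errrank$ you propose covering plus Bernstein with variance proxy $e^\zeta/I$; the paper instead cites Theorem~4.3 of \cite{han2020optimal}, which delivers the same rate $\errrank\lesssim\sqrt{(r^{\ast}+\dmax\rmax)e^\zeta/I}$, so the two are interchangeable. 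In short: correct, and essentially the same plan, with a modest sharpening of the $\errinf$ step that is not needed for the statement.
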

From Theorem \ref{thm:poisson}, after a properly chosen $l_{\max}$ iterations, we will obtain $\fro{\hat\T_{\l_{\max}} - \T^*}^2 = O(\frac{r^* + \dmax\rmax}{I/e^{\zeta}}+e^{2\zeta}\cdot\gamma |\Omega^{\ast}|)$.
As a special case when $|\Omega^*| = 0$, our result matches the previous result in Poisson tensor PCA in \cite{han2020optimal} that is rate optimal under the same requirements on the intensity parameter $I$. When there are outliers, the error for the estimation of $\T^*$ is further influenced by the outliers.

\textit{Initialization.} We shall adopt the initialization proposed in \cite{han2020optimal} with slight modification. The theoretical guarantee is summarized in the following lemma.
\begin{lemma}\label{lemma:init:poisson}
Suppose that Assumptions~\ref{assump:spikiness} and \ref{assump:poisson} hold. There exist absolute constants $c, C>0$ such that if $I\geq C\max\{\dmax,\minl^{-2}\sum_{i=1}^m(d_ir_i+d_i^-r_i)\rmax\}$, and the sparsity of $\S^*$ satisfies $|\Omega^*|\leq c\zeta^{-2}\minl^2\rmax^{-1}$, then
the output of Algorithm \ref{alg:poisson} satisfies the initialization requirement in Theorem \ref{thm:poisson} with probability at least $1 - 1/d^*$. 
\end{lemma}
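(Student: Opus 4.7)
The goal is to verify that the output of Algorithm~\ref{alg:poisson} meets the three initialization requirements of Theorem~\ref{thm:poisson}: the Frobenius bound $\|\hat\T_0-\T^*\|_{\rm F} \leq c_{1,m}\minl\cdot\min\{\delta^2\rmax^{-1/2},(\kappa_0^{2m}\rmax^{1/2})^{-1}\}$, the entrywise bound $\|\hat\T_0\|_{\ell_\infty}\leq c_{2,m}\zeta$, and $(2\mu_1\kappa_0)^2$-incoherence. My strategy parallels Lemmas~\ref{lemma:init:rpca} and \ref{lemma:init:binary}: construct a moment-based plug-in surrogate of a low-rank-plus-sparse tensor from $\Y$, run HOOI to extract the leading Tucker components, and apply the trimming operator $\textsf{Trim}_{\eta,\br}$ with $\eta=16\mu_1\|\hat\T_{\rm HOOI}\|_{\rm F}/(7\sqrt{d^*})$, which delivers the spikiness bound (hence incoherence via Lemma~\ref{lemma:spikiness_incoherence}) while preserving the Frobenius closeness up to constants by Lemma~\ref{lem:trim2}.

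The bridge from Poisson observations to the sub-Gaussian RPCA toolkit is the rescaled tensor $\Y/I$, which is unbiased for $\exp(\T^*+\S^*)$. Under Assumption~\ref{assump:poisson}, each coordinate is Poisson with parameter at most $Ie^\zeta$, so a Bernstein inequality yields $\|\Y/I-\exp(\T^*+\S^*)\|_{\ell_\infty}\lesssim \sqrt{e^\zeta\log(d^*)/I}$ with probability at least $1-(d^*)^{-1}$ once $I\gtrsim e^\zeta\log(d^*)$ (which follows from $I\geq C\dmax$), and matrix Bernstein on each matricization converts this into spectral-norm perturbations of order $\sqrt{(d_j+d_j^-)e^\zeta/I}$. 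A first-order Taylor expansion of $\exp$ around $0$, combined with the $\ell_\infty$ control $\|\T^*+\S^*\|_{\ell_\infty}\leq\zeta$, writes $\exp(\T^*+\S^*)-\bone = \T^*+\S^*+\D$ with $\|\D\|_{\ell_\infty}=O(\zeta^2 e^\zeta)$ and $\|\D\|_{\rm F}=O(\zeta\|\T^*+\S^*\|_{\rm F})$. Hence $\Y/I-\bone$ is a noisy observation of $\T^*$ plus a sparse perturbation supported on $\Omega^*$ with Frobenius norm $O(\zeta|\Omega^*|^{1/2})$ plus a small nonlinear remainder plus Poisson fluctuation noise.

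Feeding the matricizations of this surrogate into HOOI for $t_{\max}\gtrsim\log(\dmax\kappa_0)$ iterations, the standard tensor-perturbation bounds (e.g.~\cite{zhang2018tensor}) convert the three spectral-norm perturbations above into a Frobenius error on the extracted low-rank tensor of order $\kappa_0^{2m}\rmax^{1/2}\bigl(\sqrt{\sum_i(d_ir_i+d_i^-r_i)e^\zeta/I}+\zeta|\Omega^*|^{1/2}\bigr)$ once the signal-to-noise condition $\minl\gtrsim\text{(perturbation)}$ holds. The hypotheses $I\geq C\minl^{-2}\sum_i(d_ir_i+d_i^-r_i)\rmax$ and $|\Omega^*|\leq c\zeta^{-2}\minl^2\rmax^{-1}$ are precisely what drives each error source below the target $c_{1,m}\minl\cdot\min\{\delta^2\rmax^{-1/2},(\kappa_0^{2m}\rmax^{1/2})^{-1}\}$, yielding requirement (i); the trimming step then delivers (ii) and (iii). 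The main technical obstacle is accommodating the three distinct error sources---Poisson fluctuation, the sparse perturbation from $\S^*$, and the nonlinear remainder from the exponential link---inside a single HOOI perturbation budget; this is the primary departure from \cite{han2020optimal}, which did not have to handle $\S^*$, and the $\ell_\infty$ constraint in Assumption~\ref{assump:poisson} is what keeps the Taylor remainder subleading so that the intensity and sparsity conditions interact cleanly.
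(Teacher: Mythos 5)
Your proposal diverges from the paper at the very first step, and the departure is not merely cosmetic — it introduces a Taylor-remainder term that the lemma's hypotheses do not control.

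The paper's Algorithm~\ref{alg:poisson} forms $\wt\T = \log\big((\Y+1/2)/I\big)$, which lives in the \emph{log domain}: since $\EE[\Y]_\omega = I\exp([\T^*+\S^*]_\omega)$, the quantity $\log((\Y_\omega + 1/2)/I)$ is an approximately unbiased estimate of $[\T^*+\S^*]_\omega$ directly, and the ``$+1/2$'' shift avoids $\log 0$. The paper then applies a single HOSVD $\wt\T_0 = \opH(\wt\T)$ (not HOOI), and the Frobenius error follows from a slight modification of \citet[Theorem~4.3]{han2020optimal}: $\fro{\wt\T_0 - \T^*}\lesssim \sqrt{e^\zeta/I}\sum_i(\sqrt{d_ir_i}+\sqrt{d_i^-r_i}) + \fro{\S^*}$, where the sparse tensor simply enters additively as $\fro{\S^*}\leq \zeta|\Omega^*|^{1/2}/2$. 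The two hypotheses of the lemma — the intensity bound on $I$ and the sparsity bound $|\Omega^*|\leq c\zeta^{-2}\minl^2\rmax^{-1}$ — are then precisely what drive these two terms below $\minl/8$, after which Lemmas~\ref{lem:trim2} and~\ref{lemma:nbhd_spikiness_incoherence} deliver the spikiness, entrywise, and incoherence requirements. There is no Taylor expansion of the exponential anywhere in the argument.

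Your proposal instead works in the \emph{exponential domain}: you start from $\Y/I$, unbiased for $\exp(\T^*+\S^*)$, subtract the all-ones tensor, and linearize $\exp(x)-1 \approx x$. This creates a nonlinear remainder $\D$ with $\|\D\|_{\rm F} = O(\zeta\|\T^*+\S^*\|_{\rm F}) = O(\zeta\sqrt{r^*}\maxl)$, or equivalently $O(\sqrt{d^*}\zeta^2 e^\zeta)$ via the entrywise bound. For the HOOI/HOSVD perturbation analysis to succeed, this remainder must be $\lesssim \minl\cdot\min\{\delta^2\rmax^{-1/2},\kappa_0^{-2m}\rmax^{-1/2}\}$, i.e.\ $\zeta\lesssim (\sqrt{r^*}\kappa_0^{2m+1}\rmax^{1/2})^{-1}$. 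That is an additional constraint on $\zeta$ that appears nowhere in the lemma's hypotheses — Assumption~\ref{assump:poisson} only constrains $\maxl$ from above in terms of $\zeta$, and the intensity/sparsity conditions say nothing about $\zeta$ being small in absolute terms. So the Taylor-remainder term you introduce is, in general, not subleading, and your argument fails unless you add a smallness hypothesis on $\zeta$ that the statement does not grant. The log transform is exactly the device that avoids this: it linearizes the Poisson-mean map exactly, leaving only fluctuation noise and the sparse perturbation to control, both of which the stated hypotheses handle. If you want to salvage the exponential-domain route you would need to absorb the curvature correction (e.g.\ by expanding around $\bone\cdot\exp(\bar c)$ for a data-driven center $\bar c$, or by iterating), but that would be a substantially different and more involved argument than either your sketch or the paper's proof.
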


\begin{algorithm}
	\caption{Initialization for Poisson RPCA}\label{alg:poisson}
	\begin{algorithmic}
		\STATE{Set $\wt\T = \log(\frac{\Y+1/2}{I})$.}
		\STATE{Let $\wt\T_0 = \opH(\wt\T)$.}
		\STATE{Return $\hat\T_0 = \textsf{Trim}_{\eta,\br}(\wt\T_0)$ with $\eta = 16\mu_1\fro{\wt\T_0}/(7\sqrt{d^*})$.}
	\end{algorithmic}
\end{algorithm}

\section{When Sparse Component is Absent}\label{sec:exact_lowrank}
In this section, we consider the special case when the sparse component is absent, i.e., $\S^{\ast}={\bf 0}$. For the exact low-rank tensor model, we observe that many conditions in Section~\ref{sec:theory} can be relaxed. A major difference is that the spikiness condition is generally not required for exact low-rank model. Consequently, the trimming step in Algorithm~\ref{algo:lowrank+sparse} is unnecessary.  Therefore, it suffices to simply apply the Riemannian gradient descent algorithm to solve for the underlying low-rank tensor $\T^{\ast}$. For ease of exposition, the procedure is summarized in Algorithm~\ref{algo:lowrank} (largely the same as Algorithm~\ref{algo:lowrank+sparse}).

\begin{algorithm}
\caption{Riemannian Gradient Descent for Exact Low-rank Estimate}\label{algo:lowrank}
\begin{algorithmic}
\STATE \textbf{Initialization: } $\hat\T_0$ $\in$ $\mfd$ and  stepsize $\beta>0$
\FOR{$l=0,1,\cdots, l_{\max}$}
\STATE{$\G_l = \nabla \frakL(\hat\T_l)$}
\STATE{$\W_l = \hat \T_l - \beta \pro_{\TT_l}\G_l$}
\STATE{$\hat\T_{l+1} = \opH(\W_l)$}
\ENDFOR
\STATE{\bf  Output:} $\hat\T_{l_{\max}}$
\end{algorithmic}
\end{algorithm}

Algorithm~\ref{algo:lowrank} runs fast and guarantees favourable convergence performances under weaker conditions than Theorem~\ref{thm:lowrank+sparse}.  Indeed, since there is no sparse component, only Assumption~\ref{assump:lowrank} is required to guarantee the convergence of Algorithm~\ref{algo:lowrank}. 
Similarly as Section~\ref{sec:theory}, the error of final estimate produced by Algorithm~\ref{algo:lowrank} is characterized by the gradient at $\T^{\ast}$. With a slightly abuse of notation, denote
$
    \errrank = \sup\nolimits_{\X\in\MM_{2\br},\|\X\|_{\rm F}\leq 1} \inp{\nabla \frakL(\T^*)}{\X}.\label{noise}
$

\begin{theorem}\label{main:thm}
Suppose Assumption \ref{assump:lowrank} holds with $\S^{\ast}={\bf 0}$ and $\BB_2^{\ast}=\{\T: \|\T-\T^{\ast}\|_{\rm F}\leq c_{0,m}\usigma, \T\in\MM_{\br}\}$ for a small constant $c_{0,m}>0$ depending on $m$ only, also suppose $1.5b_lb_u^{-2}\leq 1$ and $0.75b_lb_u^{-1} \geq \delta^{1/2}$ for some $\delta\in (0,1]$ and the stepsize $\beta\in[0.4b_lb_u^{-2},1.5b_lb_u^{-2}]$ in Algorithm \ref{algo:lowrank}. Assume
\vspace{-0.2cm}
\begin{enumerate}[label=(\alph*)]
\item {\it Initialization}: $\|\hat\T_0-\T^*\|_{\rm F} \leq \usigma\cdot c_{1,m} \delta\rmax^{-1/2}$
\vspace{-0.3cm}
\item {\it Signal-to-noise ratio}: $\errrank/\usigma \leq c_{2,m}\delta^2\rmax^{-1/2}$
\end{enumerate}
\vspace{-0.2cm}
where $c_{1,m},c_{2,m} > 0$ are small constants depending only on $m$. 
Then for all $l=1,\cdots,l_{\max}$,
\begin{align*}
	\fro{\hat\T_{l} - \T^*}^2 \leq (1-\delta^2)^l \fro{\hat\T_0-\T^*}^2 +C_{\delta}\textsf{Err}_{2\r}^2
\end{align*}
where $C_{\delta}>0$ is a constant depending only on $\delta$. Then after at most $l_{\max} \asymp \log \big(\usigma/\errrank\big)$ iterations (also depends on $b_l, b_u, m, \rmax$ and $\beta$), we get
$$
\fro{\hat\T_{l_{\max}} - \T^*} \leq C\cdot \textsf{Err}_{2\br},
$$
where the constant $C>0$ depends on only $b_l, b_u, m, \rmax$ and $\beta$.
\end{theorem}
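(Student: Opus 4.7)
The plan is to establish a one-step contraction of the form
\[
\|\hat\T_{l+1}-\T^*\|_{\rm F}^2 \leq (1-\delta^2)\,\|\hat\T_l-\T^*\|_{\rm F}^2 + C_\delta\,\errrank^2,
\]
and then iterate. The analysis decomposes the update into two orthogonal pieces: the tangent direction, where strong convexity plus smoothness yields geometric contraction, and the normal direction, which is controlled by the curvature of $\MM_{\br}$ and turns out to be quadratic in the current error.

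First I would set $\bE_l := \hat\T_l - \T^*$ and write $\P_l := \calP_{\TT_l}$. Because $\hat\T_l\in\TT_l$, we have $\P_l\hat\T_l=\hat\T_l$ and $(I-\P_l)\bE_l=-(I-\P_l)\T^*$, so Pythagoras gives
\[
\|\W_l - \T^*\|_{\rm F}^2 = \|\P_l\bE_l - \beta\P_l\G_l\|_{\rm F}^2 + \|(I-\P_l)\T^*\|_{\rm F}^2.
\]
For the tangent piece I would expand $\G_l = (\G_l - \nabla\frakL(\T^*)) + \nabla\frakL(\T^*)$. Strong convexity and smoothness on $\BB_2^\ast$ (applicable because the initialization bound $\|\bE_l\|_{\rm F}\le c_{0,m}\usigma$ is preserved inductively) produce the usual quadratic estimate $(1 - 2\beta b_l + \beta^2 b_u^2)\|\bE_l\|_{\rm F}^2$. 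The residual gradient $\nabla\frakL(\T^*)$ is handled by observing that $\P_l\G$ has multilinear ranks at most $2\br$ for every $\G$, so both $|\langle\P_l\bE_l,\nabla\frakL(\T^*)\rangle|$ and $\|\P_l\nabla\frakL(\T^*)\|_{\rm F}$ are controlled by $\errrank$ via its variational definition. With $\beta\in[0.4\,b_l b_u^{-2},\,1.5\,b_l b_u^{-2}]$ and the hypothesis $0.75\,b_l/b_u\ge\sqrt{\delta}$, the tangent contribution contracts at rate $1-\delta^2$ up to an additive $C\beta^2\errrank^2$.

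The normal piece $\|(I-\P_l)\T^*\|_{\rm F}$ measures how far the target lies from the current tangent space and is the heart of the argument. The key perturbation lemma is that whenever $\|\bE_l\|_{\rm F}\le c\usigma$, each left singular subspace $\hat\bU_{l,j}$ of $\calM_j(\hat\T_l)$ lies within Davis--Kahan distance $O(\|\bE_l\|_{\rm F}/\usigma)$ of its counterpart for $\T^*$; since the normal component must simultaneously miss \emph{all} $m$ summands in the explicit form of $\TT_l$, it is quadratic in the subspace error, giving
\[
\|(I-\P_l)\T^*\|_{\rm F} \le C_m\,\rmax^{1/2}\,\|\bE_l\|_{\rm F}^2/\usigma.
\]
Under $\|\bE_l\|_{\rm F}\le c_{1,m}\delta\usigma\rmax^{-1/2}$, this term and the cross terms it spawns in the tangent expansion (e.g., from $\langle(I-\P_l)\T^*,\G_l-\nabla\frakL(\T^*)\rangle$) are absorbed into the $1-\delta^2$ rate.

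The last ingredient is the HOSVD retraction. The naive quasi-optimality $\|\opH(\W_l)-\W_l\|_{\rm F}\le\sqrt{m}\,\|\W_l-\T^*\|_{\rm F}$ only delivers $\|\hat\T_{l+1}-\T^*\|_{\rm F}\le(1+\sqrt{m})\|\W_l-\T^*\|_{\rm F}$, which would destroy the contraction. I would instead apply the Davis--Kahan-type perturbation bound \emph{a second time}, now on the matricizations of $\W_l$, to show that the top-$r_j$ singular subspaces of $\calM_j(\W_l)$ are within $O(\|\W_l-\T^*\|_{\rm F}/\usigma)$ of those of $\calM_j(\T^*)$; the retraction $\hat\T_{l+1}$ therefore nearly equals the oracle projection of $\W_l$ onto $\T^*$'s subspaces, giving
\[
\|\hat\T_{l+1}-\T^*\|_{\rm F}^2 \le \|\W_l-\T^*\|_{\rm F}^2 + C_m\,\|\W_l-\T^*\|_{\rm F}^4/\usigma^2,
\]
i.e., HOSVD behaves as an effectively second-order retraction in the local regime. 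Combining the three ingredients and using the SNR condition $\errrank/\usigma\le c_{2,m}\delta^2\rmax^{-1/2}$ to dominate all higher-order remainders produces the one-step bound; unrolling it yields the geometric rate, and choosing $l_{\max}\asymp\log(\usigma/\errrank)$ drives the contractive term below the noise floor so that $\|\hat\T_{l_{\max}}-\T^*\|_{\rm F}\lesssim\errrank$. I expect the HOSVD refinement to be the principal obstacle: avoiding the $(1+\sqrt{m})^2$ expansion factor requires carefully tying the retraction error to $\|\W_l-\T^*\|_{\rm F}^2/\usigma$ while verifying that every induced higher-order remainder stays absorbable under the stated SNR and initialization hypotheses.
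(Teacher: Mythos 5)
Your proposal follows the same high-level architecture as the paper's proof: establish a one-step contraction by splitting the Riemannian update into a tangent piece (handled via strong convexity and smoothness on $\BB_2^\ast$, with the residual gradient controlled by the variational characterization of $\errrank$), a normal piece (controlled by the curvature lemma stating $\fro{\calP_{\TT_l}^{\perp}\T^*}\lesssim_m\fro{\hat\T_l-\T^*}^2/\usigma$, which is Lemma~\ref{lem:ref:01} in the paper — note no $\rmax^{1/2}$ factor is needed there), and a retraction error. Your Pythagorean packaging of the first two pieces is cosmetically different from the paper's direct expansion but ultimately isolates the same cubic normal cross-term. Your stepsize and SNR handling match the paper's.

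The one genuine error is in the retraction step. You claim that subspace-perturbation reasoning yields
\[
\|\hat\T_{l+1}-\T^*\|_{\rm F}^2 \le \|\W_l-\T^*\|_{\rm F}^2 + C_m\,\|\W_l-\T^*\|_{\rm F}^4/\usigma^2,
\]
i.e., a purely quartic correction. This is too strong. The paper's Lemma~\ref{lem:tensorest} gives $\fro{\opH(\T^*+\D)-\T^*}\le\fro{\D}+C_m\sqrt{\rmax}\fro{\D}^2/\usigma$, and squaring produces a \emph{cubic} cross-term $2C_m\sqrt{\rmax}\fro{\D}^3/\usigma$ that does not cancel. Your intuition that $\hat\T_{l+1}$ "nearly equals the oracle projection of $\W_l$ onto $\T^*$'s subspaces" misses that the leading correction to the oracle projection — the term $\T^*\cdot\llbracket\bS_1,\calP_{\bV_2^*},\ldots\rrbracket$ arising from the linear piece of the perturbed projectors — is itself of order $\fro{\D}$ (it equals $\D\cdot\llbracket\calP_{\bV_1^*}^\perp,\calP_{\bV_2^*},\ldots\rrbracket$ plus higher-order, see eq.~\eqref{eq:hosvd-1st-rep1}), and its inner product with the higher-order remainder produces a cubic contribution. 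HOSVD is quasi-optimal but not a genuine second-order retraction in the sense your bound would require. Fortunately this does not break the theorem: the cubic term is still absorbed under the initialization bound $\fro{\hat\T_0-\T^*}\lesssim\delta\usigma/\sqrt{\rmax}$ and the SNR condition, which is exactly how the paper proceeds in its eq.~\eqref{lowrank:est:3}. But you should replace the claimed quartic retraction estimate with the cubic one and verify it is dominated by $(\delta/4)\fro{\W_l-\T^*}^2$ under the stated smallness hypotheses.
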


Note that Theorem~\ref{main:thm} holds without spikiness condition in contrast with Theorem~\ref{thm:lowrank+sparse}. 
It makes sense for the model has no missing values or sparse corruptions. The assumptions on loss function are also weaker (e.g., no need to be an entry-wise loss or entry-wisely smooth) than those in Theorem~\ref{thm:lowrank+sparse}. As a result, Theorem~\ref{main:thm} is also applicable to the low-rank tensor regression model among others. See \citep{han2020optimal, chen2019non,xia2020inference} and references therein. 
The initialization and signal-to-noise conditions are similar to those in Theorem~\ref{thm:lowrank+sparse}, e.g., by setting $|\Omega^{\ast}|=\alpha=0$ there. In addition, the error of final estimate depends only on $\textsf{Err}_{2\br}$. Interestingly, the contraction rate does not depend on the condition number $\kappa_0$. 

\textit{Comparison with existing literature}
In \citep{han2020optimal}, the authors proposed a general framework for exact low-rank tensor estimation based on regularized jointly gradient descent on the core tensor and associated low-rank factors. Their method is fast and achieves statistical optimality in various models. In contrast, our algorithm is based on Riemannian gradient descent, requires no regularization and also runs fast. An iterative tensor projection algorithm was studied in \citep{yu2016learning}. But their method only applies to tensor regression. Other notable works focusing only on tensor regression include \citep{zhang2020islet, zhou2013tensor, hao2020sparse, sun2017provable,li2018tucker,pan2018covariate}. 
A general projected gradient descent algorithm was proposed in \citep{chen2019non} for generalized low-rank tensor estimation. For Tucker low-rank tensors, their algorithm is similar to our Algorithm~\ref{algo:lowrank} except that they use vanilla gradient $\G_l$ while we use the Riemannian gradient $\calP_{\TT_l}\G_l$. As explained in Section~\ref{sec:method}, using the vanilla gradient can cause heavy computation burdens in the subsequent steps.  
Riemannian gradient descent algorithm for tensor completion was initially proposed by \citep{kressner2014low}. They focused only on tensor completion model and did not investigate its theoretical guarantees. 
Recently in \citep{cai2020provable}, the Riemannian gradient descent algorithm is applied for noiseless tensor regression and its convergence analysis is proved.

\section{Numerical Comparisons with Existing Methods}\label{sec:numerical}

We test the performances of our algorithms on synthetic datasets, specifically for the four applications studied in Section~\ref{sec:app}. Due to page limit, here we only present the comparative simulation results with competing methods on  SG-RPCA and binary tensor learning. The comprehensive simulation results and the performance of proposed BIC-type criterion are collected in the supplementary file. 

\textit{Choice of parameters}.  
First of all, for the stepsize, we choose $\beta$ that lies in the range we provide in the theories. For the spikiness parameter $\mu_1$, our theorem only requires it to be larger than the truth,  we can initially set $\mu_1 =2^m+ \log(\dmax)$ and gradually increase it by a factor of $2$ if the algorithm fails to converge. For the rank $\br$ and sparsity $\alpha$, we treat them as given or select them by the BIC-type criterion (\ref{eq:BIC}). Note that $\gamma$ {\it only} plays a role in the term $\gamma\alpha=:\alpha'$, i.e., the desired sparsity level. If the true $\alpha$ is unknown, then the BIC-type criterion actually searches for $\alpha'$ in which case $\gamma$ is irrelevant and we simply set it to $1$; if the true $\alpha$ is known, then we initially set $\gamma=1.1$ or $2$ and gradually increase it by a factor of $1.5$ if the algorithm fails to converge. 
 In the case of SG-RPCA and tensor PCA with heavy tailed noise, $\kprune = \infty$; and in the case of binary tensor, we follow the choice in \cite{wang2020learning} and set $\kprune = 1$.



For the first experiment, we compare RGrad and PGD \citep{chen2019non} on noisy tensor decomposition without outliers, i.e., no $\alpha$ or $\gamma$. The low-rank tensor $\T^*\in\RR^{d\times d\times d}$ with $d = 300$ and Tucker rank $\br = (2,2,2)^\top$ is generated from the HOSVD of a trimmed standard normal tensor. The noise tensor $\Z$ has i.i.d. entries sampled from ${\rm N}(0,\sigma_z^2)$.  Both algorithm terminate either when the relative error $\|\hat \T_l-\hat\T_{l-1}\|_{\rm F}/\|\hat\T_{l}\|_{\rm F} < 0.001$ or the maximum iteration (100) is reached.  
The noise level $\sigma_z$ ranges from 0.01 to 0.05.  For each fixed $\sigma_z$,  10 random instances for both algorithms. are conducted. The result is displayed in the left plot of Figure \ref{fig:rgradvspgd}.  We see that the statistical performance of RGrad and PGD, when there is no outliers,  are similar.  However,  the right panel of Figure \ref{fig:rgradvspgd} shows that the per-step runtime using RGrad is only roughly 1/4 of the per-step runtime using PGD. This illustrates the computational efficiency of using RGrad over PGD.
 
For the second experiment,  we set $d=100$ and $\sigma_z = 0.01$.  Given a sparsity level $\alpha\in(0,1)$, the entries of sparse tensor $\S^{\ast}$ are i.i.d. sampled from $\textsf{S}_{\textsf{amp}}\times{\rm Be}(\alpha)\times {\rm N}(0,1)$, which ensures $\S^{\ast}\in \SS_{O(\alpha)}$ with high probability.  Here the constant $\textsf{S}_{\textsf{amp}}$ is set as $0.1$ or $1$ modeling the two cases of small magnitude and large magnitude, respectively.  The sparsity $\alpha$ is varied between 0.025 and 0.1, and $\gamma=1.1$ for RGrad.  We refer to \cite{gu2014robust}'s method as convex relaxation and \cite{lu2016tensor}'s method as tubal-tRPCA. The results are displayed in Figure \ref{fig:rgradvsothers}. Here RGrad (BIC) means that $\alpha$ is treated as unknown and selected by BIC-type criterion (\ref{eq:BIC}). It shows that the proposed BIC-type criterion works nicely in SG-RPCA. We can see the tubal-tRPCA performs poorly due to the ignorance of the low rank structure along the third direction. When the magnitude of the outliers is small, the performance of PGD and RGrad are similar. However, when the magnitude of the outliers is large, PGD performs poorly since it cannot deal with the outliers. The performance of convex relaxation is also worse than RGrad since it unfolds a tensor into an unbalanced matrix, and is statistically sub-optimal. 

For the third experiment, we compare RGrad and PGD on binary tensor learning.  Here $d=100$ and $\br=(2,2,2)^{\top}$.  The incoherent $\T^{\ast}$ is generated such that $\|\T^{\ast}\|_{\ell_\infty}\approx 5$.  The entries of $\S^*$ are i.i.d. sampled from $\textsf{S}_{\textsf{amp}}\times{\rm Be}(\alpha)$. Here the constant $\textsf{S}_{\textsf{amp}}$ is set as $1$ or $10$ modeling the two cases of small magnitude and large magnitude, respectively. The sparsity $\alpha$ is varied between $0.005$ and $0.02$, and $\gamma=1.1$ for RGrad.  Initialization is obtained by the algorithm in the supplement.  The link function is set to $p(x) = (1+e^{-x/5})^{-1}$. Here RGrad (BIC) has a similar meaning as above. 
The results are displayed in Figure \ref{fig:binary_compare}.  When the magnitude of outliers is small ($\textsf{S}_{\textsf{amp}} = 1$), the performance of RGrad and PGD are comparable.  However, when the magnitude of the outliers become large ($\textsf{S}_{\textsf{amp}} = 10$), PGD cannot handle them well while our proposed algorithm has a much better performance. 

\begin{figure}
\centering
	\begin{subfigure}[b]{.98\linewidth}
		\includegraphics[width=0.45\textwidth]{./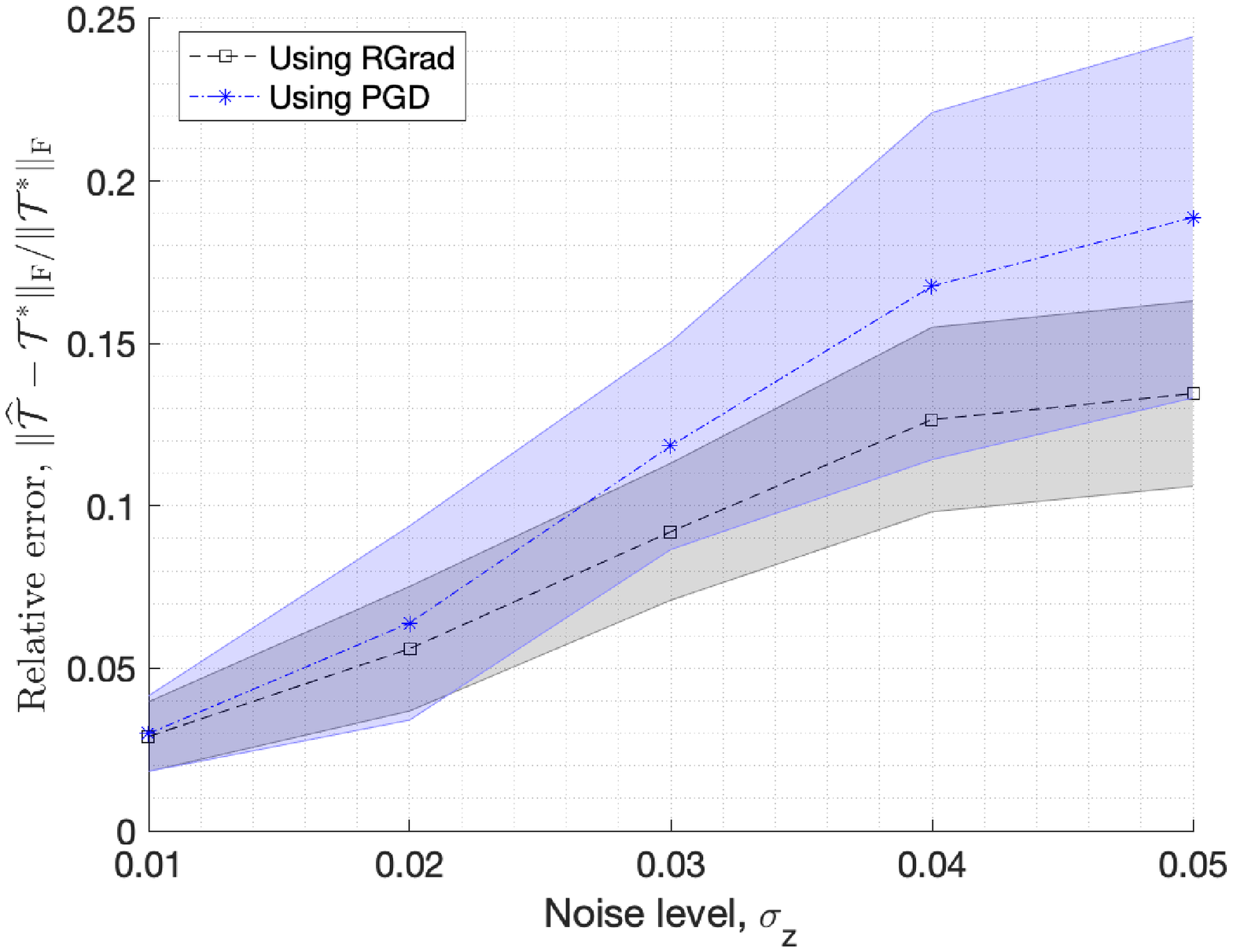}
		\includegraphics[width=0.45\textwidth]{./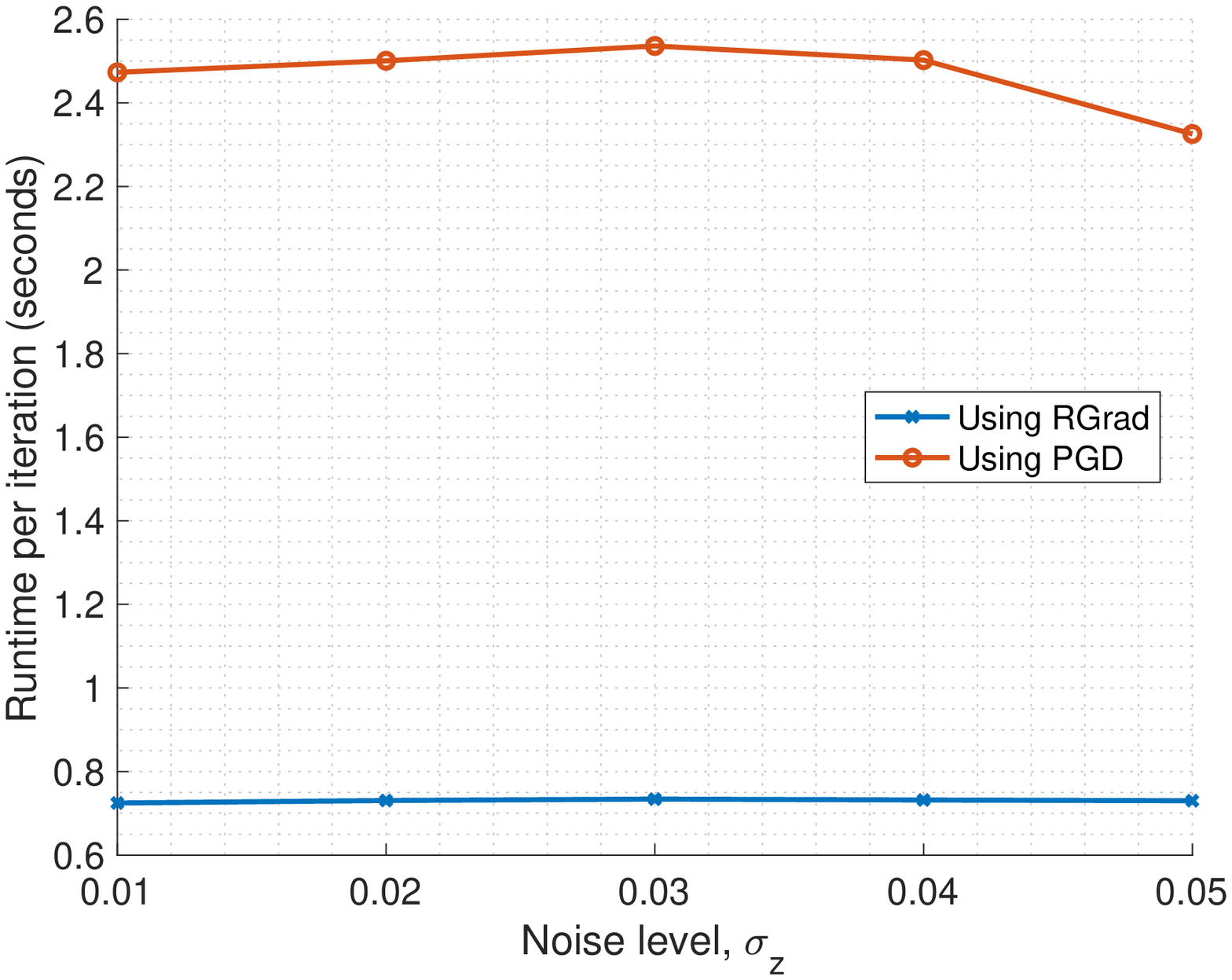}
		\caption{Tensor PCA without outliers.  Left: Error bar of RGrad and PGD for 10 random instances.  Right: Per-step runtime of RGrad and PGD.}
		\label{fig:rgradvspgd}
	\end{subfigure}
	\begin{subfigure}[b]{.98\linewidth}
		\includegraphics[width=0.45\textwidth]{./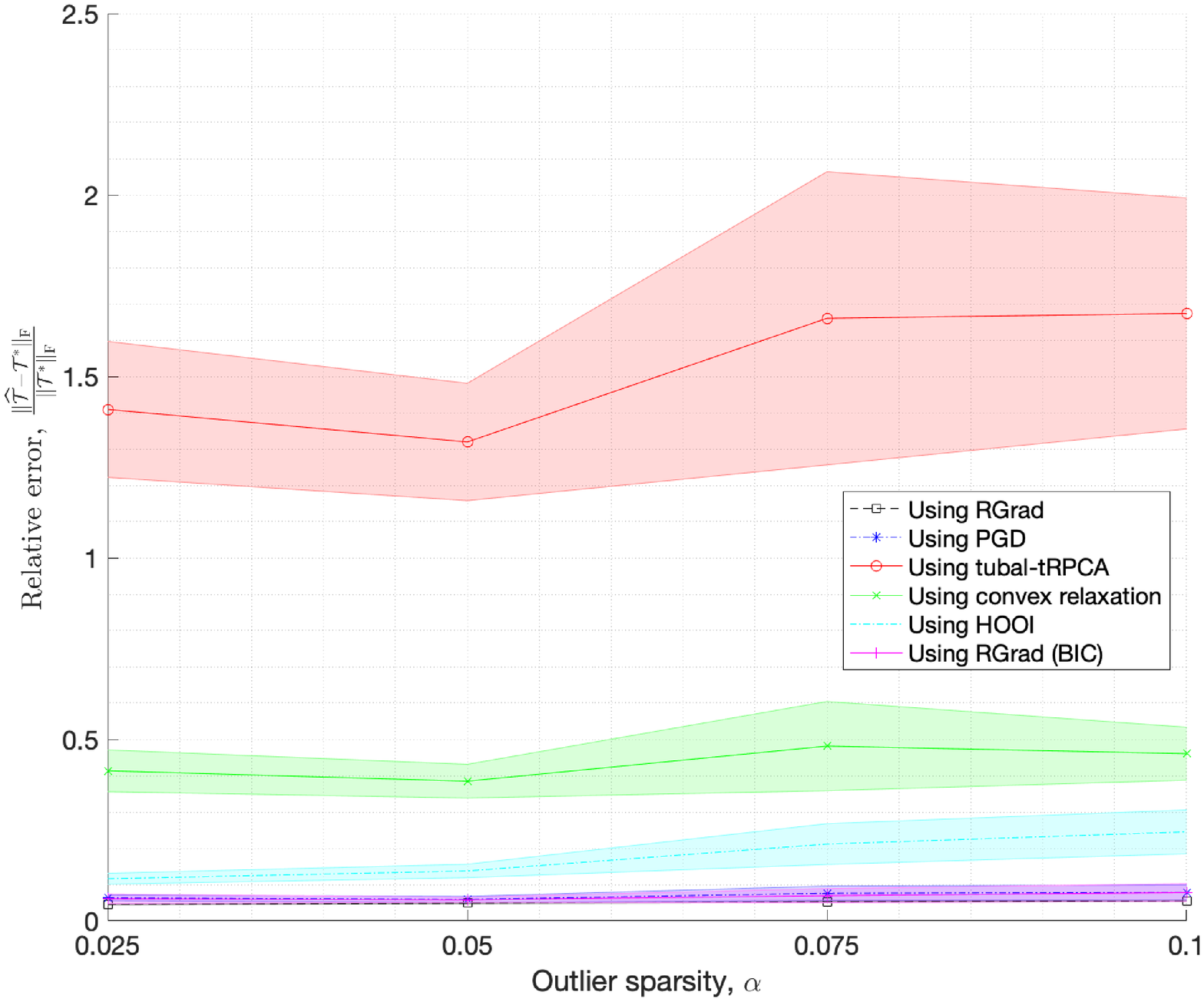}
		\includegraphics[width=0.45\textwidth]{./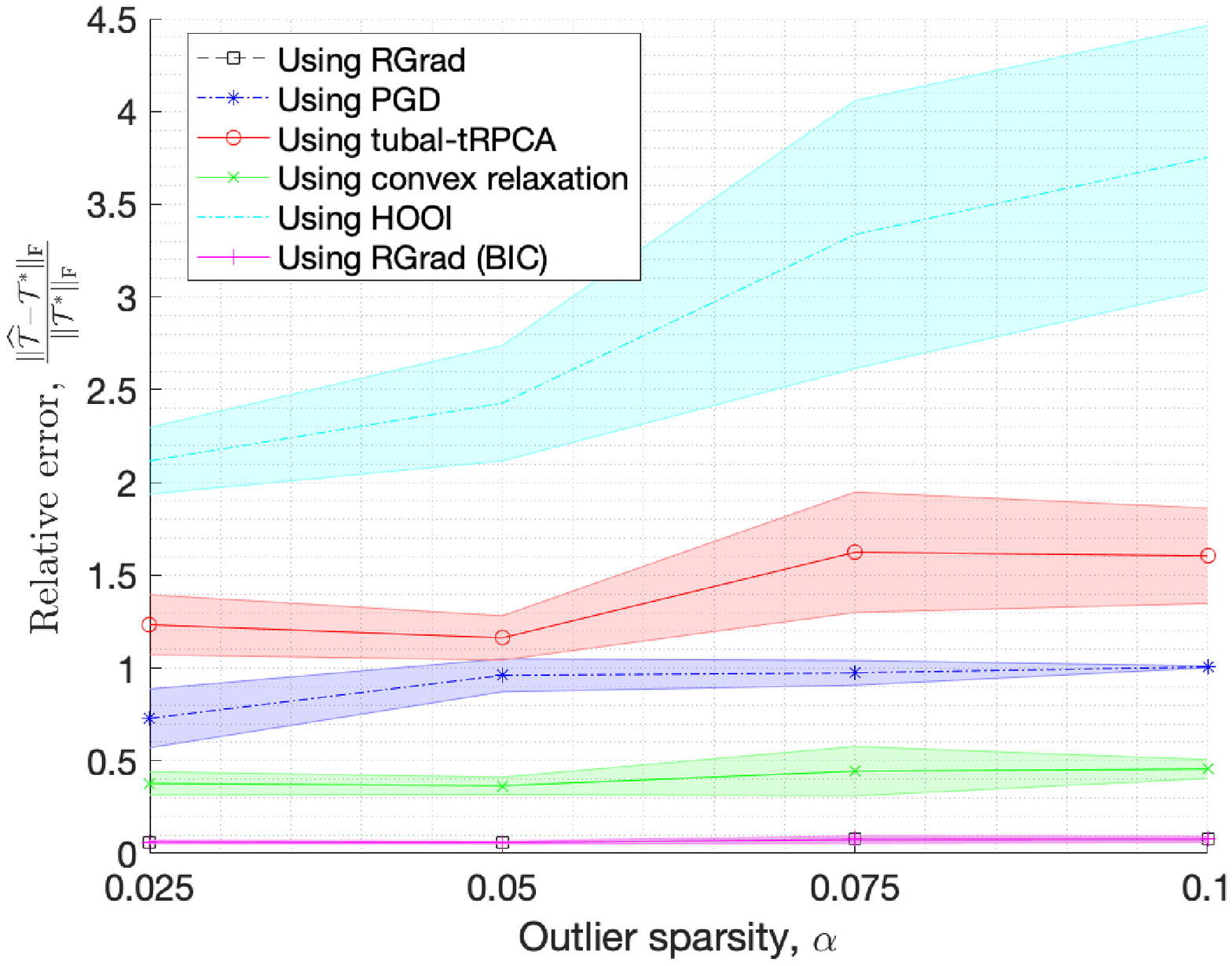}
		\caption{SG-RPCA.  Left: Small amplitude of outlier with $\textsf{S}_{\textsf{amp}} = 0.1$; BIC suggested $\alpha$: $\{0.025,0.04,0.065,0.09\}$. Right: Large amplitude of outlier with $\textsf{S}_{\textsf{amp}} = 1$; BIC suggested $\alpha$: $\{0.025,0.05,0.075,0.1\}$.}
		\label{fig:rgradvsothers}
	\end{subfigure}
		\begin{subfigure}[b]{.98\linewidth}
		\includegraphics[width=0.45\textwidth]{./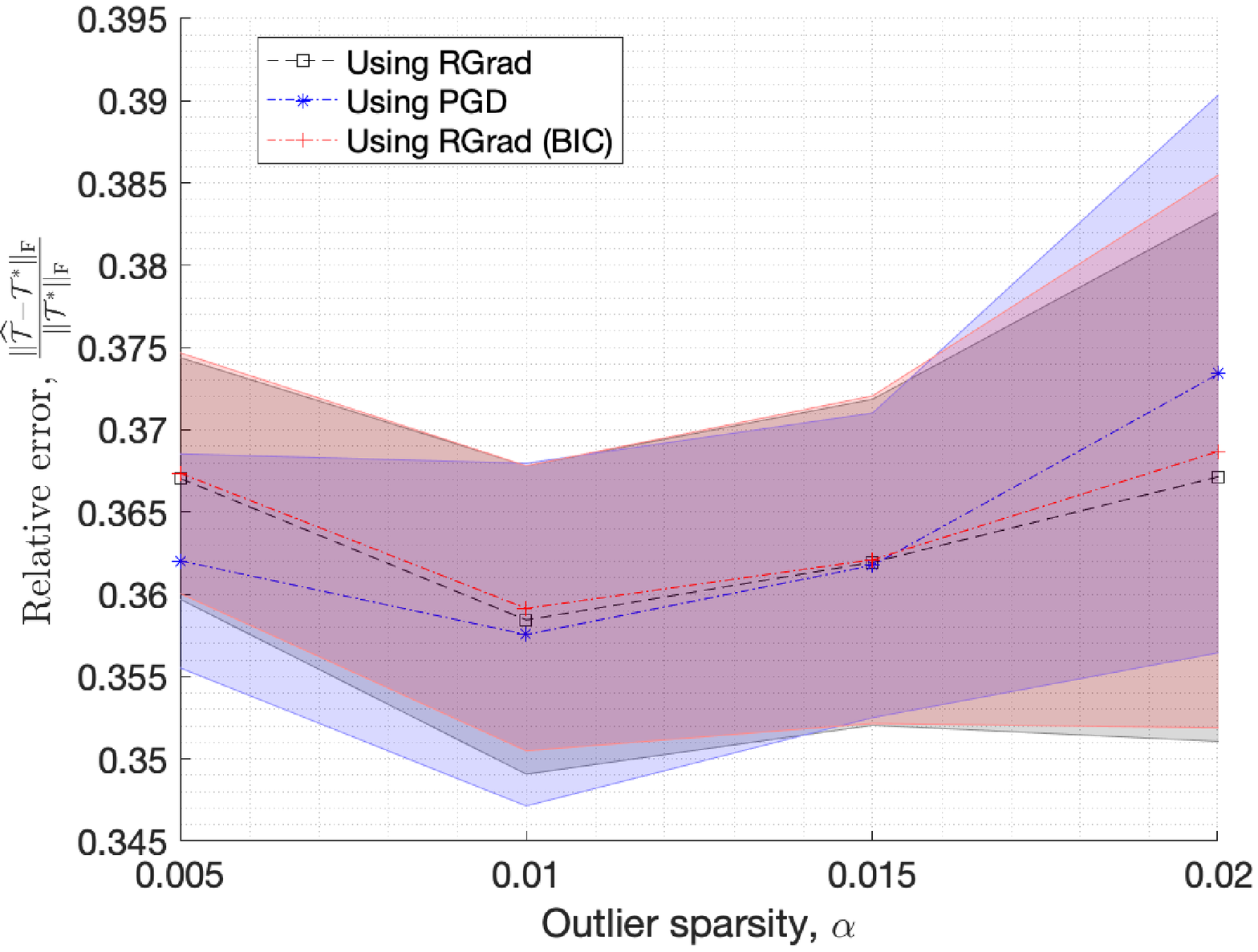}
		\includegraphics[width=0.45\textwidth]{./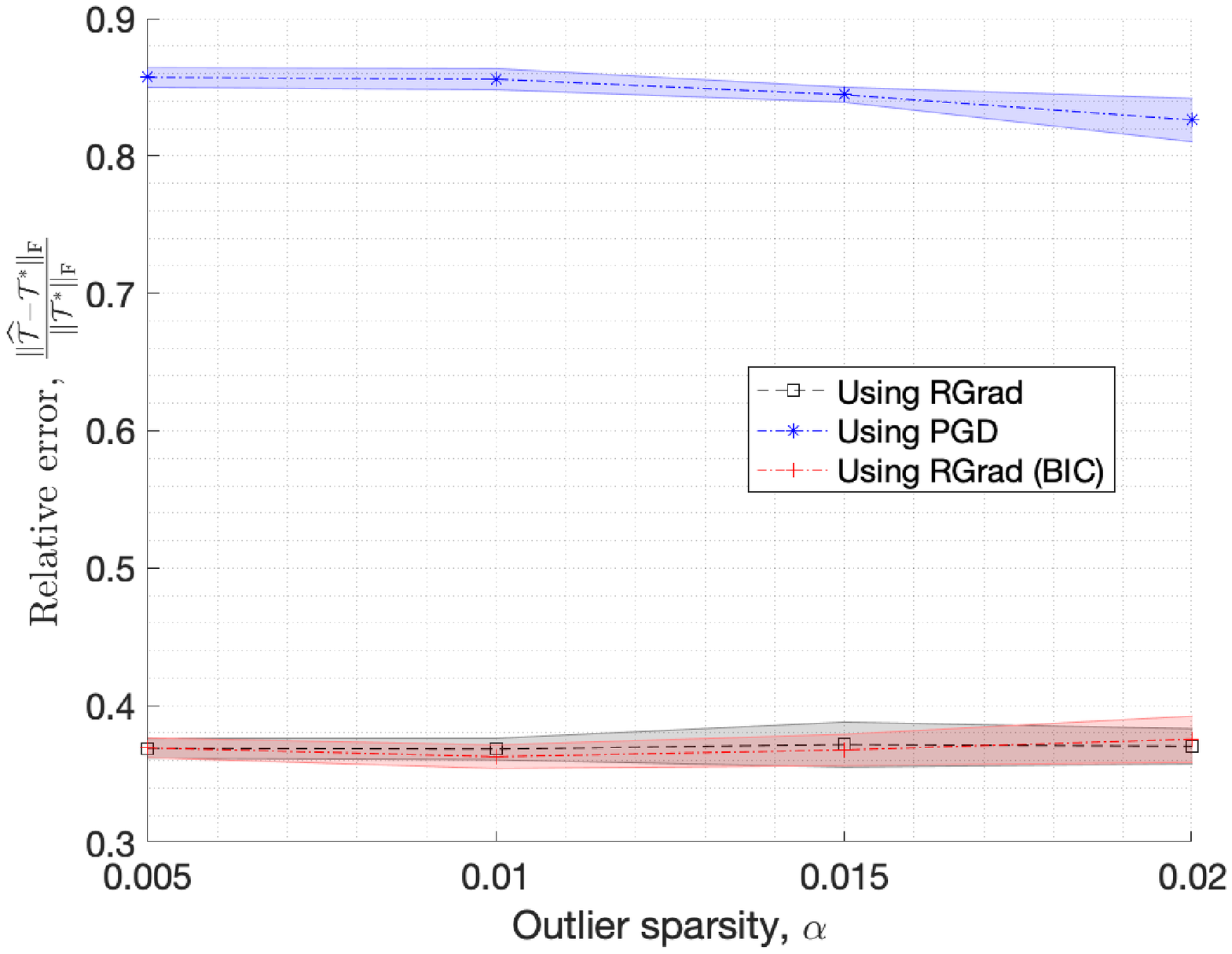}
		\caption{On binary tensor learning with outliers. Left: Small amplitude of outlier with $\textsf{S}_{\textsf{amp}} = 1$; BIC suggested $\alpha$: $\{0.003,0.008,0.013,0.015\}$.  Right: Large amplitude of outlier with $\textsf{S}_{\textsf{amp}} = 10$; BIC suggested $\alpha$: $\{0.003,0.007,0.012,0.016\}$.}
		\label{fig:binary_compare}
	\end{subfigure}
\caption{Comparison of RGrad, PGD \citep{chen2019non},  convex \citep{gu2014robust}, tubal-tRPCA \citep{lu2016tensor} and HOOI \citep{zhang2018tensor}. }
\end{figure}

\section{Real Data: International Commodity Trade Flows}\label{sec:real_app}
We collected the international commodity trade data from the API provided by UN website \textit{https://comtrade.un.org}. The dataset contains the monthly information of imported commodities by countries from Jan. 2010 to Dec. 2016 ($84$ months in total). For simplicity, we focus on $50$ countries among which $35$ are from Europe, $9$ from America, $5$ from Asia\footnote{Egypt is at the cross of Eastern Africa and Western Asia. For simplicity, we treat it as an Asian country. In addition, Turkey is treated as an Eastern European country rather than a Western Asian country. } and $1$ from Africa. All the commodities are classified into $100$ categories based on the $2$-digit HS code (\textit{https://www.foreign-trade.com/reference/hscode.htm}). Thus, the raw data is a $50\times 50\times 100\times 84$ tensor. At any month and for any category of commodity,  there is a directed and weighted graph of size $50\times 50$ depicting the trade flow between countries. The international trade has cyclic pattern annually. Since we are less interested in the time domain, we eliminate the fourth dimension by simply adding up the entries. Finally, we end up with a tensor $\A$ of size $50\times 50\times 100$. 

In Figure~\ref{fig:trade_flow}, circular plots are presented for illustrating the special trade patterns of some commodities. The countries are grouped and coloured by continent, i.e., Europe by red, Asia by green, America by blue and Africa by black. The links represent the directional trade flow between nations and are coloured based on the starting end of the link. The position of starting end of the link is shorter than the other end to give users the feeling that the link is moving out. The thickness of link indicates the volume of trade. From the top-left plot, we observe that Japan imports a large volume of tobacco related commodities; Germany is the largest exporter; Poland and Brazil are the second and third largest exporter; USA both import and export a large quantity of tobacco commodity. The top-right plot shows that USA and Canada import and export large volumes of mineral fuels; Malaysia exports lots of mineral fuels to Japan; Algeria exports a large quantity of miner fuels which plays the major role of international trade of this Africa country. The middle-left plot shows that Portugal is the largest exporter of Cork, and European countries are the major exporter and importer of this commodity. From the middle-right plot, we observe that Pakistan is the major exporter of Cotton in Asia; the European countries Turkey, Italy and Germany all export and import large volumes of cotton; USA exports a great deal of cotton to Mexico,Turkey and Philippines. The bottom-left plot shows that Malaysia and Belgium are the largest exporter of Tin and USA is the major importer. Finally, the bottom-right plot shows that Switzerland is the single largest exporter of clocks and watches; USA is the major importer; France and Germany both export and import large quantities of clocks and watches. 

\begin{figure}
\centering
	\begin{subfigure}[b]{.98\linewidth}
		\includegraphics[width=0.45\textwidth]{./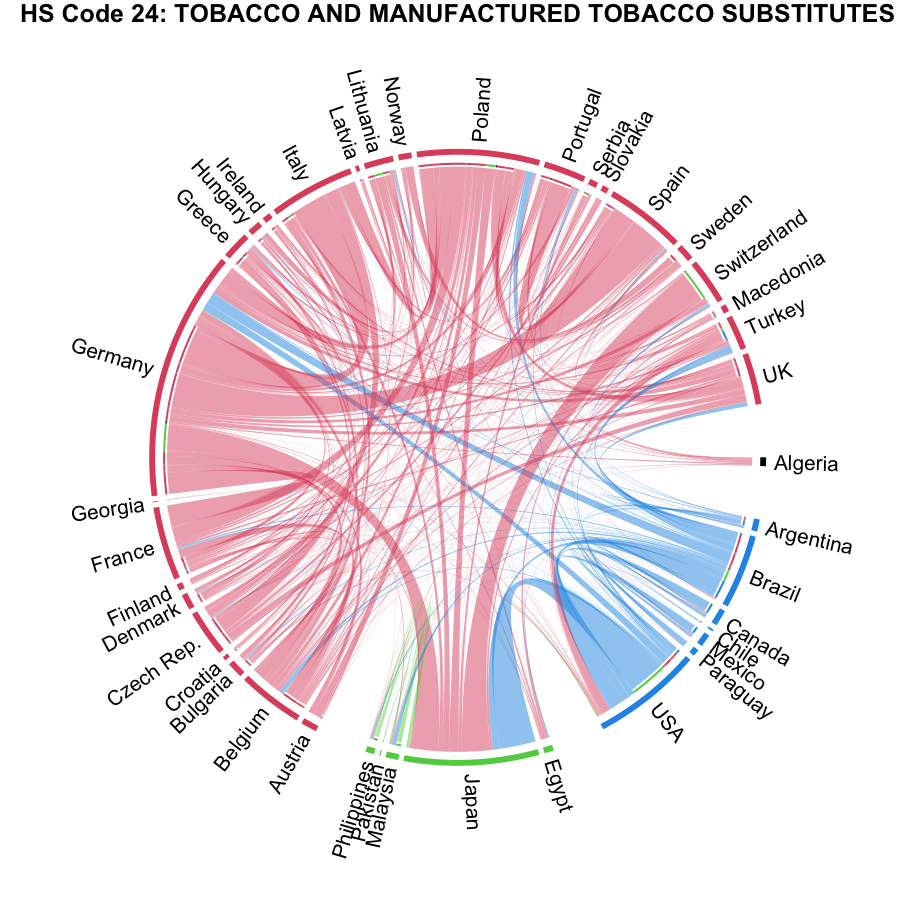}
		\includegraphics[width=0.45\textwidth]{./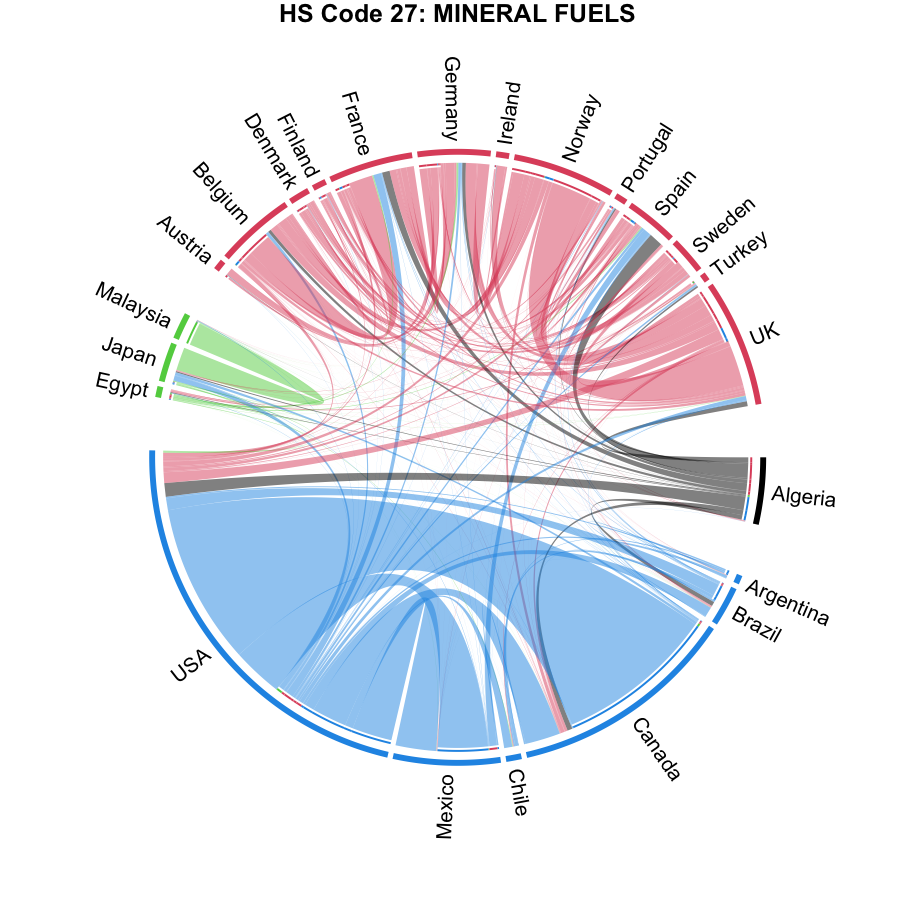}
	\end{subfigure}
	\begin{subfigure}[b]{.98\linewidth}
		\includegraphics[width=0.45\textwidth]{./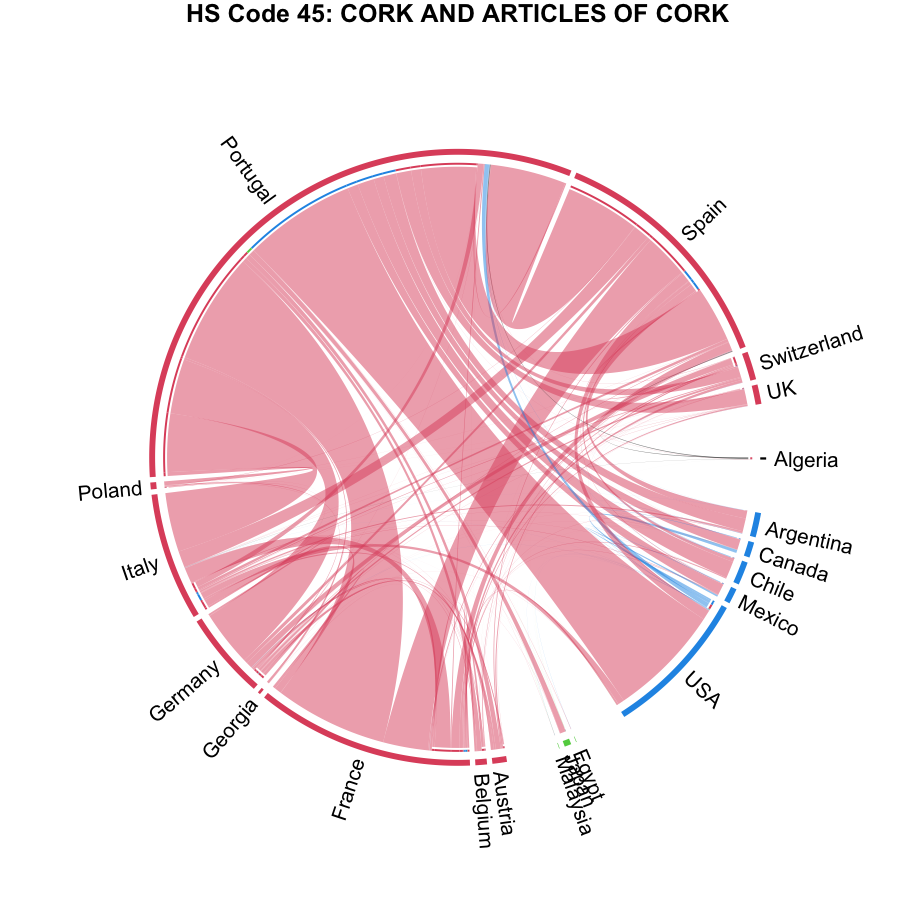}
		\includegraphics[width=0.45\textwidth]{./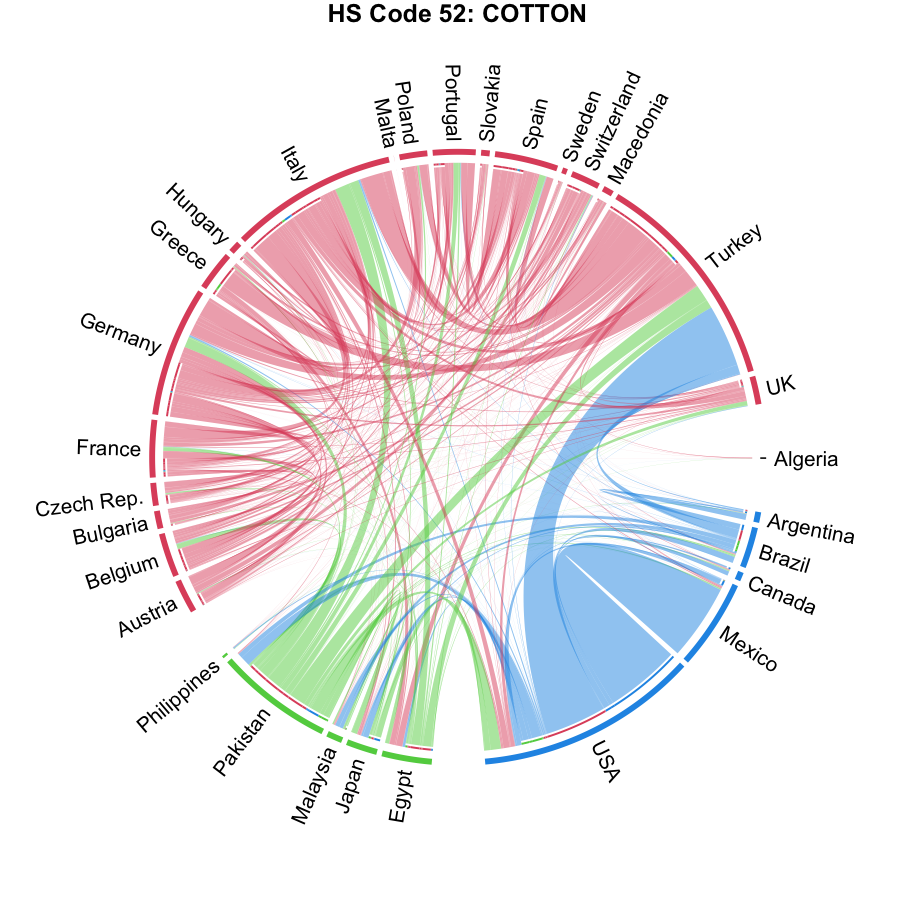}
	\end{subfigure}
	\begin{subfigure}[b]{.98\linewidth}
		\includegraphics[width=0.45\textwidth]{./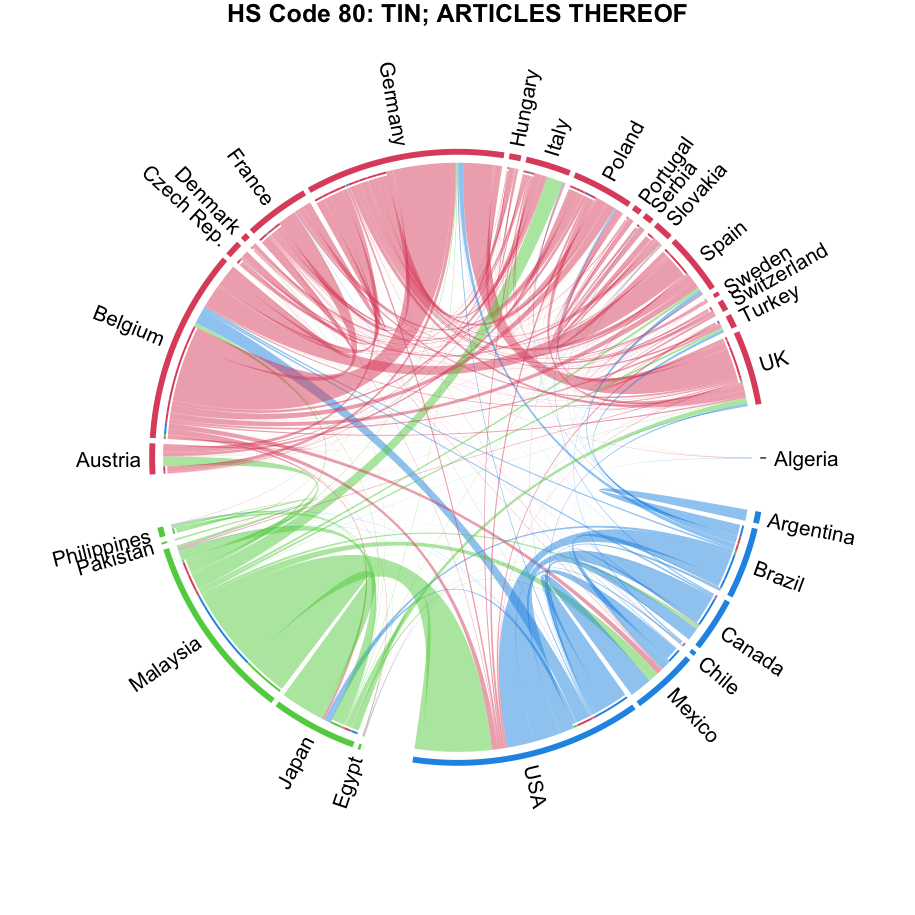}
		\includegraphics[width=0.45\textwidth]{./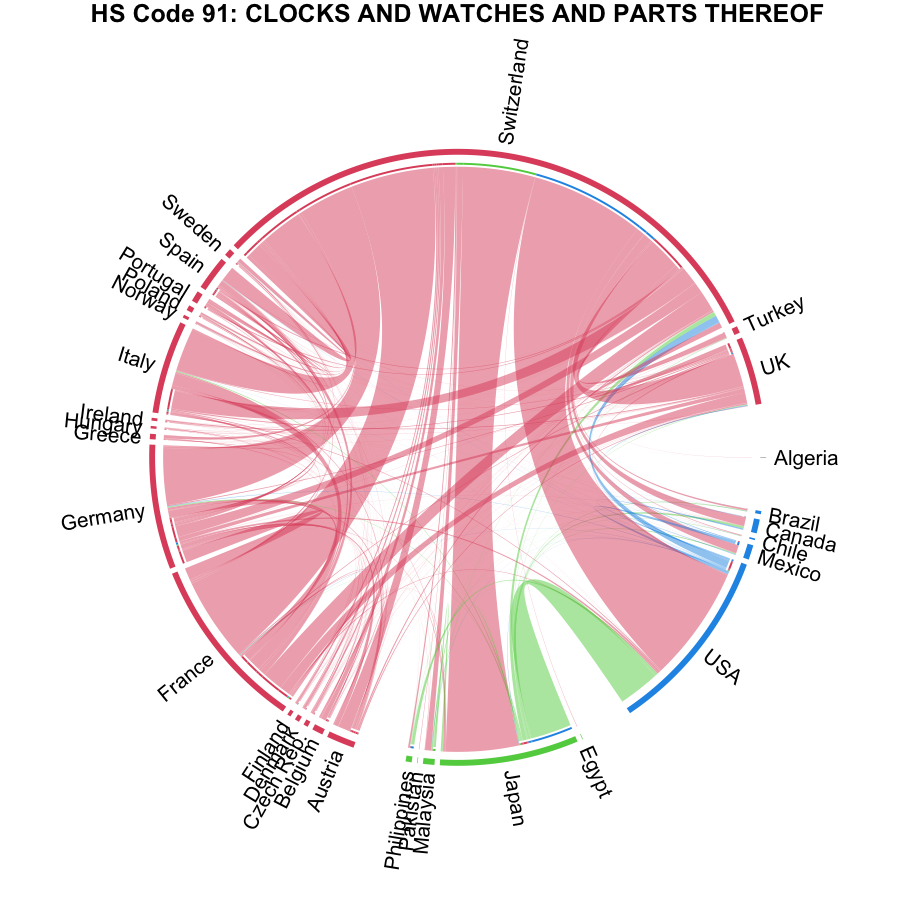}
	\end{subfigure}
\caption{International Commodity Trade Flow. Import and export flow of some commodities. }
\label{fig:trade_flow}
\end{figure}

We implement the  SG-RPCA framework as Section~\ref{sec:rpca} to analyze the tensor $\log(1+\A)$, where a logarithmic transformation helps shrink the extremely large entries. 
The Tucker ranks are set as $(3,3,3)$, although most of the results seem insensitive to the ranks so long as they are bounded by $5$ (BIC values within a $1\%$ discrepancy). We apply the BIC-type criterion (\ref{eq:BIC}) 
and the detailed BIC result is postponed to the supplement. It suggests that any $\alpha$ between $0.002$ and $0.03$ yield similar (within 0.1\% discrepancy) BIC values. See also top left and right panel in Figure~\ref{fig:trade_flow_nations}. The algorithm is initialized by a HOSVD, which finally produces a low-rank $\hat\T$ and a sparse $\hat \S$. We shall use $\hat\T$ to uncover underlying relations among countries, and $\hat\S$ to examine distinctive trading patterns of certain commodities. 

In particular, the singular vectors of $\hat\T$ are utilized to illustrate the community structure of nations. Note that the $1$st-dim and $2$nd-dim singular vectors of $\hat\T$ are distinct because the trading flows are directed. We observe that the $2$nd-dim singular vectors often render better results. Then, a procedure of multi-dimensional scaling is adopted to visualize the rows of these singular vectors. We note that, though the BIC-type criterion suggests an $\alpha\in[0.002, 0.03]$, intriguing phenomenons are observed for larger values of $\alpha$. 
The results are presented in Figure~\ref{fig:trade_flow_nations} for four choices of $\alpha\in\{0.003, 0.03, 0.1, 0.3\}$. All the plots in Figure~\ref{fig:trade_flow_nations} reveal certain degrees of the geographical relations among countries. It is reasonable since regional trade partnerships generally dominate the inter-continental trade relations. The European countries (coloured in blue) are mostly separated from the others. Overall, countries from America (coloured in red) and Asia (coloured in magenta) are less separable especially when $\alpha$ is large. For small $\alpha$ like $0.003$ or $0.03$, the $5$ Asian countries are clustered together and the major $8$ American nations lie on the top-right corner of the plot. The two geographically close African countries Algeria and Egypt are also placed together in the top-left plot of Figure~\ref{fig:trade_flow_nations}, as is the case with the Western European nations such as United Kingdom, Spain, France, Germany and Italy. 

Figure~\ref{fig:trade_flow_nations} show that the low-rank estimate is sensitive to the larger sparsity ratio. Interesting shifts appear as $\alpha$ increases. Indeed, the geographical relations become a less important factor but the economic similarity plays the dominating role. 
For instance, some Asian and American nations split and merge into two clusters. The three large economies US, Canada and Japan are merged into one cluster, while the other small and less-developed Asian and American countries are merged into another cluster. It may be caused by that these three large economies are better at advanced technology and share similar structures in exporting high end commodities. Moreover, as $\alpha$ increases, the African country Algeria moves closer to the less-developed American and Asian nations. All these nations including Algeria rely heavily on exporting natural resources even if Algeria is geographically far from the others. Another significant shift is that the European countries split into two clusters as $\alpha$ increases. Moreover, one cluster comprising those wealthy and advanced Western European countries move closer to the group of US, Japan and Canada. These countries have close ties in trading high end products and components, although they belong to distinct continents. The other cluster includes mostly the Central and Eastern European countries, among which regional trade flows are particularly intense. Interestingly, there are two outlier countries Ireland (north-western Europe) and Antigua and Barbuda (a small island country in middle America). They do not merge into any clusters. The magnitudes of coordinates of these two points suggest that their international trade is not active.  

\begin{figure}
\centering
	\begin{subfigure}[b]{.98\linewidth}
		\includegraphics[width=0.5\textwidth]{./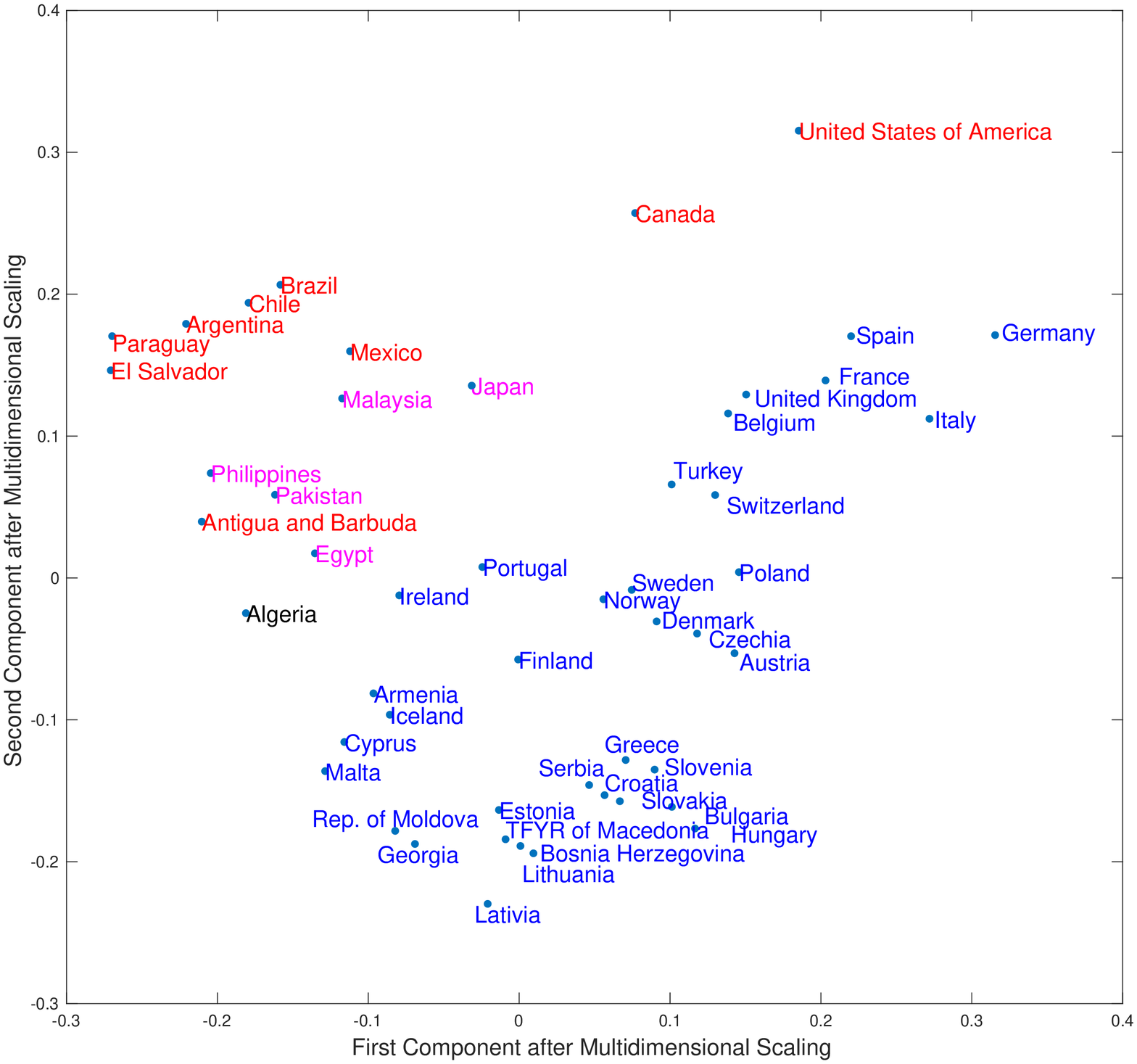}
		\includegraphics[width=0.5\textwidth]{./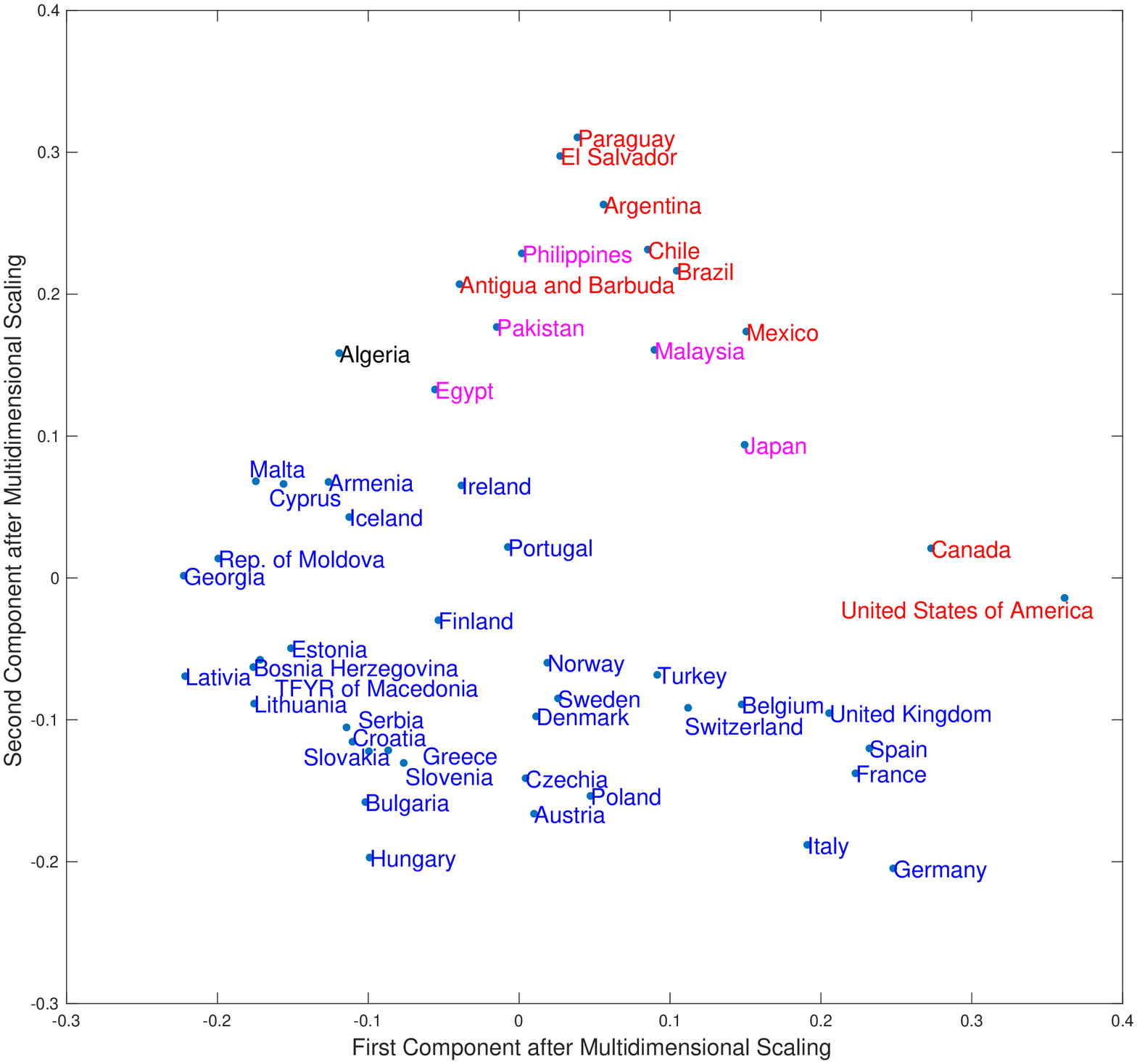}
		\includegraphics[width=0.5\textwidth]{./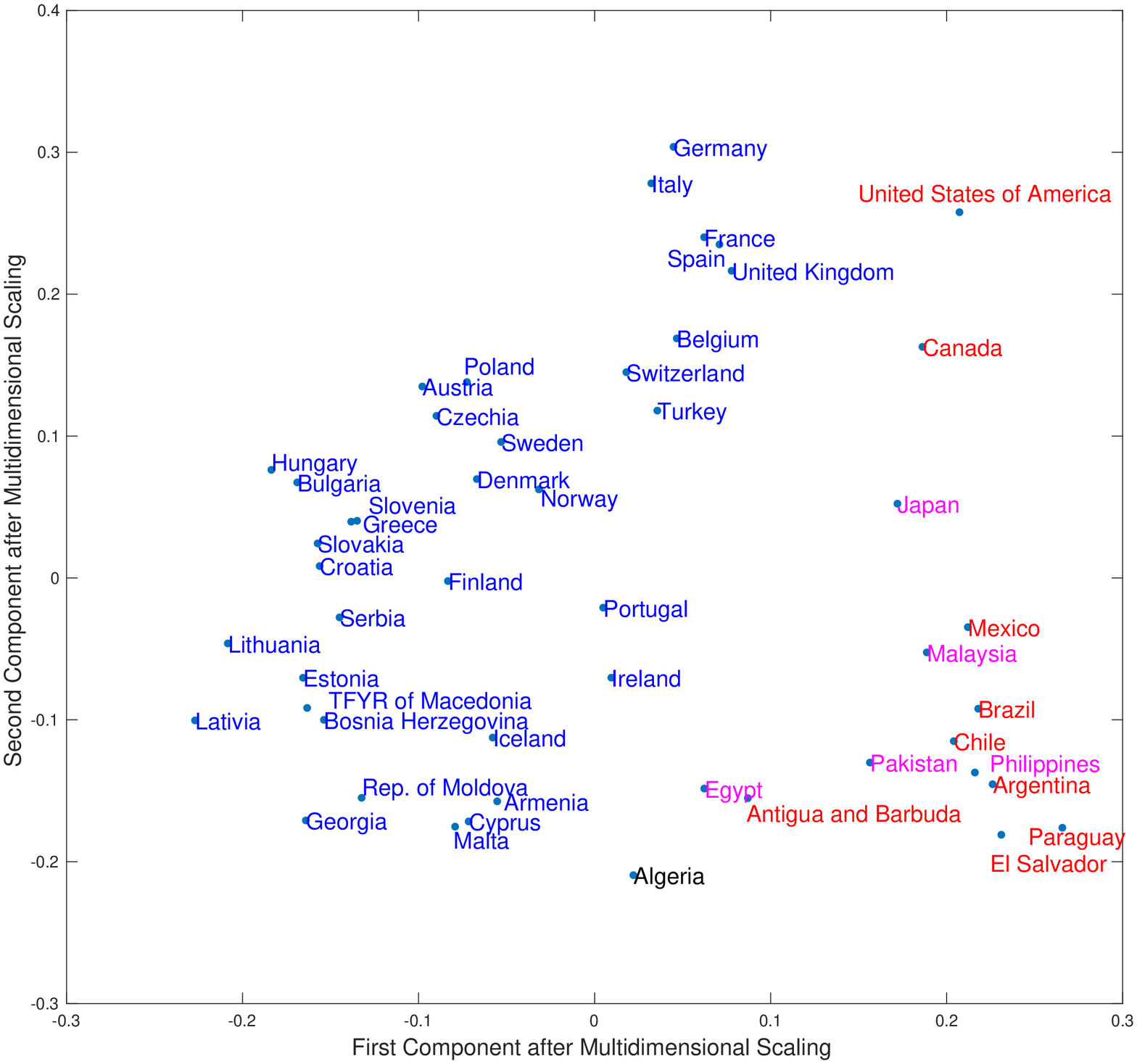}
		\includegraphics[width=0.5\textwidth]{./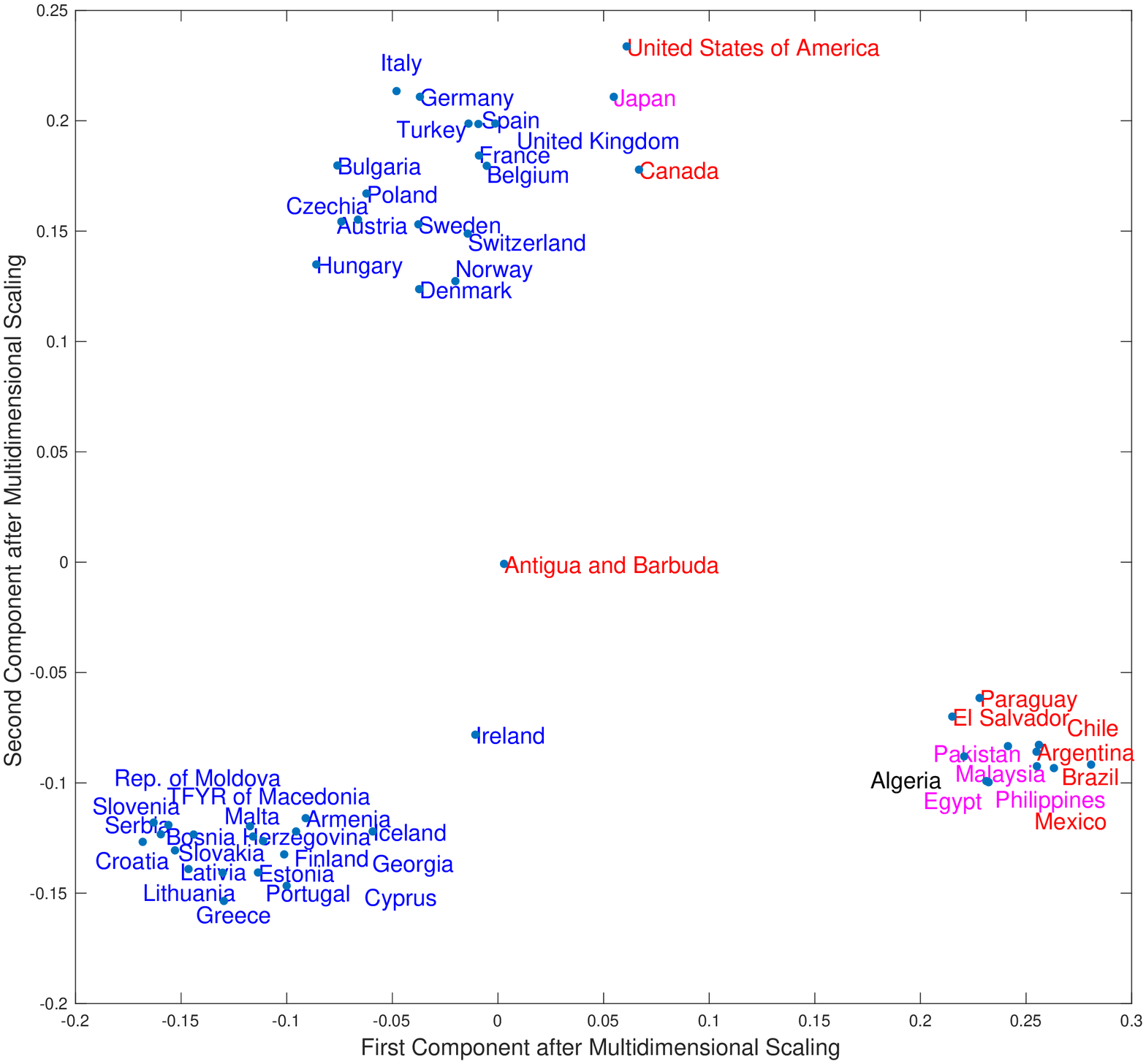}
		\caption{Sparsity level: top-left, $\alpha=0.003$; top-right, $\alpha=0.03$; bottom-left, $\alpha=0.1$; bottom-right, $\alpha=0.3$.}
	\end{subfigure}
\caption{Visualize the relations between countries by spectral estimates. Tucker ranks are set as $(3,3,3)$ and the low-rank tensor is estimated by SG-RPCA as in Section~\ref{sec:rpca} with HOSVD initialization. Blue coloured countries are from Europe, red from America, magenta from Asia and black from Africa. When $\alpha$ is small, clusters have strong implications on the geographical closeness between nations; As $\alpha$ becomes large, economic structures become the major factor in that large and advanced economies tend to merge, and so do low end economies and natural resource reliant economies. But our BIC-type criterion suggests to choose $\alpha\in[0.002, 0.03]$.}
\label{fig:trade_flow_nations}
\end{figure}

We now look into the slices of the sparse estimate $\hat\S$ and investigate the distinctive trading patterns of certain commodities. As the sparsity ratio $\alpha$ grows, the slices of $\hat\S$ become denser whose patterns are more difficult to examine. For better exposition, we mainly focus on small values of $\alpha$ like $0.03$. The results are presented in Figure~\ref{fig:trade_flow_comm}.  The top-left plot shows that Algeria exports  exceptionally large volumes of mineral fuels that can not be explained by the low-rank tensor estimate. Similarly, based on the top-right plot, we observe that the exact low-rank tensor PCA fails to explain the trading export of cork by Portugal. Fortunately, these interesting and significant trading patterns can be easily captured by the additional sparse tensor estimate. The bottom-left and bottom-right plots of Figure~\ref{fig:trade_flow_comm} showcase the unusual exports of cotton by Pakistan and exports of Tin by Malaysia, respectively. These findings echo some trading patterns displayed in Figure~\ref{fig:trade_flow}. 

\begin{figure}
\centering
	\begin{subfigure}[b]{0.98\linewidth}
		\includegraphics[width=0.5\textwidth]{./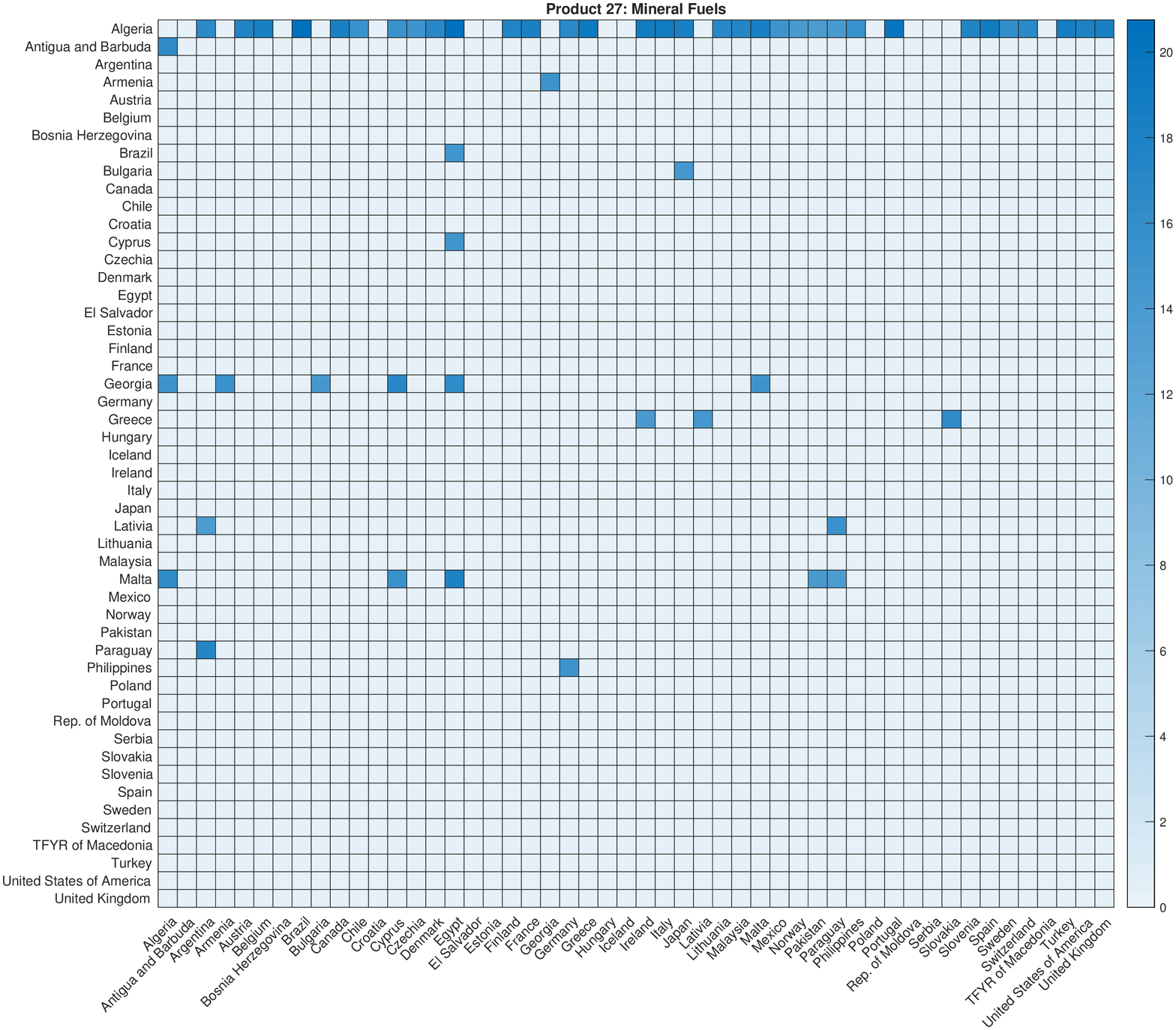}
		\includegraphics[width=0.5\textwidth]{./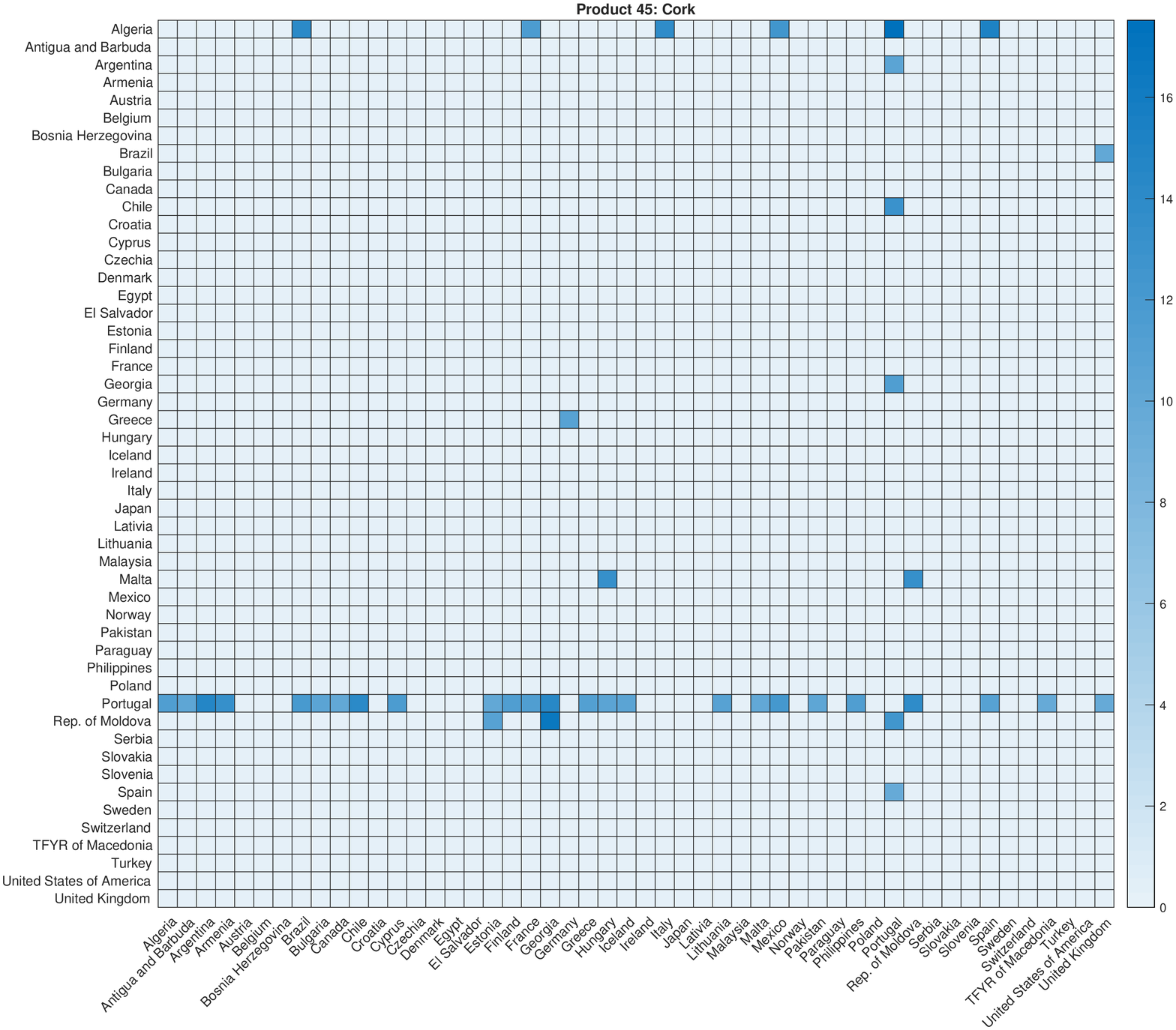}
		\includegraphics[width=0.5\textwidth]{./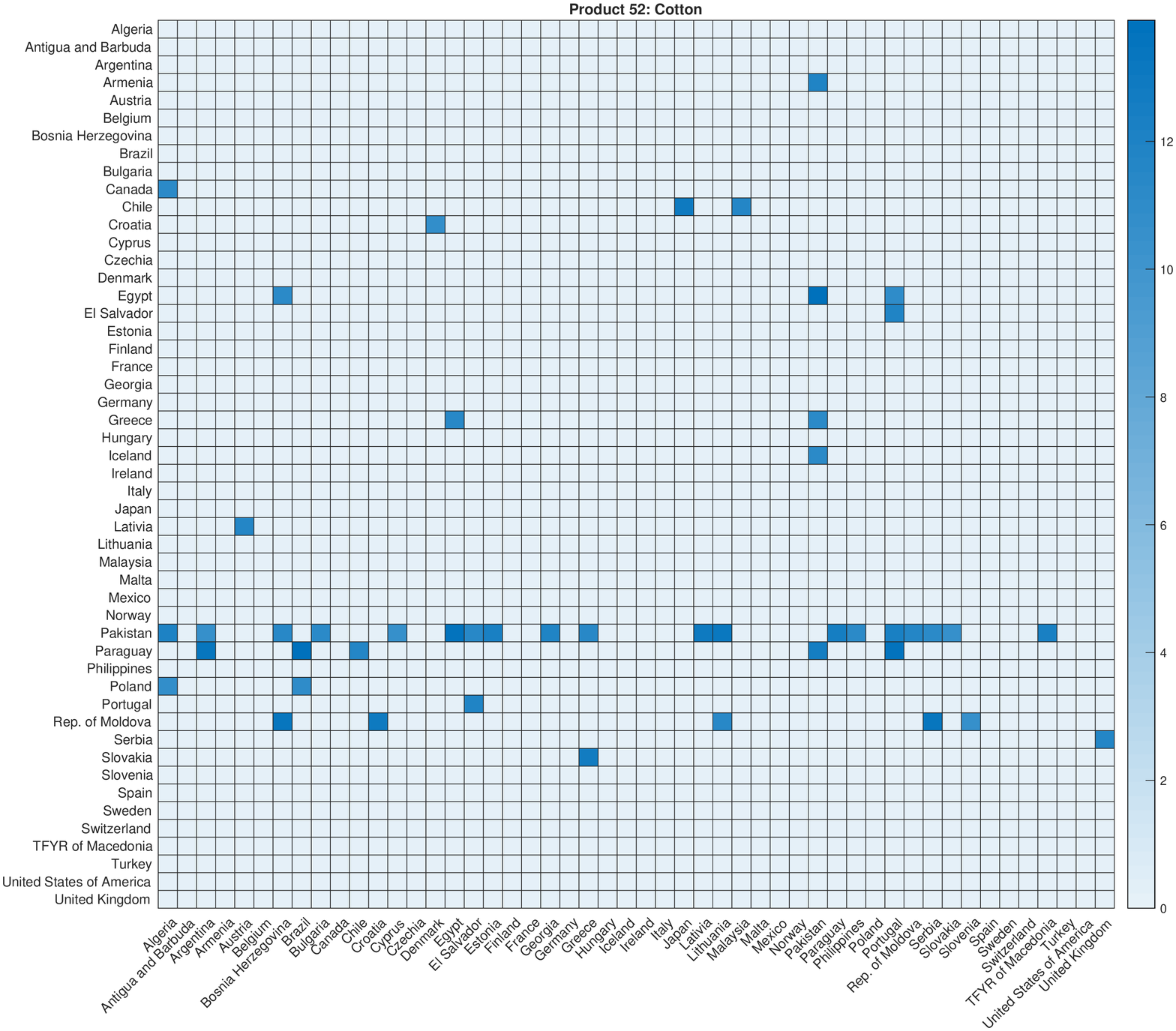}
		\includegraphics[width=0.5\textwidth]{./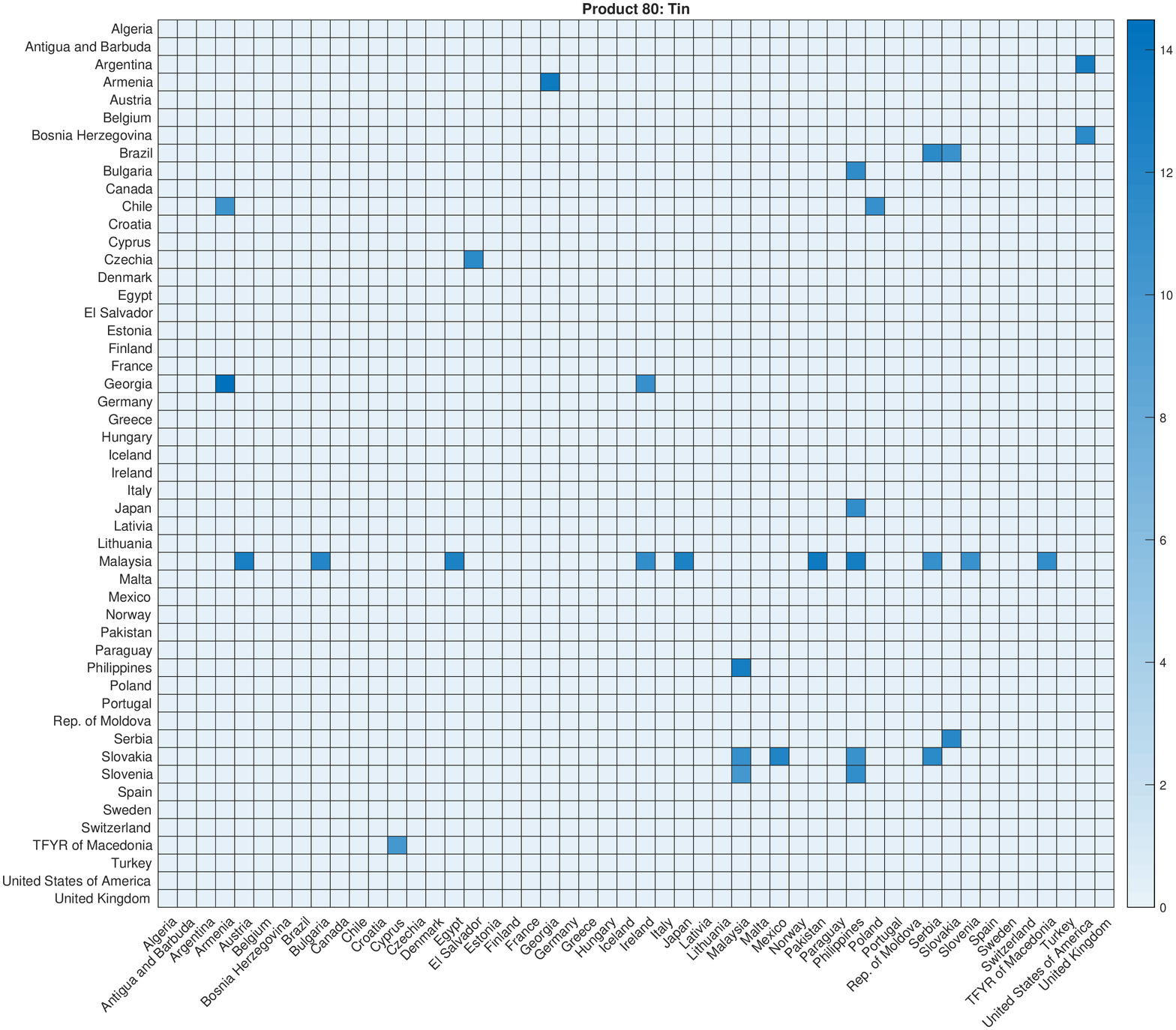}
		\caption{Commodity: top-left, Mineral Fuels; top-right, Cork; bottom-left, Cotton; bottom-right, Tin.}
	\end{subfigure}
\caption{Heatmaps of the slices of $\hat \S$. They reveal distinctive trading patterns of certain commodities. The sparsity ratio is set at $\alpha=0.03$. }
\label{fig:trade_flow_comm}
\end{figure}

We now compare \cite{gu2014robust}'s method (convex relaxation) and \cite{lu2016tensor}'s method (tubal-tRPCA) with our method in terms of prediction error on the international trade flow dataset. As in Section~\ref{sec:real_app}, we analyze the tensor $\log(1 + \A)$. We split $\log(1 + \A)$ into two parts, namely $\log(1 + \A) =: \A_{\text{train}} + \A_{\text{test}}$, where $\A_{\text{test}}$ is generated by randomly taking 10\% of the non-zero entries of $\log(1 + \A)$. We then apply tubal-tRPCA with the default parameter the authors provide \footnote{Their codes are available at https://github.com/canyilu/tensor-completion-under-linear-transform.} and convex relaxation with carefully tuned parameters \footnote{The codes in \cite{gu2014robust} is not publicly released so we have to tune the parameters by ourselves.}. We use the proposed BIC-type criterion to select the rank and sparsity.  As the left panel of Figure \ref{fig:bic:realdata} suggests, we choose $\br =(3,3,3)^{\top}$, and the right panel of Figure \ref{fig:bic:realdata} shows the BIC is less sensitive to $\alpha$ for a small range.
\begin{figure}
	\includegraphics[width=0.5\textwidth]{./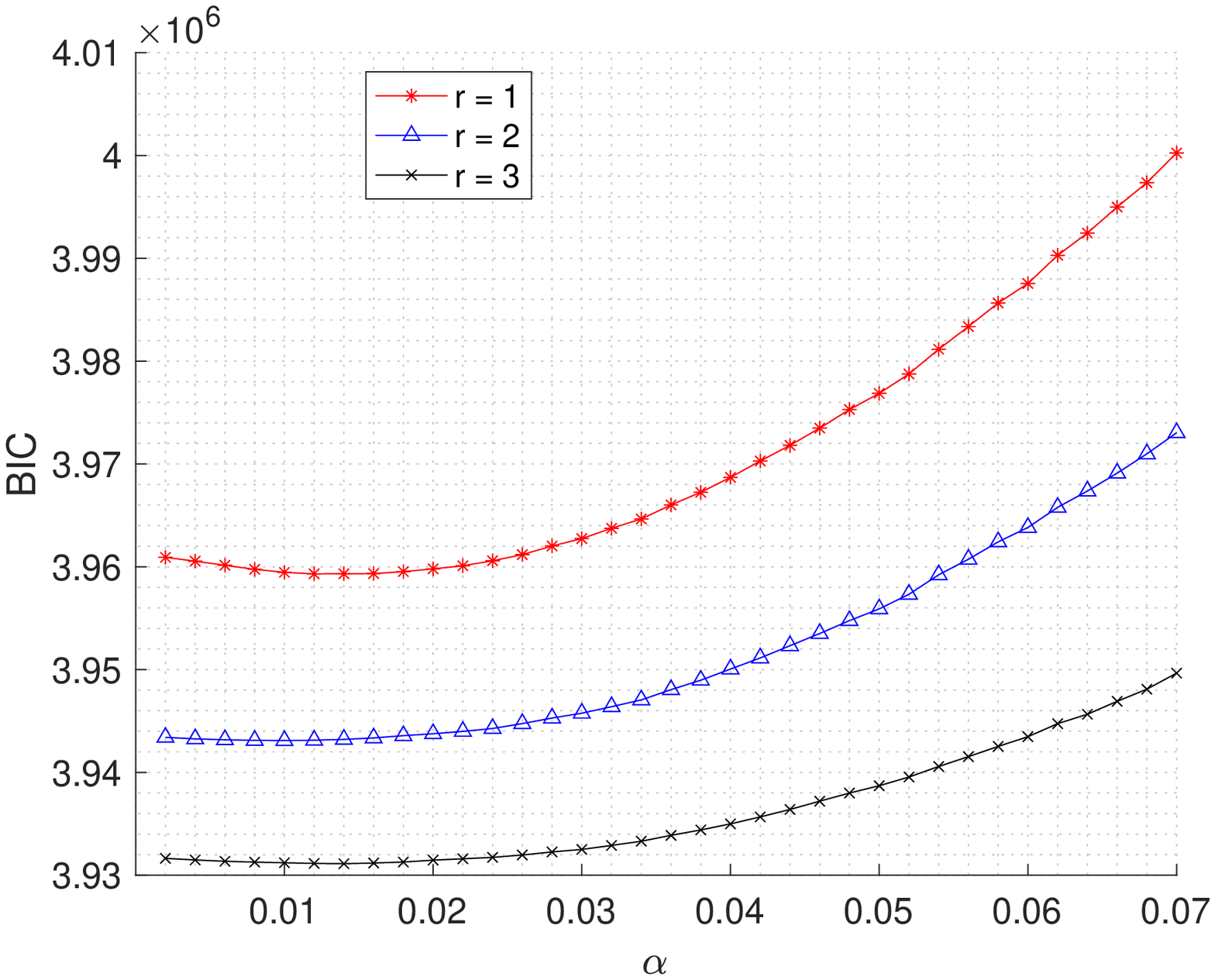}
	\includegraphics[width=0.5\textwidth]{./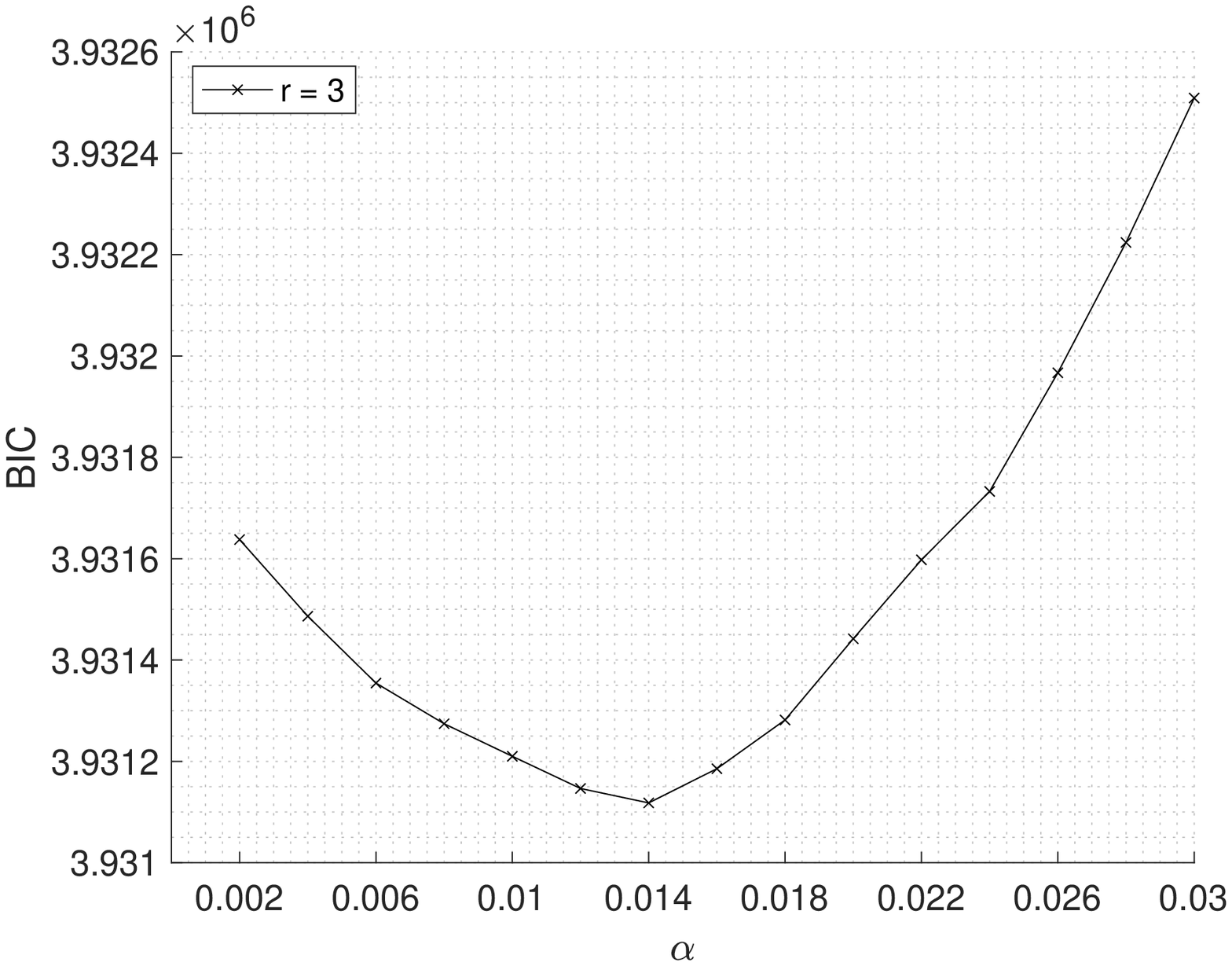}
	\caption{BIC values on the International Trade Flow Data. Left: BIC values for different rank and sparsity; Right: Zoom in on the case $r = 3$.}
	\label{fig:bic:realdata}
\end{figure}
Therefore we set the rank as $\br = (3,3,3)^{\top}$ and try $\alpha = 0.01,0.02,0.03$. The error is measured in terms of the test error $\fro{[\hat\T +\hat\S]_{\Omega_{\text{test}}} - \A_{\text{test}}}$ and the results are presented in Table \ref{table:realdata}.

\begin{table}[h!]
\centering
\begin{tabular}{ |p{2.5cm}||p{2cm}|p{2.5cm}|p{2cm}|p{2cm}|p{2cm}|p{2cm}|}
	\hline
	Method& \thead{Convex \\\cite{gu2014robust}} &\thead{tubal-tRPCA \\\cite{lu2016tensor}}&\thead{Our method \\$(\alpha = 0)$}&\thead{Our method \\$(\alpha = 0.01)$}& \thead{Our method \\$(\alpha = 0.02)$}&\thead{Our method \\$(\alpha = 0.03)$}\\
	\hline
	Pred. Error   & 1892.3    &1894.2  &1891.2  & \textbf{693.5}&800.5&980.1\\
	\hline
\end{tabular}
\caption{Comparison of our method with convex relaxation \cite{gu2014robust} and tubal-tRPCA \cite{lu2016tensor} in terms of prediction error on the international trade flow data. Our BIC criterion suggests any $\alpha$ between $0.002$ and $0.03$. We note that our method with $\alpha=0.003$ yields a prediction error $566.0$.}
\label{table:realdata}
\end{table}

When $\alpha=0$, all methods perform poorly because the existence of outliers distort the low-rank estimate making it ineffective in prediction. Meanwhile, if $\alpha$ is too large, say $0.1$, the sparse component might incorrectly absorb useful information from the low-rank component which, as a result, sabotages its prediction accuracy. Fortunately, our method with the BIC suggested $\alpha$ indeed significantly outperforms other methods.

\bibliographystyle{plainnat}
\bibliography{References.bib}

\begin{thebibliography}{51}
\providecommand{\natexlab}[1]{#1}
\providecommand{\url}[1]{\texttt{#1}}
\expandafter\ifx\csname urlstyle\endcsname\relax
  \providecommand{\doi}[1]{doi: #1}\else
  \providecommand{\doi}{doi: \begingroup \urlstyle{rm}\Url}\fi

\bibitem[Anandkumar et~al.(2014)Anandkumar, Ge, Hsu, Kakade, and
  Telgarsky]{anandkumar2014tensor}
Animashree Anandkumar, Rong Ge, Daniel Hsu, Sham~M Kakade, and Matus Telgarsky.
\newblock Tensor decompositions for learning latent variable models.
\newblock \emph{Journal of Machine Learning Research}, 15:\penalty0 2773--2832,
  2014.

\bibitem[Bi et~al.(2018)Bi, Qu, and Shen]{bi2018multilayer}
Xuan Bi, Annie Qu, and Xiaotong Shen.
\newblock Multilayer tensor factorization with applications to recommender
  systems.
\newblock \emph{Annals of Statistics}, 46\penalty0 (6B):\penalty0 3308--3333,
  2018.

\bibitem[Bi et~al.(2020)Bi, Tang, Yuan, Zhang, and Qu]{bi2020tensors}
Xuan Bi, Xiwei Tang, Yubai Yuan, Yanqing Zhang, and Annie Qu.
\newblock Tensors in statistics.
\newblock \emph{Annual Review of Statistics and Its Application}, 8, 2020.

\bibitem[Cai et~al.(2019)Cai, Li, Poor, and Chen]{cai2019nonconvex}
Changxiao Cai, Gen Li, H~Vincent Poor, and Yuxin Chen.
\newblock Nonconvex low-rank tensor completion from noisy data.
\newblock In \emph{Advances in Neural Information Processing Systems}, pages
  1863--1874, 2019.

\bibitem[Cai et~al.(2020)Cai, Miao, Wang, and Xian]{cai2020provable}
Jian-Feng Cai, Lizhang Miao, Yang Wang, and Yin Xian.
\newblock Provable near-optimal low-multilinear-rank tensor recovery.
\newblock \emph{arXiv preprint arXiv:2007.08904}, 2020.

\bibitem[Cand{\`e}s et~al.(2011)Cand{\`e}s, Li, Ma, and
  Wright]{candes2011robust}
Emmanuel~J Cand{\`e}s, Xiaodong Li, Yi~Ma, and John Wright.
\newblock Robust principal component analysis?
\newblock \emph{Journal of the ACM (JACM)}, 58\penalty0 (3):\penalty0 1--37,
  2011.

\bibitem[Cattell(1966)]{cattell1966scree}
Raymond~B Cattell.
\newblock The scree test for the number of factors.
\newblock \emph{Multivariate behavioral research}, 1\penalty0 (2):\penalty0
  245--276, 1966.

\bibitem[Chen et~al.(2019)Chen, Raskutti, and Yuan]{chen2019non}
Han Chen, Garvesh Raskutti, and Ming Yuan.
\newblock Non-convex projected gradient descent for generalized low-rank tensor
  regression.
\newblock \emph{The Journal of Machine Learning Research}, 20\penalty0
  (1):\penalty0 172--208, 2019.

\bibitem[Chen et~al.(2020)Chen, Fan, Ma, and Yan]{chen2020bridging}
Yuxin Chen, Jianqing Fan, Cong Ma, and Yuling Yan.
\newblock Bridging convex and nonconvex optimization in robust pca: Noise,
  outliers, and missing data.
\newblock \emph{arXiv preprint arXiv:2001.05484}, 2020.

\bibitem[Davenport et~al.(2014)Davenport, Plan, Van Den~Berg, and
  Wootters]{davenport20141}
Mark~A Davenport, Yaniv Plan, Ewout Van Den~Berg, and Mary Wootters.
\newblock 1-bit matrix completion.
\newblock \emph{Information and Inference: A Journal of the IMA}, 3\penalty0
  (3):\penalty0 189--223, 2014.

\bibitem[Edelman et~al.(1998)Edelman, Arias, and Smith]{edelman1998geometry}
Alan Edelman, Tom{\'a}s~A Arias, and Steven~T Smith.
\newblock The geometry of algorithms with orthogonality constraints.
\newblock \emph{SIAM journal on Matrix Analysis and Applications}, 20\penalty0
  (2):\penalty0 303--353, 1998.

\bibitem[Fan et~al.(2021)Fan, Pensky, Yu, and Zhang]{fan2021alma}
Xing Fan, Marianna Pensky, Feng Yu, and Teng Zhang.
\newblock Alma: Alternating minimization algorithm for clustering mixture
  multilayer network.
\newblock \emph{arXiv preprint arXiv:2102.10226}, 2021.

\bibitem[Gu et~al.(2014)Gu, Gui, and Han]{gu2014robust}
Quanquan Gu, Huan Gui, and Jiawei Han.
\newblock Robust tensor decomposition with gross corruption.
\newblock \emph{Advances in Neural Information Processing Systems},
  27:\penalty0 1422--1430, 2014.

\bibitem[Han et~al.(2020)Han, Willett, and Zhang]{han2020optimal}
Rungang Han, Rebecca Willett, and Anru Zhang.
\newblock An optimal statistical and computational framework for generalized
  tensor estimation.
\newblock \emph{arXiv preprint arXiv:2002.11255}, 2020.

\bibitem[Hao et~al.(2020)Hao, Zhang, and Cheng]{hao2020sparse}
Botao Hao, Anru Zhang, and Guang Cheng.
\newblock Sparse and low-rank tensor estimation via cubic sketchings.
\newblock \emph{IEEE Transactions on Information Theory}, 2020.

\bibitem[Ji and Jin(2016)]{ji2016coauthorship}
Pengsheng Ji and Jiashun Jin.
\newblock Coauthorship and citation networks for statisticians.
\newblock \emph{The Annals of Applied Statistics}, 10\penalty0 (4):\penalty0
  1779--1812, 2016.

\bibitem[Jin(2015)]{jin2015fast}
Jiashun Jin.
\newblock Fast community detection by score.
\newblock \emph{Annals of Statistics}, 43\penalty0 (1):\penalty0 57--89, 2015.

\bibitem[Jing et~al.(2020)Jing, Li, Lyu, and Xia]{jing2020community}
Bing-Yi Jing, Ting Li, Zhongyuan Lyu, and Dong Xia.
\newblock Community detection on mixture multi-layer networks via regularized
  tensor decomposition.
\newblock \emph{arXiv preprint arXiv:2002.04457}, 2020.

\bibitem[Ke et~al.(2019)Ke, Shi, and Xia]{ke2019community}
Zheng~Tracy Ke, Feng Shi, and Dong Xia.
\newblock Community detection for hypergraph networks via regularized tensor
  power iteration.
\newblock \emph{arXiv preprint arXiv:1909.06503}, 2019.

\bibitem[Kolda and Bader(2009)]{kolda2009tensor}
Tamara~G Kolda and Brett~W Bader.
\newblock Tensor decompositions and applications.
\newblock \emph{SIAM review}, 51\penalty0 (3):\penalty0 455--500, 2009.

\bibitem[Kressner et~al.(2014)Kressner, Steinlechner, and
  Vandereycken]{kressner2014low}
Daniel Kressner, Michael Steinlechner, and Bart Vandereycken.
\newblock Low-rank tensor completion by riemannian optimization.
\newblock \emph{BIT Numerical Mathematics}, 54\penalty0 (2):\penalty0 447--468,
  2014.

\bibitem[Li et~al.(2018)Li, Xu, Zhou, and Li]{li2018tucker}
Xiaoshan Li, Da~Xu, Hua Zhou, and Lexin Li.
\newblock Tucker tensor regression and neuroimaging analysis.
\newblock \emph{Statistics in Biosciences}, 10\penalty0 (3):\penalty0 520--545,
  2018.

\bibitem[Liu et~al.(2017)Liu, Yuan, and Zhao]{liu2017characterizing}
Tianqi Liu, Ming Yuan, and Hongyu Zhao.
\newblock Characterizing spatiotemporal transcriptome of human brain via low
  rank tensor decomposition.
\newblock \emph{arXiv preprint arXiv:1702.07449}, 2017.

\bibitem[Lu et~al.(2016)Lu, Feng, Chen, Liu, Lin, and Yan]{lu2016tensor}
Canyi Lu, Jiashi Feng, Yudong Chen, Wei Liu, Zhouchen Lin, and Shuicheng Yan.
\newblock Tensor robust principal component analysis: Exact recovery of
  corrupted low-rank tensors via convex optimization.
\newblock In \emph{Proceedings of the IEEE conference on computer vision and
  pattern recognition}, pages 5249--5257, 2016.

\bibitem[Luo and Zhang(2020)]{luo2020tensor}
Yuetian Luo and Anru~R Zhang.
\newblock Tensor clustering with planted structures: Statistical optimality and
  computational limits.
\newblock \emph{arXiv preprint arXiv:2005.10743}, 2020.

\bibitem[Pan et~al.(2018)Pan, Mai, and Zhang]{pan2018covariate}
Yuqing Pan, Qing Mai, and Xin Zhang.
\newblock Covariate-adjusted tensor classification in high dimensions.
\newblock \emph{Journal of the American Statistical Association}, 2018.

\bibitem[Paul and Chen(2020)]{paul2020spectral}
Subhadeep Paul and Yuguo Chen.
\newblock Spectral and matrix factorization methods for consistent community
  detection in multi-layer networks.
\newblock \emph{The Annals of Statistics}, 48\penalty0 (1):\penalty0 230--250,
  2020.

\bibitem[Pensky and Zhang(2019)]{pensky2019spectral}
Marianna Pensky and Teng Zhang.
\newblock Spectral clustering in the dynamic stochastic block model.
\newblock \emph{Electronic Journal of Statistics}, 13\penalty0 (1):\penalty0
  678--709, 2019.

\bibitem[Raskutti et~al.(2019)Raskutti, Yuan, and Chen]{raskutti2019convex}
Garvesh Raskutti, Ming Yuan, and Han Chen.
\newblock Convex regularization for high-dimensional multiresponse tensor
  regression.
\newblock \emph{The Annals of Statistics}, 47\penalty0 (3):\penalty0
  1554--1584, 2019.

\bibitem[Richard and Montanari(2014)]{richard2014statistical}
Emile Richard and Andrea Montanari.
\newblock A statistical model for tensor pca.
\newblock \emph{Advances in Neural Information Processing Systems},
  27:\penalty0 2897--2905, 2014.

\bibitem[Robin et~al.(2020)Robin, Klopp, Josse, Moulines, and
  Tibshirani]{robin2020main}
Genevi{\`e}ve Robin, Olga Klopp, Julie Josse, {\'E}ric Moulines, and Robert
  Tibshirani.
\newblock Main effects and interactions in mixed and incomplete data frames.
\newblock \emph{Journal of the American Statistical Association}, 115\penalty0
  (531):\penalty0 1292--1303, 2020.

\bibitem[Sun and Li(2019)]{sun2019dynamic}
Will~Wei Sun and Lexin Li.
\newblock Dynamic tensor clustering.
\newblock \emph{Journal of the American Statistical Association}, 114\penalty0
  (528):\penalty0 1894--1907, 2019.

\bibitem[Sun et~al.(2017)Sun, Lu, Liu, and Cheng]{sun2017provable}
Will~Wei Sun, Junwei Lu, Han Liu, and Guang Cheng.
\newblock Provable sparse tensor decomposition.
\newblock \emph{Journal of the Royal Statistical Society: Series B (Statistical
  Methodology)}, 79\penalty0 (3):\penalty0 899--916, 2017.

\bibitem[Vershynin(2011)]{vershynin2011spectral}
Roman Vershynin.
\newblock Spectral norm of products of random and deterministic matrices.
\newblock \emph{Probability theory and related fields}, 150\penalty0
  (3):\penalty0 471--509, 2011.

\bibitem[Vershynin(2018)]{vershynin2018high}
Roman Vershynin.
\newblock \emph{High-dimensional probability: An introduction with applications
  in data science}, volume~47.
\newblock Cambridge university press, 2018.

\bibitem[Wang et~al.(2019)Wang, Zhang, and Dunson]{wang2019common}
Lu~Wang, Zhengwu Zhang, and David Dunson.
\newblock Common and individual structure of brain networks.
\newblock \emph{Ann. Appl. Stat.}, 13\penalty0 (1):\penalty0 85--112, 03 2019.
\newblock \doi{10.1214/18-AOAS1193}.
\newblock URL \url{https://doi.org/10.1214/18-AOAS1193}.

\bibitem[Wang and Li(2020)]{wang2020learning}
Miaoyan Wang and Lexin Li.
\newblock Learning from binary multiway data: Probabilistic tensor
  decomposition and its statistical optimality.
\newblock \emph{Journal of Machine Learning Research}, 21\penalty0
  (154):\penalty0 1--38, 2020.

\bibitem[Wang and Zeng(2019)]{wang2019multiway}
Miaoyan Wang and Yuchen Zeng.
\newblock Multiway clustering via tensor block models.
\newblock \emph{arXiv preprint arXiv:1906.03807}, 2019.

\bibitem[Xia(2019)]{xia2019normal}
Dong Xia.
\newblock Normal approximation and confidence region of singular subspaces,
  2019.

\bibitem[Xia and Yuan(2019)]{xia2019polynomial}
Dong Xia and Ming Yuan.
\newblock On polynomial time methods for exact low-rank tensor completion.
\newblock \emph{Foundations of Computational Mathematics}, 19\penalty0
  (6):\penalty0 1265--1313, 2019.

\bibitem[Xia and Zhou(2019)]{xia2019sup}
Dong Xia and Fan Zhou.
\newblock The sup-norm perturbation of hosvd and low rank tensor denoising.
\newblock \emph{J. Mach. Learn. Res.}, 20:\penalty0 61--1, 2019.

\bibitem[Xia et~al.(2020)Xia, Zhang, and Zhou]{xia2020inference}
Dong Xia, Anru~R Zhang, and Yuchen Zhou.
\newblock Inference for low-rank tensors--no need to debias.
\newblock \emph{arXiv preprint arXiv:2012.14844}, 2020.

\bibitem[Xia et~al.(2021)Xia, Yuan, and Zhang]{xia2017statistically}
Dong Xia, Ming Yuan, and Cun-Hui Zhang.
\newblock Statistically optimal and computationally efficient low rank tensor
  completion from noisy entries.
\newblock \emph{Annals of Statistics}, 49\penalty0 (1):\penalty0 76--99, 2021.

\bibitem[Yu and Liu(2016)]{yu2016learning}
Rose Yu and Yan Liu.
\newblock Learning from multiway data: Simple and efficient tensor regression.
\newblock In \emph{International Conference on Machine Learning}, pages
  373--381, 2016.

\bibitem[Yuan and Zhang(2017)]{yuan2017incoherent}
Ming Yuan and Cun-Hui Zhang.
\newblock Incoherent tensor norms and their applications in higher order tensor
  completion.
\newblock \emph{IEEE Transactions on Information Theory}, 63\penalty0
  (10):\penalty0 6753--6766, 2017.

\bibitem[Zhang and Xia(2018)]{zhang2018tensor}
Anru Zhang and Dong Xia.
\newblock Tensor svd: Statistical and computational limits.
\newblock \emph{IEEE Transactions on Information Theory}, 64\penalty0
  (11):\penalty0 7311--7338, 2018.

\bibitem[Zhang et~al.(2020{\natexlab{a}})Zhang, Luo, Raskutti, and
  Yuan]{zhang2020islet}
Anru~R Zhang, Yuetian Luo, Garvesh Raskutti, and Ming Yuan.
\newblock Islet: Fast and optimal low-rank tensor regression via importance
  sketching.
\newblock \emph{SIAM Journal on Mathematics of Data Science}, 2\penalty0
  (2):\penalty0 444--479, 2020{\natexlab{a}}.

\bibitem[Zhang et~al.(2020{\natexlab{b}})Zhang, Han, Zhang, and
  Voyles]{zhang2020denoising}
Chenyu Zhang, Rungang Han, Anru~R Zhang, and Paul~M Voyles.
\newblock Denoising atomic resolution 4d scanning transmission electron
  microscopy data with tensor singular value decomposition.
\newblock \emph{Ultramicroscopy}, 219:\penalty0 113123, 2020{\natexlab{b}}.

\bibitem[Zhang et~al.(2018)Zhang, Wang, and Gu]{zhang2018unified}
Xiao Zhang, Lingxiao Wang, and Quanquan Gu.
\newblock A unified framework for nonconvex low-rank plus sparse matrix
  recovery.
\newblock In \emph{International Conference on Artificial Intelligence and
  Statistics}, pages 1097--1107. PMLR, 2018.

\bibitem[Zhou et~al.(2013)Zhou, Li, and Zhu]{zhou2013tensor}
Hua Zhou, Lexin Li, and Hongtu Zhu.
\newblock Tensor regression with applications in neuroimaging data analysis.
\newblock \emph{Journal of the American Statistical Association}, 108\penalty0
  (502):\penalty0 540--552, 2013.

\bibitem[Zhou and Feng(2017)]{zhou2017outlier}
Pan Zhou and Jiashi Feng.
\newblock Outlier-robust tensor pca.
\newblock In \emph{Proceedings of the IEEE Conference on Computer Vision and
  Pattern Recognition}, pages 2263--2271, 2017.

\end{thebibliography}

\newpage
\begin{center}
{\large\bf SUPPLEMENTARY MATERIAL for ``Generalized Low-rank plus Sparse Tensor Estimation by Fast Riemannian Optimization"}
\end{center}

\section{Higher Order Orthogonal Iteration Algorithm}
The HOOI algorithm is summarized as follows which is applied for the initialization in Section~\ref{sec:rpca} and \ref{sec:heavy_tail}.  
\begin{algorithm}
	\caption{HOOI}\label{alg:hooi}
	\begin{algorithmic}
		\STATE{Input: $\Y\in\RR^{d_1\times\cdots\times d_m}, \br = (r_1,\cdots,r_m)$, maximum  number of iteration: $t_{\max}$.}
		\STATE{Let $t = 0$, initiate $\hat\bU_i^{0} = \textrm{SVD}_{r_i}(\opM_i(\Y))$, $i\in[m]$.}
		\FOR{$t = 1,\ldots,t_{\max}$}
		\FOR{$i=1,\ldots,m$}
		\STATE{$\hat\bU_{i}^{t} = \textrm{SVD}_{r_i}\big(\opM_{i}(\Y)(\hat\bU_m^{t-1}\otimes\cdots\otimes\hat\bU_{i+1}^{t-1}\otimes\hat\bU_{i-1}^{t}\otimes\cdots\otimes\hat\bU_{1}^{t-1})\big)$}
		\ENDFOR
		\ENDFOR
		\STATE{Output: $\hat\bU_i = \hat\bU_i^{t_{max}}$, $\hat\T = \Y\times_{i=1}^m\hat\bU_i\hat\bU_i^T$.}
	\end{algorithmic}
\end{algorithm}

\section{More Numerical Simulations}\label{sec:num_more}
In Section~\ref{sec:num_BIC}, we apply the proposed BIC-type criterion for SG-RPCA and binary tensor learning and demonstrate its effectiveness on synthetic data.  Through Section~\ref{sec:num_rpca}-\ref{sec:num_poisson}, we treat $\br$ and $\alpha$ as given and test the performance of our estimator with respect to different choices of $\gamma$. Other algorithmic parameters like $\mu_1$ and $\kprune$ are decided as explained in Section~\ref{sec:numerical}.

\subsection{Performance of BIC-type Criterion}\label{sec:num_BIC}
We test the performance of BIC-type criterion (\ref{eq:BIC}) for SG-RPCA and binary tensor learning. As explained in Section~\ref{sec:numerical}, $\gamma$ is set to $1$ and $\mu_1=2^m+\log(\bar d)$. 
More exactly, the BIC-type criterion for SG-RPCA (assuming Gaussian noise with equal but unknown variances) is 
$$
{\rm BIC}(\br,\alpha):=\big(\|\hat \S_{\br,\alpha}\|_{\ell_0}+\sum\nolimits_{i=1}^m r_id_i\big)\cdot \log(d^{\ast})+d^{\ast}\log(\fro{\A-\hat\T_{\br,\alpha}-\hat\S_{\br,\alpha}}^2).
$$
The true tensor $\T^*\in\RR^{d\times d\times d}$ with $d = 100$ and $\br = (3,3,3)^{\top}$. We test two true sparsity levels $\alpha\in\{0.05,0.1\}$. 
The true tensor $\T^*$ satisfies $\|\T^*\|_{\ell_{\infty}} = 0.1$, and $\S^*$ is generated as above satisfying $\|\S^*\|_{\ell_{\infty}} = 4$, and all entries of $\Z^*$ satisfy i.i.d. ${\rm N}(0,\sigma_z^2)$ with $\sigma_z = 0.01$. For each $\alpha\in\{0.05,0.1\}$, we test, in our algorithm, $\br \in\{(1,1,1),(2,2,2),(3,3,3),(4,4,4),(5,5,5)\}$ and $\alpha\in(0.02, 0.2)$. The results are displayed in Figure \ref{fig:bic:rpca}. The BIC-values are sensitive to both $\br$ and $\alpha$. We note that the BIC-values for $\br>3$ are strictly larger than that of $\br=(3,3,3)$, but the difference is too small to be spotted in the figures.
\begin{figure}
		\includegraphics[width=0.5\textwidth]{./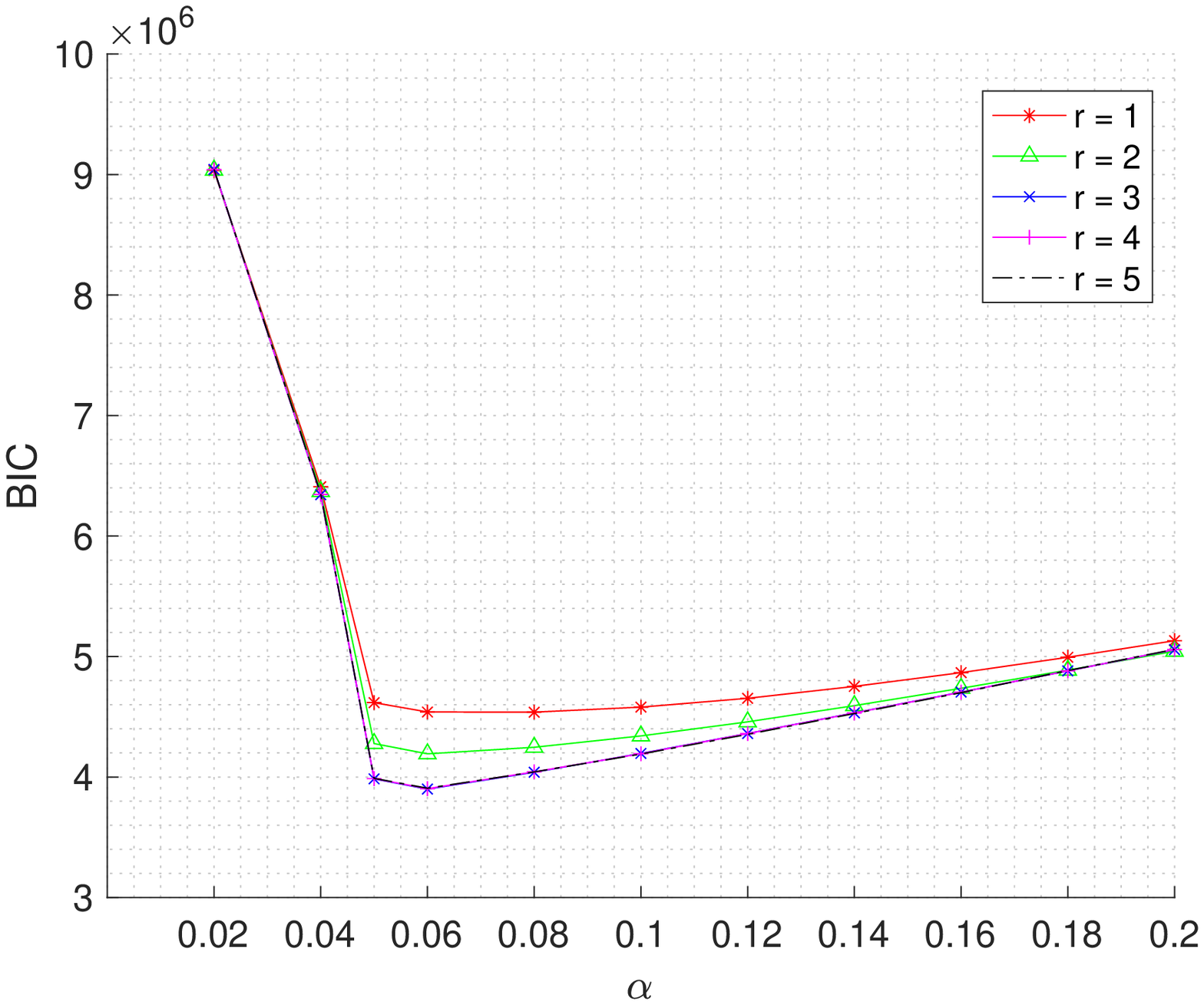}
		\includegraphics[width=0.5\textwidth]{./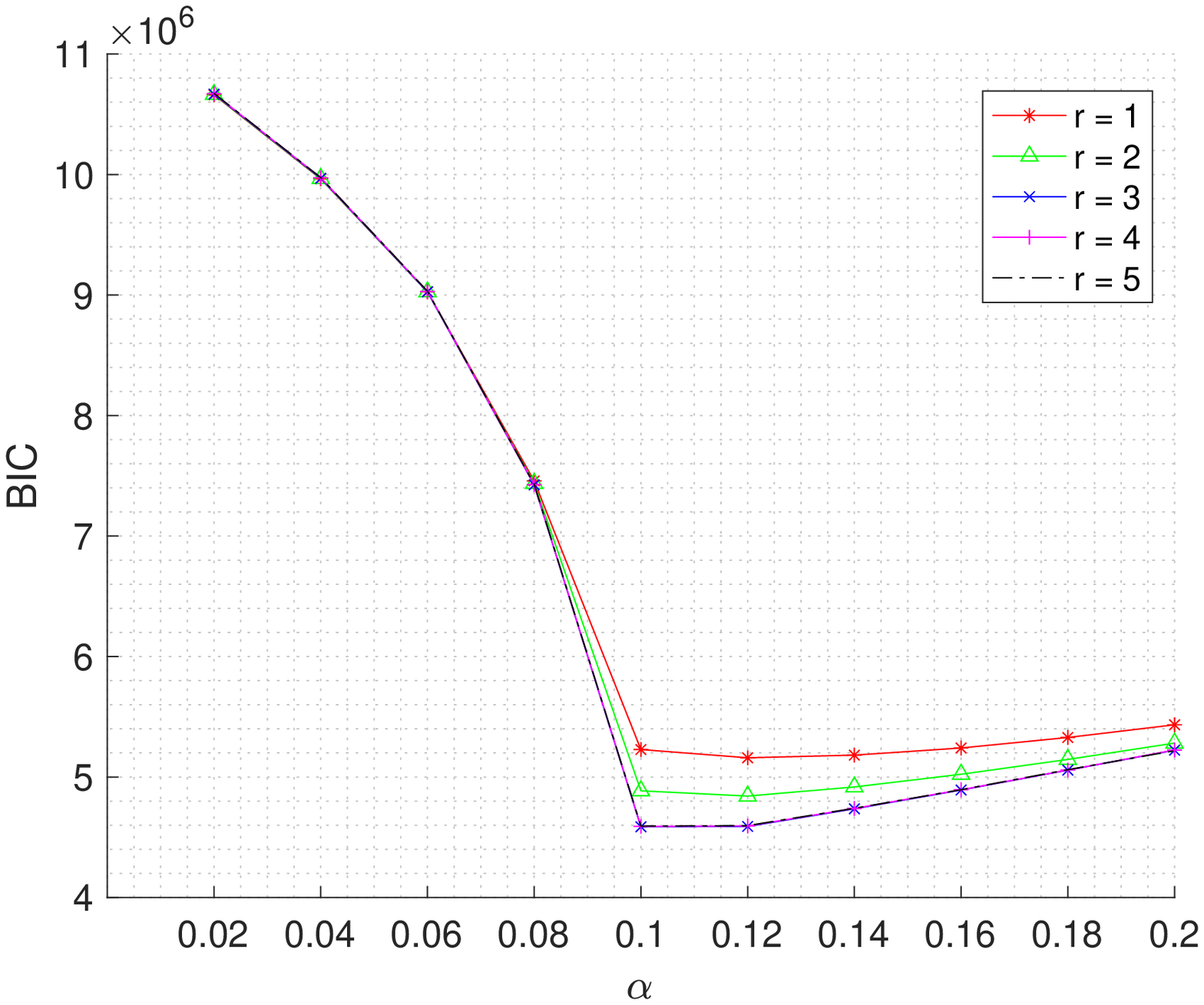}
	\caption{BIC values for SG-RPCA; the true rank $\br = (3,3,3)^{\top}$. Left: true $\alpha = 0.05$; Right: true $\alpha = 0.1$.}
	\label{fig:bic:rpca}
\end{figure}

For robust binary tensor learning, we also set $\br = (3,3,3)^{\top}$ and $\T^*\in\RR^{d\times d\times d}$ with $d = 100$, and $\S^*$ is generated as above.
The true $\alpha\in\{0.005,0.01\}$. We fix $p(x) = (1+e^{-10x})^{-1}$.
The BIC criterion for the binary case is: 
$${\rm BIC}(\br,\alpha):=\big(\|\hat \S\|_{\ell_0}+\sum\nolimits_{i=1}^m r_id_i\big)\cdot\log(d^{\ast})-2\sum_{\omega}\big([\A]_{\omega}\log p([\hat\T+\hat\S]_{\omega})+\big(1-[\A]_{\omega}\big)\log \big(1-p([\hat\T+\hat\S]_{\omega})\big)\big).$$
For each true $\alpha\in\{0.005,0.01\}$, we test BIC for $\br \in\{(1,1,1),(2,2,2),(3,3,3),(4,4,4),(5,5,5)\}$ and $\alpha$ varying from 0.001 to 0.015. 
The results are displayed in Figure \ref{fig:bic:binary} showing that BIC is more sensitive to $\br$ and less sensitive to $\alpha$ for a small range. After the true $\br$ is identified, the BIC criterion works reasonably well for selecting $\alpha$. 

\begin{figure}
		\includegraphics[width=0.5\textwidth]{./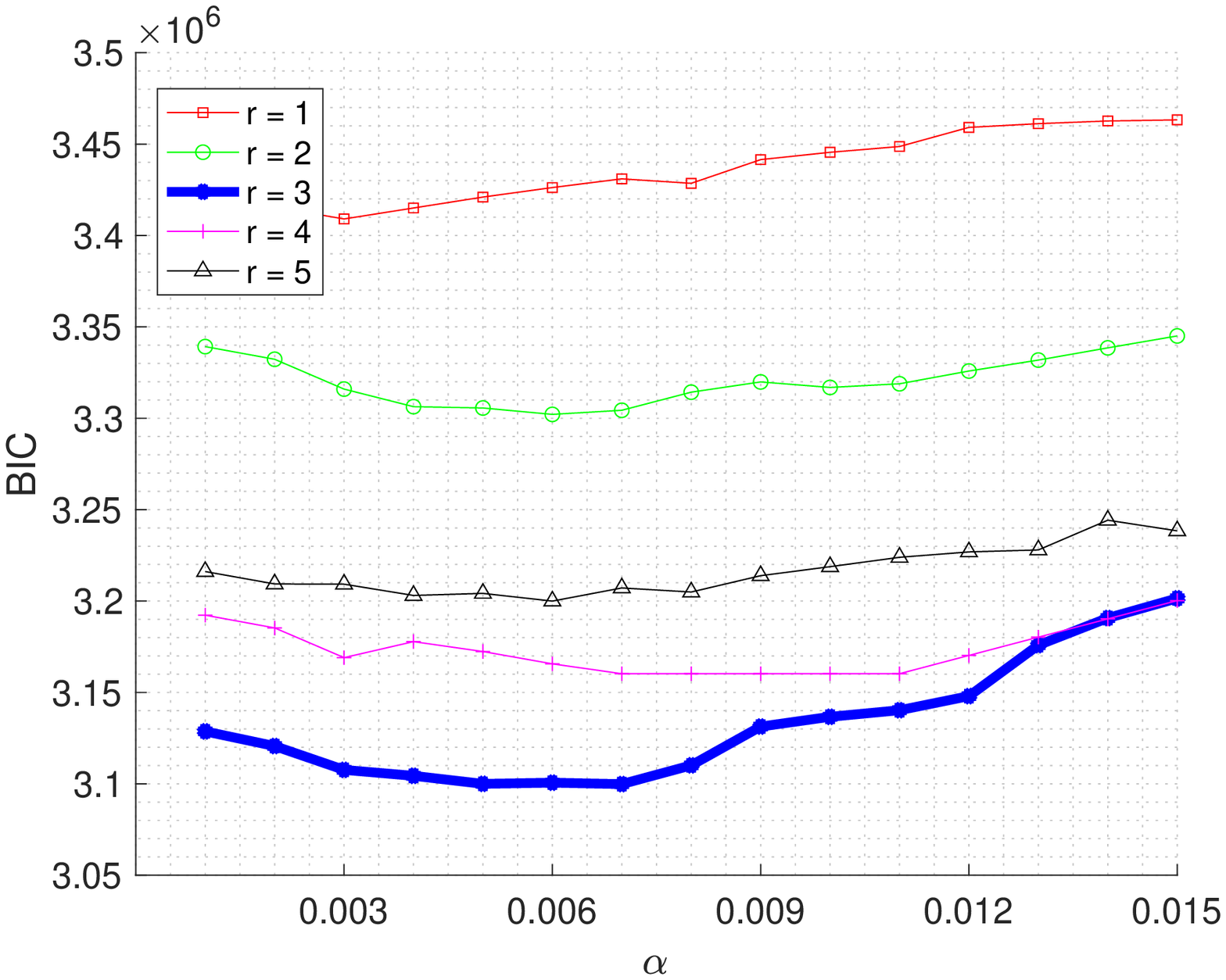}
		\includegraphics[width=0.5\textwidth]{./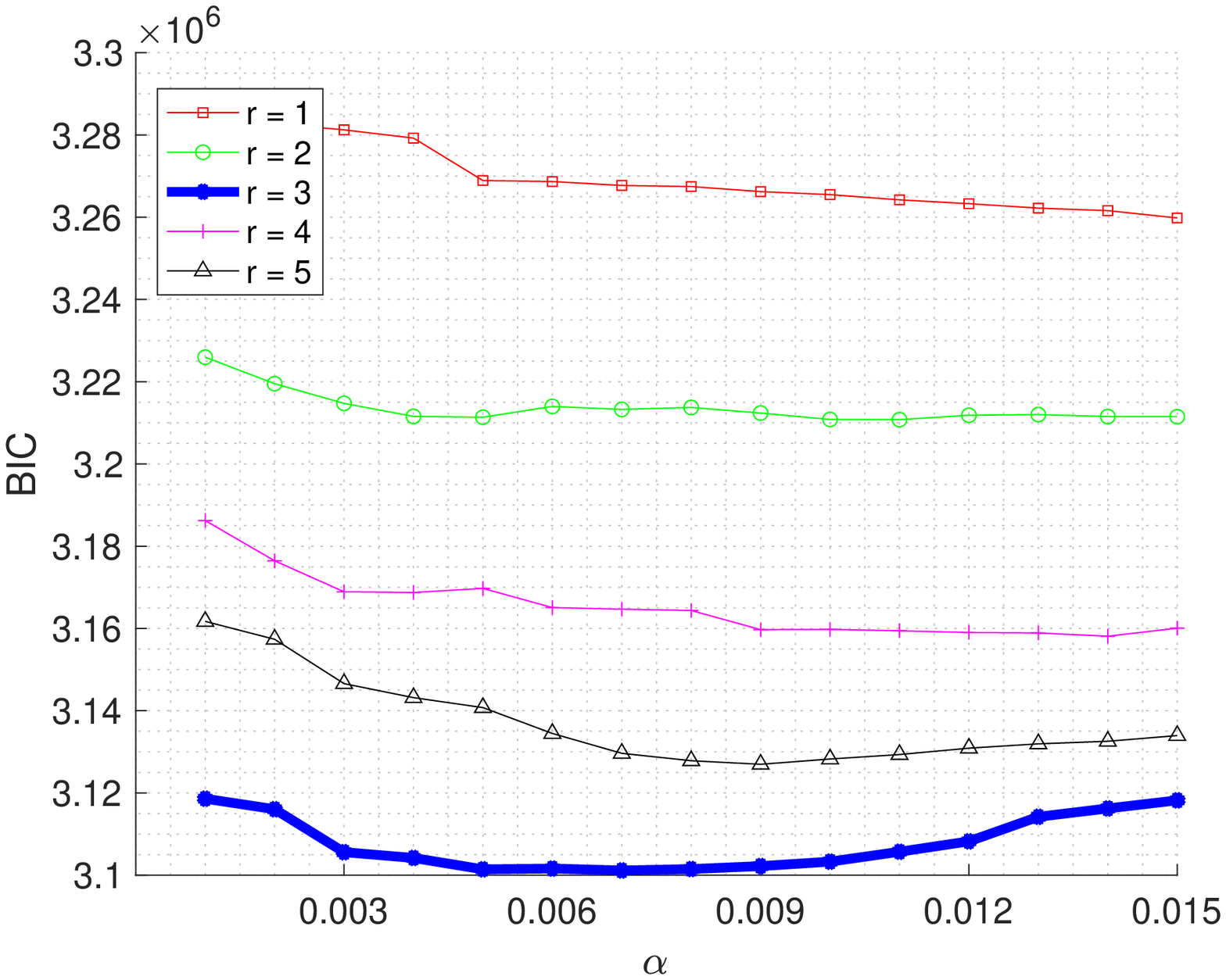}
	\caption{BIC values for binary tensor learning; the true rank $\br = (3,3,3)^{\top}$. Left: true $\alpha = 0.005$; Right: true $\alpha = 0.01$. The BIC curve for $\br = (3,3,3)^{\top}$ is highlighted.}
	\label{fig:bic:binary}
\end{figure}

\subsection{Tensor Sub-Gaussian Robust PCA}\label{sec:num_rpca}
The low-rank tensor $\T^{\ast}\in\RR^{d\times d\times d} $ with $d=100$ and Tucker ranks $\br=(2,2,2)^{\top}$ is generated from the HOSVD of a trimmed standard normal tensor.  It satisfies the spikiness condition, with high probability, and has singular values $\bsigma\approx 3$ and $\usigma\approx 1$. Given a sparsity level $\alpha\in(0,1)$, the entries of sparse tensor $\S^{\ast}$ are i.i.d. sampled from ${\rm Be}(\alpha)\times {\rm N}(0,1)$, which ensures $\S^{\ast}\in \SS_{O(\alpha)}$ with high probability. This ensures that the non-zero entries of $\S^{\ast}$ have typically much larger magnitudes than the entries of $\T^{\ast}$. 
The noise tensor $\Z$ has i.i.d. entries sampled from ${\rm N}(0,\sigma_z^2)$. The default choice of $\gamma$ is 2, $\kprune=\infty$ and $\mu_1$ is set as previously. The convergence performances of $\log(\|\hat \T_l-\T^{\ast}\|_{\rm F}/\|\T^{\ast}\|_{\rm F})$ by Algorithm~\ref{algo:lowrank+sparse} are examined and presented in the left panels of Figure~\ref{fig:rpca}. 

The top-left plot in Figure~\ref{fig:rpca} displays the effects of $\alpha$ on the convergence of Algorithm~\ref{algo:lowrank+sparse}. It shows that the convergence speed of Algorithm~\ref{algo:lowrank+sparse} is insensitive to $\alpha$, while the error of final estimates $\hat\T_{l_{\max}}$ is related to $\alpha$. This is consistent with the claims of Theorem~\ref{thm:rpca}. In the middle-left plot of Figure~\ref{fig:rpca}, we observe that, for a fixed sparsity level $\alpha$, the error of final estimates grows as the tuning parameter $\gamma$ becomes larger. The bottom-left plot of Figure~\ref{fig:rpca} shows the convergence of Algorithm~\ref{algo:lowrank+sparse} for different noise levels. All these plots confirm the fast convergence of our Riemannian gradient descent algorithm. In particular, there are stages during which the log relative error decreases linearly w.r.t. the number of iterations, as proved in Theorem~\ref{thm:lowrank+sparse}. 

The statistical stability of the final estimates by Algorithm~\ref{algo:lowrank+sparse} is demonstrated in the right panels of Figure~\ref{fig:rpca}. Each curve represents the average relative error of $\hat \T_{l_{\max}}$ based on $10$ simulations, and the error bar shows the confidence region by one empirical standard deviation. Based on these plots, we observe that the standard deviations of $\|\hat\T_{l_{\max}}-\T^{\ast}\|_{\rm F}$ grow as the noise level $\sigma_z$, the sparsity level $\alpha$  or the tuning parameter $\gamma$ increases. 

\begin{figure}
\centering
	\begin{subfigure}[b]{.98\linewidth}
		\includegraphics[width=0.5\textwidth]{./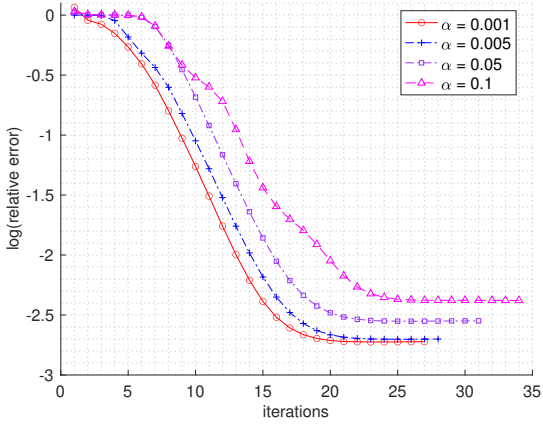}
		\includegraphics[width=0.5\textwidth]{./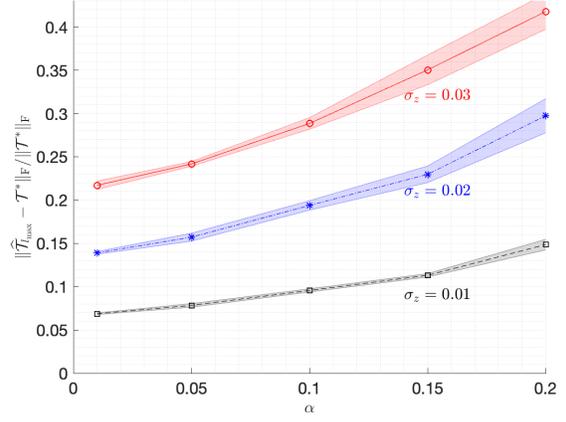}
		\caption{Change of sparsity $\alpha$. Left: $\sigma_z=0.01, \gamma=2$; Right: $\gamma=2$}
	\end{subfigure}
	\begin{subfigure}[b]{.98\linewidth}
		\includegraphics[width=0.5\textwidth]{./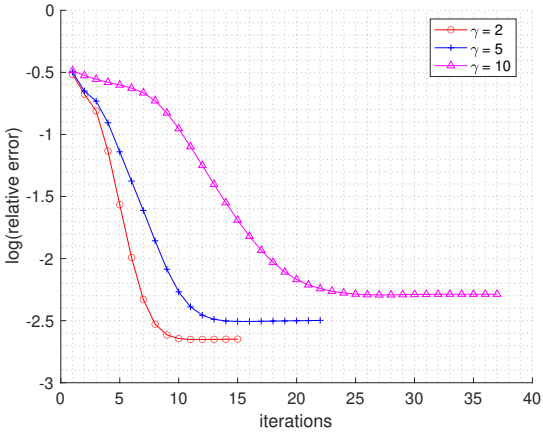}
		\includegraphics[width=0.5\textwidth]{./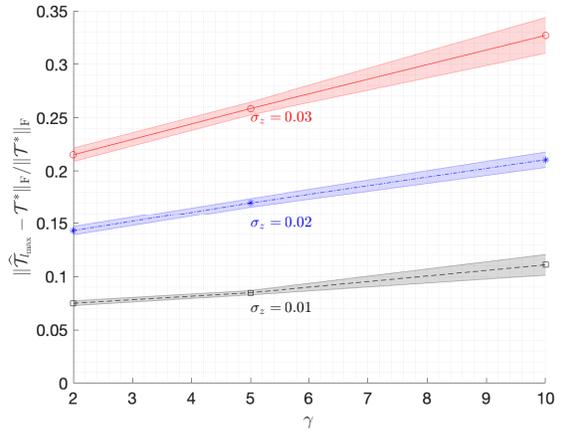}
		\caption{Change of $\gamma$. Left: $\alpha=0.02, \sigma_z=0.01$; Right: $\alpha=0.02$}
	\end{subfigure}
	\begin{subfigure}[b]{.98\linewidth}
		\includegraphics[width=0.5\textwidth]{./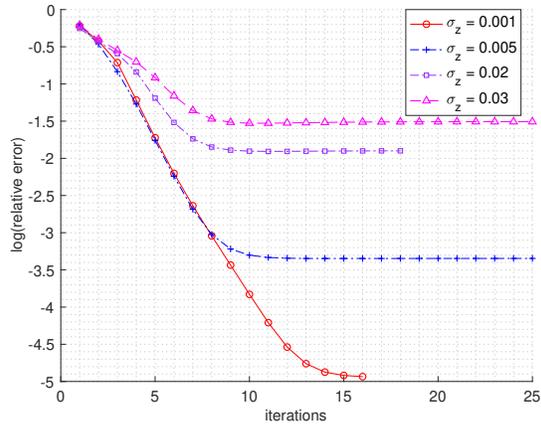}
		\includegraphics[width=0.5\textwidth]{./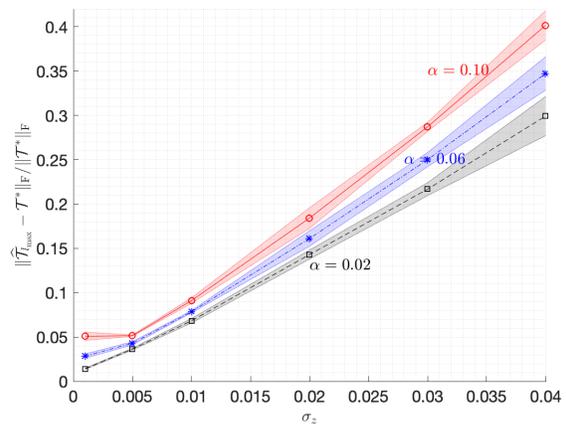}
		\caption{Change of noise size $\sigma_z$. Left: $\alpha=0.02, \gamma=2$; Right: $\gamma=2$}
	\end{subfigure}
\caption{Performances of Algorithm~\ref{algo:lowrank+sparse} for SG-RPCA. The low-rank $\T^{\ast}$ has size $d\times d\times d$ with $d=100$ and has Tucker ranks $\br=(2,2,2)^{\top}$. The relative error on left panels is defined by $\|\hat\T_l-\T^{\ast}\|_{\rm F}/\|\T^{\ast}\|_{\rm F}$. The error bars on the right panels are based on $1$ standard deviation from 10 replications. Here the default $\gamma$ is $2$.}
\label{fig:rpca}
\end{figure}

\subsection{Tensor PCA with Heavy-tailed Noise}
The low-rank tensor $\T^{\ast}\in\RR^{d\times d\times d} $ with $d=100$ and Tucker ranks $\br=(2,2,2)^{\top}$ is generated from the HOSVD of a trimmed standard normal tensor, as in Section~\ref{sec:num_rpca}. Given a parameter $\theta$, we generate the noisy tensor whose entries are i.i.d. and satisfy the Student-t distribution with degree of freedom $\theta$. But notice here we also apply a global scaling to better control the noise standard deviation. We denote the noisy tensor after scaling by $\Z$. This generated tensor $\Z$ satisfies Assumption~\ref{assump:heavy_tail} with the same parameter $\theta$. Once the parameter $\theta$ and global scaling are given, we are able to calculate the variance $\sigma_z^2$. The convergence performances of $\log(\|\hat \T_l-\T^{\ast}\|_{\rm F}/\|\T^{\ast}\|_{\rm F})$ by Algorithm~\ref{algo:lowrank+sparse} are examined and presented in the upper panels of Figure~\ref{fig:heavytailpca}. 

In this experiment, we set $\gamma = 2, \kprune=\infty$ and $\mu_1$ as previously. The top-left plot in Figure \ref{fig:heavytailpca} displays the effects of $\alpha$ on the convergence of Algorithm~\ref{algo:lowrank+sparse}. The case $\alpha = 0$ reduces to the normal Riemannian gradient descent, which cannot output a satisfiable result due to the heavy-tailed noise, even if a warm initialization is provided. This shows the importance of gradient pruning in Algorithm~\ref{algo:lowrank+sparse}. When $\alpha > 0$, the convergence speed of the algorithm is insensitive to $\alpha$, but the final estimates $\hat\T_{l_{\max}}$ is related to $\alpha$. In the top-right plot of Figure~\ref{fig:heavytailpca}, we observe the error becomes larger as $\theta$ decreases (or equivalently, as $\sigma_z^2$ increases). All these results match the claim of Theorem~\ref{thm:heavy_tail} and confirm the fast convergence of Riemannian gradient descent. And there are indeed stages where the log relative error decreases linearly w.r.t. the number of iterations.

The statistical stability of the final estimates by Algorithm~\ref{algo:lowrank+sparse} applied to tensor PCA with heavy-tailed noise is demonstrated in the bottom panel of Figure~\ref{fig:heavytailpca}. Each curve represents the average relative error of $\hat \T_{l_{\max}}$ based on $5$ simulations, and the error bar shows the confidence region by one empirical standard deviation. Based on these plots, we observe that for each fixed $\theta$ (or $\sigma_z^2$, equivalently), we need to choose $\alpha$ carefully to achieve the best performance. This is reasonable since in the heavy-tail noise setting, we do not know the sparsity of outliers. Also, the figure shows that Algorithm~\ref{algo:lowrank+sparse} is stable for different $\alpha$ and $\theta$.

\begin{figure}
	\centering
	\begin{subfigure}[b]{.98\linewidth}
		\includegraphics[width=0.5\textwidth]{./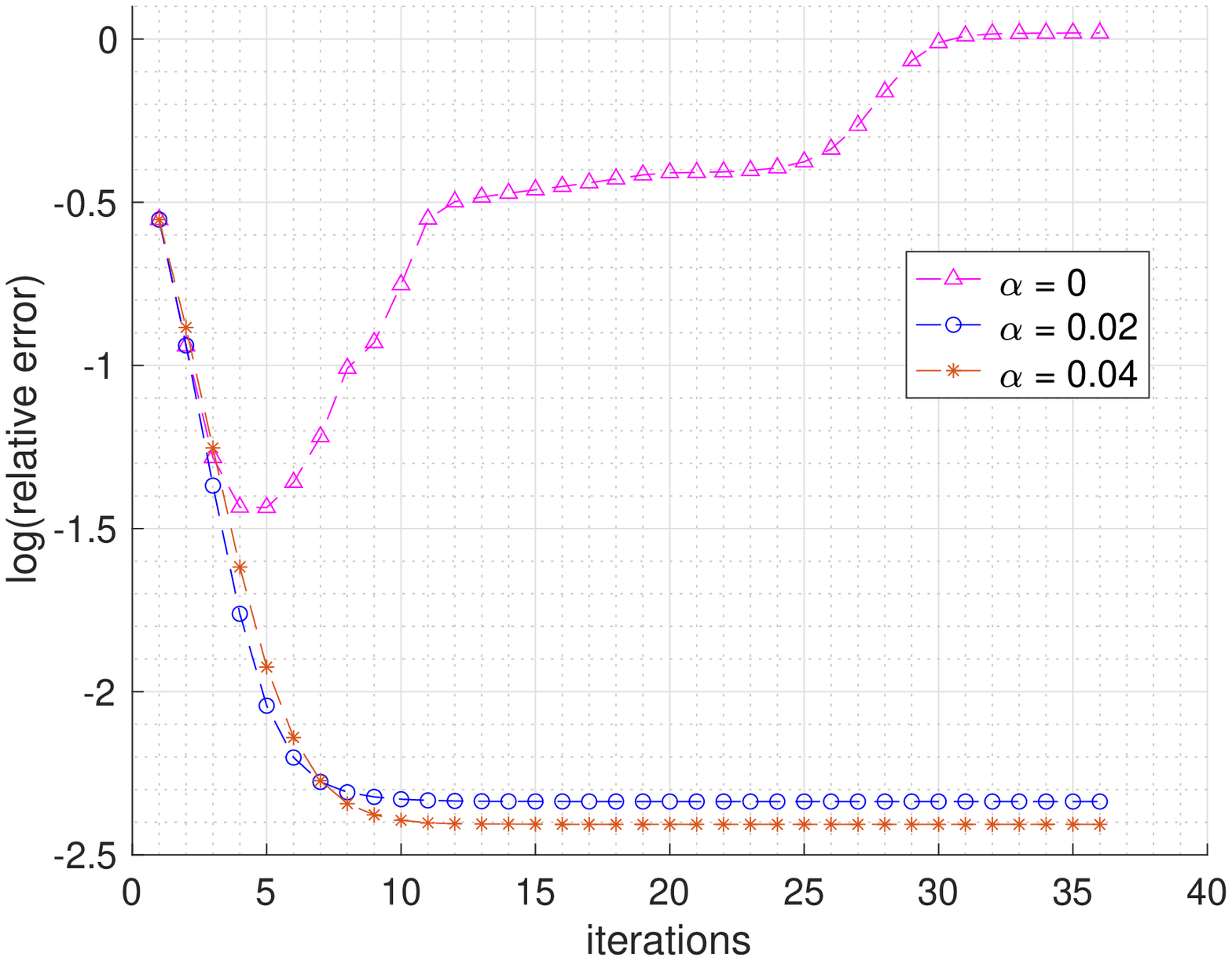}
		\includegraphics[width=0.5\textwidth]{./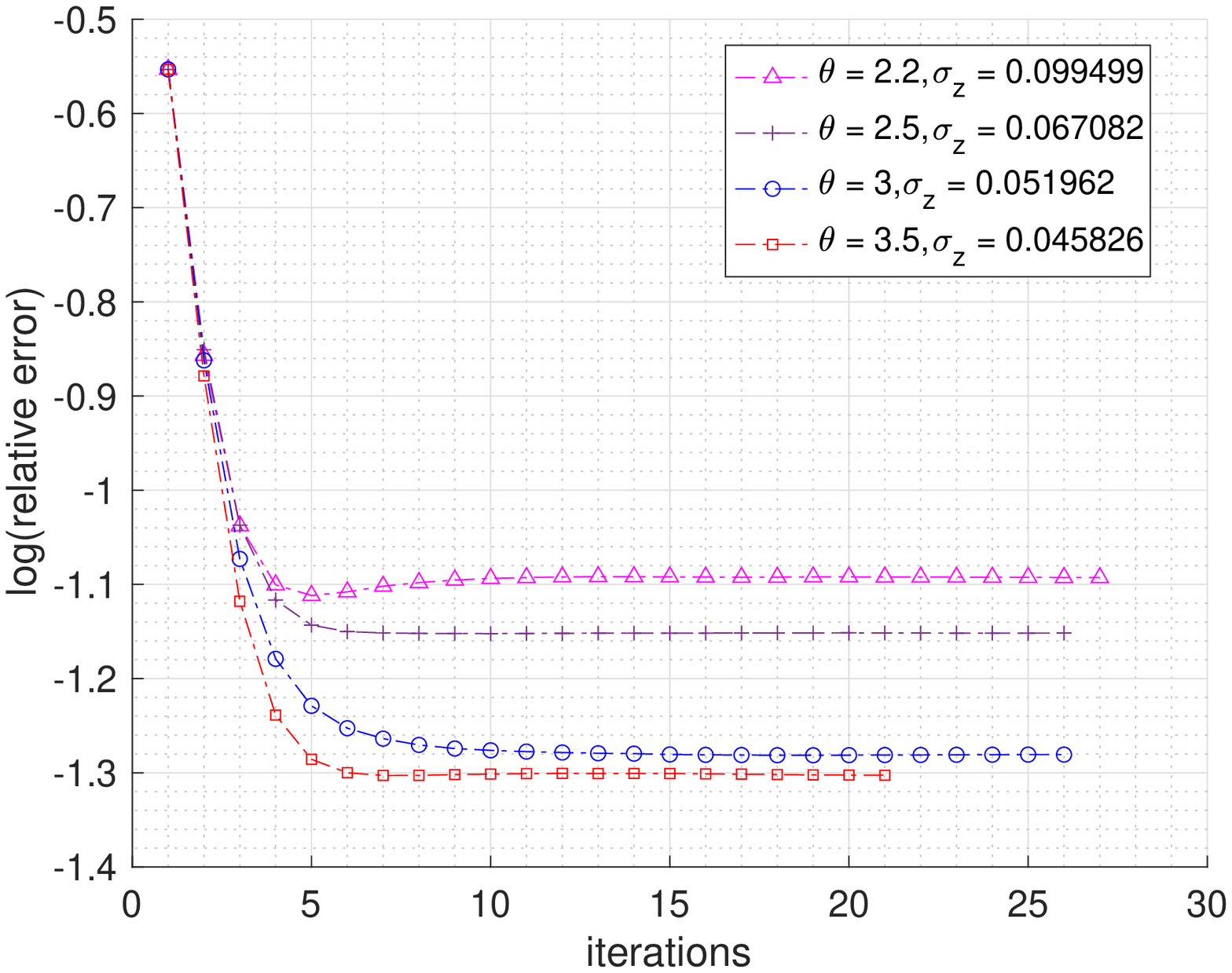}
		\caption{Left: Change of $\alpha$, $\theta = 2.2 (\sigma_z = 0.332)$; Right: Change of $\theta$, $\alpha = 0.01$}
	\end{subfigure}
	\begin{subfigure}[b]{.98\linewidth}
		\centering
		\includegraphics[width=0.5\textwidth]{./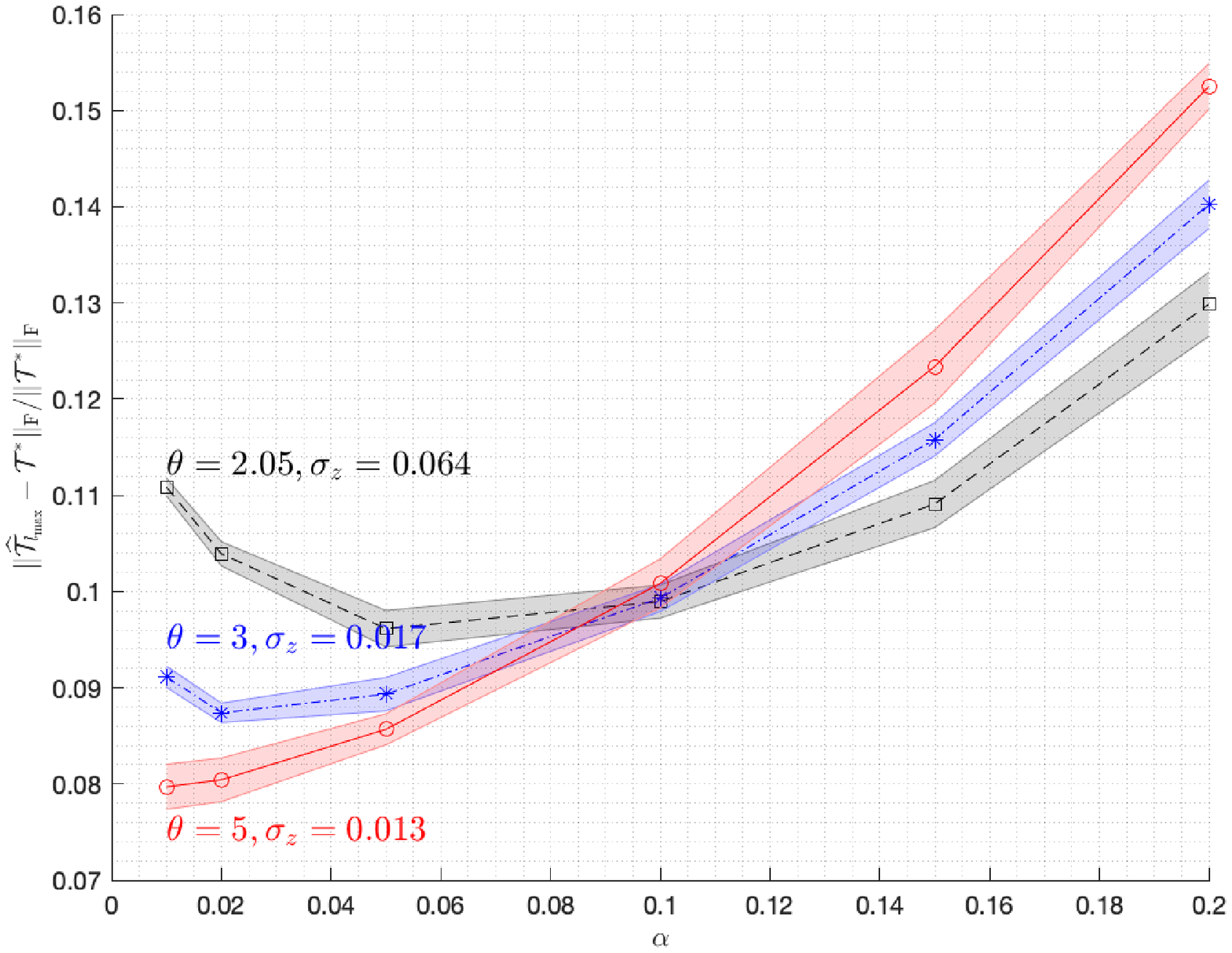}
		\caption{Change of $\theta$}
	\end{subfigure}
	\caption{Performances of Algorithm~\ref{algo:lowrank+sparse} for tensor PCA with heavy-tailed noise. The low-rank $\T^{\ast}$ has size $d\times d\times d$ with $d=100$ and has Tucker ranks $\br=(2,2,2)^{\top}$. The relative error on upper panels is defined by $\|\hat\T_l-\T^{\ast}\|_{\rm F}/\|\T^{\ast}\|_{\rm F}$. The error bars on the lower panels are based on $1$ standard deviation from 5 replications. Here the default choice of $\gamma$ is 2.}
	\label{fig:heavytailpca}
\end{figure}

\subsection{Binary Tensor Learning}\label{sec:num_binary}
In the binary tensor setting, we generate the low-rank tensor $\T^{\ast}\in\RR^{d\times d\times d}$ with $d=100$ and Tucker ranks $\br=(2,2,2)^{\top}$  from the HOSVD of a trimmed standard normal tensor. But here we did a scaling to $\T^*$ so that the singular value $\bsigma\approx 300$ and $\usigma\approx 100$. Given a sparsity level $\alpha\in(0,1)$, the entries of sparse tensor $\S^{\ast}$ are i.i.d. sampled from ${\rm Be}(\alpha)\times {\rm N}(0,1)$, which ensures $\S^{\ast}\in \SS_{O(\alpha)}$ with high probability. We generate the tensor $\T^*$ and  $\S^*$ in this way in order to meet the requirements of Assumption~\ref{assump:binary_tensor}. In the following experiments, we are considering the logistic link function with the scaling parameter $\sigma$, i.e., $p(x) = (1+e^{-x/\sigma})^{-1}$. The default choice of $\gamma$ is $1.1$, $\kprune=1$ and $\mu_1$ is set as previously. The convergence performances of $\log(\|\hat \T_l-\T^{\ast}\|_{\rm F}/\|\T^{\ast}\|_{\rm F})$ by Algorithm~\ref{algo:lowrank+sparse} are examined and presented in the top two panels of Figure~\ref{fig:binary}. 

The top-left plot in Figure~\ref{fig:binary} shows the effect of $\alpha$ on the convergence of Algorithm~\ref{algo:lowrank+sparse}. From the figure, it is clear that the error of final estimates $\hat\T_{l_{\max}}$ is related to $\alpha$. This again verifies the results in Theorem~\ref{thm:binary_tensor}. In the top-right plot in Figure~\ref{fig:binary}, we can see the error of the final estimates increases as the parameter $\gamma$ becomes larger.
All these experiments show that Riemannian gradient descent converges fast and there are stages when the log relative error decreases linearly w.r.t. the number of iterations.

The statistical stability of the final estimates by Algorithm~\ref{algo:lowrank+sparse} is demonstrated in the bottom panel of Figure~\ref{fig:binary}. Each curve represents the average relative error of $\hat \T_{l_{\max}}$ based on $5$ simulations, and the error bar shows the confidence region by one empirical standard deviation. From these plots, we observe that the standard deviations of $\|\hat\T_{l_{\max}}-\T^{\ast}\|_{\rm F}$ grow as the noise level, the sparsity level $\alpha$  or the tuning parameter $\gamma$ increases. 

\begin{figure}
	\centering
	\begin{subfigure}[b]{.98\linewidth}
		\includegraphics[width=0.5\textwidth]{./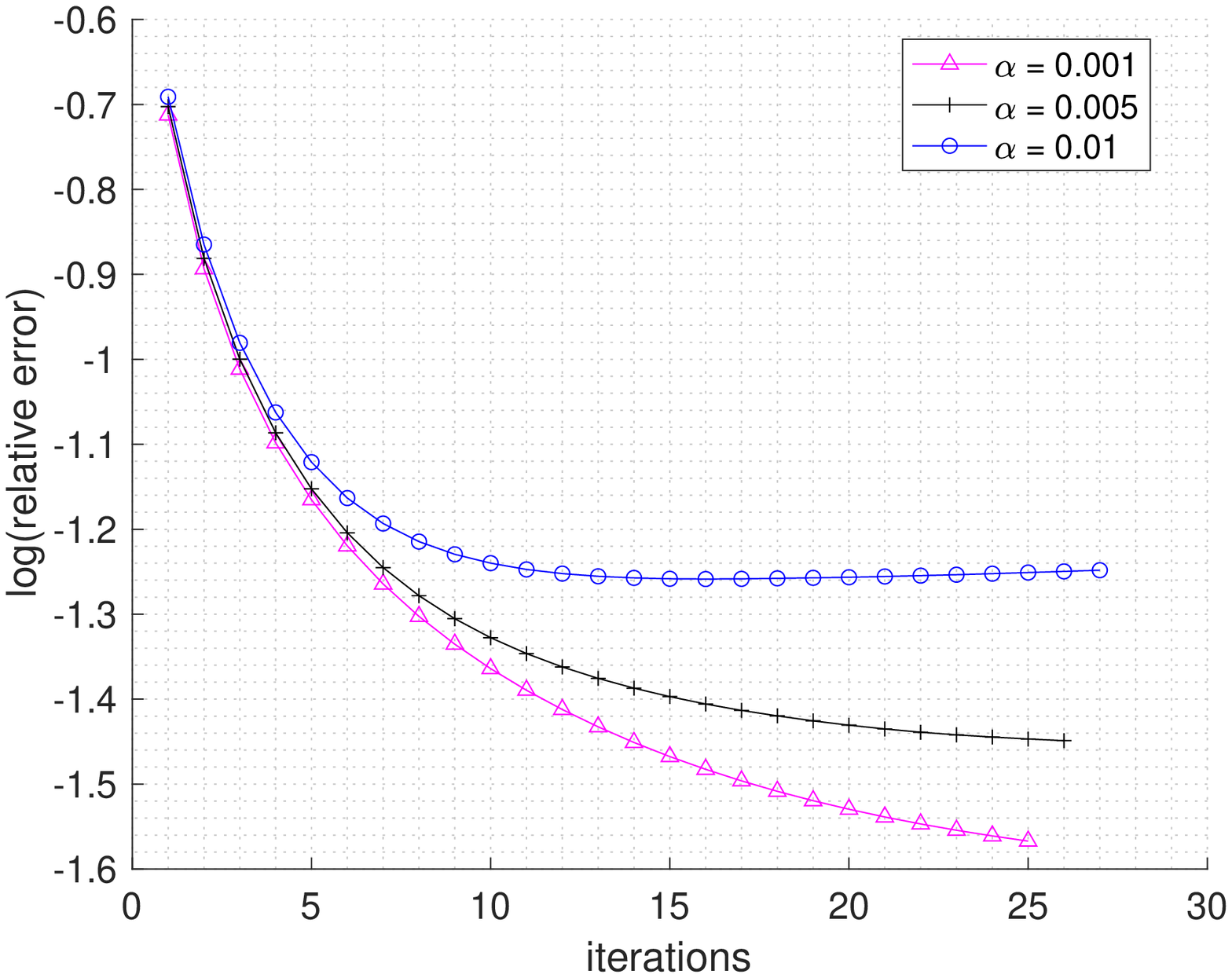}
		\includegraphics[width=0.5\textwidth]{./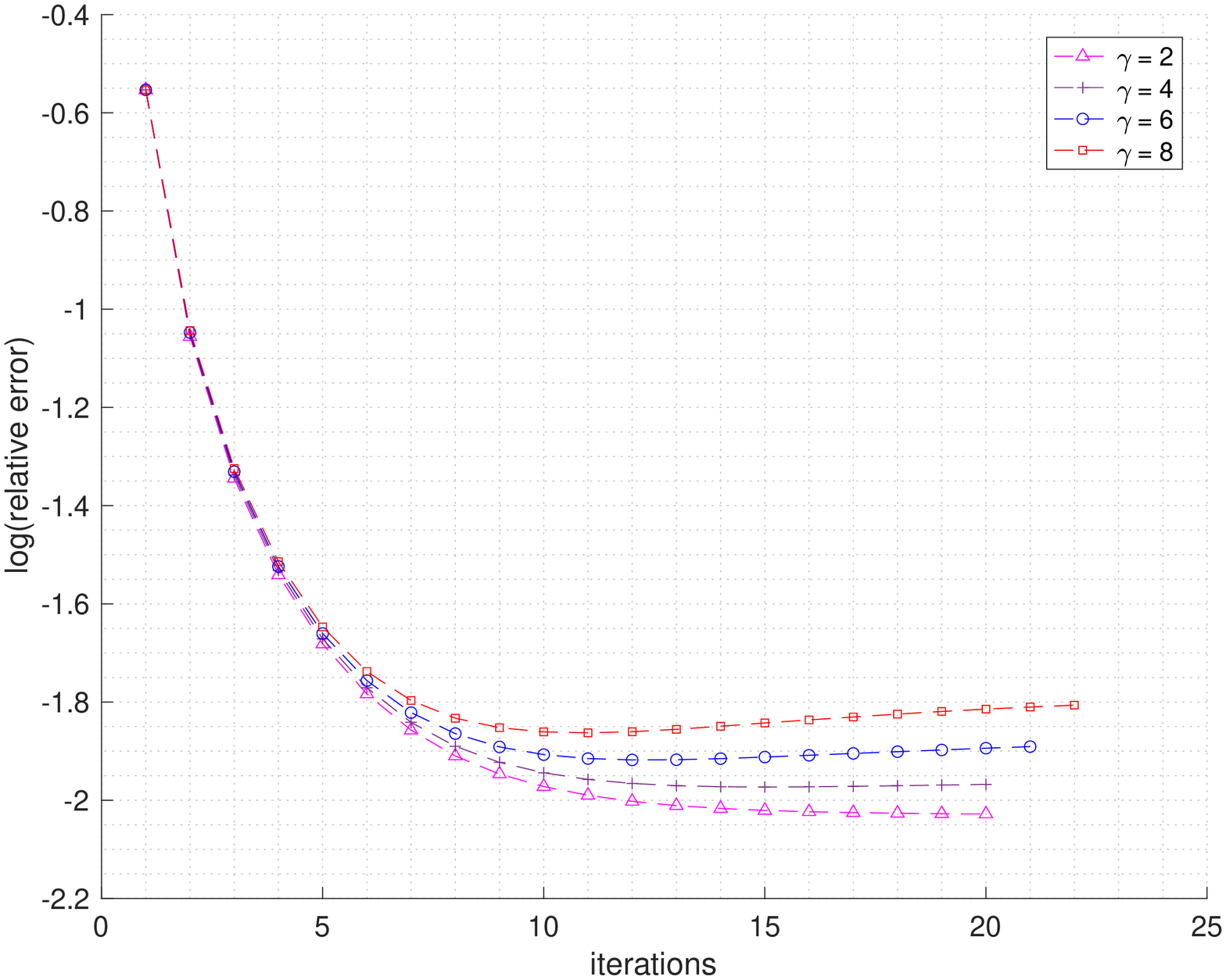}
		\caption{Left: Change of sparsity $\alpha$, $\sigma=1, \gamma=1.1$; Right: Change of $\gamma$, $\alpha=0.001, \sigma = 1$}
	\end{subfigure}
	\begin{subfigure}[b]{.98\linewidth}
		\includegraphics[width=0.5\textwidth]{./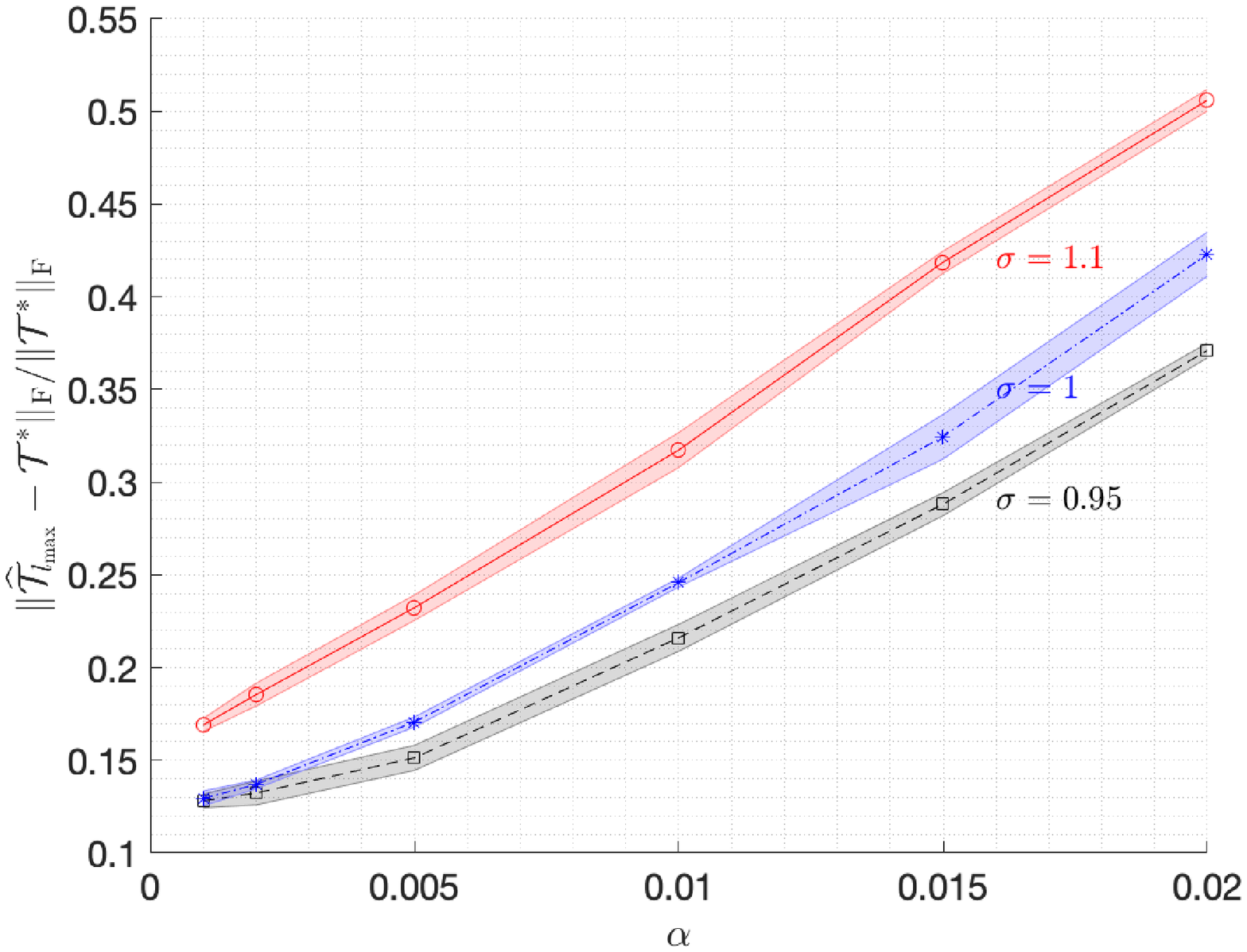}
		\includegraphics[width=0.5\textwidth]{./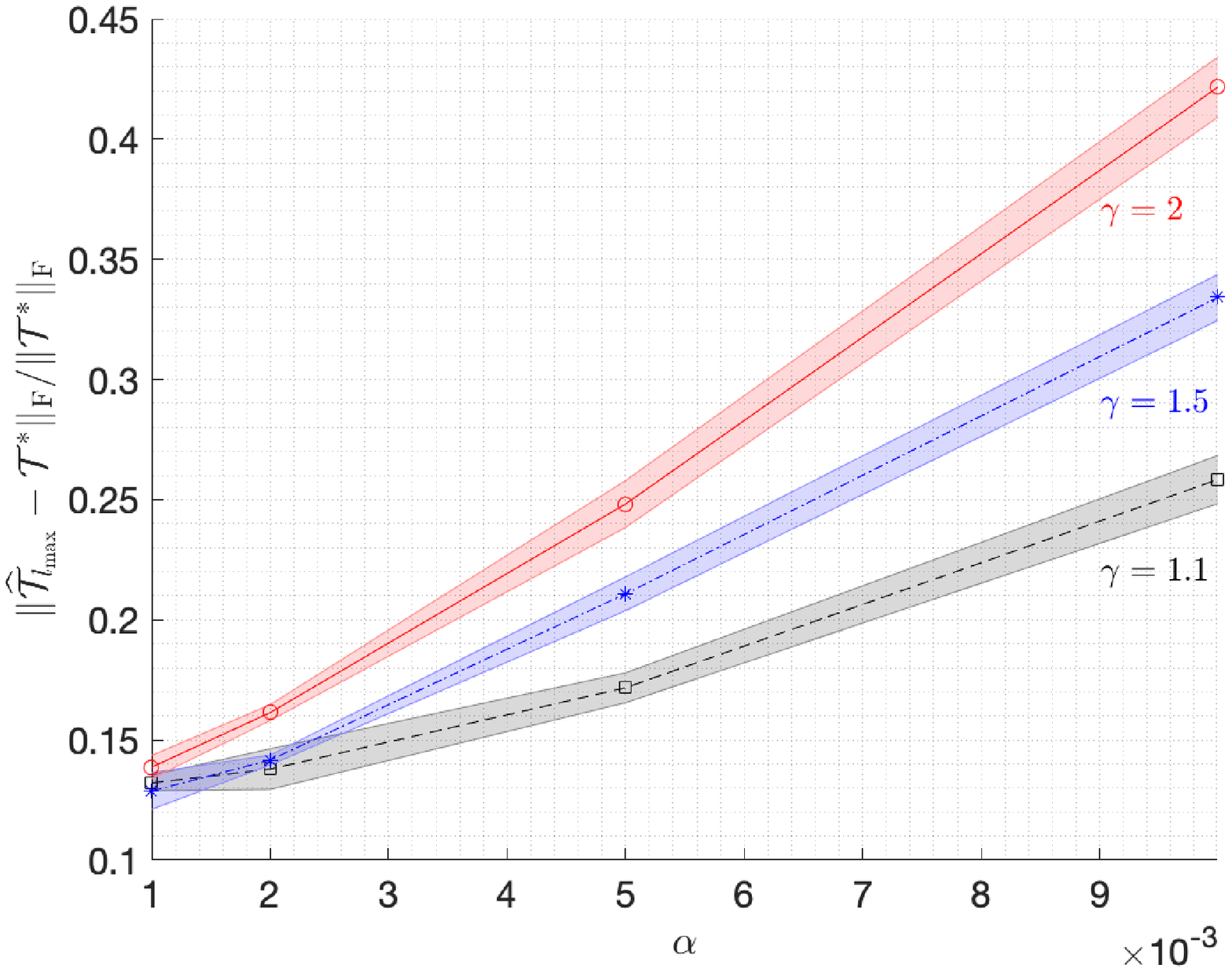}
		\caption{Left: Change of $\sigma$, $\gamma=1.1$; Right: Change of $\gamma$, $\alpha=0.001$}
	\end{subfigure}
	\caption{Performances of Algorithm~\ref{algo:lowrank+sparse} for binary tensor learning. The low-rank $\T^{\ast}$ has size $d\times d\times d$ with $d=100$ and has Tucker ranks $\br=(2,2,2)^{\top}$. The relative error on left panels is defined by $\|\hat\T_l-\T^{\ast}\|_{\rm F}/\|\T^{\ast}\|_{\rm F}$. The error bars on the bottom panels are based on $1$ standard deviation from 5 replications. The default choice of $\gamma$ is $1.1$.}
	\label{fig:binary}
\end{figure}

\subsection{Tensor Poisson Robust PCA}\label{sec:num_poisson}
In the Poisson tensor RPCA case, we generate $\T^*\in\RR^{d\times d\times d}$ with $d = 100$ and Tucker rank $\br = (2,2,2)^{\top}$ such that $\|\T^*\|_{\ell_{\infty}} = 0.5$. Meanwhile, the sparse outliers $\S^*$ is generated such that all its entries are i.i.d. sampled from ${\rm Be}(\alpha)\times {\rm N}(0,1)$ and scaled such that $\|\S^*\|_{\ell_{\infty}} = 0.5$. Throughout the experiments, both $\zeta$ and $\kprune$ is set to 0.5, and the default choice of $\gamma$ is $1.1$.

In the first experiment, we fix the intensity $I = 10$ and change the sparsity level. 
The convergence performances of $\log(\fro{\hat\T_l - \T^*}/\fro{\T^*})$ by Algorithm \ref{algo:lowrank+sparse} is displayed in the left panel of Figure \ref{fig:poisson}. In the second experiment, for different values of $\alpha$ and $I$, we conduct 5 i.i.d. instances and plot the error bar. The results are displayed in the right panel of Figure \ref{fig:poisson}. 

\begin{figure}
	\includegraphics[width=0.5\textwidth]{./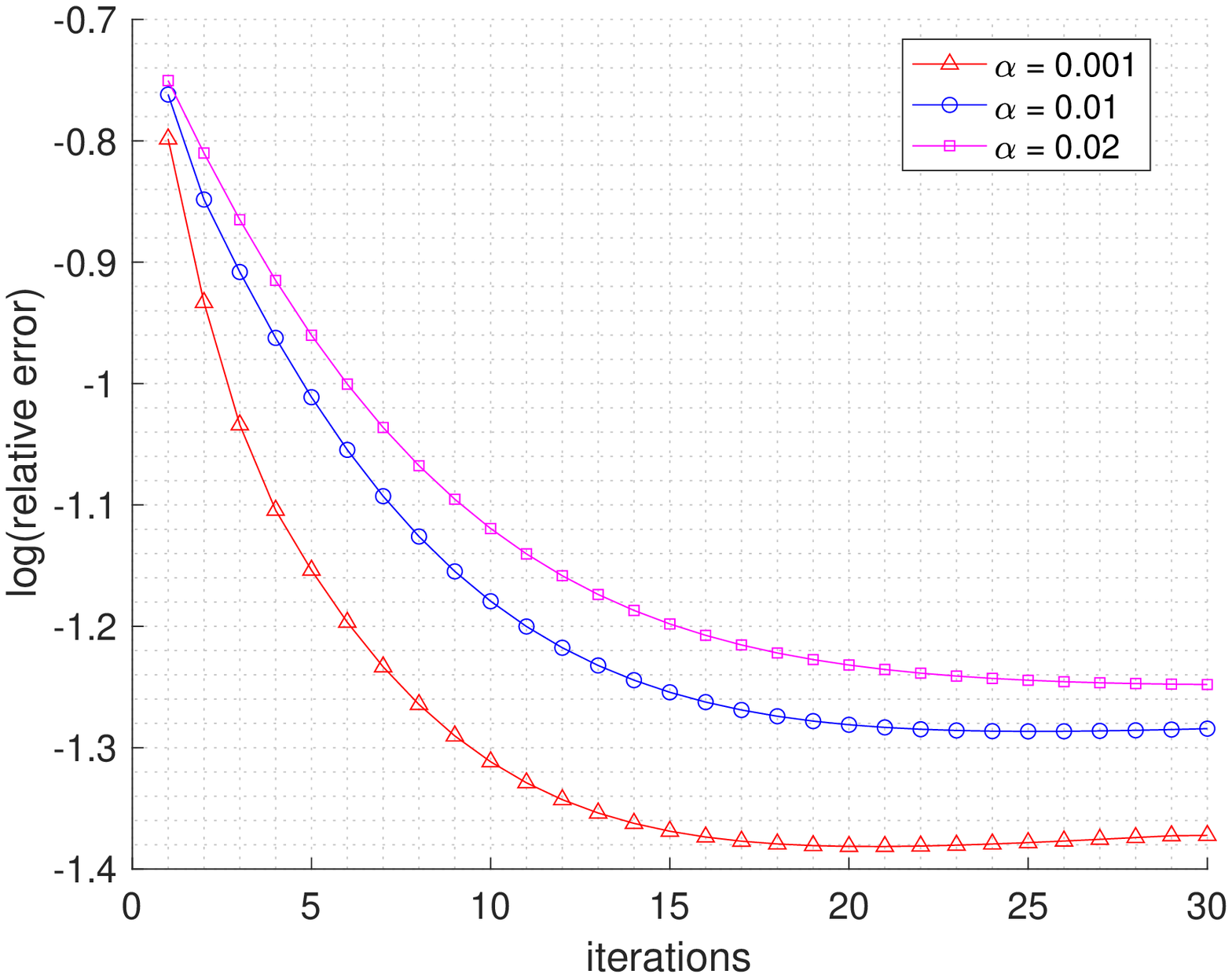}
	\includegraphics[width=0.5\textwidth]{./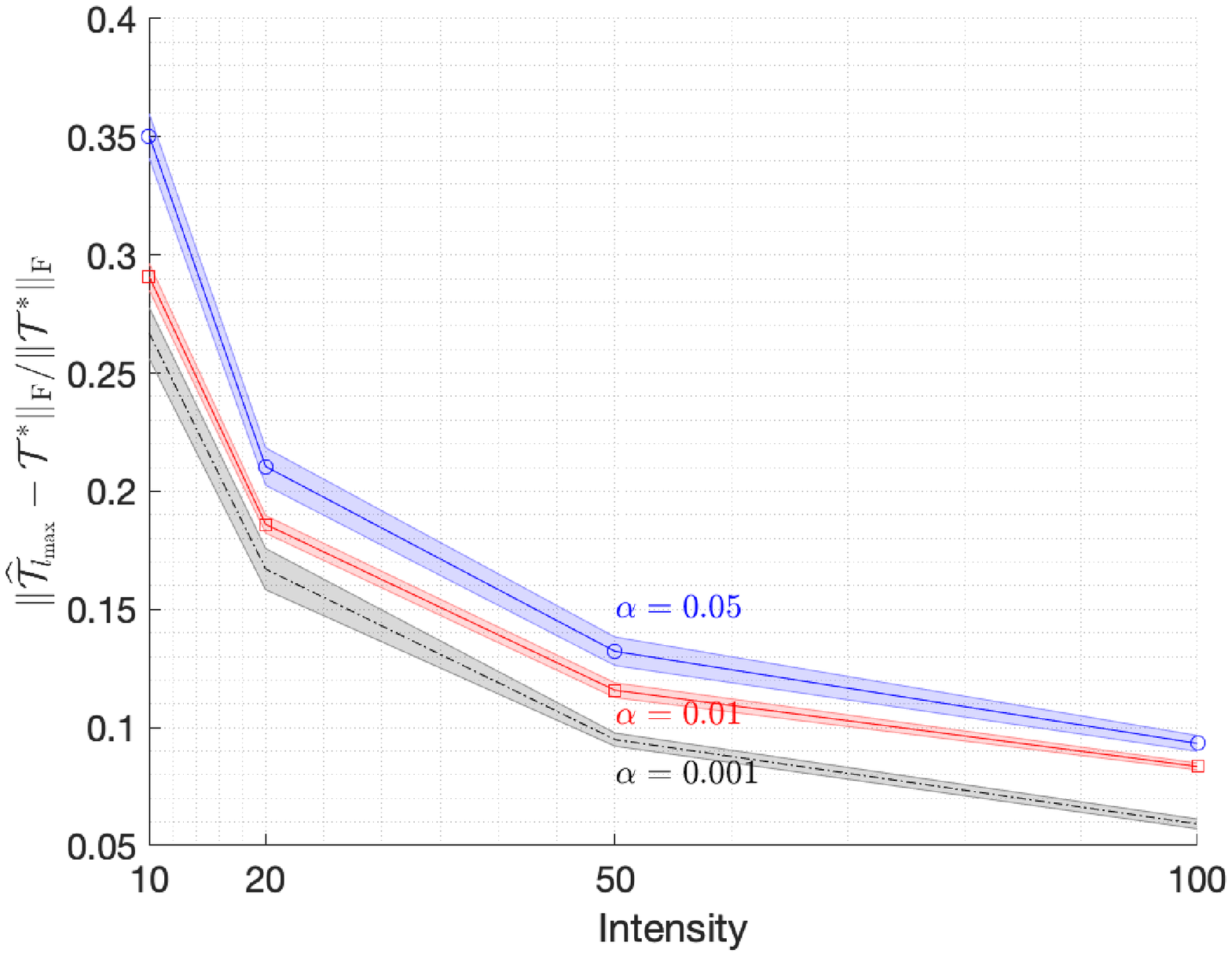}
	\caption{Performance of Algorithm \ref{algo:lowrank+sparse}  for tensor Poisson RPCA. The Tucker rank of $\T^*$ is $\br = (2,2,2)^{\top}$.
		Left: Convergence behaviors with different $\alpha$ and $I$ is fixed with $I=10$; Right: Error bar with each setting repeated 5 i.i.d. times. Here $\gamma=1.1$ and $\kprune=0.5$.}
	\label{fig:poisson}
\end{figure}

\section{Real Data: Statisticians Hypergraph Co-authorship Network}
This dataset \citep{ji2016coauthorship} contains the co-authorship relations of $3607$ statisticians based on $3248$ papers published in four prestigious statistics journals during 2003-2012. The co-authorship network thus has $3607$ nodes and two nodes are connected by an edge if they collaborated on at least one paper. A giant connected component of this network consisting of $236$ nodes is seen to be the ``High-Dimensional Data Analysis" community. They also carried out community detection  analysis to discover substructures in this giant component. See more details in \citep{ji2016coauthorship}.

We analyze the substructures of the giant component by treating it as a hypergraph co-authorship network. These $236$ statisticians co-authored $542$ papers\footnote{There are $328$ single-authored papers. They provide no information to co-authorship relations, and are left out in our analysis.}, among which $356$ papers have two co-authors, $162$ papers have three co-authors and $24$ papers have four co-authors. A $3$-uniform hypergraph co-authorship network is constructed by, for $i\neq j\neq k$, adding the hyperedge $(i,j,k)$ if the authors $i,j,k$ co-authored at least one paper, and adding the hyperedges $(i,i,j)$ and $(i,j,j)$ if the authors $i,j$ co-authored at least one paper. The hyperedges are {\it undirected} resulting into a symmetric adjacency tensor $\A$. We adopt the framework from Section~\ref{sec:rpca} to learn the latent low-rank tensor $\hat\T$ in $\A$, which is used to detect communities in the giant component. We emphasize that our primary goal is to present the new findings by taking into consideration of higher-order interactions among co-authors and applying novel robust tensor methods. It is not our intention to label an author with a certain community.

The Tucker ranks are set as $(4,4,4)$ and sparsity ratio $\alpha$ is varied at $\{0, 10^{-4}, 5\times 10^{-4}\}$. The number of communities is set at $K=3$ and the algorithm is initialized by the HOSVD of $\A$. To uncover community structures, we apply spectral clustering to the singular vectors of $\hat \T$. The node degrees are severely heterogeneous with Peter Hall, Jianqing Fan and Raymond Carroll being the top-$3$ statisticians in terms of $\#$ of co-authors. The naive spectral clustering often performs poorly in the existence of heterogeneity, skewing to the high-degree nodes. Indeed, the top-left plot in Figure~\ref{fig:SN_author} shows that the naive spectral clustering identifies these three statisticians as the corners in a triangle, and puts Peter Hall in a single community. To mitigate the influence of node heterogeneity, we apply SCORE \citep{jin2015fast} for community detection, which uses the leading singular vector of $\hat\T$ as normalization. 

\begin{figure}
\centering
	\begin{subfigure}[b]{0.98\linewidth}
		\includegraphics[width=0.5\textwidth]{./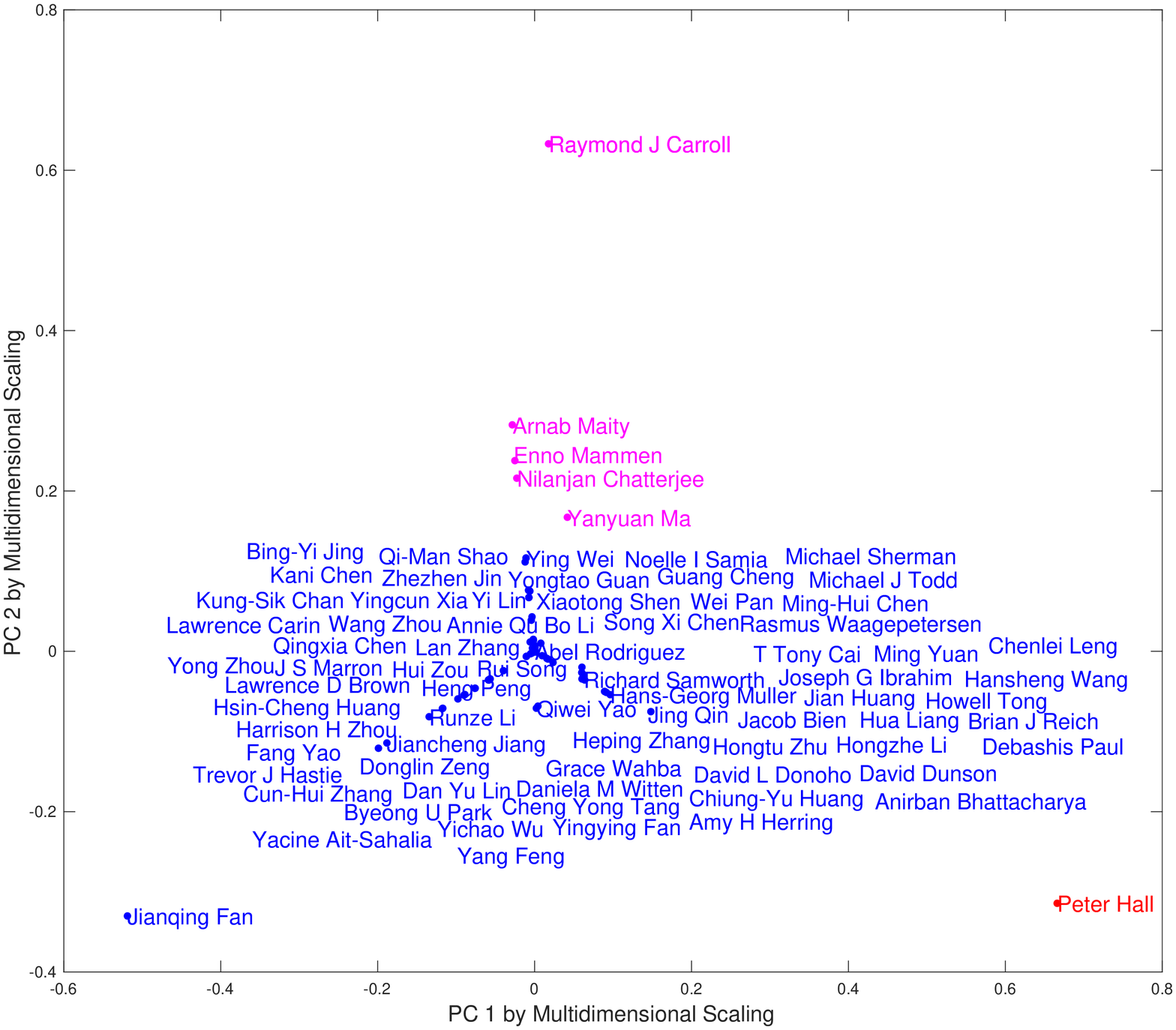}
		\includegraphics[width=0.5\textwidth]{./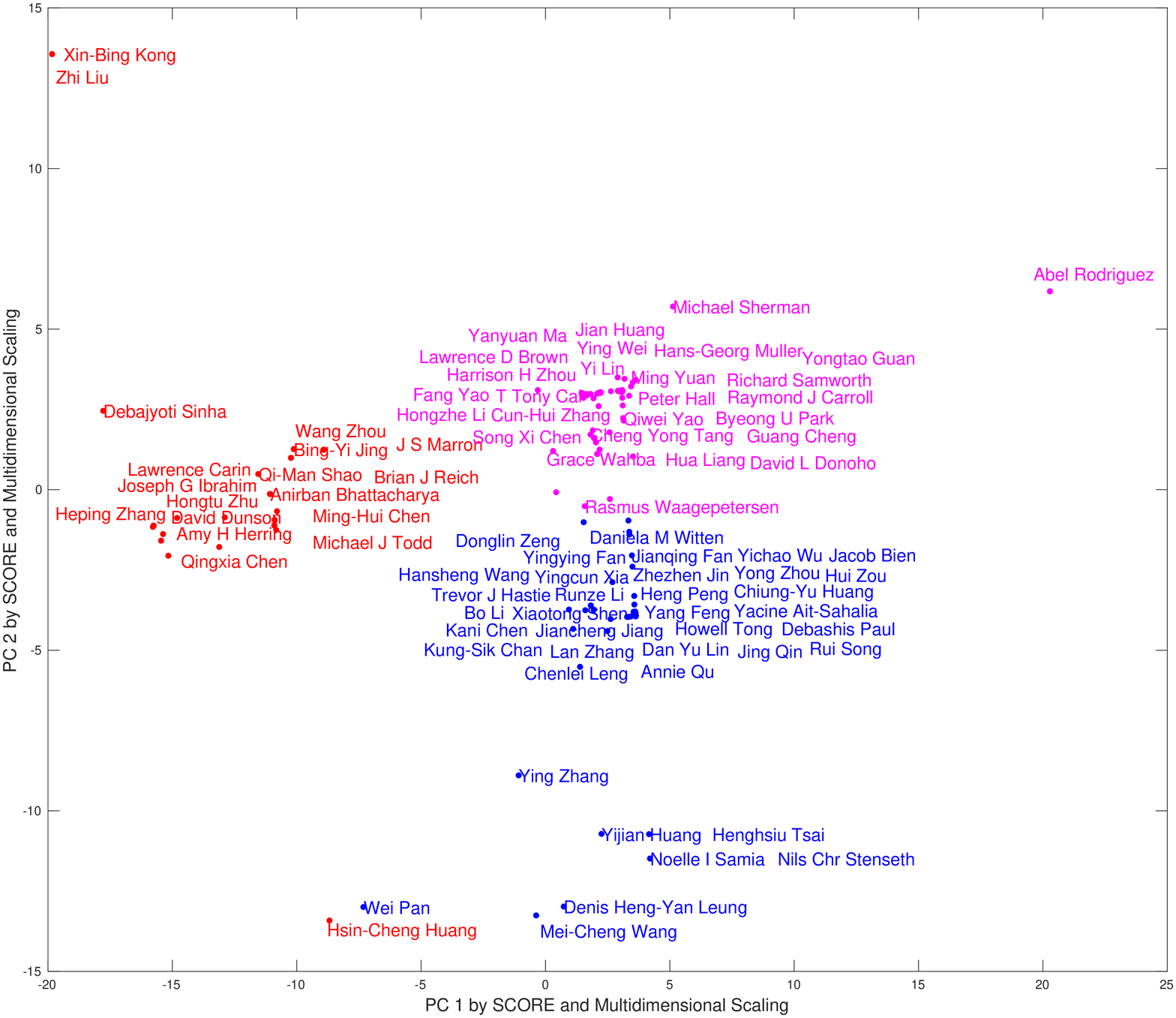}
		\includegraphics[width=0.5\textwidth]{./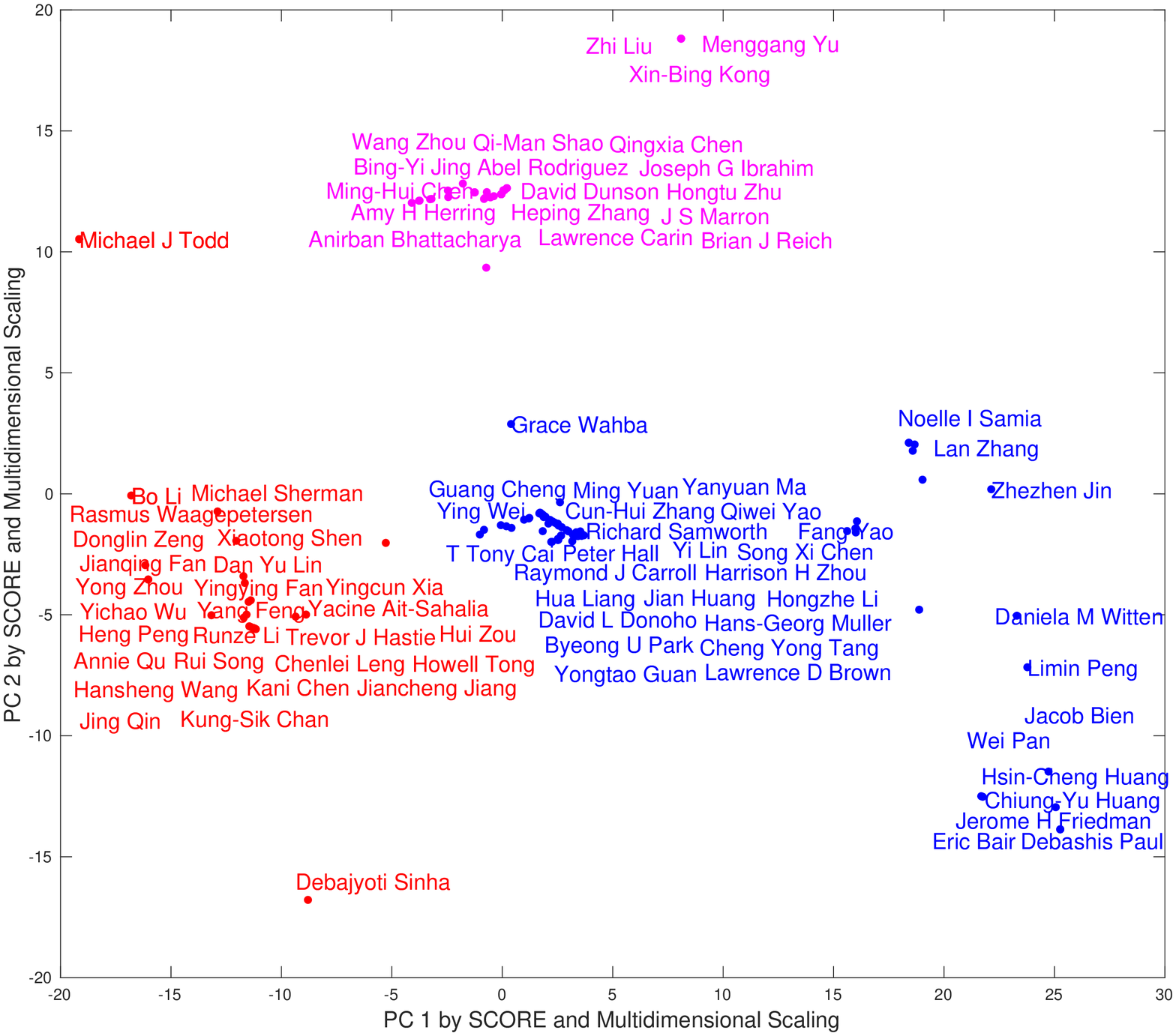}
		\includegraphics[width=0.5\textwidth]{./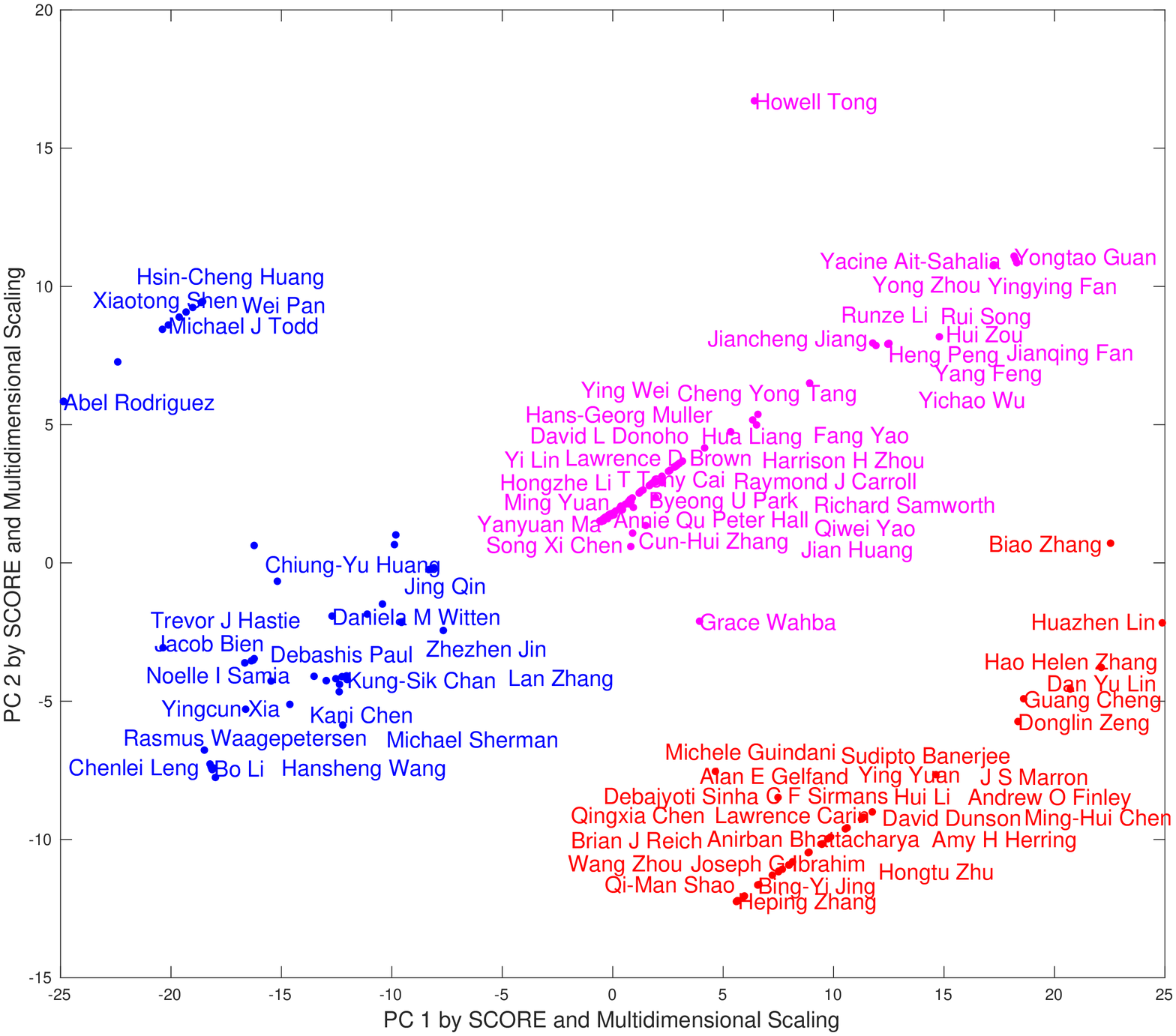}
		\caption{Top-left: $\alpha=10^{-4}$ and naive spectral clustering; top-right: $\alpha=0$ and SCORE; bottom-left: $\alpha=10^{-4}$ and SCORE; bottom-right: $\alpha=5\times 10^{-4}$ and SCORE. }
	\end{subfigure}
\caption{Sub-structures detected in the  ``High-Dimensional Data Analysis" community based on the hypergraph co-authorship network. The Tucker ranks are set as $(4,4,4)$ with varied sparsity ratio at $\{0, 10^{-4}, 5\times 10^{-4}\}$ and the algorithm is initialized by the HOSVD of adjacency tensor $\A$.}
\label{fig:SN_author}
\end{figure}

The community structures found by SCORE are displayed in Figure~\ref{fig:SN_author}. The top-right plot shows the three clusters identified by SCORE when the sparsity ratio is zero. The three communities are: 1). ``North Carolina" group including researchers from Duke University, University of North Carolina and North Carolina State University, together with their close collaborators such as Debajyoti Sinha, Qi-Man Shao, Bing-Yi Jing, Michael J Todd and etc.; 2). ``Carroll-Hall" group including researchers in non-parametric and semi-parametric statistics, functional estimation and high-dimensional statistics, together with collaborators; 3). ``Fan and Others" group\footnote{We name it the ``Fan and Others" group simply because many researchers in this group are the co-authors of Jianqing Fan. It is not our intention to rank/label the authors.} including {\it primarily} the researchers collaborating closely with Jianqing Fan or his co-authors, and other researchers who do not {\it obviously} belong to the first two groups. We note that the fields of researchers in ``Fan and Others" group are quite diverse, some of which overlap with those in ``Carroll-Hall" group and ``North Carolina" group. 
However, unlike the results in \citep{ji2016coauthorship}, the top-right plot in Figure~\ref{fig:SN_author} does not cluster the ``Fan and Others" group into either the ``North Carolina" group or ``Carroll-Hall" group. 

We then set the sparsity ratio of $\hat\S$ by $\alpha=10^{-4}$. The communities identified by SCORE based on the singular vectors of $\hat\T$ are illustrated in the bottom-left plot of Figure~\ref{fig:SN_author}. Compared with the top-right plot ($\alpha=0$), the three communities displayed in the bottom-left plot largely remain the same. But the group memberships of some authors do change. Notably, Debajyoti Sinha and Michael J Todd move from the ``North Carolina" group to ``Fan and Others" group; Abel Rodriguez moves from the ``Carroll-Hall" group to ``North-Carolina" group; several authors (e.g. Daniela M Witten, Jacob Bien, Pan Wei, Chiung-Yu Huang, Debashis Paul, Zhezhen Jin, Lan Zhang and etc.) move from the ``Fan and Others" group to ``Carroll-Hall" group; Hsin-Cheng Huang moves from the ``North Carolina" group to ``Carroll-Hall" group; Rasmus Waggepetersen moves from the ``Carroll-Hall" group to ``Fan and Others" group. These changes of memberships suggest that these authors may not have strong ties to the ``North Carolina", ``Carroll-Hall" group or be the co-authors of Jianqing Fan. It may be more reasonable that these authors constitute a separate group. 

This indeed happens when the sparsity ratio $\alpha$ increases to a certain level. The bottom-right plot of Figure~\ref{fig:SN_author} shows the clustering result of SCORE when $\alpha=5\times 10^{-4}$. Compared with the top-right ($\alpha=0$) and bottom-left ($\alpha=10^{-4}$) plots, the community structure has a significant change. Indeed, the ``Fan and Others" group now splits into a ``Fan" group including Jianqing Fan and his co-authors, and an ``Others" group including the researchers who do not have obvious ties with ``Fan" group. 
Moreover, the ``Fan" group merges into the ``Carroll-Hall" group, which coincides with the clustering result of SCORE when applied onto the graph co-authorship network (Fig. 6 in \citep{ji2016coauthorship}). Consequently, we name the three communities in the top-right plot by the ``North Carolina", ``Carroll-Fan-Hall" and ``Others" group. Interestingly, many of the authors in the ``Others" group are those whose memberships change when the sparsity ratio $\alpha$ increases from $0$ to $10^{-4}$. See the top-right and bottom-left plots of Figure~\ref{fig:SN_author}. 
In addition, we observe that, as $\alpha$ increases from $10^{-4}$ to $5\times 10^{-4}$, Donglin Zeng and Dan Yu Lin in the ``Fan and Others" group moves to ``North Carolina" group. This might be more reasonable since they both work at the University of North Carolina.

\section{Proofs of theorems}
\subsection{Proof of Theorem \ref{thm:lowrank+sparse}}
We prove the theorem by induction on $\fro{\hat\T_l - \T^*}$ and $\fro{\hat\S_l -\S^*}$ alternatively. From the initialization condition we have $\fro{\hat\T_0 - \T^*} \leq c_{1,m}\min\{\frac{\delta^2}{\sqrt{\rmax}}, (\kappa_0^{2m}\sqrt{\rmax})^{-1}\}\cdot\usigma$ and $\hat\T_0\in\BB_{\infty}^{\ast}$ is $(2\mu_1\kappa_0)^2$-incoherent.
\paragraph*{Step 1: Bounding $\|\hat \S_l-\S^{\ast}\|_{\rm F}$ for all $l\geq 0$.} 
Suppose we have $\hat\T_l\in\BB_{\infty}^{\ast}$ is $(2\mu_1\kappa_0)^2$-incoherent and $\fro{\hat\T_l - \T^*} \leq c_{1,m}\min\{\frac{\delta^2}{\sqrt{\rmax}}, (\kappa_0^{2m}\sqrt{\rmax})^{-1}\}\cdot\usigma$.

Now we estimate $\fro{\hat\S_l - \S^*}$. Denote $\Omega_l  = \text{supp}(\hat\S_l)$ and $\Omega^* = \text{supp}(\S^*)$.  
For $\forall\omega \in \Omega_l$, from the construction of $\hat\S_l$ in Algorithm \ref{algo:hd_thre}, we have by the definition of $\textsf{Err}_{\infty}$,
\begin{align}\label{est:init:0000}
	|\subw{\nabla\frakL(\hat\T_l + \hat\S_l)}| \leq \min\nolimits_{\|\X\|_{\ell_{\infty}}\leq \kprune} \|\nabla \frakL(\X)\|_{\ell_\infty}\leq \textsf{Err}_{\infty}
\end{align}
From Assumption \ref{assump:sparse}, we get 
\begin{align}\label{eq:init:proof_sparse_Somega_1}
	|\subw{\nabla\frakL(\hat\T_l +\hat \S_l)} - \subw{\nabla\frakL(\hat\T_l+\S^*)}| \geq b_l|\subw{\hat\S_l-\S^*}|.
\end{align}
Note that to use (\ref{eq:init:proof_sparse_Somega_1}), we shall verify the neighborhood condition.
From the upper bound of $\fro{\hat\T_l - \T^*}$ we have $\fro{\hat\T_l - \T^*} \leq \usigma/8$, and $\hat\T_l$ is $(2\mu_1\kappa_0)^2$-incoherent.
Therefore, from Lemma~\ref{lemma:entrywise}, we have:
$$|\subw{\hat\T_l - \T^*}|^2 \leq C_{1,m}\rmax^m\dmin^{-(m-1)}(\mu_1\kappa_0)^{4m}\fro{\hat\T_l - \T^*}^2.$$
So we have 
$$|\subw{\hat\T_l - \T^*}| \leq C_{1,m}\sqrt{\frac{\rmax^m}{\dmin^{m-1}}}(\mu_1\kappa_0)^{2m}\fro{\hat\T_l - \T^*}\leq C_{1,m}\mu_1^{2m}\sqrt{\frac{\rmax^{m-1}}{\dmin^{m-1}}}\underline{\lambda},$$
where the last inequality is from the upper bound of $\fro{\hat\T_l - \T^*}$.
As a result, we have
$$
\big| [\hat\T_l+\hat\S_l-\T^{\ast}-\S^{\ast}]_{\omega}\big|\leq \big| [\hat\T_l-\T^{\ast}]_{\omega}\big|+|[\hat\S_l]_{\omega}|+|[\S^{\ast}]_{\omega}|\leq C_{1,m}\mu_1^{2m}\sqrt{\frac{\rmax^{m-1}}{\dmin^{m-1}}}\usigma+\kprune+\|\S^{\ast}\|_{\ell_\infty}. 
$$
Thus, both $\hat\T_l+\hat\S_l$ and $\hat\T_l+\S^{\ast}$ belong to the ball $\BB_{\infty}^*$ and thus (\ref{eq:init:proof_sparse_Somega_1}) holds. 

As a result of \eqref{est:init:0000} and \eqref{eq:init:proof_sparse_Somega_1}, we get for any $\omega\in\Omega_l$
$$
b_l\big|[\hat\S_l-\S^{\ast}]_{\omega} \big|\leq \big|[\nabla \frakL(\hat\T_l+\S^{\ast})]_{\omega}\big|+\errinf.
$$
Therefore,
\begin{align}\label{est:init:6}
	\fro{\calP_{\Omega_l}&(\hat\S_l - \S^*)}^2 \leq \frac{2}{b_l^2}\fro{\calP_{\Omega_l}(\nabla\frakL(\hat\T_l+\S^*))}^2+\frac{2|\Omega_l|}{b_l^2}\errinf^2\notag\\
	&= \frac{2}{b_l^2}\fro{\calP_{\Omega_l}(\nabla\frakL(\hat\T_l+\S^*)) -\calP_{\Omega_l}(\nabla\frakL(\T^*+\S^*)) +\calP_{\Omega_l}(\nabla\frakL(\T^*+\S^*))}^2+\frac{2|\Omega_l|}{b_l^2}\errinf^2\notag\\
	&\leq \frac{4}{b_l^2}\fro{\calP_{\Omega_l}(\nabla\frakL(\hat\T_l+\S^*)) -\calP_{\Omega_l}(\nabla\frakL(\T^*+\S^*))}^2 + \frac{4}{b_l^2}\fro{\calP_{\Omega_l}(\nabla\frakL(\T^*+\S^*))}^2+\frac{2|\Omega_l|}{b_l^2}\errinf^2\notag\\
	&\leq \frac{4b_u^2}{b_l^2}\fro{\calP_{\Omega_l}(\hat\T_l - \T^*)}^2 + \frac{6|\Omega_l|}{b_l^2}\errinf^2,
\end{align}
where the last inequality is due to $\fro{\calP_{\Omega_l}(\nabla\frakL(\T^*+\S^*))}^2 \leq |\Omega_l| \errinf^2$ and Assumption~\ref{assump:sparse} since $\hat\T_l+\S^{\ast}\in\BB_{\infty}^*$.

From \eqref{est:init:6}, Lemma~\ref{lemma:est:proTl}, we have
\begin{align}\label{est:init:S:1}
	\fro{\calP_{\Omega_l}( \hat\S_l - \S^*)}^2 
	&\leq \frac{C_{2,m}b_u^2}{b_l^2}(\mu_1\kappa_0)^{4m}\rmax^m\alpha \fro{\hat\T_l - \T^*}^2 + \frac{6|\Omega_l|}{b_l^2}\errinf^2
\end{align}
here $C_{2,m}>0$ is an absolute constant depending only on $m$.

For $\forall \omega = (\omega_1,\ldots,\omega_m)\in \Omega^*\backslash\Omega_l$, we have $|\ijk{\hat\S_l - \S^*}| = |\ijk{\S^*}|$. Since the loss function is entry-wise by Assumption~\ref{assump:sparse}, we have $\subw{\nabla\frakL(\hat\T_l)} = \subw{\nabla\frakL(\hat\T_l+\hat\S_l)}$. Clearly, $\hat\T_l$ and $\hat\T_l+\S^{\ast}$ both belong to $\BB_{\infty}^*$, by Assumption \ref{assump:sparse} we get
$$
|\subw{\nabla\frakL(\hat\T_l)} - \subw{\nabla\frakL(\hat\T_l + \S^*)}| \geq b_l|\subw{\S^*}|.
$$
Now we bound $|\ijk{\hat\S_l - \S^*}|$ as follows. For any $\omega\in \Omega^{\ast}\backslash \Omega_l$, 
\begin{align*}
	|\ijk{\hat\S_l - \S^*}| &= |\ijk{\S^*}| \leq \frac{1}{b_l}|\ijk{\nabla\frakL(\hat\T_l)} - \ijk{\nabla\frakL(\hat\T_l + \S^*)}| \\
	&\leq \frac{1}{b_l}\left(|\ijk{\nabla\frakL(\hat\T_l)}|  + |\ijk{\nabla\frakL(\hat\T_l + \S^*)}|\right)\\
	&\leq \frac{1}{b_l}\left(|\ijk{\nabla\frakL(\hat\T_l)}|  + |\ijk{\nabla\frakL(\hat\T_l + \S^*) -\nabla\frakL(\T^* + \S^*)}| + |\ijk{\nabla\frakL(\T^* + \S^*)}|\right)\\
	&\leq \frac{1}{b_l}|\ijk{\nabla\frakL(\hat\T_l)}|  + \frac{b_u}{b_l}|\ijk{\hat\T_l - \T^*}| + 
	\frac{1}{b_l}\errinf,
\end{align*}
where the last inequality is again due to Assumption~\ref{assump:sparse} since $\hat\T_l + \S^* \in \BB_{\infty}^*$. Therefore we have
\begin{align}\label{est:init:44}
	\fro{\calP_{\Omega^*\backslash\Omega_l}(\hat\S_l - \S^*)}^2\leq \frac{2}{b_l^2}\fro{\calP_{\Omega^*\backslash\Omega_l}(\nabla\frakL(\hat\T_l))}^2 + \frac{4b_u^2}{b_l^2}\fro{\calP_{\Omega^*\backslash\Omega_l}(\hat\T_l - \T^*)}^2 + \frac{4}{b_l^2}|\Omega^{\ast}\backslash\Omega_l|\errinf^2
\end{align}
Since $\omega\in \Omega^*\backslash\Omega_l$, we have 
\begin{align}
	|\ijk{\nabla\frakL(\hat\T_l)}| \leq \max\nolimits_{i=1}^m |\be_{\omega_i}^{\top} \opM_i(\nabla\frakL(\hat\T_l))|^{(\gamma\alpha d_{i}^{-})}
\end{align}
Now since we have $\S^*\in \SS_{\alpha}$, we have
\begin{align}
	|\subw{\nabla\frakL(\hat\T_l)}| &\leq  \max\nolimits_{i=1}^m |\be_{\omega_i}^T \opM_i(\nabla\frakL(\hat\T_l+\S^*))|^{\left((\gamma-1)\alpha d_{i}^{-}\right)}\no\\
	&\leq \max\nolimits_{i=1}^m \Big|\be_{\omega_i}^{\top} \left(\opM_i(\nabla\frakL(\hat\T_l+\S^*)) - \opM_i(\nabla\frakL(\T^*+\S^*))\right)\Big|^{\left((\gamma-1)\alpha d_{i}^{-}\right)} + \errinf
\end{align}
Using AM-GM inequality, we have:
\begin{align}\label{est:init:5}
	|\subw{\nabla\frakL(\hat\T_l)}|^2 
	&\leq 2\max\nolimits_{i=1}^m \frac{\Big\|\be_{\omega_i}^{\top} \left(\opM_i(\nabla\frakL(\hat\T_l+\S^*)) - \opM_i(\nabla\frakL(\T^*+\S^*))\right)\Big\|_{\rm F}^2}{(\gamma-1)\alpha d_i^-}+2\errinf^2\notag\\
	&\leq 2\sum_{i=1}^m \frac{\Big\|\be_{\omega_i}^{\top} \left(\opM_i(\nabla\frakL(\hat\T_l+\S^*)) - \opM_i(\nabla\frakL(\T^*+\S^*))\right)\Big\|_{\rm F}^2}{(\gamma-1)\alpha d_i^-}+2\errinf^2
\end{align}
Now for all fixed $i\in[m]$, for all $\omega_i\in[d_i]$, $\omega_i$ appears at most $\alpha d_i^-$ times since $\Omega^*\backslash\Omega_l$ is an $\alpha$-fraction set. This observation together with \eqref{est:init:5} lead to the following:
\begin{align}\label{est:init:12}
	\fro{\calP_{\Omega^*\backslash\Omega_l}(\nabla\frakL(\hat\T_l))}^2 &\leq 2\sum_{i=1}^m\frac{\fro{\nabla\frakL(\hat\T_l+\S^*) - \nabla\frakL(\T^*+\S^*)}^2}{\gamma-1}+ 2|\Omega^*\backslash\Omega_l| \errinf^2\notag\\
	&\leq \frac{2m b_u^2}{\gamma-1}\fro{\hat\T_l - \T^*}^2 + 2|\Omega^*\backslash\Omega_l| \errinf^2.
\end{align} 
Therefore together with \eqref{est:init:44} and \eqref{est:init:12} and Lemma \ref{lemma:est:proTl}, we have 
\begin{align}\label{est:init:S:2}
	\fro{\calP_{\Omega^*\backslash\Omega_l}(\hat\S_l - \S^*)}^2 &\leq \left(\frac{4mb_u^2}{b_l^2}\frac{1}{\gamma-1} + C_{4,m}\frac{b_u^2}{b_l^2}(\mu_1\kappa_0)^{4m}\rmax^m \alpha \right)\fro{\hat\T_l-\T^*}^2 + \frac{16}{b_l^2}|\Omega^*\backslash\Omega_l| \errinf^2
\end{align}
where $C_{4,m} > 0$ are constants depending only on $m$.
Now we combine \eqref{est:init:S:1} and \eqref{est:init:S:2} and we get
\begin{align}
	\fro{\hat\S_l-\S^*}^2 &\leq \left(\frac{4mb_u^2}{b_l^2}\frac{1}{\gamma-1} + C_{5,m}(\mu_1\kappa_0)^{4m}\rmax^m\frac{b_u^2}{b_l^2}\alpha \right)\fro{\hat\T_l-\T^*}^2 + \frac{C_1}{b_l^2}|\Omega^{\ast}\cup\Omega_l|\errinf^2
\end{align}
where $C_{5,m}>0$ depending only on $m$ and $C_1>0$ an absolute constant.

Now if we choose $\alpha\leq (C_{5,m}\kappa_0^{4m}\mu_0^{4m}\rmax^m\frac{b_u^4}{b_l^4})^{-1}$ and $\gamma - 1 \geq 4m\frac{b_u^4}{b_l^4}$ for some sufficient large constants $C_{5,m}>0$ depending only on $m$, then we have  
\begin{align}\label{est:init:14}
	\fro{\hat\S_l - \S^*}^2 \leq \frac{b_l^2}{b_u^2}\fro{\hat\T_l-\T^*}^2 + \frac{C_1}{b_l^2}|\Omega^*\cup\Omega_l| \errinf^2
\end{align}
and 
\begin{align}\label{est:init:15}
	\fro{\hat\S_l - \S^*} \leq \frac{b_l}{b_u}\fro{\hat\T_l-\T^*} +  \frac{C_1}{b_l}\sqrt{|\Omega^*\cup\Omega_l|} \errinf
\end{align}
In addition, from the upper bound of $\fro{\T_l - \T^{\ast}}$, \eqref{est:init:15} implies that $\|\hat\S_l-\S^{\ast}\|_{\rm F}\leq c_0\usigma$ for a small $c_0>0$.
This fact is helpful later since it implies that $\hat\T_l+\hat\S_l$ belongs to the ball $\BB_{2}^*$ and thus activates the conditions in Assumption~\ref{assump:lowrank}.

\paragraph*{Step 2: bounding $\|\hat\T_{l}-\T^{\ast}\|_{\rm F}^2$ for all $l\geq 1$.} 
From previous step, we have verified 
\begin{align}\label{eq:prevS}
	\|\hat\S_{l-1}-\S^{\ast}\|_{\rm F}\leq \frac{b_l}{b_u}\fro{\hat\T_l-\T^*} +  \frac{C_1}{b_l}\sqrt{|\Omega^*\cup\Omega_l|} \errinf \leq c_0\usigma.
\end{align}
And from the Algorithm~\ref{algo:lowrank+sparse}, $\hat\T_l = \textsf{Trim}_{\zeta_{l},\br}(\W_{l-1})$. 
Now from Lemma~\ref{lemma:nbhd_spikiness_incoherence}, we get,
\begin{align}\label{eq:est:main}
	\fro{\hat\T_l -& \T^*}^2 = \fro{\textsf{Trim}_{\zeta_l,\br}(\W_{l-1}) - \T^*}^2 \no\\
	&\leq \fro{\W_{l-1}-\T^*}^2 +C_{m}\frac{\sqrt{\rmax}}{\usigma}\fro{\W_{l-1} - \T^*}^3\no\\
	&\leq (1+\frac{\delta}{4})\fro{\W_{l-1} - \T^*}^2\no\\
	&\leq (1-\delta^2)\fro{\hat\T_{l-1} - \T^*}^2 + 6\delta^{-1}\errrank + C_1\left(1+ b_u + b_u^2\right)b_l^{-2}\left(|\Omega^*| + \gamma\alpha d^*\right)\errinf^2 
\end{align}
Notice to use Lemma \ref{lemma:nbhd_spikiness_incoherence}, we need to verify $\fro{\W_{l-1} - \T^*} \leq \usigma/8$, which we will check momentarily.
Also, from \eqref{eq:est:main} and the signal-to-noise ration condition, we get $$\fro{\hat\T_l - \T^*} \leq c_1\min\{\delta^2\rmax^{-1/2},\kappa_0^{-2m}\rmax^{-1/2}\}\cdot \usigma.$$
On the other hand, from lemma \ref{lemma:nbhd_spikiness_incoherence}, we have $\hat\T_l$ is $(2\mu_1\kappa_0)^2$-incoherent. Further, from Lemma \ref{lemma:entrywise} and the definition of $\kdinf$ we have $\hat\T_l\in\BB_{\infty}^{\ast}$.
This finishes the induction for the error $\fro{\hat\T_l - \T^*}$. Now the only thing we need to check is the upper bound for $\fro{\W_{l-1} - \T^*}$.

\paragraph*{Step 2.1: bounding $\fro{\W_{l-1} - \T^*}$.}
From the Algorithm~\ref{algo:lowrank+sparse}, we have for arbitrary $1\geq \delta>0$,
\begin{align}
	\fro{\W_{l-1} - \T^*}^2 &
	= \fro{\hat\T_{l-1} - \T^* - \beta \calP_{\TT_{l-1}}(\G_{l-1} - \G^*) - \beta \pro_{\TT_{l-1}} \G^*}^2\notag\\
	&\leq (1+\frac{\delta}{2})\fro{\hat\T_{l-1} - \T^* - \beta \pro_{\TT_{l-1}}(\G_{l-1} - \G^*)}^2 + (1+\frac{2}{\delta}) \beta^2\fro{\pro_{\TT_{l-1}}( \G^*)}^2
\end{align}
Now we consider the bound for $\fro{\hat\T_{l-1} - \T^* - \beta \pro_{\TT_{l-1}}(\G_{l-1} - \G^*)}^2$,
\begin{align}\label{est:main}
	\fro{\hat\T_{l-1} - \T^* - \beta \pro_{\TT_{l-1}}(\G_{l-1} - \G^*)}^2 &= \fro{\hat\T_{l-1} - \T^*}^2 - 2\beta\inp{\hat\T_{l-1} - \T^*}{\pro_{\TT_{l-1}}(\G_{l-1} - \G^*)}\no\\
	&~~~~ + \beta^2\fro{\pro_{\TT_{l-1}}(\G_{l-1} - \G^*)}^2
\end{align}
The upper bound of $\fro{\hat\S_{l-1}-\S^{\ast}}$ ensures that $\hat\T_{l-1}+\hat\S_{l-1}\in \BB_{2}^*$.
Using the smoothness condition in Assumption~\ref{assump:lowrank}, we get
\begin{align}\label{est:1}
	\beta^2\fro{\pro_{\TT_{l-1}}(\G_{l-1} - \G^*)}^2 \leq \beta^2 b_u^2 \fro{\hat\T_{l-1}+\hat\S_{l-1} - \T^*-\S^*}^2
\end{align}  
Now we consider the bound for $|\inp{\hat\T_{l-1} - \T^*}{\pro_{\TT_{l-1}}(\G_{l-1} - \G^*)}|$. First we have:
$$\inp{\hat\T_{l-1} - \T^*}{\pro_{\TT_{l-1}}(\G_{l-1} - \G^*)} =  \inp{\hat \T_{l-1} -\T^*}{\G_{l-1} - \G^*} - \inp{\hat \T_{l-1} -\T^*}{\pro_{\TT_{l-1}}^{\perp}(\G_{l-1} - \G^*)}.$$
The estimation of $\inp{\hat \T_{l-1} -\T^*}{\G_{l-1} - \G^*}$ is as follows:
\begin{align}\label{eq:000001}
	\inp{\hat \T_{l-1} -\T^*}{\G_{l-1} - \G^*} &= \inp{\hat \T_{l-1} -\T^* +\hat\S_{l-1} - \S^*}{\G_{l-1} - \G^*} - \inp{\hat\S_{l-1} - \S^*}{\G_{l-1} - \G^*}\no\\
	&\geq b_l \fro{\hat \T_{l-1} -\T^* +\hat\S_{l-1} - \S^*}^2 -  \inp{\hat\S_{l-1} - \S^*}{\G_{l-1} - \G^*},
\end{align}
where the last inequality follows from Assumption~\ref{assump:lowrank}. And the estimation of $\inp{\hat \T_{l-1} -\T^*}{\pro_{\TT_{l-1}}^{\perp}(\G_{l-1} - \G^*)}$ is as follows:
\begin{align}\label{eq:000002}
	|\inp{\hat \T_{l-1} -\T^*}{\pro_{\TT_{l-1}}^{\perp}(\G_{l-1} - \G^*)}| &\leq \fro{\pro_{\TT_{l-1}}^{\perp}(\hat\T_{l-1} - \T^*)}\fro{\G_{l-1} - \G^*}\no\\
	&\leq \frac{C_{1,m}b_u}{\usigma} \fro{\hat \T_{l-1} -\T^*}^2\fro{\hat \T_{l-1} -\T^* +\hat\S_{l-1} - \S^*}
\end{align}
where the last inequality follows from Lemma~\ref{lem:ref:01}. Together with \eqref{eq:000001} and \eqref{eq:000002}, we get,
\begin{align}\label{est:2}
	\inp{\hat\T_{l-1} - \T^*}{\pro_{\TT_{l-1}}(\G_{l-1} - \G^*)} &\geq b_l \fro{\hat \T_{l-1} -\T^* +\hat\S_{l-1} - \S^*}^2 -  \inp{\hat\S_{l-1} - \S^*}{\G_{l-1} - \G^*}\no\\
	&~~~~- \frac{C_{1,m}b_u}{\usigma} \fro{\hat \T_{l-1} -\T^*}^2\fro{\hat \T_{l-1} -\T^* +\hat\S_{l-1} - \S^*}
\end{align}


Together with \eqref{est:1} and \eqref{est:2}, we get 
\begin{align}\label{est:33}
	\fro{\hat\T_{l-1} - \T^* - \beta \pro_{\TT_{l-1}}(\G_{l-1} - \G^*)}^2 &\leq  \left(1+ 2\beta b_u \frac{C_{1,m}}{\usigma}\fro{\hat \T_{l-1} -\T^* +\hat\S_{l-1} - \S^*}\right) \fro{\hat \T_{l-1} -\T^*}^2\no\\
	&~~~~+ (\beta^2 b_u^2 - 2\beta b_l)\fro{\hat \T_{l-1} -\T^* +\hat\S_{l-1} - \S^*}^2\no\\
	&~~~~+2\beta|\inp{\hat\S_{l-1} - \S^*}{\G_{l-1} - \G^*}|
\end{align}
In order to bound \eqref{est:33}, we derive separately the bound for each terms.
\paragraph*{Bounding $\fro{\hat \T_{l-1} -\T^* +\hat\S_{l-1} - \S^*}^2$.}
From the bound for $\fro{\hat\S_{l-1}-\S^*}$ in \eqref{eq:prevS}, we get,
\begin{align}\label{est:16}
	 \fro{\hat \T_{l-1} -\T^* +\hat\S_{l-1} - \S^*}^2&\leq 2\fro{\hat\T_{l-1} - \T^*}^2 + 2\fro{\hat\S_{l-1}-\S^*}^2\notag\\
	&\leq 4\fro{\hat\T_{l-1}-\T^*}^2 + \frac{C_1}{b_l^2}|\Omega^*\cup\Omega_{l-1}| \errinf^2
\end{align}
Thus,
\begin{align}\label{est:17}
	\fro{\hat \T_{l-1} -\T^* +\hat\S_{l-1} - \S^*} \leq 2\fro{\hat\T_{l-1}-\T^*} +  \frac{C_1}{b_l}\sqrt{|\Omega^*\cup\Omega_{l-1}|} \errinf
\end{align}
\paragraph*{Bounding $|\inp{\G_{l-1}-\G^*}{\hat\S_{l-1}-\S^*}|$.}
We first bound $\fro{\G_{l-1} - \G^*}$ by \eqref{est:17}:
\begin{align}\label{est:19}
	\fro{\G_{l-1} - \G^*} &\leq b_u\fro{\hat\T_{l-1}-\T^* + \hat\S_{l-1} -\S^*}\no\\
	&\leq 2b_u\fro{\hat\T_{l-1} -\T^*} + \frac{C_1b_u}{b_l}\sqrt{|\Omega^*\cup\Omega_{l-1}|} \errinf
\end{align}
Now we estimate $|\inp{\G_{l-1}-\G^*}{\hat\S_{l-1}-\S^*}|$ from \eqref{eq:prevS} and \eqref{est:19} as follows,
\begin{align}\label{inp:gs}
	&~~~~|\inp{\G_{l-1}-\G^*}{\hat\S_{l-1}-\S^*}| \leq \fro{\G_{l-1}-\G^*}\fro{\hat\S_{l-1}-\S^*}\no\\
	&\leq (0.02b_l + 0.01\beta b_u^2)\fro{\hat\T_{l-1}-\T^*}^2 + \frac{1}{\beta}\frac{C_1}{b_l^2}|\Omega^*\cup\Omega_{l-1}|\errinf^2 + \frac{C_1b_u}{b_l^2}|\Omega^*\cup\Omega_{l-1}|\errinf^2
\end{align}
\paragraph*{Bounding $|\inp{\hat\T_{l-1}-\T^*}{\hat\S_{l-1}-\S^*}|$.}
From \eqref{eq:prevS}, we have
\begin{align}\label{inp:ts}
	|\inp{\hat\T_{l-1}-\T^*}{\hat\S_{l-1}-\S^*}| &\leq \fro{\hat\T_{l-1} - \T^*}\fro{\hat\S_{l-1} - \S^*}\no\\
	&\leq (0.01 \frac{b_l}{b_u}\fro{\hat\T_{l-1} - \T^*} + \frac{C_1}{b_l}\sqrt{|\Omega^*\cup\Omega_{l-1}|}\errinf)\fro{\hat\T_{l-1} - \T^*}\no\\
	&\leq 0.02\fro{\hat\T_{l-1} - \T^*}^2 + \frac{C_1}{b_l^2}|\Omega^*\cup\Omega_{l-1}|\errinf^2
\end{align}

Now we go back to \eqref{est:33} and from \eqref{est:16} - \eqref{inp:ts}, we get:
\begin{align}\label{est:20}
	&~~~~\fro{\hat\T_{l-1} - \T^* - \beta \pro_{\TT_{l-1}}(\G_{l-1} - \G^*)}^2 \notag\\
	&\leq \left(1-1.84\beta b_l + 5\beta^2 b_u^2\right)\fro{\hat\T_{l-1}- \T^*}^2 + C_1(1+b_u+b_u^2)b_l^{-2}|\Omega^*\cup\Omega_{l-1}|\errinf^2
\end{align}
where the condition $\usigma \geq C_{1,m}  \frac{b_u}{b_l}\fro{\hat\T_{l-1} - \T^*}$ is used in the last step. 

By combining \eqref{est:main} and \eqref{est:20}, we get
\begin{align}\label{est:22}
	\fro{\W_{l-1} - \T^*}^2
	&=\fro{\hat\T_{l-1} - \T^* - \beta \pro_{\TT_{l-1}}\G_{l-1}}^2 \no\\
	&\leq (1+\frac{\delta}{2})\fro{\hat\T_{l-1} - \T^* - \beta \pro_{\TT_{l-1}}(\G_{l-1} - \G^*)}^2 + (1+\frac{2}{\delta}) \beta^2\fro{\pro_{\TT_{l-1}}( \G^*)}^2\notag\\
	&\leq (1+\frac{\delta}{2})\left(1-1.84\beta b_l + 5\beta^2 b_u^2\right)\fro{\hat\T_{l-1} - \T^*}^2 + (1+\frac{2}{\delta})\beta^2\errrank^2\notag\\
	&~~~~+ C_1\left(1+ \beta b_u + \beta^2 b_u^2\right)b_u^{-2}|\Omega^*\cup\Omega_{l-1}| \errinf^2\notag\\
	&\leq (1+\frac{\delta}{2})\left(1-1.84\beta b_l + 5\beta^2 b_u^2\right)\fro{\hat\T_{l-1} - \T^*}^2 + (1+\frac{2}{\delta})\beta^2\errrank^2\notag\\
	&~~~~+ C_1\left(1 + \beta b_u + \beta^2 b_u^2\right)\frac{1}{b_u^2}\left(|\Omega^*|+\gamma\alpha d^*\right) \errinf^2
\end{align}
where in the second inequality we used 
\begin{align}\label{eq:proof_errrank}
	\fro{\calP_{\TT_{l-1}}(\G^*)} = \sup_{\fro{\Y}=1}\inp{\calP_{\TT_{l-1}}(\G^*)}{\Y} = \sup_{\fro{\Y}=1}\inp{\G^*}{\calP_{\TT_{l-1}}(\Y)} \leq \errrank
\end{align}
since $\calP_{\TT_{l-1}}(\Y)\in \MM_{2\br}$ and in the last inequality we use $|\Omega^*\cup\Omega_{l-1}| \leq |\Omega^*|+ |\Omega_{l-1}| \leq |\Omega^*| + \gamma\alpha d^*$. 

Now we choose proper $\beta\in [0.005b_l/(b_u^2), 0.36b_l/(b_u^2)]$ so $1-1.84\beta b_l + 5\beta^2 b_u^2 \leq 1-\delta$, and we get
\begin{align}\label{eq:est:W-ptG}
	\fro{\W_{l-1} - \T^*} 
	&\leq  (1-\delta)(1+\delta/2)\fro{\hat\T_{l-1} -\T^*} + 3\delta^{-1}\errrank+ C_1(b_u + 1)b_l^{-1}\sqrt{|\Omega^*| + \alpha\gamma d^*}\errinf
\end{align}
where we use the fact that $\beta\leq 1$.
From the signal-to-noise ratio condition,  we have $3\delta^{-1}\errrank+ C_1(b_u + 1)b_l^{-1}\sqrt{|\Omega^*| + \alpha\gamma d^*}\errinf \leq \frac{\delta}{4}\frac{\usigma}{C_m\sqrt{\rmax}}$. This implies that $\fro{\W_{l-1}-\T^*} \leq \usigma/8$ holds.

\subsection{Proof of Theorem~\ref{thm:hatS_infty}}
Let $\hat\Omega$ and $\Omega^{\ast}$ denote the support of $\hat \S_{l_{\max}}$ and $\S^{\ast}$, respectively. 
By the proof of Theorem~\ref{thm:lowrank+sparse}, we have 
\begin{align*}
\big|[\hat \S_{l_{\max}}-\S^{\ast}]_{\omega} \big|\leq \begin{cases}
\frac{b_u}{b_l}\big| [\hat \T_{l_{\max}}-\T^{\ast}]_{\omega}\big|+\frac{2\textsf{Err}_{\infty}}{b_l}&, \textrm{ if } \omega\in \hat\Omega\\
\frac{2b_u}{b_l} \|\hat\T_{l_{\max}} - \T^*\|_{\ell_{\infty}} +\frac{2\textsf{Err}_{\infty}}{b_l}&, \textrm{ if } \omega\in \Omega^{\ast}\setminus \hat\Omega
\end{cases}
\end{align*}
Therefore, we conclude that 
\begin{align}\label{eq:hatSl_max_infty}
\|\hat\S_{l_{\max}}-\S^{\ast}\|_{\ell_\infty}\leq \frac{2b_u}{b_l}\big\| \hat \T_{l_{\max}}-\T^{\ast}\big\|_{\ell_\infty}+\frac{2\textsf{Err}_{\infty}}{b_l}.
\end{align}
Now, we can apply Lemma~\ref{lemma:entrywise} and we obtain 
\begin{align}\label{eq:wtTl_max_infty}
	\|\hat \T_{l_{\max}}-\T^{\ast}\|_{\ell_\infty} \leq C_{1,m}\rmax^{m/2}\dmin^{-(m-1)/2}\mu_1^{2m}\kappa_0^{2m}\fro{\hat\T_{l_{\max}} - \T^*}
\end{align}
Now, by putting together (\ref{eq:hatSl_max_infty}), (\ref{eq:wtTl_max_infty}) and (\ref{eq:hatTlmax}), we get 
$$
\|\hat\S_{l_{\max}}-\S^{\ast}\|_{\ell_\infty}\leq C_{2,m}\kappa_0^{2m}\mu_1^{2m}\Big(\frac{\rmax^m}{\dmin^{m-1}}\Big)^{1/2}\cdot \big(\textsf{Err}_{2\br}+(|\Omega^{\ast}|+\gamma\alpha d^{\ast})^{1/2}\textsf{Err}_{\infty}\big)+\frac{2\textsf{Err}_{\infty}}{b_l},
$$
where $C_{1,m}$ and $C_{2,m}$ are constants depending only on $m$. Now since we assume $b_l, b_u=O(1)$, we finish the proof of Theorem~\ref{thm:hatS_infty}.


\subsection{Proof of Theorem~\ref{thm:rpca}}
We first estimate the probability of the following two events.
\begin{align}
	\textsf{Err}_{2\br}&\leq C_{0,m}\sigma_z\cdot (\dmax\rmax+r^{\ast})^{1/2}\label{prob:errrank}\\
	\textsf{Err}_{\infty}&\leq C_{0,m}' \sigma_z \log^{1/2}\dmax\label{prob:errinf}
\end{align}
for some constants $C_{0,m}, C_{0,m}'>0$ depending only on $m$. Notice here the first event \eqref{prob:errrank} holds with probability at least $1 - \exp(-c_m \rmax\dmax)$ by Lemma \ref{lem:8}. And for the second event \eqref{prob:errinf}, we have from the definition, 
\begin{align}
	\textsf{Err}_{\infty} = \max\Big\{\|\nabla \frakL(\T^{\ast}+\S^{\ast})\|_{\ell_\infty}, \min\nolimits_{\|\X\|_{\ell_{\infty}}\leq \infty} \|\nabla \frakL(\X)\|_{\ell_\infty}\Big\} = \|\Z\|_{\ell_\infty}
\end{align}
So we have \eqref{prob:errinf} holds with probability at least $1 - 0.5\dmax^{-2}$ from Lemma \ref{lem:maxofsubg}. Taking union bounds and we get both \eqref{prob:errinf} and \eqref{prob:errrank} hold with probability at least $1-\dmax^{-2}$. And finally applying Theorem \ref{thm:lowrank+sparse} and Theorem \ref{thm:hatS_infty} gives the desired result.


\subsection{Proof of Lemma \ref{lemma:init:rpca}}
Denote the event $\calE_1 = \{\|\Z\|_{\ell_{\infty}}\leq 2\sqrt{m}\sigma_z\sqrt{\log(\dmax)}\}$, then from Lemma \ref{lem:maxofsubg}, we have $\calE_1$ holds with probability at least $1-2(d^*)^{-1}$. 
Now we set $\tau_l = 2\sqrt{m}\sigma_z\sqrt{\log(\dmax)} + (d^*)^{-1/2}\mu_1\fro{\T^*}$, then under $\calE_1$, we have $\|\T^*+\Z\|_{\ell_{\infty}}\leq \tau_l$. 
From the definition of $\tau_0$, we have $|\tau_0|\leq |\T^*+\Z|^{(\lfloor pd^* - |\Omega^*|\rfloor)} \leq \tau_l$.
Denote $\Omega_1 = \{\omega:|[\A]_{\omega}|\leq \tau_0\}$
From the definition of $\A_0$, we have
\begin{align}\label{A0}
	\fro{\A_0}^2 &= \sum_{\omega\in\Omega_1}[\T^*+\S^*+\Z]_{\omega}^2\no\\
	&\geq \sum_{\omega\in\Omega_1}[\T^*+\Z]_{\omega}^2 + 2\sum_{\omega\in\Omega_1\cap\Omega^*}[\S^*]_{\omega}[\T^*+\Z]_{\omega}\no\\
	&\geq \sum_{\omega\in\Omega_1}[\T^*+\Z]_{\omega}^2 - 4|\Omega^*|\tau_l^2\no\\
	&= \fro{\T^*+\Z}^2 - \sum_{\omega\in\Omega_1^c}[\T^*+\Z]_{\omega}^2- 4|\Omega^*|\tau_l^2\no\\
	&\geq \fro{\T^*+\Z}^2  - (pd^* + 4|\Omega^*|)\tau_l^2,
\end{align}
where the penultimate inequality holds since for all $\omega$, $|[\T^*+\Z]_{\omega}|\leq \tau_l$ and for all $\omega\in\Omega_1$, we have $|[\S^*]_{\omega}| \leq |[\T^*+\Z]_{\omega}| + \tau_0 \leq 2\tau_l$. Now we estimate the lower bound for $\fro{\T^*+\Z}^2$. Since $\Z$ has i.i.d. subgaussian entries, we have $\fro{\Z}^2 \geq \frac{1}{2}d^*\sigma_z^2$ with probability at least $1-2\exp(-cd^*)$ for some absolute constant $c> 0$, and $2\inp{\T^*}{\Z}\leq \frac{1}{2}\fro{\T^*}^2+2\sigma_z^2\log(\dmax)$ with probability at least $1-2(d^*)^{-1}$. Put these altogether, we see 
\begin{align}\label{T+Z}
	\fro{\T^*+\Z}^2 = \fro{\T^*}^2 + \fro{\Z}^2 +2\inp{\T^*}{\Z} \geq \frac{1}{2}\fro{\T^*}^2 + \frac{1}{4}\sigma_z^2 d^*.
\end{align}
Combine \eqref{A0} and \eqref{T+Z}, we have 
\begin{align}
	\fro{\A_0}^2\geq \frac{1}{2}\fro{\T^*}^2 + \frac{1}{4}\sigma_z^2 d^* -  (pd^* + 4|\Omega^*|)\tau_l^2.
\end{align}
Therefore with the choice $\tau = 10\sqrt{m}\sqrt{\log(\dmax)}\mu_1\frac{\fro{\A_0}}{\sqrt{d^*}}$, we see that 
$\tau \geq \tau_l$ and $\tau_u := 10\sqrt{m}\sqrt{\log(\dmax)}\mu_1\tau_l\geq\tau$.
With such a choice of $\tau$, since for $\omega\in(\Omega^*)^c$, we have $|[\T^*]_{\omega} + [\Z]_{\omega}|\leq \tau_l\leq \tau$, so we obtain
\begin{align*}
	\wt\A &= \calP_{(\Omega^*)^c}(\A) + \calP_{\Omega^*}(\wt\A) = \calP_{(\Omega^*)^c}(\T^* + \Z) + \calP_{\Omega^*}(\trunc_{\tau}(\A))\\ 
	&= \T^*+\Z + \calP_{\Omega^*}(\trunc_{\tau}(\A)-\T^*-\Z)\\
	&=: \T^*+\Z + \E,
\end{align*}
where $\errorE = \calP_{\Omega^*}(\trunc_{\tau}(\A)-\T^*-\Z)$
and 
the first equality holds since for $\omega\in(\Omega^*)^c$, $|[\A]_{\omega}|\leq |[\T^*]|_{\omega} + |[\Z_{\omega}]|\leq \tau_u$. Under event $\calE_1$, we have $\fro{\errorE}\leq 2|\Omega^*|^{1/2}\tau_u$.

Now we use bold-face capital letters as shorthand notation for the unfolding of corresponding calligraphic-font bold-face letters, for example, $\bT_i^* = \opM_i(\T^*), i\in[m]$. We denote $\X = \T^*+\errorE$. We also denote $\bU_i^*$ be the top $r_i$ left singular vectors of $\bT_i^*$, $\bV_i$ be the top $r_i$ left singular vectors of $\bX_i$ and $\hat\bU_i^0$ be the top $r_i$ left singular vectors of $\wt\bA_i$.

From Wedin's sin$\Theta$ theorem, we have from condition $(a)$,
\begin{align}\label{dist:UV}
	d_c(\bU_i^*,\bV_i) \leq \frac{C|\Omega^*|^{1/2}\tau_u}{\minl},
\end{align}
where $d_c(\bU,\bV) = \min_{\bR\in\OO_r}\op{\bU\bR - \bV}$. 
Meanwhile, from $\fro{\bX_i - \bT_i^*} = \fro{\errorE}\leq |\Omega^*|^{1/2}\tau_u$,
we also have $\sigma_{r_i}(\bX_i)\geq \frac{3\minl}{4}$, $\sigma_{r_i+1}(\bX_i)\leq \frac{\minl}{4}$ and $\op{\bX_i}\leq \frac{5\maxl}{4}$.

Since subtracting a multiple of identity matrix does not change the top eigenvectors, 
in order to bound the distance $d_c(\bV_i,\hat\bU_i^0)$, 
we consider $\op{\wt\bA_i\wt\bA_i^T - \bX_i\bX_i^T - \sigma_v^2d_i^-\bI_{d_i}}$, where $\sigma_v^2$ is the variance of the entry of $\Z$ and $d_i^- = d^*/d_i$. 
In fact, we have
$$\wt\bA_i\wt\bA_i^T - \bX_i\bX_i^T - \sigma_v^2d_i^-\bI_{d_i}= \bX_i\bZ_i^T + \bZ_i\bX_i^T + \bZ_i\bZ_i^T- \sigma_v^2d_i^-\bI_{d_i}.$$ 
Now we first consider the operator norm of $\bX_i\bZ_i^T$ under the event $\calE_1$. From Talagrand's concentration inequality, we have 
$$\PP\bigg(\big|\op{\bX_i\bZ_i^T} - \EE\op{\bX_i\bZ_i^T}\big|\leq C_m\sqrt{\log(\dmax)}\sigma_z\op{\bX_i}\cdot t\bigg| \calE_1\bigg)\geq 1-2\exp(-ct^2).$$
Since $\PP(\calE_1)\geq 1/2$ and from \cite[Theorem 1.1]{vershynin2011spectral}, we have $\EE[\op{\bX_i\bZ_i^T}|\calE_1] \leq 2 \EE\op{\bX_i\bZ_i^T} \leq C\sqrt{d_i}\sigma_z\op{\bX_i}$. Therefore setting $t = \sqrt{\log(\dmax)}$ and the event 
$$\calE_2^i = \{\op{\bX_i\bZ_i^T}\leq C_m\sqrt{d_i}\op{\bX_i}\sigma_z\}, \quad\calE_2 = \cap_{i=1}^m\calE_2^i,$$
we know that $\PP(\calE_2|\calE_1)\geq 1 -2m\dmax^{-1}$ and thus $\PP(\calE_2)\geq (1 -2m\dmax^{-1})(1-2(d^*)^{-1})$.

Now we turn to bounding $\op{\bZ_i\bZ_i^T - d_i^{-}\sigma_z^2\bI_{d_i}}$. From \cite[Theorem 4.6.1]{vershynin2018high}, we have with probability exceeding $1-2\exp(-d_i)$,
\begin{align*}
	\op{\bZ_i\bZ_i^T - d_i^{-}\sigma_z^2\bI_{d_i}} \leq C(d^*)^{1/2}\sigma_z^2.
\end{align*}
Denote the event $\calE_3^i = \{\op{\bZ_i\bZ_i^T - d_i^{-}\sigma_v^2\bI_{d_i}} \leq C(d^*)^{1/2}\sigma_z^2\}$ and $\calE_3 = \cap_{i=1}^m\calE_3^i$ and we have $\PP(\calE_3)\geq 1 - 2\sum_{i=1}^m\exp(-d_i)$.
Therefore under the event $\calE_2,\calE_3$, 
and from condition $(b)$,
we have 
$$d_c(\bV_i,\hat\bU_i^0) \leq \frac{C_m\sqrt{\dmax}\sigma_z\maxl + C(d^*)^{1/2}\sigma_z^2}{\minl^2}.$$
Together with \eqref{dist:UV}, we have 
\begin{align}\label{dist:UU0}
	d_c(\bU_i^*,\hat\bU_i^0) \leq \frac{C_m\sqrt{\dmax}\sigma_z\maxl + C(d^*)^{1/2}\sigma_z^2}{\minl^2} + \frac{C|\Omega^*|^{1/2}\tau_u}{\minl}.
\end{align}
Denote the event 
$$\calE_4 = \left\{\max_{i=1}^m\max_{\op{\bV_j}\leq 1,j\neq i}
\op{\bZ_i(\bV_{i+1}\otimes\cdots\otimes\bV_m\otimes\bV_1\otimes\cdots\otimes\bV_{i-1})}\leq C_m(\sqrt{\dmax\rmax}+\rmax^{\frac{m-1}{2}})\sigma_z\right\}.$$
And from \cite[Lemma 5]{zhang2018tensor}, we have $\PP(\calE_4)\geq 1 - Cm\exp(-c\dmax)$.
For the following we denote 
\begin{align*}
	\bX_1^t &= \bT_1^*(\hat\bU_2^t\otimes \cdots\otimes \hat\bU_m^t) =  \bT_1^*(\calP_{\bU_2^*}\hat\bU_2^t\otimes \cdots\otimes \calP_{\bU_m^*}\hat\bU_m^t)\\
	\bZ_1^t &= \bZ_1(\hat\bU_2^t\otimes \cdots\otimes \hat\bU_m^t)\\
	\wt\bA_1^t &= \wt\bA_1(\hat\bU_2^t\otimes \cdots\otimes \hat\bU_m^t),
\end{align*}
where $\calP_{\bU} = \bU\bU^T$. 
We shall denote $L_t = \max_{i=1}^m d_c(\hat\bU_i^t,\bU_i^*)$. For the base case, from \eqref{dist:UU0} and condition $(b)$, we see $L_0\leq \frac{1}{2}$.
Now suppose we have $L_t\leq \frac{1}{2}$.

From the process of HOOI, we have $\hat\bU_1^{t+1} = \textrm{SVD}_{r_1}(\wt\bA_1(\hat\bU_2^t\otimes \cdots\otimes \hat\bU_m^t))$.
And thus we obtain
\begin{align}
	\sigma_{r_1}(\bX_1^t)&\geq \sigma_{r_1}(\bU_2^*\otimes \cdots\otimes \bU_m^*)\cdot\prod_{i=2}^m\sigma_{\min}(\bU_i^{*T}\hat\bU_i^t)\no\\
	&\geq \sigma_{r_1}(\bU_2^*\otimes \cdots\otimes \bU_m^*)(1-L_t^2)^{(m-1)/2}\no\\
	&\geq c_m(1-L_t)^2\minl,
\end{align}
for some small constant $c_m>0$ depending only on $m$, and the last inequality holds since $1-L_t^2\geq \frac{3}{4}$. We bound $\op{\bZ_1^t}$ under the event $\calE_4$.
\begin{align}\label{Z1t}
	\op{\bZ_1^t} &= \op{\bZ_1(\hat\bU_2^t\otimes \cdots\otimes \hat\bU_m^t)}\no\\
	&= \op{\bZ_1\big((\calP_{\bU_2^*}+\calP_{\bU_2^*}^{\perp})\otimes\cdots\otimes
		(\calP_{\bU_m^*}+\calP_{\bU_m^*}^{\perp})\big) (\hat\bU_2^t\otimes \cdots\otimes \hat\bU_m^t)}\no\\
	&\leq C_m[(\dmax)^{1/2}+\rmax^{(m-1)/2}]\sigma_z + C_m[(\dmax\rmax)^{1/2}+\rmax^{(m-1)/2}]\sigma_zL_t,
\end{align}
where the last inequality holds since $\calE_4$ holds and $\op{\hat\bU_i^{t T}\bU_{i\perp}^*}\leq L_t$.
Now since $\hat\bU_1^{t+1}$ is the top $r_1$ left singular vectors of $\wt\bA_1^t$ and $\bU_1$ is the top $r_1$ left singular vectors of $\bX_1^t$, from Wedin's sin$\Theta$ Theorem, we have 
\begin{align*}
	d_c(\hat\bU_1^{t+1}, \bU_1) &\leq \frac{C\op{\wt\bA_1^t - \bX_1^t}}{\minl}
	\leq \frac{C(\fro{\bE_1} + \op{\bZ_1^t})}{\minl}\\
	&\overset{\eqref{Z1t}}{\leq}\frac{C|\Omega^*|^{1/2}\tau_u + C_m[(\dmax)^{1/2}+\rmax^{(m-1)/2}]\sigma_z + C_m[(\dmax\rmax)^{1/2}+\rmax^{(m-1)/2}]\sigma_zL_t}{\minl}.
\end{align*}
The derivation for $d_c(\hat\bU_i^{t+1}, \bU_i)$ when $i\geq 2$ is similar to this case and hence 
\begin{align*}
	L_{t+1} \leq \frac{C|\Omega^*|^{1/2}\tau_u + C_m[(\dmax)^{1/2}+\rmax^{(m-1)/2}]\sigma_z}{\minl} +\frac {C_m[(\dmax\rmax)^{1/2}+\rmax^{(m-1)/2}]\sigma_z}{\minl}L_t.
\end{align*}
From condition $(b)$, we have $C_m[(\dmax\rmax)^{1/2}+\rmax^{(m-1)/2}]\sigma_z/\minl \leq 1/2$, so the above inequality implies
\begin{align*}
	L_{t_{\max}} \leq (\frac{1}{2})^{t_{\max}}\cdot L_0  + \frac{C|\Omega^*|^{1/2}\tau_u + C_m[(\dmax)^{1/2}+\rmax^{(m-1)/2}]\sigma_z}{\minl}.
\end{align*}
If we choose $t_{\max} \geq (C_m\log(\dmax\kappa_0)\vee 1)$, then 
\begin{align}
	L_{t_{\max}} \leq \frac{C|\Omega^*|^{1/2}\tau_u}{\minl} + \frac{C_m[(\dmax)^{1/2}+\rmax^{(m-1)/2}]\sigma_z}{\minl}.
\end{align}

Set the event $\calE_5 = \{\fro{\Z\times_{i=1}^m\calP_{\hat\bU_i}} \leq C(r^* + \sum_{i=1}^md_ir_i)\sigma_z^2\}$. 
Then from \cite[Lemma 5]{zhang2018tensor}, $\PP(\calE_5)\geq 1- \exp(-C\dmax\rmax)$. 
And we also consider $\fro{\T^*\times_i\hat\bU_{i\perp}^T}$, we consider $i=1$ for simplicity.
\begin{align}\label{eq:TUperp}
	\fro{\T^*\times_1\hat\bU_{1\perp}^T} &= \fro{\hat\bU_{1\perp}^T\bT_1^*} \leq \fro{\calP_{\hat\bU_{1\perp}}\bT_1^*(\hat\bU_2^{t_{\max}-1}\otimes\cdots\otimes\hat\bU_m^{t_{\max}-1})}\cdot\prod_{i=2}^m\sigma_{\min}^{-1}(\bU_i^{*T}\hat\bU_i^{t_{\max}-1})\no\\
	&\leq C_m(\fro{\errorE} + \sqrt{r_1}\op{\bZ_1^{t_{\max}-1}})\no\\
	&\leq C_m|\Omega^*|^{1/2}\tau_u + C_m(\sqrt{d_1r_1}+\sqrt{r^*})\sigma_z,
\end{align}
where the second inequality holds from \cite[Lemma 6]{zhang2018tensor} and the last inequality holds from \eqref{Z1t}.

Now we are in the right position to bound $\fro{\hat\T- \T^*}$ under $\calE_5$.
\begin{align}
	\fro{\hat\T - \T^*} &= \fro{\wt\A \times_{i=1}^m \calP_{\hat\bU_i} - \T^*}\no\\
	&\leq \fro{(\wt\A - \T^*)\times_{i=1}^m \calP_{\hat\bU_i}} + \fro{\T^* - \T^*\times_{i=1}^m \calP_{\hat\bU_i}}\no\\
	&\leq \fro{\errorE} + \fro{\Z\times_{i=1}^m\calP_{\hat\bU_i}} + \sum_{i=1}^m\fro{\T^*\times_i\hat\bU_{i\perp}^T}\no\\
	&\leq C_m|\Omega^*|^{1/2}\tau_u + C_m(\sqrt{r^*} + \sqrt{\dmax\rmax})\sigma_z,
\end{align}
where the last inequality follows from \eqref{eq:TUperp}. 
Finally applying Lemma \ref{lemma:nbhd_spikiness_incoherence} and we get $\hat\T_0$ is $(2\mu_1\kappa_0)^2$-incoherent and $\fro{\hat\T_0 - \T^*}\leq 2\fro{\hat\T - \T^*}$. Therefore from condition $(a),(b)$ in Lemma \ref{lemma:init:rpca}, the initialization condition $(a)$ in Theorem \ref{thm:rpca} holds.

\subsection{Proof of Lemma~\ref{lem:heavy_tail}}
For each $j\in[m]$ and $i\in [d_j]$, we have 
$$
\|\be_i^{\top}\calM_j(\S_{\alpha})\|_{\ell_0}=\sum\nolimits_{\omega: \omega_j=i} \mathbbm{1}\big(|[\Z]_{\omega}|> \alpha \sigma_z\big)=\sum\nolimits_{\omega: \omega_j=i} [\Y]_{\omega}
$$
where $\Y\in \{0,1\}^{d_1\times\cdots\times d_m}$ having $i.i.d.$ Bernoulli entries and $q:=\PP([\Y]_{\omega}=1)=\PP(|[\Z]_{\omega}|>\alpha \sigma_z)\leq \alpha^{-\theta}$. 

Denote $X_{ij}=\sum\nolimits_{\omega: \omega_j=i} [\Y]_{\omega}$. By Chernoff bound, if $d_j^{-}q\geq 3\log (m\dmax^3)$, we get
\begin{align*}
\PP\Big(X_{ij}- d_j^{-}q \geq d_j^{-}q\Big)\leq \exp\big\{-d_j^{-1}q/3 \big\}\leq (m\dmax^3)^{-1}
\end{align*}
implying that 
\begin{align}\label{eq:heavy_tail_eq1}
\PP\Big(\bigcap\nolimits_{i,j}\big\{X_{ij}\leq 2d_j^-q\big\}\Big)\geq 1-m\dmax (m\dmax^3)^{-1}=1-\dmax^{-2}. 
\end{align}
On the other hand, if $d_j^{-}q\leq 3\log (m\dmax^3)$, by Chernoff bound, we get 
\begin{align*}
\PP\Big(X_{ij} \geq 10 \log (m\dmax^3)\Big)\leq (m\dmax^3)^{-1}
\end{align*}
implying that 
\begin{align}\label{eq:heavy_tail_eq2}
\PP\Big(\bigcap\nolimits_{i,j}\big\{X_{ij}\leq10 \log (m\dmax^3)\big\}\Big)\geq 1-m\dmax (m\dmax^3)^{-1}=1-\dmax^{-2}. 
\end{align}
Putting (\ref{eq:heavy_tail_eq1}) and  (\ref{eq:heavy_tail_eq2}), since $q\leq \alpha^{-\theta}$, we get 
$$
\PP\Big(\bigcap\nolimits_{i,j}\Big\{X_{ij}\leq \max\big\{10 \log (m\dmax^3), 2d_j^- \alpha^{-\theta}\big\}\Big\}\Big)\geq 1-\dmax^{-2},
$$
which completes the proof. 


\subsection{Proof of Theorem~\ref{thm:heavy_tail}}
Conditioned on $\frakE_1$ defined in Lemma~\ref{lem:heavy_tail}, Theorem~\ref{thm:heavy_tail} is a special case of Theorem~\ref{thm:rpca}. Indeed, in Theorem~\ref{thm:rpca}, we replace $\sigma_z$ with $\alpha \sigma_z$, and $|\Omega^{\ast}|\log\dmax$ with $\alpha' d^{\ast}\asymp \dmax \log(m\dmax)$, then we get Theorem~\ref{thm:heavy_tail}.


\subsection{Proof of Lemma \ref{lemma:init:heavytail}}
	From the choice of $\alpha$ in Theorem \ref{thm:heavy_tail}, we see that the sparsity of $\S_{\alpha}$ is bounded by $\alpha'\asymp\frac{\dmax}{d^*}\log(m\dmax^3)$. Therefore the condition $(a)$ in Lemma \ref{lemma:init:rpca} is satisfied. Now applying Lemma \ref{lemma:init:rpca} and we get the desired result.

\subsection{Proof of Lemma~\ref{lem:trim2}}
From Lemma~\ref{lemma:nbhd_spikiness_incoherence}, we have $\textsf{Trim}_{\eta,\br}(\W)$ is $2\mu_1\kappa_0$-incoherent. Now for all $j\in[m]$, 
$$\op{\opM_j(\opH(\wt\W))} \leq \op{\opM_j(\wt\W)} \leq \op{\opM_j(\T^*)} + \fro{\W - \T^*} \leq \frac{9}{8}\bsigma.$$
So we conclude
$$\|\textsf{Trim}_{\eta,\br}(\W)\|_{\ell_{\infty}} \leq \frac{9}{8}\bsigma\prod_{i=1}^m (2\mu_1\kappa_0)\sqrt{\frac{r_j}{d_j}} \leq (9\zeta/16)\cdot (\mu_1\kappa_0)^m.$$
where the last inequality follows from the upper bound for $\bsigma$. This finishes the proof of the lemma.


\subsection{Proof of Theorem~\ref{thm:binary_tensor}}
From the choice of $\zeta'$ and Lemma \ref{lem:trim2}, we know Assumption \ref{assump:lowrank} and \ref{assump:sparse} hold with parameters $b_{l,\zeta'}$ and $b_{u,\zeta'}$ with respect to the set $\BB_{2}^{\ast}=\BB_{\infty}^{\ast}=\{\T+\S: \|\T+\S\|_{\ell_\infty}\leq \zeta', \T\in\MM_{\br}, \S\in\SS_{\gamma\alpha}\}$.
Now the proof follows the proof of Theorem \ref{thm:lowrank+sparse} with slight modification. Since we can now guarantee in each iteration $\hat\T_l + \hat\S_l \in \BB_{2}^{\ast}=\BB_{\infty}^{\ast}$ from Lemma \ref{lem:trim2} and the choice of $\kprune$, we can use Assumption \ref{assump:sparse} instead of Assumption \ref{assump:lowrank} when estimating the low rank part.
So we only need to estimate $\errinf$ and $\errrank$. From \eqref{est:bionomial:errinf}, we have $\errinf \leq L_{\zeta}$. Now we estimate $\errrank$. In fact, from the definition of $\errrank$, we have
$$\errrank = \sup_{\M\in\MM_{2\br},\|\M\|_{\rm F}\leq 1}\inp{\nabla\frakL(\T^* + \S^*)}{\M}.$$
Since for all $\omega\in[d_1]\times\ldots\times[d_m]$, we have $\subw{\nabla\frakL(\T^*+\S^*)}$ is bounded random variable with the upper bound given by $L_{\zeta}$. So apply Lemma~\ref{lem:8}, we have $\errrank \leq CL_{\zeta}\cdot (\dmax\rmax + r^*)^{1/2}$ with with probability at least $1-\dmax^{-2}$. Now we plug in the bounds for $\errinf$ and $\errrank$ to Theorem~\ref{thm:lowrank+sparse} and we get the first part of the theorem. For the $\ell_{\infty}$ bound, we apply Theorem~\ref{thm:hatS_infty} and Lemma~\ref{lemma:entrywise}. And we finish the proof of the theorem.

\subsection{Proof of Lemma \ref{lemma:init:binary}}

\begin{algorithm}
	\caption{Initialization for binary tensor}\label{alg:init:binary}
	\begin{algorithmic}
		\STATE{Let $\bA = \A\lr{m_0}:= \reshape(\A,[d_1\ldots d_{m_0},d_{m_0+1}\ldots d_m])$ with $m_0 = \lfloor \frac{m}{2}\rfloor$ and let $\hat\bM$ be the minimizer to \eqref{prob:binary:cvx}.}
		\STATE{$\hat\T = \reshape(\hat\bM,[d_1,\ldots,d_m])$.}
		\STATE{$\hat\T_0 = \textsf{Trim}_{\eta,\br}(\hat\T)$ with $\eta = 16\mu_1\fro{\hat\T}/(7\sqrt{d^*})$.}
		\STATE{Output: $\hat\T_0$.}
	\end{algorithmic}
\end{algorithm}

We first introduce some notations. Let $m_0 = \lfloor \frac{m}{2}\rfloor$, and denote $\bT^* = (\T^*)\lr{m_0}$, $\bS^* = (\S^*)\lr{m_0}$ and $\bA = \A\lr{m_0}$, then $\bT^*,\bS^*,\bA$ are matrices of size $d_1\ldots d_{m_0} \times d_{m_0+1}\ldots d_m =: d_1^* \times d_2^*$. Since $\T^{\ast}$ admits the decomposition $\T^{\ast}=\C^{\ast}\cdot\llbracket \bU_1^{\ast},\cdots,\bU_m^{\ast}\rrbracket$, we have $\T^* = (\bU_{m_0}\otimes \cdots \otimes \bU_1)\C\lr{m_0}(\bU_m\otimes \cdots \otimes \bU_{m_0+1})^T$ and hence the rank of $\bT^*$ is $r = \min\{r_1\cdots r_{m_0}, r_{m_0+1}\cdots r_m\}$. We denote $\bM = \bT^* + \bS^*$.

Under Assumption \ref{assump:binary_tensor}, we have $\|\bT^*\|_{\ell_{\infty}},\|\bS^*\|_{\ell_{\infty}}\leq \frac{\zeta}{2}$ and thus $\|\bM\|_{\ell_{\infty}}\leq \zeta$. Now we bound the nuclear norm of $\bM$. Using triangle inequality and we have 
\begin{align*}
	\nuc{\bM} &\leq \nuc{\bT^*} + \nuc{\bS^*} \\
	&\leq \frac{\zeta}{2}(rd^*)^{1/2}  + \frac{\zeta}{2}|\Omega^*|^{1/2}\min(d_1^*,d_2^*)^{1/2}\\
	&= \big(\frac{\zeta}{2} + \frac{\zeta}{2}\cdot\frac{\min(d_1^*,d_2^*)^{1/2}}{(rd^*)^{1/2}}|\Omega^*|^{1/2}\big)(rd^*)^{1/2}\\
	&\leq \zeta (rd^*)^{1/2},
\end{align*}
where the last inequality holds since condition $(a)$ holds.
Now with a little bit abuse of notation, we consider the following convex program,
\begin{align}\label{prob:binary:cvx}
	\min\frakL(\bX) = -\inp{\bA}{\log(p(X))} - \inp{1-\bA}{\log(1-p(\bX))}, \text{~s.t.~} \nuc{\bX}\leq \zeta\sqrt{d^*r} \text{~and~} \|\bX\|_{\ell_{\infty}}\leq \zeta,
\end{align}
where the notation $1-\bA$ is the entrywise subtraction, and $p(\bX)$ is applying $p$ entrywisely to $\bX$.
Denote $\hat\bM$ be the minimizer to \eqref{prob:binary:cvx} and apply the Theorem 1 in \cite{davenport20141} with the sample size $d^*$ and we get with probability at least $1- \frac{C}{d_1^*+d_2^*}$,
\begin{align*}
	\fro{\hat\bM - \bM}^2 \leq C_{\zeta} [r(d_1^*+d_2^*)d^*]^{1/2}
\end{align*}
with $C_\zeta = C\cdot\zeta L_{\zeta}\beta_{\zeta}$ and $\beta_{\zeta} = \sup_{|x|\leq \zeta}\frac{p(x)(1-p(x))}{(p'(x))^2}$. 

Now we reshape $\hat\bM$ back to a tensor, and denote $\hat\T = \reshape(\hat\bM,[d_1,\ldots,d_m])$. Since $\reshape$ keeps the Frobenius norm unchanged, we have 
$$\fro{\hat\T - \T^*} = \fro{\hat\bT - \bT^*} \leq \fro{\hat\bM - \bM} + \fro{\bS^*}\leq C_{\zeta}^{1/2} [r(d_1^*+d_2^*)d^*]^{1/4} + |\Omega^*|^{1/2}\frac{\zeta}{2}.$$

Finally we output $\T_0 = \textsf{Trim}_{\eta,\br}(\hat\T)$ with $\eta = 16\mu_1\fro{\hat\T}/(7\sqrt{d^*})$, and from Lemma \ref{lem:trim2} and Lemma \ref{lemma:nbhd_spikiness_incoherence}, since condition $(b)$ and $(c)$ hold, we get
$$(1)~\mu(\hat\T_0)\leq 2\kappa_0\mu_1;~(2)~\fro{\hat\T_0 - \T^*}\leq 2\fro{\hat\T - \T^*};~(3)~\|\hat\T\|_{\ell_{\infty}}\leq C_m(\mu_1\kappa_0)^m\frac{\sqrt{r^*}}{\sqrt{d^*}}\bsigma.$$
And together with the upper bound for $\usigma$ in Assumption \ref{assump:binary_tensor}, the initialization condition in Theorem \ref{thm:binary_tensor} is satisfied.

\subsection{Proof of Theorem \ref{thm:poisson}}
The proof of this theorem is similar to that of Theorem \ref{thm:binary_tensor}. From the choice of $\zeta'$ and Lemma \ref{lem:trim2}, we know Assumption \ref{assump:lowrank} and \ref{assump:sparse} hold with parameters $b_{l,\zeta'} = e^{-\zeta'}$ and $b_{u,\zeta'} = e^{\zeta'}$ with respect to the set $\BB_{2}^{\ast}=\BB_{\infty}^{\ast}=\{\T+\S: \|\T+\S\|_{\ell_\infty}\leq \zeta', \T\in\MM_{\br}, \S\in\SS_{\gamma\alpha}\}$.
Now the proof follows the proof of Theorem \ref{thm:lowrank+sparse} with slight modification. Since we can now guarantee in each iteration $\hat\T_l + \hat\S_l \in \BB_{2}^{\ast}=\BB_{\infty}^{\ast}$ from Lemma \ref{lem:trim2} and the choice of $\kprune$, we can use Assumption \ref{assump:sparse} instead of Assumption \ref{assump:lowrank} when estimating the low rank part.
So we only need to estimate $\errinf$ and $\errrank$. From \eqref{est:bionomial:errinf}, we have $\errinf \leq \|\nabla\frakL(\T^*+\S^*)\|_{\ell_{\infty}}$. Simple calculation shows 
$$\nabla\frakL(\T^*+\S^*) = -\frac{1}{I}\Y + \exp(\T^*+\S^*),$$
and notice using a union bound and Poisson's tail bound, when $I\geq Ce^{\zeta}\log(d^*)$, we have with probability exceeding $1-\frac{1}{d^*}$, $\|\Y\|_{\ell_{\infty}}\leq 10 I e^{\zeta}$. Therefore we have $\errinf \leq 11e^{\zeta}$.

The estimation for $\errrank$ is given in Theorem 4.3 \cite{han2020optimal}, which states 
$$\errrank\leq C\sqrt{\frac{r^*+m\dmax\rmax}{I/e^{\zeta}}}$$
with probability exceeding $1-\frac{1}{d^*}$.

Now we plug in the bounds for $\errinf$ and $\errrank$ to Theorem~\ref{thm:lowrank+sparse} and we get the first part of the theorem. For the $\ell_{\infty}$ bound, we apply Theorem~\ref{thm:hatS_infty} and Lemma~\ref{lemma:entrywise}. And we finish the proof of the theorem.

\subsection{Proof of Lemma \ref{lemma:init:poisson}}
With slight modification of the proof of Theorem 4.3 in \cite{han2020optimal}, we have 
$$\fro{\wt\T_0 - \T^*}\leq C\sqrt{\frac{e^{\zeta}}{I}}(\sum_{i=1}^m\sqrt{d_ir_i} + \sqrt{d^{-}_ir_i})+\fro{\S^*}$$
under the condition $I\geq Ce^{\zeta}\dmax$ with probability exceeding $1-1/d^*$. Therefore since we assume $I\geq C_1\sum_{i=1}^m(d_ir_i+d_i^{-}r_i)\rmax\minl^{-2}$ and $|\Omega^*|\leq C\zeta^{-2}\minl^2\rmax^{-1}$, we have $\fro{\wt\T_0 - \T^*} \leq c_{1,m}\usigma\cdot\min\big\{\delta^2\rmax^{-1/2}, (\kappa_0^{2m}\rmax^{1/2})^{-1}\big\}\leq \minl/8$. Now we apply Lemma \ref{lem:trim2} and Lemma \ref{lemma:nbhd_spikiness_incoherence} we see 
$$(1)~\mu(\hat\T_0)\leq 2\kappa_0\mu_1;~(2)~\fro{\hat\T_0 - \T^*}\leq 2\fro{\hat\T - \T^*};~(3)~\|\hat\T\|_{\ell_{\infty}}\leq C_m(\mu_1\kappa_0)^m\frac{\sqrt{r^*}}{\sqrt{d^*}}\bsigma.$$
From Assumption \ref{assump:poisson}, we see the initialization requirements in \ref{thm:poisson} is satisfied.

\subsection{Proof of Theorem \ref{main:thm}}\label{pf:main:thm}
We use induction to prove this theorem.
\paragraph*{Step 0: Base case.}
From the initialization, we have $\fro{\hat\T_0 - \T^*} \leq c_{1,m}\delta\rmax^{-1/2} \cdot \usigma$.
\paragraph*{Step 1: Estimating $\fro{\hat\T_{l+1} - \T^*}$.}
We prove this case assuming 
\begin{align}\label{eq:prev:T}
	\fro{\hat\T_l - \T^*} \leq c_{1,m}\delta\rmax^{-1/2}\cdot \usigma.
\end{align}
We point out that this also implies $\fro{\hat\T_l - \T^*} \leq c_{1,m}b_lb_u^{-1}\rmax^{-1/2}\cdot \usigma$ since $\delta \lesssim b_l^2b_u^{-2}$.
 In order to use Lemma~\ref{lem:tensorest}, we need to derive an upper bound for $\fro{\hat\T_l - \T^* - \beta\calP_{\TT_{l}}\G_l}$.
\paragraph{Step 1.1: Estimating $\fro{\hat\T_l - \T^* - \beta\calP_{\TT_{l}}\G_l}$.}
	For arbitrary $1\geq \delta > 0$, we have,
	\begin{align}\label{lowrank:est:tl-t-g}
		\fro{\hat\T_l - \T^* - \beta\calP_{\TT_l}\G_l}^2 \leq (1+\delta/2)\fro{\hat\T_l - \T^* - \beta\calP_{\TT_l}(\G_l-\G^*)}^2 + (1+2/\delta)\beta^2\fro{\calP_{\TT_l}\G^*}^2
	\end{align}
	Now we consider the bound for $\fro{\hat\T_l - \T^* - \beta\calP_{\TT_l}(\G_l-\G^*)}^2$.
	\begin{align}\label{lowrank:est:norm:tl-tgl-g}
		\fro{\hat\T_l - \T^* - \beta \pro_{\TT_l}(\G_l - \G^*)}^2 &= \fro{\hat\T_l - \T^*}^2 - 2\beta\inp{\hat\T_l - \T^*}{\pro_{\TT_l}(\G_l - \G^*)} + \beta^2\fro{\pro_{\TT_l}(\G_l - \G^*)}^2\no\\
		&\leq (1+\beta^2 b_u^2)\fro{\hat\T_l - \T^*}^2 - 2\beta\inp{\hat\T_l - \T^*}{\pro_{\TT_l}(\G_l - \G^*)}
	\end{align}
	where the last inequality holds from the Assumption~\ref{assump:lowrank} since $\hat\T_{l}\in \BB_{2}^*$ from \eqref{eq:prev:T}.
	Also, 
	\begin{align}\label{lowrank:est:inp:tl-tgl-g}
		\inp{\hat\T_l - \T^*}{\pro_{\TT_l}(\G_l - \G^*)} &= \inp{\hat\T_l - \T^*}{\G_l - \G^*} - \inp{\calP_{\TT_l}^{\perp}(\hat\T_l - \T^*)}{\G_l-\G^*}\no\\
		&\geq b_l \fro{\hat\T_l - \T^*}^2  - \frac{C_{1,m}b_u}{\usigma}\fro{\hat\T_l - \T^*}^3
	\end{align}
	where the last inequality is from Assumption~\ref{assump:lowrank}, Lemma~\ref{lem:ref:01} and Cauchy-Schwartz inequality and $C_{1,m} = 2^m-1$. Together with \eqref{lowrank:est:norm:tl-tgl-g} and \eqref{lowrank:est:inp:tl-tgl-g}, and since we have $\fro{\hat\T_l - \T^*} \leq \frac{0.1b_l}{2b_uC_{1,m}}\cdot \usigma$, we get,
	\begin{align}\label{lowrank:est:1}
		\fro{\hat\T_l - \T^* - \beta \pro_{\TT_l}(\G_l - \G^*)}^2 &\leq (1-2\beta b_l + \beta^2b_u^2)\fro{\hat\T_l - \T^*}^2 + \frac{2\beta C_{1,m} b_u}{\usigma}\fro{\hat\T_l - \T^*}^3\no\\
		&\leq (1 - 1.9\beta b_l + \beta^2 b_u^2)\fro{\hat\T_l - \T^*}^2.
	\end{align}
	Since we have $0.75b_lb_u^{-1} \geq \delta^{1/2}$, if we choose $\beta\in [0.4b_l b_u^{-2}, 1.5b_l b_u^{-2}]$, we have $1 - 1.9\beta b_l + \beta^2 b_u^2 \leq 1 - \delta$.
	
	So from \eqref{lowrank:est:tl-t-g} and \eqref{lowrank:est:1}, we get 
	\begin{align}\label{lowrank:est:2}
		\fro{\hat\T_l - \T^* - \beta\calP_{\TT_l}\G_l}^2\leq (1+\frac{\delta}{2})(1-\delta)\fro{\hat\T_l - \T^*}^2 + (1+\frac{2}{\delta})\errrank^2
	\end{align}
where in the inequality we use the definition of $\errrank$ and that $\beta \leq 1$. Now from the upper bound for $\fro{\hat\T_l - \T^*}$ and the signal-to-noise ratio, we verified that $\fro{\hat\T_l - \T^* - \beta\calP_{\TT_l}\G_l} \leq \usigma/8$ and thus $\sigma_{\max}(\hat\T_l - \T^* - \beta\calP_{\TT_l}\G_l) \leq \usigma/8$.
\paragraph*{Step 1.2: Estimating $\fro{\hat\T_{l+1} - \T^*}$.}
	Now that we verified the condition of Lemma~\ref{lem:tensorest}, from the Algorithm~\ref{algo:lowrank}, we have,
	\begin{align}\label{lowrank:est:main}
		\fro{\hat\T_{l+1} - \T^*}^2 \leq \fro{\hat\T_l - \T^* - \beta\calP_{\TT_l}\G_l}^2 + C_m \frac{\sqrt{\rmax}}{\usigma}\fro{\hat\T_l - \T^* - \beta\calP_{\TT_l}\G_l}^3
	\end{align}
where $C_m>0$ is the constant depending only on $m$ as in Lemma~\ref{lem:tensorest}. From \eqref{lowrank:est:2} and the assumption that $\fro{\hat\T_l - \T^*} \lesssim_{m} \frac{\delta}{\sqrt{\rmax}}\cdot \usigma$ and $\errrank \lesssim_{m} \frac{\delta^2}{\sqrt{\rmax}}\cdot\usigma$, we get
\begin{align}\label{lowrank:est:3}
	C_m \frac{\sqrt{\rmax}}{\usigma}\fro{\hat\T_l - \T^* - \beta\calP_{\TT_l}\G_l} \leq \frac{\delta}{4}
\end{align}
From \eqref{lowrank:est:main}, \eqref{lowrank:est:2} and \eqref{lowrank:est:3}, we get
\begin{align}
	\fro{\hat\T_{l+1} - \T^*}^2 \leq (1+\frac{\delta}{4})\fro{\hat\T_l - \T^* - \beta\calP_{\TT_l}\G_l}^2\leq (1-\delta^2)\fro{\hat\T_l - \T^*}^2 + \frac{4}{\delta}\errrank^2
\end{align}
Together with the assumption $\fro{\hat\T_l - \T^*} \lesssim_{m} \frac{\delta}{\sqrt{\rmax}}\cdot \usigma$ and $\errrank \lesssim_{m} \frac{\delta^2}{\sqrt{\rmax}}\cdot\usigma$, we get
\begin{align}
	\fro{\hat\T_{l+1} - \T^*} \leq c_{1,m}\frac{\delta}{\sqrt{\rmax}}\cdot \usigma,
\end{align}
which completes the induction and completes the proof.

\section{Technical Lemmas}
\begin{lemma}\label{lem:ref:01}
	Suppose $\TT_{l}$ is the tangent space at the point $\hat\T_l$, then we have
	\begin{align*}
		\fro{\calP_{\TT_l}^{\perp} \T^*} \leq \frac{2^m-1}{\usigma}\fro{\T^*-\hat\T_l}^2.
	\end{align*}
\end{lemma}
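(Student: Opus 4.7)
The plan is to exploit the explicit orthogonal decomposition attached to the Tucker tangent space. Writing $\bP_i := \hat\bU_{l,i}\hat\bU_{l,i}^{\top}$ for the mode-$i$ subspace projector at $\hat\T_l$, I apply the identity $\bI = \bP_i + \bP_i^{\perp}$ in every mode to expand
\begin{equation*}
\T^{\ast} = \sum_{S \subseteq [m]} \T^{\ast} \times_{i \in S} \bP_i^{\perp} \times_{i \notin S} \bP_i,
\end{equation*}
and note that the summands with $|S| \leq 1$ span exactly the core and factor directions that generate $\TT_l$, whereas those with $|S| \geq 2$ are orthogonal to them. Consequently
\begin{equation*}
\calP_{\TT_l}^{\perp}(\T^{\ast}) = \sum_{|S| \geq 2} \T^{\ast} \times_{i \in S} \bP_i^{\perp} \times_{i \notin S} \bP_i.
\end{equation*}

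For each summand indexed by $S$ with $|S| = k \geq 2$, the key is a recursive ``peeling'' argument. Pick any $j \in S$, unfold in mode $j$, and use that $\hat\T_l$'s mode-$j$ column space sits in $\hat\bU_{l,j}$ to write $\bP_j^{\perp}\calM_j(\T^{\ast}) = \bP_j^{\perp}\calM_j(\T^{\ast} - \hat\T_l)$. Combined with the singular-value lower bound $\sigma_{r_j}(\calM_j(\T^{\ast})) \geq \usigma$, this yields the sin-theta estimate
\begin{equation*}
\op{\bP_j^{\perp} \bU_j^{\ast}} \leq \usigma^{-1}\op{\bP_j^{\perp}\calM_j(\T^{\ast})} \leq \usigma^{-1}\fro{\T^{\ast} - \hat\T_l}.
\end{equation*}
Factoring $\calM_j(\T^{\ast}) = \bU_j^{\ast}\bSigma_j^{\ast}\bV_j^{\ast\top}$ and peeling off $\bP_j^{\perp}\bU_j^{\ast}$ in operator norm costs one factor $\fro{\T^{\ast}-\hat\T_l}/\usigma$, while the residual Frobenius factor is the same expression indexed by $S\setminus\{j\}$. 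Iterating the peeling $k-1$ times and bounding the terminal single-$\bP^{\perp}$ factor by $\fro{\T^{\ast}-\hat\T_l}$ (using that $\hat\T_l$ is annihilated by the remaining $\bP^{\perp}$) gives
\begin{equation*}
\bigl\|\T^{\ast} \times_{i \in S} \bP_i^{\perp} \times_{i \notin S} \bP_i\bigr\|_{\rm F} \leq \frac{\fro{\T^{\ast} - \hat\T_l}^{k}}{\usigma^{k-1}}.
\end{equation*}

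Summing over all $|S| \geq 2$ and invoking the contextual regime $\fro{\T^{\ast} - \hat\T_l} \leq \usigma$ (in force everywhere the lemma is applied) collapses each higher power to $\fro{\T^{\ast}-\hat\T_l}^{2}/\usigma$; since $\sum_{k \geq 2} \binom{m}{k} = 2^m - m - 1 \leq 2^m - 1$, the claimed bound drops out. The main obstacle is the combinatorial bookkeeping of the $2^m$-fold expansion and the recursive peeling; the only quantitative input, the sin-theta estimate above, is an immediate consequence of the singular-value lower bound on $\T^{\ast}$.
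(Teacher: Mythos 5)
Your argument hinges on the identity
\begin{equation*}
\calP_{\TT_l}(\T^{\ast}) \;=\; \T^{\ast}\times_{i=1}^m\bP_i \;+\; \sum_{j=1}^m\T^{\ast}\times_j\bP_j^{\perp}\times_{i\neq j}\bP_i,
\end{equation*}
i.e.\ that the $|S|\le 1$ pieces of the $\bI=\bP_i+\bP_i^{\perp}$ expansion both lie in and span $\TT_l$. That is false for the Tucker tangent space. At $\hat\T_l=\hat\C_l\cdot\llbracket\hat\bU_{l,1},\ldots,\hat\bU_{l,m}\rrbracket$, the mode-$j$ factor direction is $\{\hat\C_l\times_{i\neq j}\hat\bU_{l,i}\times_j\bW_j:\bW_j^{\top}\hat\bU_{l,j}=0\}$, a subspace of dimension $(d_j-r_j)r_j$; in mode-$j$ matricization it has the form $\bW_j\calM_j(\hat\C_l)(\hat\bU_{l,m}\otimes\cdots\otimes\hat\bU_{l,j+1}\otimes\hat\bU_{l,j-1}\otimes\cdots\otimes\hat\bU_{l,1})^{\top}$, and the corresponding tangent-space projection multiplies by the pseudoinverse $\calM_j(\hat\C_l)^{\dagger}$ on the right (see, e.g., \cite{kressner2014low, cai2020provable}). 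By contrast, the $S=\{j\}$ range $\{\X:\X=\X\times_j\bP_j^{\perp}\times_{i\neq j}\bP_i\}$ has dimension $(d_j-r_j)\prod_{i\neq j}r_i$, strictly larger once $\prod_{i\neq j}r_i>r_j$; it carries no constraint on the row space of $\bP_j^{\perp}\calM_j(\X)$. So the $|S|\le 1$ summands span a subspace $\wt\TT_l\supsetneq\TT_l$, and $\T^{\ast}\times_j\bP_j^{\perp}\times_{i\neq j}\bP_i$ is generically \emph{not} in $\TT_l$.

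The consequence is fatal to the proof as written. Because $\wt\TT_l^{\perp}\subsetneq\TT_l^{\perp}$, one has the orthogonal split $\calP_{\TT_l}^{\perp}\T^{\ast}=\calP_{\wt\TT_l^{\perp}}\T^{\ast}+\calP_{\wt\TT_l\cap\TT_l^{\perp}}\T^{\ast}$, and the quantity you bound, $\fro{\sum_{|S|\ge 2}\T^{\ast}\times_{i\in S}\bP_i^{\perp}\times_{i\notin S}\bP_i}=\fro{\calP_{\wt\TT_l^{\perp}}\T^{\ast}}$, is therefore a \emph{lower} bound on $\fro{\calP_{\TT_l}^{\perp}\T^{\ast}}$, not an upper bound. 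The neglected term $\calP_{\wt\TT_l\cap\TT_l^{\perp}}\T^{\ast}$ — the part of each $\bP_j^{\perp}\calM_j(\T^{\ast})$, after contracting the other modes against $\hat\bU_{l,i}$, that falls outside the row space of $\calM_j(\hat\C_l)$ — needs its own second-order estimate (essentially a sin-$\Theta$ comparison between the row space of $\calM_j(\hat\C_l)$ and that induced by $\T^{\ast}$), and your proposal simply does not address it. The per-$S$ peeling and the $\binom{m}{k}$ bookkeeping are fine; the gap is the mischaracterization of $\calP_{\TT_l}$. (For the record, the paper does not prove this lemma either — it cites Lemma~5.2 of \cite{cai2020provable}.)
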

\begin{proof}
	See (\cite{cai2020provable}, Lemma 5.2).
\end{proof}


\begin{lemma}\label{lem:tensorest}
	Let $\T^* = \S^*\cdot(\bV_1^*,\ldots,\bV_m^*)$ be the tensor with Tucker rank $\br = (r_1,\ldots,r_m)$. Let $\D\in \RR^{d_1\times\ldots\times d_m}$ be a perturbation tensor such that $\usigma \geq 8\sigma_{\max}(\D)$, where $\sigma_{\max}(\D) = \max_{i=1}^m \op{\opM_i(\D)}$. Then we have
	$$
	\fro{\opH(\T^* +  \D) - \T^*} \leq \fro{\D} + C_m\frac{\sqrt{\rmax}\fro{\D}^2}{\usigma}
	$$
	where $C_m>0$ is an absolute constant depending only on $m$.
\end{lemma}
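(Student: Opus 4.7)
}

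The plan is to run a self-bounding argument on $\fro{\hat\T - \T^*}$, where $\hat\T := \opH(\T^*+\D)$, by decomposing the error into the tangent and normal components at $\T^*$ on the smooth manifold $\MM_{\br}$, and absorbing the normal part via the already-stated Lemma \ref{lem:ref:01}. First I would fix notation: let $\bV_j^* \in \RR^{d_j \times r_j}$ be the top-$r_j$ left singular vectors of $\opM_j(\T^*)$ and $\bV_j$ the analogous vectors for $\opM_j(\T^*+\D)$, so that by the definition of HOSVD, $\hat\T = (\T^*+\D)\times_{j=1}^m \bV_j\bV_j^{\top}$. Under the assumption $\usigma \geq 8\sigma_{\max}(\D)$, Wedin's $\sin\Theta$ theorem applied to each matricization yields $\op{\bV_{j\perp}^{\top}\bV_j^*} \leq C\sigma_{\max}(\D)/\usigma \leq 1/8$, so the estimated subspaces are a mild perturbation of the true ones.

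Next I would split the error using orthogonality on $\MM_{\br}$ at $\T^*$: writing $\TT^*$ for the tangent space at $\T^*$ and using $\T^*\in \TT^*$,
\begin{equation*}
\hat\T - \T^* \;=\; \calP_{\TT^*}(\hat\T - \T^*) \;+\; \calP_{\TT^*}^{\perp}\hat\T,
\qquad
\fro{\hat\T-\T^*}^2 \;=\; \fro{\calP_{\TT^*}(\hat\T-\T^*)}^2 + \fro{\calP_{\TT^*}^{\perp}\hat\T}^2.
\end{equation*}
Because $\hat\T \in \MM_{\br}$, a symmetric version of Lemma \ref{lem:ref:01} (with the roles of $\T^*$ and $\hat\T$ swapped, using that $\usigma \geq 8\sigma_{\max}(\D)$ keeps $\sigma_{\min}(\hat\T) \gtrsim \usigma$ by Weyl) bounds the normal component as $\fro{\calP_{\TT^*}^{\perp}\hat\T} \leq (2^m-1)\usigma^{-1}\fro{\hat\T-\T^*}^2$; this is the key second-order ``quadratic self-absorption'' that will let us downgrade the naive $O(\fro{\D})$ perturbation to $O(\fro{\D}^2/\usigma)$.

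For the tangent component, I would use linearity of $\calP_{\TT^*}$ together with $\calP_{\TT^*}\T^* = \T^*$ and $\fro{\calP_{\TT^*}\X}\leq \fro{\X}$ to write
\begin{equation*}
\calP_{\TT^*}(\hat\T - \T^*) \;=\; \calP_{\TT^*}(\hat\T - (\T^*+\D)) \;+\; \calP_{\TT^*}(\D),
\end{equation*}
so that $\fro{\calP_{\TT^*}(\hat\T-\T^*)} \leq \fro{\D} + \fro{\calP_{\TT^*}(\hat\T - (\T^*+\D))}$. The remaining term is the tangent-space projection of the quasi-optimality residual of HOSVD. Using the explicit retraction formula $\hat\T - (\T^*+\D) = -\sum_{j=1}^m (\T^*+\D)\times_{i<j} \bV_i\bV_i^{\top} \times_j (\bI - \bV_j\bV_j^{\top})$ and the fact that $\calP_{\TT^*}$ prescribes projecting each mode-$j$ slice onto $\bV_j^*$, each summand reduces (after pulling out $\bV_j^*$) to a product involving both $(\bI - \bV_j\bV_j^{\top})\bV_j^*$ and $(\bI - \bV_j^*\bV_j^{*\top})\opM_j(\D)$. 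By Wedin, the first factor is at most $C\op{\opM_j(\D)}/\usigma$ in operator norm with rank $\leq 2r_j$, while the second has Frobenius norm $\leq \fro{\D}$, producing a bound of order $\sqrt{\rmax}\,\fro{\D}^2/\usigma$ for the tangent residual.

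Putting these pieces together yields $\fro{\hat\T-\T^*}^2 \leq (\fro{\D} + C_m\sqrt{\rmax}\fro{\D}^2/\usigma)^2 + C_m^2 \usigma^{-2}\fro{\hat\T-\T^*}^4$; since the quartic term has coefficient $\leq 1/2$ under $\fro{\D} \ll \usigma$, it can be absorbed into the left-hand side, giving the claimed bound after taking square roots. The main obstacle I anticipate is Step 3 — showing that the tangent projection $\calP_{\TT^*}(\hat\T - (\T^*+\D))$ is genuinely second order in $\fro{\D}$, rather than $\fro{\D}\cdot\fro{\T^*}$. The naive decomposition only yields the latter, because sin-$\Theta$ bounds couple the subspace error to the full signal; the cancellation that saves us is that $(\bI - \bV_j^*\bV_j^{*\top})$ acts on $\opM_j(\T^*+\D)$ orthogonally to the signal, leaving only an $\opM_j(\D)$-sized factor, which I would verify by writing $\bV_j = \bV_j^*\bV_j^{*\top}\bV_j + \bV_{j\perp}^*\bV_{j\perp}^{*\top}\bV_j$ and expanding to isolate the quadratic-in-$\D$ cross term.
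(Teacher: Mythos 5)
Your tangent--normal decomposition at $\T^*$, combined with a symmetric form of Lemma~\ref{lem:ref:01} and a self-absorption of the quartic normal term, is a genuinely different route from the paper's. The paper proves the lemma by invoking the explicit spectral-projector perturbation series of \cite{xia2019normal}, writing $\calP_{\bU_i} - \calP_{\bV_i^*} = \sum_{j\geq 1}\calS_{i,j}$ with $\op{\calS_{i,j}} \leq (4\sigma_{\max}(\D)/\usigma)^j$, expanding $(\T^*+\D)\cdot\llbracket\calP_{\bU_1},\ldots,\calP_{\bU_m}\rrbracket$ multilinearly, and bounding every product of $\calS_{i,j}$'s and $\calP_{\bV_i^*}$'s separately. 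Your geometric argument trades this expansion for the manifold-intrinsic fact that the normal space acts quadratically near a rank-$\br$ point; it is conceptually cleaner and recycles Lemma~\ref{lem:ref:01}, which the paper already uses elsewhere.

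Two gaps prevent the proposal from closing as written. First, both the symmetric use of Lemma~\ref{lem:ref:01} (which needs $\sigma_{\min}(\hat\T)\gtrsim\usigma$, i.e.\ Weyl applied to $\hat\T$) and the final self-absorption of $\usigma^{-2}\fro{\hat\T-\T^*}^4$ implicitly require $\fro{\D}\lesssim \usigma$, but the lemma only assumes $\sigma_{\max}(\D)\leq\usigma/8$; for high-rank $\D$, $\fro{\D}$ can be arbitrarily large. This is fixable but must be stated: if $\fro{\D}\gtrsim\usigma/\sqrt{\rmax}$, HOSVD quasi-optimality already gives $\fro{\hat\T-\T^*}\leq(1+\sqrt m)\fro{\D}\leq C_m\sqrt{\rmax}\fro{\D}^2/\usigma$, so the claim is trivial and the absorption argument is only invoked in the complementary regime where it is valid.

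Second, and more substantively, showing $\fro{\calP_{\TT^*}(\hat\T-(\T^*+\D))}\lesssim\sqrt{\rmax}\fro{\D}^2/\usigma$ is the crux and your sketch does not quite pin down the cancellation that makes it true. Writing the residual as $-\sum_j(\T^*+\D)\times_{i<j}\calP_{\bV_i}\times_j\calP_{\bV_j}^{\perp}$, each summand is only $O(\fro{\D})$. What kills the first-order contribution under $\calP_{\TT^*}$ is not just ``pulling out $\bV_j^*$'' but the exact SVD orthogonality $\calP_{\bV_j}^{\perp}\opM_j(\T^*+\D)\calQ_j = {\bf 0}$, where $\calQ_j$ is the projector onto the top-$r_j$ right singular subspace of $\opM_j(\T^*+\D)$; this lets you replace the true row-space projector arising from $\calP_{\TT^*}$ by its difference with $\calQ_j$, an $O(\sigma_{\max}(\D)/\usigma)$ operator by Wedin, and multiplying by $\fro{\calP_{\bV_j}^{\perp}\opM_j(\T^*+\D)}\leq\fro{\D}$ gives the quadratic rate. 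For the mode-$k$ components of $\calP_{\TT^*}$ with $k\neq j$ the driving identity is instead $\calP_{\bV_k^*}^{\perp}\opM_k(\T^*)={\bf 0}$, paired with the $\sin\Theta$ factor in mode $j$. Spelling out these per-mode cancellations is precisely the bookkeeping the paper does once and for all via Xia's representation formula, so the two proofs end up with the same algebraic content --- but your version needs that case analysis written out before it is a proof.
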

\begin{proof}
Without loss of generality, we only prove the Lemma in the case $m=3$. 
	First notice that 
	$$
	\opH(\T^* +  \D) = (\T^* +\D)\cdot \llbracket \calP_{\bU_1},\calP_{\bU_2},\calP_{\bU_3} \rrbracket,
	$$
	 where $\bU_i$ are leading $r_i$ left singular vectors of $\opM_i(\T^* +  \D)$ and $\calP_{\bU_i}=\bU_i\bU_i^{\top}$.
	
	First from (\cite{xia2019normal}, Theorem 1), we have for all $i\in[m]$
	$$
		\calP_{\bU_i} - \calP_{\bV_i^*}= \cS_{i,1} + \sum_{j\geq 2}\cS_{i,j},
	$$
	where $\cS_{i,j} = \cS_{\opM_i(\T^*),j}(\opM_i(\D))$ and specially $\cS_{i,1} = \ps{(\opM_i(\T^*)^{\top})}(\opM_i(\D))^{\top}\calP_{\bV_i^*}^{\perp}
	+\calP_{\bV_i^*}^{\perp}\opM_i(\D)\ps{(\opM_i(\T^*))}$. The explicit form of $S_{i,j}$ can be found in \cite[Theorem~1]{xia2019normal}. 
	Here, we denote $\bA^{\dagger}$ the pseudo-inverse of $\bA$, i.e., $\bA^{\dagger}=\bR\bSigma^{-1}\bL^{\top}$ if $\bA$ has a thin-SVD as $\bA=\bL\bSigma \bR^{\top}$. With a little abuse of notations, we write $(\bA^{\dagger})^{k}=\bR \bSigma^{-k}\bL^{\top}$ for any positive integer $k\geq 1$. 
	
For the sake of brevity, we denote $\bS_i = \sum_{j \geq 1}\calS_{i,j}$. By the definition of $\calS_{i,j}$, we have the bound $\op{\cS_{i,j}} \leq \left(\frac{4\sigma_{\max}(\D)}{\usigma}\right)^j$. We get the upper bound for $\op{\bS_i}$ as follows,
	\begin{align}\label{upper:S}
		\op{\bS_i} = \op{\sum_{j\geq 1}\cS_{i,j}} \leq \frac{4{\sigma_{\max}(\D)}}{\usigma - 4\sigma_{\max}(\D)}\leq \frac{8\sigma_{\max}(\D)}{\usigma} 
	\end{align}
	So we have,
	\begin{align}
		\T^*\cdot&\llbracket\calP_{\bU_1},\calP_{\bU_2},\calP_{\bU_3}\rrbracket = \T^*\cdot \llbracket\calP_{\bV_1^*}+\bS_1,\calP_{\bV_2^{\ast}}+\bS_2,\calP_{\bV_3^*}+ \bS_3\rrbracket\no\\
		=& \T^*\cdot \llbracket \calP_{\bV_1^*},\calP_{\bV_2^*},\calP_{\bV_3^*}\rrbracket\\
		& + \T^*\cdot \llbracket \bS_1,\calP_{\bV_2^*},\calP_{\bV_3^*}\rrbracket+  \T^*\cdot \llbracket \calP_{\bV_1^*},\bS_2,\calP_{\bV_3^*}\rrbracket+ \T^*\cdot \llbracket \calP_{\bV_1^*},\calP_{\bV_2^*},\bS_3\rrbracket\no\\
		&+\T^*\cdot \llbracket \bS_1,\bS_2,\calP_{\bV_3^*}\rrbracket+  \T^*\cdot \llbracket \calP_{\bV_1^*},\bS_2,\bS_3\rrbracket+ \T^*\cdot \llbracket \bS_1,\calP_{\bV_2^*},\bS_3\rrbracket\no\\
		&~~~~ +\T^*\cdot\llbracket\bS_1,\bS_2,\bS_3\rrbracket
	\end{align}
We now bound each of  $\fro{\T^*\cdot \llbracket \bS_1,\bS_2,\calP_{\bV_3^*}\rrbracket}$, $\fro{\T^*\cdot \llbracket \calP_{\bV_1^*},\bS_2,\bS_3\rrbracket}$ and $\fro{\T^*\cdot \llbracket \bS_1,\calP_{\bV_2^*},\bS_3\rrbracket}$. Without loss of generality, we only prove the bound of the first term. 
	\begin{align}\label{eq:summand}
		\calM_1\big(\T^*\cdot \llbracket \bS_1,\bS_2,\calP_{\bV_3^*}\rrbracket\big) = \bS_{1}\opM_{1}(\T^*)\left(\calP_{\bV_3}^*\otimes \bS_2\right)^{\top}
	\end{align}
Write
	\begin{align}\label{eq:bS}
		\bS_{1}\opM_{1}(\T^*) &=  \left(\calS_{1,1} + \sum_{j \geq 2}\calS_{1,j}\right)\opM_{1}(\T^*)\no\\
		&= \calP_{\bV_1^*}^{\perp}\opM_1(\D)\ps{\left(\opM_1(\T^*)\right)}\opM_1(\T^*) + \sum_{j \geq 2}\calS_{1,j}\opM_{1}(\T^*)\no\\
		&= \calM_1\big(\D\cdot \llbracket \calP_{\bV_1^*}^{\perp}, \calP_{\bV_2^*},  \calP_{\bV_3^*}\rrbracket\big)+ \sum_{j \geq 2}\calS_{1,j}\opM_{1}(\T^*)
	\end{align}
where we used the fact $\calP_{\bV_1^*}^{\perp}\calM_1(\T^*)={\bf 0}$. 

Thus we obtain an upper bound for $\op{\bS_{1}\opM_{1}(\T^*)}$ as follows
	\begin{align}\label{upper:SMT}
		\op{\bS_{1}\opM_{1}(\T^*)} \leq \sigma_{\max}(\D)+ \usigma\sum_{j\geq 2}\left(\frac{4\sigma_{\max}(\D)}{\usigma}\right)^{j}\leq 4\sigma_{\max}(\D),
	\end{align}
where the first inequality is due to the explicit form of $\calS_{1,j}$. See \cite[Theorem~1]{xia2019normal}. 

So from \eqref{eq:summand} and \eqref{upper:SMT}, we get
	\begin{align}\label{summand:T}
		\fro{\T^*\cdot \llbracket \bS_1,\bS_2,\calP_{\bV_3^*}\rrbracket} &\leq \fro{\bS_{1}\opM_{1}(\T^*)}\cdot \op{\calP_{\bV_3}^*\otimes \bS_2}
		\leq C_1 \sqrt{\rmax}\frac{\sigma_{\max}(\D)^2}{\usigma}
	\end{align}
	where $C_1 > 0$ is an absolute constant.
	
Now we consider the linear terms $\T^*\cdot \llbracket \bS_1,\calP_{\bV_2^*},\calP_{\bV_3^*}\rrbracket$, $\T^*\cdot \llbracket \calP_{\bV_1^*},\bS_2,\calP_{\bV_3^*}\rrbracket$ and $\T^*\cdot \llbracket \calP_{\bV_1^*},\calP_{\bV_2^*},\bS_3\rrbracket$. Clearly, we have
	\begin{align}
	\opM_1\left(\T^*\cdot \llbracket \bS_1,\calP_{\bV_2^*},\calP_{\bV_3^*}\rrbracket\right) &= 	\bS_1\opM_1(\T^*)\notag\\
	\opM_2\left(\T^*\cdot \llbracket \calP_{\bV_1^*},\bS_2,\calP_{\bV_3^*}\rrbracket\right)&=\bS_2\opM_2(\T^{\ast})\notag\\
	\opM_3\left(\T^*\cdot \llbracket \calP_{\bV_1^*},\calP_{\bV_2^*},\bS_3\rrbracket\right)&=\bS_3\opM_3(\T^{\ast}),
	\end{align}
whose explicit representations are already studied in eq. \eqref{eq:bS}. As a result, we can write
\begin{align}\label{eq:hosvd-1st-rep1}
\T^*\cdot \llbracket \bS_1,&\calP_{\bV_2^*},\calP_{\bV_3^*}\rrbracket+\T^*\cdot \llbracket \calP_{\bV_1^*},\bS_2,\calP_{\bV_3^*}\rrbracket+\T^*\cdot \llbracket \calP_{\bV_1^*},\calP_{\bV_2^*},\bS_3\rrbracket\notag\\
=&  \D\cdot \llbracket \calP_{\bV_1^*}^{\perp}, \calP_{\bV_2^*},  \calP_{\bV_3^*}\rrbracket+\D\cdot \llbracket \calP_{\bV_1^*}, \calP_{\bV_2^*}^{\perp},  \calP_{\bV_3^*}\rrbracket+\D\cdot \llbracket \calP_{\bV_1^*}, \calP_{\bV_2^*},  \calP_{\bV_3^*}^{\perp}\rrbracket\notag\\
&+ \sum_{j \geq 2}\Big(\opM_{1}(\T^*)\cdot \llbracket \calS_{1,j}, \calP_{\bV_2^{\ast}}, \calP_{\bV_3^{\ast}}\rrbracket+\opM_{2}(\T^*)\cdot \llbracket \calP_{\bV_1^{\ast}}, \calS_{2,j}, \calP_{\bV_3^{\ast}}\rrbracket+\opM_{3}(\T^*)\cdot \llbracket \calP_{\bV_1^{\ast}}, \calP_{\bV_2^{\ast}}, \calS_{3,j}\rrbracket\Big).
\end{align}
Now we bound $\D\cdot\llbracket\calP_{\bU_1},\calP_{\bU_2},\calP_{\bU_3}\rrbracket$ as follows
	\begin{align}
		\D\cdot\llbracket&\calP_{\bU_1},\calP_{\bU_2},\calP_{\bU_3}\rrbracket = \D\cdot \llbracket\calP_{\bV_1^*}+\bS_1,\calP_{\bV_2^*}+\bS_2,\calP_{\bV_3^*}+ \bS_3\rrbracket\no\\
		=& \D\cdot \llbracket\calP_{\bV_1^*},\calP_{\bV_2^*},\calP_{\bV_3^*}\rrbracket \no\\
		&+ \D\cdot \llbracket\bS_1,\calP_{\bV_2^*},\calP_{\bV_3^*}\rrbracket+\D\cdot \llbracket\calP_{\bV_1^*},\bS_2,\calP_{\bV_3^*}\rrbracket+\D\cdot \llbracket\calP_{\bV_1^*},\calP_{\bV_2^*},\bS_3\rrbracket\no\\
		&+\D\cdot \llbracket\bS_1,\bS_2,\calP_{\bV_3^*}\rrbracket+\D\cdot \llbracket\calP_{\bV_1^*},\bS_2,\bS_3\rrbracket+\D\cdot \llbracket\bS_1,\calP_{\bV_2^*},\bS_3\rrbracket\no\\
		&~~~~ +\D\cdot\llbracket\bS_1,\bS_2,\bS_3\rrbracket
	\end{align}
Similarly as proving the bound \eqref{summand:T}, we can show 
	\begin{align}\label{summand:D}
		\max\left\{\fro{ \D\cdot \llbracket\bS_1,\calP_{\bV_2^*},\calP_{\bV_3^*}\rrbracket}, \fro{\D\cdot \llbracket\bS_1,\bS_2,\calP_{\bV_3^*}\rrbracket}, \fro{\D\cdot\llbracket\bS_1,\bS_2,\bS_3\rrbracket}\right\} \leq C_1 \sqrt{\rmax}\frac{\sigma_{\max}(\D)^2}{\usigma}
	\end{align}
where $C_1 > 0$ is an absolute constant.
	
Finally, by \eqref{eq:bS}, \eqref{summand:T}, \eqref{eq:hosvd-1st-rep1} and \eqref{summand:D}, we have 
	\begin{align}
		\big\|(\T^*& +\D)\cdot \llbracket\calP_{\bU_1},\calP_{\bU_2},\calP_{\bU_3}\rrbracket - \T^*\big\|_{\rm F} \notag\\
		&\leq \big\|\D\cdot \llbracket \calP_{\bV_1^*}^{\perp}, \calP_{\bV_2^*},  \calP_{\bV_3^*}\rrbracket+ \D\cdot\llbracket\calP_{\bV^*_1},\calP_{\bV_2^*}^{\perp},\calP_{\bV_3^*}\rrbracket+ \D\cdot\llbracket\calP_{\bV^*_1},\calP_{\bV_2^*},\calP_{\bV_3^*}^{\perp}\rrbracket+ \D\cdot\llbracket\calP_{\bV^*_1},\calP_{\bV_2^*},\calP_{\bV_3^*}\rrbracket\big\|_{\rm F}\no\\
		&~~~~ + C_{1}\frac{\sqrt{\rmax}\sigma_{\max}(\D)^2}{\usigma}\no\\
		&\leq \fro{\D} + C_{2}\frac{\sqrt{\rmax}\sigma_{\max}(\D)^2}{\usigma}
	\end{align}
	where $C_1,C_2>0$ are absolute constants ($C_{2,m} = 16m+2^{m+1}$ in the case of general $m$). This finishes the proof of the lemma.
\end{proof}


\begin{lemma}\label{lem:8}
	Assume all the entries of $\Z\in \R^{d_1\times\ldots\times d_m}$ are independent mean-zero random variables with bounded Orlicz-$\psi_2$ norm:
	$$\subg{\subw{\Z}} = \sup_{q \geq 1} (\E |\subw{\Z}|^q)^{1/q}/q^{1/2} \leq \sigma_z$$
	Then there exists some constants $C_m,c_m>0$ depending only on $m$ such that 
	$$ \sup_{\M\in\MM_{2\br},\|\M\|_F\leq 1} \inp{\Z}{\M} \leq C_m\sigma_z\left( r^* + \sum_{i=1}^m d_ir_i\right)^{1/2} $$
	with probability at least $1- \exp(-c_m\sum_{i=1}^m d_ir_i)$, where $r^* = r_1\ldots r_m$.
\end{lemma}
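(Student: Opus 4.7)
The plan is to reduce the supremum over the low-rank manifold to a supremum over a finite $\varepsilon$-net via a Tucker parametrization, and then apply sub-Gaussian concentration plus a union bound.

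First, I would parametrize any $\M \in \MM_{2\br}$ with $\|\M\|_{\rm F} \leq 1$ by its Tucker decomposition $\M = \C \cdot \llbracket \bU_1, \ldots, \bU_m \rrbracket$ where $\bU_j \in \RR^{d_j \times 2r_j}$ has orthonormal columns and $\C \in \RR^{2r_1 \times \cdots \times 2r_m}$ satisfies $\|\C\|_{\rm F} = \|\M\|_{\rm F} \leq 1$. The inner product rewrites as
\begin{equation*}
\langle \Z, \M \rangle = \langle \Z \times_1 \bU_1^{\top} \times_2 \cdots \times_m \bU_m^{\top}, \C \rangle,
\end{equation*}
so that, with the $\bU_j$'s fixed, the supremum over $\C$ with $\|\C\|_{\rm F} \leq 1$ equals $\|\Z \times_{j=1}^m \bU_j^{\top}\|_{\rm F}$. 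The problem therefore reduces to controlling this Frobenius norm uniformly over all tuples $(\bU_1, \ldots, \bU_m)$ on the product of Stiefel manifolds.

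Next, I would invoke a standard $\varepsilon$-net argument. Let $\calN_j$ be an $\varepsilon$-net (in spectral norm) of the Stiefel manifold $\mathrm{St}(d_j, 2r_j)$; by volume arguments $|\calN_j| \leq (C/\varepsilon)^{2 r_j d_j}$. Let $\calN_{\C}$ be an $\varepsilon$-net of the unit Frobenius ball in $\RR^{2r_1 \times \cdots \times 2r_m}$, with $|\calN_{\C}| \leq (C/\varepsilon)^{r^*}$ (absorbing $2^m$ into $C_m$). For any fixed choice of net points $\wt \bU_j \in \calN_j$ and $\wt \C \in \calN_{\C}$, the random variable $\langle \Z, \wt\C \cdot \llbracket \wt\bU_1, \ldots, \wt\bU_m\rrbracket \rangle$ is a weighted sum of independent mean-zero sub-Gaussian entries of $\Z$ whose $\ell_2$-weight is $\|\wt\C \cdot \llbracket \wt\bU_1,\ldots,\wt\bU_m\rrbracket\|_{\rm F} = \|\wt\C\|_{\rm F} \leq 1$, so its $\psi_2$-norm is at most $C\sigma_z$. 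Hoeffding-type concentration then gives, for any $t > 0$,
\begin{equation*}
\PP\big(|\langle \Z, \wt\C \cdot \llbracket \wt\bU_1, \ldots, \wt\bU_m\rrbracket\rangle| > t \big) \leq 2\exp\big(-c t^2/\sigma_z^2\big).
\end{equation*}

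A union bound over $\calN_{\C} \times \prod_j \calN_j$ yields a uniform bound of the same form multiplied by $\exp\big(C_m (r^* + \sum_j d_j r_j) \log(1/\varepsilon)\big)$. Choosing $\varepsilon$ to be a small absolute constant (e.g., $\varepsilon = 1/4$) and setting $t = C_m \sigma_z (r^* + \sum_j d_j r_j)^{1/2}$ for sufficiently large $C_m$ makes the union bound at most $\exp(-c_m \sum_j d_j r_j)$. Finally, a standard net-to-supremum passage (the Lipschitz-type estimate $|\langle \Z, \M_1\rangle - \langle \Z, \M_2\rangle|$ when perturbing one factor $\bU_j$ or the core, absorbed by the geometric series in $\varepsilon$) upgrades the discrete bound to the supremum over the full low-rank manifold, which gives the claimed bound.

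The main technical obstacle is the net-to-supremum passage: unlike a Euclidean ball, the set $\{\M \in \MM_{2\br}: \|\M\|_{\rm F} \leq 1\}$ is a curved manifold, so I need to check that perturbing each $\bU_j$ by $\varepsilon$ in spectral norm changes $\M$ (hence the linear functional $\langle \Z, \cdot\rangle$) by at most $O(\varepsilon)$ times the current supremum, uniformly. This is handled by a telescoping identity across the $m$ factors and by re-using the already-obtained discrete tail bound on each intermediate supremum, giving a recursive inequality that closes after absorbing a constant factor into $C_m$.
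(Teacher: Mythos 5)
The paper's own ``proof'' of this lemma is simply a citation to Han, Willett and Zhang (2020, Lemma D.5), which establishes the bound by precisely the $\varepsilon$-net argument you propose (cover the Stiefel factors and the core separately, apply sub-Gaussian Hoeffding concentration at each net point, union-bound over the product net, then pass from the net to the continuum supremum by a telescoping chaining step over the $m$ factors). Your proposal is correct and follows the same route; the one cosmetic fix is that the core-net cardinality should read $(C/\varepsilon)^{2^m r^*}$ since the core in $\RR^{2r_1\times\cdots\times 2r_m}$ has $2^m r^*$ entries, which, as you already note, is absorbed into the $m$-dependent constants $C_m, c_m$.
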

\begin{proof}
	See the proof of (\cite{han2020optimal}, Lemma D.5).
\end{proof}

\begin{lemma}[Maximum of sub-Gaussian]\label{lem:maxofsubg}
	Let $Z_1,\ldots,Z_N$ be $N$ random variables such that $\EE\exp\{tZ_i\}\leq \exp\{t^2\sigma_z^2/2\}$ for all $i\in[N]$. Then 
	$$
	\PP(\max_{1\leq i\leq N}|Z_i| > t) \leq 2N\exp(-\frac{t^2}{2\sigma_z^2}).
	$$
\end{lemma}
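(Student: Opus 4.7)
The plan is to reduce to a single-variable sub-Gaussian tail bound via a union bound and then invoke the standard Chernoff argument. Concretely, I would first write
\[
\PP\bigl(\max_{1\leq i\leq N}|Z_i|>t\bigr)\leq \sum_{i=1}^{N}\PP(|Z_i|>t)\leq \sum_{i=1}^{N}\bigl(\PP(Z_i>t)+\PP(-Z_i>t)\bigr),
\]
so it suffices to show $\PP(Z_i>t),\PP(-Z_i>t)\leq \exp(-t^2/(2\sigma_z^2))$ for each fixed $i$.

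Next, for the one-sided bound I would apply Markov's inequality to $e^{sZ_i}$ with an arbitrary $s>0$, giving
\[
\PP(Z_i>t)=\PP(e^{sZ_i}>e^{st})\leq e^{-st}\,\EE e^{sZ_i}\leq e^{-st+s^2\sigma_z^2/2},
\]
using the hypothesis on the MGF. Optimizing over $s>0$ by setting $s=t/\sigma_z^2$ gives $\PP(Z_i>t)\leq \exp(-t^2/(2\sigma_z^2))$. The same hypothesis at $s<0$ (which is allowed because the MGF bound is assumed for \emph{all} $t\in\RR$, including negative values) controls the lower tail in identical fashion: $\PP(-Z_i>t)=\PP(Z_i<-t)\leq e^{st+s^2\sigma_z^2/2}$ for any $s<0$, minimized at $s=-t/\sigma_z^2$ to yield the same bound $\exp(-t^2/(2\sigma_z^2))$.

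Combining the two tails contributes a factor of $2$ per variable, and the union bound over $i\in[N]$ contributes a factor of $N$, producing the claimed $2N\exp(-t^2/(2\sigma_z^2))$. There is no genuine obstacle here: each ingredient (Markov/Chernoff, optimization in $s$, union bound) is textbook. The only point worth stating carefully is that the hypothesis must hold for all real $t$ (not merely $t\geq 0$) in order to control the lower tail; if one only had the one-sided MGF bound, the factor of $2$ would be lost and the statement would need to be restricted to $\PP(\max_i Z_i>t)$ rather than $\PP(\max_i|Z_i|>t)$.
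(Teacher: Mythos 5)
Your proof is correct and is essentially the same argument as the paper's: a union bound reducing to the one-sided sub-Gaussian Chernoff tail. The only cosmetic difference is that you split each event $\{|Z_i|>t\}$ into two one-sided tails directly, whereas the paper equivalently rewrites $\max_i|Z_i|$ as a one-sided maximum over a doubled index set $\{Z_1,\dots,Z_N,-Z_1,\dots,-Z_N\}$; you also spell out the Chernoff optimization that the paper leaves implicit.
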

\begin{proof}
The claim follows from the following two facts:
$$
\PP(\max_{1\leq i\leq N} Z_i > t) \leq \PP(\cup_{1\leq i\leq N}\{Z_i > t\}) \leq N\PP(Z_i>t) \leq N\exp(-\frac{t^2}{2\sigma_z^2}),
$$
and
$$
\max_{1\leq i\leq N} |Z_i| = \max_{1\leq i\leq 2N} Z_i 
$$
with $Z_{N+i} = -Z_i$ for $i\in [N]$.
\end{proof}


\begin{lemma}[Spikiness implies incoherence]\label{lemma:spikiness_incoherence}
	Let $\T^*\in\MM_{\br}$ satisfies Assumption~\ref{assump:spikiness} with parameter $\mu_1$. Then we have:
	$$\mu(\T^*) \leq \mu_1 \kappa_0.$$
	where $\mu(\T^*)$ is the incoherence parameter of $\T^*$ and $\kappa_0$ is the condition number of $\T^*$.
\end{lemma}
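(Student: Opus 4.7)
\textbf{Proof proposal for Lemma~\ref{lemma:spikiness_incoherence}.}

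My plan is to bound $\|\be_i^\top \bU_j^\ast\|_{\ell_2}$ by the row norm $\|\be_i^\top \calM_j(\T^\ast)\|_{\ell_2}$ divided by the smallest nonzero singular value of $\calM_j(\T^\ast)$, and then bound that row norm using the spikiness assumption. Fix $j \in [m]$ and $i \in [d_j]$. Since $\bU_j^\ast$ has orthonormal columns spanning the column space of $\calM_j(\T^\ast)$, I can write $\calM_j(\T^\ast) = \bU_j^\ast \bA_j$ where $\bA_j := (\bU_j^\ast)^\top \calM_j(\T^\ast) \in \RR^{r_j \times d_j^{-}}$ has the same nonzero singular values as $\calM_j(\T^\ast)$. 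Using the pseudo-inverse, $\bU_j^\ast = \calM_j(\T^\ast)\, \bA_j^\dagger$, so $\be_i^\top \bU_j^\ast = \be_i^\top \calM_j(\T^\ast)\, \bA_j^\dagger$, which gives
\begin{equation*}
\|\be_i^\top \bU_j^\ast\|_{\ell_2} \;\leq\; \|\be_i^\top \calM_j(\T^\ast)\|_{\ell_2} \cdot \|\bA_j^\dagger\|_{\rm op} \;=\; \frac{\|\be_i^\top \calM_j(\T^\ast)\|_{\ell_2}}{\sigma_{r_j}(\calM_j(\T^\ast))} \;\leq\; \frac{\|\be_i^\top \calM_j(\T^\ast)\|_{\ell_2}}{\underline{\lambda}}.
\end{equation*}

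Next I would bound the row norm entrywise. The $i$-th row of $\calM_j(\T^\ast)$ collects the $d_j^{-}$ entries $[\T^\ast]_\omega$ with $\omega_j = i$, so
\begin{equation*}
\|\be_i^\top \calM_j(\T^\ast)\|_{\ell_2}^2 \;\leq\; d_j^{-} \, \|\T^\ast\|_{\ell_\infty}^2 \;\leq\; d_j^{-} \cdot \frac{\mu_1^2 \|\T^\ast\|_{\rm F}^2}{d^\ast} \;=\; \frac{\mu_1^2 \|\T^\ast\|_{\rm F}^2}{d_j},
\end{equation*}
where the second inequality uses Assumption~\ref{assump:spikiness}. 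Finally, since $\calM_j(\T^\ast)$ has rank at most $r_j$, its Frobenius norm squared is at most $r_j \cdot \sigma_1(\calM_j(\T^\ast))^2 \leq r_j \bar\lambda^2$, and $\|\T^\ast\|_{\rm F}^2 = \|\calM_j(\T^\ast)\|_{\rm F}^2 \leq r_j \bar\lambda^2$. Combining these three bounds,
\begin{equation*}
\|\be_i^\top \bU_j^\ast\|_{\ell_2}^2 \cdot \frac{d_j}{r_j} \;\leq\; \frac{1}{\underline\lambda^2} \cdot \frac{\mu_1^2 \cdot r_j \bar\lambda^2}{d_j} \cdot \frac{d_j}{r_j} \;=\; \mu_1^2 \kappa_0^2,
\end{equation*}
and taking the maximum over $j$ and $i$ yields $\mu(\T^\ast) \leq \mu_1 \kappa_0$.

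The proof is essentially a three-step chain of inequalities (pseudo-inverse, spikiness, Frobenius-versus-top-singular-value), with no real obstacle; the only subtlety is being careful that $\bA_j$ is rank-$r_j$ so that $\|\bA_j^\dagger\|_{\rm op} = 1/\sigma_{r_j}(\calM_j(\T^\ast))$ rather than something involving a smaller singular value, which is guaranteed by $\T^\ast \in \MM_{\br}$ having exact multilinear ranks $(r_1, \ldots, r_m)$ (otherwise the statement should be interpreted with the appropriate effective rank).
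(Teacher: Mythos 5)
Your proof is correct and follows essentially the same argument as the paper's: the chain pseudo-inverse bound $\|\be_i^\top\bU_j^\ast\|_{\ell_2}\le\|\be_i^\top\calM_j(\T^\ast)\|_{\ell_2}/\usigma$, the row-norm bound via spikiness, and $\fro{\T^\ast}^2\le r_j\bar\lambda^2$ are exactly the three inequalities the paper combines (it phrases the first step via the Tucker core $\opM_j(\C^\ast)$ and the third step as $\usigma\ge\fro{\T^\ast}/(\sqrt{r_j}\kappa_0)$, but these are the same facts).
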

\begin{proof}
Denote $\T^* = \C^*\cdot\llbracket\bU_1,\ldots,\bU_m\rrbracket$.
Now we check the incoherence condition of $\T^*$. For all $i\in[d_j]$ and $j\in[m]$,
$$\|\be_i^{\top}\opM_{j}(\T^*)\|_{\ell_2} = \|\be_i\bU_j\opM_{j}(\C^*)\|_{\ell_2}\geq \|\be_i^{\top}\bU_j\|_{\ell_2}\cdot\usigma \geq \|\be_i^{\top}\bU_j\|_{\ell_2}\frac{\fro{\T^*}}{\sqrt{r_j}\kappa_0}.$$
On the other hand, we have 
$$\|\be_i^{\top}\opM_{j}(\T^*)\|_{\ell_2} \leq \sqrt{d_j^-}\|\T^*\|_{\ell_{\infty}}\leq \mu_1\fro{\T^*}\frac{1}{\sqrt{d_j}},$$
where the last inequality is due to the spikiness condition $\T^*$ satisfies. Together with these two inequalities, we have
$$\|\be_i^{\top}\bU_j\|_{\ell_2} \leq \sqrt{\frac{r_j}{d_j}}\mu_1\kappa_0.$$
And this finishes the proof of the lemma.
\end{proof}


\begin{lemma}\label{lemma:nbhd_spikiness_incoherence}
	Let $\T^*\in\MM_{\br}$ satisfies Assumption~\ref{assump:spikiness} with parameter $\mu_1$. Suppose that $\W$ satisfies $\fro{\W - \T^*} \leq \frac{\underline{\lambda}}{8}$, then we have $\textsf{Trim}_{\zeta, \br}(\W)$ is $(2\mu_1 \kappa_0)^2$-incoherent if we choose $\zeta = \frac{16}{7}\mu_1\frac{\fro{\W}}{\sqrt{d^*}}$. Also, it satisfies
	$$\fro{\textsf{Trim}_{\zeta,\br}(\W) - \T^*} \leq \fro{\W - \T^*} + \frac{C_{m}\sqrt{\rmax}\fro{\W -\T^*}^2}{\usigma},$$
	where $C_m>0$ depends only on $m$.
\end{lemma}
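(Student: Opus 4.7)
The plan hinges on the fact that the trimming level $\zeta/2 = (8/7)\mu_1\fro{\W}/\sqrt{d^*}$ has been calibrated so that clipping neither touches $\T^*$ nor enlarges the error. I would first use Assumption~\ref{assump:spikiness} combined with $\fro{\W-\T^*}\leq \underline{\lambda}/8\leq \fro{\T^*}/8$ (using $\fro{\T^*}\geq \underline{\lambda}$) to obtain $\fro{\W}\geq (7/8)\fro{\T^*}$, whence $\zeta/2\geq \mu_1\fro{\T^*}/\sqrt{d^*}\geq \|\T^*\|_{\ell_\infty}$. Then a short per-coordinate argument yields $|[\wt\W-\T^*]_\omega|\leq |[\W-\T^*]_\omega|$ for every $\omega$: on entries where $|[\W]_\omega|\leq \zeta/2$ nothing changes, while on entries where $|[\W]_\omega|>\zeta/2$ the clipped value $(\zeta/2)\,\mathrm{sign}([\W]_\omega)$ lies between $[\T^*]_\omega$ and $[\W]_\omega$ on the real line (since $|[\T^*]_\omega|\leq \zeta/2$), so trimming monotonically reduces the entry-wise gap. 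Summing squared deviations gives $\fro{\wt\W-\T^*}\leq \fro{\W-\T^*}\leq \underline{\lambda}/8$.

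This contraction immediately activates Lemma~\ref{lem:tensorest} with $\D=\wt\W-\T^*$, because $\sigma_{\max}(\D)\leq \fro{\D}\leq \underline{\lambda}/8$. Applying the lemma delivers the advertised Frobenius bound
$$\fro{\opH(\wt\W)-\T^*}\leq \fro{\D}+\frac{C_m\sqrt{\rmax}}{\underline{\lambda}}\fro{\D}^2\leq \fro{\W-\T^*}+\frac{C_m\sqrt{\rmax}}{\underline{\lambda}}\fro{\W-\T^*}^2.$$

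The harder part is the incoherence claim, since a generic $\sin\Theta$ bound on $\hat\bU_j$ does not translate directly to a row-wise inequality. I would sidestep this by exploiting the explicit identity $\hat\bU_j=\opM_j(\wt\W)\hat\bV_j\hat\bSigma_j^{-1}$ coming from the thin SVD of $\opM_j(\wt\W)$, which gives for every $i\in[d_j]$
$$\|\be_i^\top\hat\bU_j\|_{\ell_2}\leq \frac{\|\be_i^\top\opM_j(\wt\W)\|_{\ell_2}}{\sigma_{r_j}(\opM_j(\wt\W))}.$$
The construction of $\wt\W$ forces $\|\wt\W\|_{\ell_\infty}\leq \zeta/2$, so the numerator is at most $\sqrt{d_j^-}\cdot\zeta/2=(8/7)\mu_1\fro{\W}/\sqrt{d_j}$; Weyl's inequality combined with the previous step gives $\sigma_{r_j}(\opM_j(\wt\W))\geq \underline{\lambda}-\fro{\wt\W-\T^*}\geq (7/8)\underline{\lambda}$. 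Invoking $\fro{\W}\leq (9/8)\fro{\T^*}$ and the trivial rank bound $\fro{\T^*}\leq \sqrt{r_j}\,\overline{\lambda}=\sqrt{r_j}\kappa_0\underline{\lambda}$ then yields
$$\|\be_i^\top\hat\bU_j\|_{\ell_2}\sqrt{d_j/r_j}\leq \frac{8}{7}\cdot\frac{8}{7}\cdot\frac{9}{8}\mu_1\kappa_0=\frac{72}{49}\mu_1\kappa_0\leq 2\mu_1\kappa_0,$$
which is precisely the $(2\mu_1\kappa_0)^2$-incoherence.

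The main obstacle is the incoherence step, and more specifically the constant bookkeeping: the choice $\zeta=(16/7)\mu_1\fro{\W}/\sqrt{d^*}$ in Algorithm~\ref{algo:lowrank+sparse} is reverse-engineered so that the arithmetic factor $(8/7)^2(9/8)=72/49$ stays comfortably below $2$, while still leaving $\zeta/2$ large enough to dominate $\|\T^*\|_{\ell_\infty}$. Once this calibration is set, the Frobenius bound is a one-line application of Lemma~\ref{lem:tensorest}, which already carries the heavy perturbation machinery.
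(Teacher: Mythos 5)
Your proof is correct and takes essentially the same route as the paper: it establishes $\|\T^*\|_{\ell_\infty}\leq\zeta/2$ from the spikiness condition, uses this to show the entry-wise clip does not increase the Frobenius distance to $\T^*$ (which the paper asserts a bit tersely; your per-coordinate monotonicity argument is the clean way to see it), then applies Lemma~\ref{lem:tensorest} to $\D=\wt\W-\T^*$ for the second claim, and for the incoherence claim uses the thin-SVD identity $\hat\bU_j=\opM_j(\wt\W)\hat\bV_j\hat\bSigma_j^{-1}$ with the numerator bounded by $\sqrt{d_j^-}\cdot\zeta/2$ and the denominator by a Weyl/$\sin\Theta$ lower bound, arriving at the same $72/49\le 2$ arithmetic.
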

\begin{proof}
	Notice $\textsf{Trim}_{\zeta, \br}(\W) = \opH(\wt\W)$, where $\wt\W$ is the entrywise truncation of $\W$ with the thresholding $\zeta/2$. To check the incoherence of $\opH(\wt\W)$, denote $\wt \bU_j$ the top-$r_j$ left singular vectors of $\calM_j(\wt\W)$, and $\wt\bLambda_j$ the $r_j\times r_j$ diagonal matrix containing the top-$r_j$ singular values of $\calM_j(\wt\W)$. Then, there exist a $\wt\bV_j\in\RR^{d_j^-\times r_j}$ satisfying $\wt\bV_j^{\top}\wt\bV_j=\bI_{r_j}$ such that 
	$$
	\wt\bU_j\wt\bLambda_j=\calM_j(\wt\W)\wt\bV_j.
	$$
	Now we can also bound the $\ell_{\infty}$-norm of $\T^*$:
	$$\|\T^*\|_{\ell_{\infty}} \leq \mu_1\frac{\fro{\T^*}}{\sqrt{d^*}} \leq \mu_1\frac{\fro{\W} + \fro{\T^* - \W}}{\sqrt{d^*}} \leq \mu_1\frac{\fro{\W} + \fro{\T^*}/8}{\sqrt{d^*}}.$$
	This together with the definition of $\zeta$, we have:
	$$\mu_1\frac{\fro{\T^*}}{\sqrt{d^*}} \leq 8/7\cdot \mu_1 \frac{\fro{\W}}{\sqrt{d^*}} = \zeta/2.$$
	And thus $\|\T^*\|_{\ell_{\infty}} \leq \zeta/2$.
	Then for all $i\in [d_j]$,
	$$\|\be_i^{\top}\wt\bU_j\|_{\ell_2} = \|\be_i^{\top} \opM_j(\wt\W) \wt\bV_j \wt\bLambda_j^{-1}\|_{\ell_2}\leq \frac{\|\be_i^{\top}\opM_j(\wt\W)\|_{\ell_2}}{\lambda_{r_j}(\wt\bLambda_j)}\leq \frac{\zeta/2 \cdot (d_j^-)^{1/2}}{7/8\cdot \lambda_{r_j}(\opM_j(\T^*))}.$$	
where the last inequality is due to $\fro{\wt\W - \T^*} \leq \fro{\W - \T^*}\leq \underline{\lambda}/8$ since $\|\T^*\|_{\ell_{\infty}}\leq \zeta/2$ and $\|\wt\W\|_{\ell_{\infty}}\leq \zeta/2$. Meanwhile, 
$$\fro{\T^*} \leq \sqrt{r_j}\kappa_0 \lambda_{r_j}(\opM_j(\T^*)).$$
There for the $\zeta = \frac{16}{7}\mu_1\frac{\fro{\W}}{\sqrt{d^*}}$, we have for all $j\in[m]$
$$\max_{i\in[d_j]}\|\be_i\wt\bU_j\|_{\ell_2} \leq \frac{64}{49}\mu_1\kappa_0\frac{\fro{\T^*}+\usigma/8}{\fro{\T^*}}\sqrt{\frac{r_j}{d_j}}\leq2\mu_1\kappa_0 \sqrt{\frac{r_j}{d_j}}.$$
where the second last inequality is from $\fro{\W} \leq \fro{\T^*} + \fro{\W - \T^*}$ and the last inequality is from $\fro{\T^*} \geq \usigma$. 

The second claim follows from the fact that $\fro{\wt\W - \T^*} \leq \fro{\W - \T^*}\leq \usigma/8$, and from Lemma~\ref{lem:tensorest},
\begin{align*}
	\fro{\textsf{Trim}_{\zeta,\br}(\W) - \T^*}  = 	\fro{\wt\W - \T^*} &\leq \fro{\wt\W - \T^*} + C_{m}\frac{\sqrt{\rmax}\fro{\wt\W - \T^*}^2}{\usigma}\no\\
	&\leq \fro{\W - \T^*} + C_{m}\frac{\sqrt{\rmax}\fro{\W - \T^*}^2}{\usigma}
\end{align*}
This finishes the proof of the lemma.
\end{proof}


We introduce some notations for the following lemmas. Denote by $\hat\T_l = \C_l\cdot(\bU_1,\ldots,\bU_m)$, $\T^* =\C^*\cdot(\bU_1^*,\ldots,\bU_m^*)$.
\begin{align}\label{eq:Ri}
	\bR_i = \arg\min_{\bR\in \OO_{r_i}} \fro{\bU_i - \bU_i^*\bR}, i \in [m]
\end{align}
If we let $\bU_i^{*T}\bU_i = \bL_i\bS_i\bW_i^{\top}$ be the SVD of $\bU_i^{*T}\bU_i$, then the closed form of $\bR_i$ is given by $\bR_i = \bL_i\bW_i^{\top}$. And we rewrite $$\T^* = \S^*\cdot (\bV_1^*,\cdots, \bV_m^*)$$ where $\S^* = \C^*\cdot (\bR_1^{\top},\cdots, \bR_m^{\top})$ and $\bV_i^* = \bU_i^*\bR_i, i \in [m]$. So $\bV_i^*$ is also $\mu_0$-incoherent.


\begin{lemma}[Entry-wise estimation of $|\subw{\hat\T_l - \T^*}|$]\label{lemma:entrywise}
	Suppose $\T^*$ satisfies Assumption~\ref{assump:spikiness}. Under the assumptions that $\hat\T_l$ is $(2\mu_1\kappa_0)^2$-incoherent and $\fro{\hat\T_l - \T^*} \leq 
	\frac{\usigma}{16m\rmax^{1/2}\kappa_0}$, then we have
	$$|\subw{\hat\T_l - \T^*}|^2 \leq C_m\rmax^m\dmin^{-(m-1)}(\mu_1\kappa_0)^{4m}\fro{\hat\T_l - \T^*}^2,$$
	where $C_m = 2^{4m+1}(m+1)$.
\end{lemma}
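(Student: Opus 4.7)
The plan is to combine a standard telescoping expansion of $\hat\T_l-\T^*$ with tensor-spectral-norm contraction bounds, leveraging the incoherence/spikiness of the factors of both tensors along with a perturbation estimate on the individual mode factors. First, I would align the factors of $\T^*$ to those of $\hat\T_l$ using the Procrustes rotations from \eqref{eq:Ri}: set $\bV_k^* = \bU_k^*\bR_k$ and $\S^* = \C^*\cdot\llbracket\bR_1^\top,\ldots,\bR_m^\top\rrbracket$ so that $\T^* = \S^*\cdot\llbracket\bV_1^*,\ldots,\bV_m^*\rrbracket$. Since right orthogonal rotation preserves row norms, each $\bV_k^*$ inherits the $\mu_1\kappa_0$-incoherence of $\bU_k^*$ that is implied by the spikiness hypothesis via Lemma~\ref{lemma:spikiness_incoherence}, while $\bU_k$ is $(2\mu_1\kappa_0)$-incoherent by assumption.

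Next, I would apply the $(m+1)$-term telescoping identity
\begin{equation*}
\hat\T_l - \T^* = (\C_l - \S^*)\cdot\llbracket\bU_1,\ldots,\bU_m\rrbracket + \sum_{k=1}^m \S^*\cdot\llbracket\bV_1^*,\ldots,\bV_{k-1}^*,\bU_k - \bV_k^*,\bU_{k+1},\ldots,\bU_m\rrbracket,
\end{equation*}
bound $|[\hat\T_l - \T^*]_\omega|^2$ by $(m+1)$ times the sum of the squared $\omega$-entries of the summands via Cauchy--Schwarz, and estimate each summand using the tensor contraction inequality $|\langle\X,\bigotimes_k\ba_k\rangle|\leq\sigma_{\max}(\X)\prod_k\|\ba_k\|_{\ell_2}$. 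For the $k$-th telescoping summand this produces the bound $\maxl\cdot\op{\bU_k-\bV_k^*}\cdot(2\mu_1\kappa_0)^{m-1}\sqrt{r^*/(r_k d_k^-)}$, where $\op{\bU_k-\bV_k^*}\leq C\fro{\hat\T_l-\T^*}/\usigma$ follows from Wedin's $\sin\Theta$ theorem with the Procrustes-optimal $\bR_k$; the hypothesis $\fro{\hat\T_l-\T^*}\leq\usigma/(16m\sqrt{\rmax}\kappa_0)$ provides the needed spectral-gap condition. For the core-difference summand, $|[(\C_l-\S^*)\cdot\llbracket\bU_1,\ldots,\bU_m\rrbracket]_\omega|\leq\sigma_{\max}(\C_l-\S^*)\cdot(2\mu_1\kappa_0)^m\sqrt{r^*/d^*}$, and since orthogonal mode-multiplication preserves the tensor spectral norm, rearranging the telescoping identity yields $\sigma_{\max}(\C_l-\S^*)=\sigma_{\max}\bigl((\hat\T_l-\T^*)-\sum_k\text{$k$-th summand}\bigr)\lesssim m\kappa_0\fro{\hat\T_l-\T^*}$.

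Aggregating the $m+1$ squared contributions and simplifying via $r^*\leq\rmax^m$, $r^*/r_k\leq\rmax^{m-1}$, $d^*\geq\dmin^m$, $d_k^-\geq\dmin^{m-1}$, and $\kappa_0,\mu_1\geq 1$ absorbs the lower-order factors into the stated exponent $4m$; the $(m+1)$-way Cauchy--Schwarz contributes the factor $(m+1)$ in $C_m=2^{4m+1}(m+1)$. The main obstacle is insisting on the tensor spectral norm $\sigma_{\max}(\S^*)=\maxl$ rather than the Frobenius bound $\fro{\S^*}\leq\sqrt{r^*}\maxl$: using Frobenius would insert an extra $\sqrt{r^*}$ per summand and inflate the $\rmax$ exponent from $m$ to $2m-1$. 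Similarly the core-difference step requires bounding $\sigma_{\max}$ rather than the Frobenius norm of the rearrangement $(\hat\T_l-\T^*)-\sum_k\text{$k$-th summand}$, and Wedin's bound must be applied in operator norm with Procrustes alignment so that $\op{\bU_k-\bV_k^*}$ (not the Frobenius distance) drives the rate.
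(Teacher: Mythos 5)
Your proposal is correct and establishes the lemma, but it takes a genuinely different route from the paper. Both start from the same telescoping identity and the same $(m+1)$-way Cauchy--Schwarz, but diverge in the choice of norms and in the tool that closes the argument. The paper bounds each entrywise summand by Cauchy--Schwarz over the core indices, yielding the \emph{Frobenius} norms $\fro{\C_l-\S^*}$, $\fro{\S^*}$ and the row-norms $\|(\bU_i-\bV_i^*)_{\omega_i:}\|_{\ell_2}\leq\fro{\bU_i-\bV_i^*}$; it then uses $\fro{\S^*}\leq\sqrt{\rmax}\,\bsigma$ and invokes the auxiliary Lemma~\ref{lem:est:tl-t} (the reverse-triangle-type identity $\fro{\hat\T_l-\T^*}^2\geq\tfrac12\fro{\C_l-\S^*}^2+\tfrac12\usigma^2\sum_i\fro{\bU_i-\bV_i^*}^2$) to close. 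You instead work with spectral norms throughout ($\sigma_{\max}(\S^*)=\bsigma$, $\op{\bU_k-\bV_k^*}$, $\sigma_{\max}(\C_l-\S^*)$), bound the factor differences by Wedin in operator norm, and bound $\sigma_{\max}(\C_l-\S^*)$ by rearranging the telescoping identity and using that orthonormal mode multiplication preserves the tensor spectral norm. Your route avoids citing Lemma~\ref{lem:est:tl-t} and is sharper at the intermediate level, at the modest price of an extra $O(m\kappa_0)$ factor in the core term (which absorbs harmlessly into $C_m(\mu_1\kappa_0)^{4m}$).

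One factual correction to your closing remark: the paper's Frobenius-based route does \emph{not} inflate the $\rmax$ exponent to $2m-1$. Because $\S^*$ has Tucker ranks $\br$ (it is orthogonally equivalent to $\T^*$), every mode-$j$ matricization has rank at most $r_j$, so $\fro{\S^*}^2\leq r_j\,\lambda_1^2(\calM_j(\S^*))\leq\rmax\bsigma^2$; the overhead is only $\sqrt{\rmax}$, not $\sqrt{r^*}$. This is exactly how the paper passes from the $\rmax^{m-1}/\dmin^{m-1}$ prefactor to $\rmax^m/\dmin^{m-1}$ while keeping Frobenius norms. So the Frobenius approach is not an ``obstacle''; both approaches deliver the stated $\rmax^m$ rate.
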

\begin{proof}
	First we have 
	\begin{align}
		\hat\T_l - \T^* &= (\C_l-\S^*)\cdot (\bU_1,\cdots, \bU_m) + \sum_{i=1}^m \S^*\cdot (\bV_1^*,\ldots,\bV_{i-1}^*,\bU_i - \bV_i^*,\bU_{i+1},\ldots,\bU_m)
	\end{align}
	From Lemma~\ref{lemma:spikiness_incoherence}, we get $\T^*$ is $\mu_1^2 \kappa_0^2$-incoherent. So we have for all $\omega = (\omega_1,\ldots,\omega_m)\in[d_1]\times\ldots\times[d_m]$
	\begin{align*}
		|\subw{\hat\T_l - \T^*}| &\leq \fro{\C_l - \S^*}\prod_{i = 1}^m \op{(\bU_i)_{\omega_i :}} + \sum_{i = 1}^m\fro{\S^*}\op{(\bU_i-\bV_i^*)_{\omega_i :}}\prod_{k = 1}^{i-1}\op{(\bV_k^*)_{\omega_k :}}\prod_{k = i+1}^{m}\op{(\bU_k)_{\omega_k :}}\\
		&\leq \sqrt{\frac{r^*}{d^*}}(2\mu_1\kappa_0)^{2m}\fro{\C_l - \S^*} + (2\mu_1\kappa_0)^{2m-2}\sqrt{\frac{\rmax^{m-1}}{\dmin^{m-1}}}\fro{\S^*}\sum_{i=1}^m \op{(\bU_i - \bV_i^*)_{\omega_i:}}
	\end{align*}
	where $r^* = \prod_{i=1}^m r_i, d^* = \prod_{i=1}^m d_i$ and $\rmax = \max_{i=1}^m r_i, \dmin = \min_{i=1}^m d_i$.
	From AG–GM inequality, we have
	\begin{align}
		|\subw{\hat\T_l - \T^*}|^2 &\leq (m+1)(2\mu_1\kappa_0)^{4m}\frac{r^*}{d^*}\fro{\C_l-\S^*}^2 + (m+1)(2\mu_1\kappa_0)^{4m-4}\frac{\rmax^{m-1}}{\dmin^{m-1}}\fro{\S^*}^2\sum_{i=1}^m\op{(\bU_i - \bV_i^*)_{\omega_i:}}^2\label{eq:entrywise}\\
		&\leq (m+1)\rmax^m\dmin^{-(m-1)}(2\mu_1\kappa_0)^{4m}\left(\fro{\C_l-\S^*}^2 + \underline{\lambda}^2\sum_{i=1}^m\fro{\bU_i - \bV_i^*}^2\right)\notag\\
		&\leq 2(m+1)\rmax^m\dmin^{-(m-1)}(2\mu_1\kappa_0)^{4m}\fro{\hat\T_l - \T^*}^2\notag
	\end{align}
where the last inequality is from Lemma~\ref{lem:est:tl-t}, and this finishes the proof of the lemma.
\end{proof}


\begin{lemma}[Estimation of $\fro{\calP_{\Omega}(\hat\T_l-\T^*)}^2$]\label{lemma:est:proTl}
	Let $\Omega$ be the $\alpha$-fraction set. Suppose $\T^*$ satisfies Assumption~\ref{assump:spikiness}. Under the assumptions that $\hat\T_l$ is $(2\mu_1\kappa_0)^2$-incoherent and $\fro{\hat\T_l - \T^*} \leq 
	\frac{\usigma}{16m\rmax^{1/2}\kappa_0}$, we have 
	$$\fro{\calP_{\Omega}(\hat\T_l-\T^*)}^2 \leq C_m(\mu_1\kappa_0)^{4m}\rmax^m \alpha\fro{\hat\T_l - \T^*}^2,$$
	where $C_m = 2^{4m+1}(m+1)$.
\end{lemma}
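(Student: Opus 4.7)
\textbf{Proof plan for Lemma~\ref{lemma:est:proTl}.} The strategy parallels the proof of Lemma~\ref{lemma:entrywise}, but instead of invoking the crude uniform bound $\op{(\bU_i - \bV_i^*)_{\omega_i:}}^2 \leq \fro{\bU_i - \bV_i^*}^2$ (which would leak an extra factor of order $\dmax$ after summing $|\Omega| \leq \alpha d^*$ entries), I would preserve the per-row information and exploit the slicewise sparsity of the $\alpha$-fraction set. First, I would reuse the decomposition
\[
\hat\T_l - \T^* = (\C_l - \S^*)\cdot \llbracket \bU_1,\ldots,\bU_m\rrbracket + \sum_{i=1}^m \S^*\cdot \llbracket \bV_1^*,\ldots,\bV_{i-1}^*,\bU_i - \bV_i^*,\bU_{i+1},\ldots,\bU_m\rrbracket,
\]
apply $\calP_\Omega$ termwise, and employ $(m+1)$-fold AM--GM so that it suffices to bound $\fro{\calP_\Omega(\cdot)}^2$ for each summand separately.

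For the first summand, the incoherence of $\hat\T_l$ gives $\op{(\bU_i)_{\omega_i:}} \leq 2\mu_1\kappa_0\sqrt{r_i/d_i}$ (and analogously for $\bV_i^*$ by Lemma~\ref{lemma:spikiness_incoherence}), so every entry is uniformly bounded by $(2\mu_1\kappa_0)^{2m}(r^*/d^*)\fro{\C_l - \S^*}^2$; multiplying by $|\Omega| \leq \alpha d^*$ gives a clean bound of order $\alpha(2\mu_1\kappa_0)^{2m}\rmax^m \fro{\C_l - \S^*}^2$. For the $i$-th term in the second sum I would keep the row-dependent factor, deriving
\[
\big|[\S^*\cdot\llbracket\bV_1^*,\ldots,\bU_i - \bV_i^*,\ldots,\bU_m\rrbracket]_\omega\big|^2 \leq \fro{\S^*}^2 (2\mu_1\kappa_0)^{2m-2}\frac{r^* d_i}{r_i d^*}\op{(\bU_i - \bV_i^*)_{\omega_i:}}^2,
\]
and then use the crucial slicewise bound: for each fixed $\omega_i \in [d_i]$, the $\alpha$-fraction property guarantees $\#\{\omega \in \Omega:\ \text{$i$-th coord} = \omega_i\} \leq \alpha d_i^-$, so
\[
\sum_{\omega\in\Omega}\op{(\bU_i - \bV_i^*)_{\omega_i:}}^2 \leq \alpha d_i^-\,\fro{\bU_i - \bV_i^*}^2.
\]
Using $d_i\cdot d_i^- = d^*$, the coefficient telescopes into $\alpha(2\mu_1\kappa_0)^{2m-2}(r^*/r_i)\fro{\S^*}^2 \fro{\bU_i - \bV_i^*}^2$.

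Finally, I would combine via $\fro{\S^*}^2 = \fro{\T^*}^2 \leq \rmin\bar\lambda^2 \leq \rmax\kappa_0^2\underline\lambda^2$ and $r^*/r_i \leq \rmax^{m-1}$ to absorb the factor $\fro{\S^*}^2 r^*/r_i$ into $\rmax^m \kappa_0^2 \underline\lambda^2$, and then invoke Lemma~\ref{lem:est:tl-t} (the one cited in the proof of Lemma~\ref{lemma:entrywise}) which bounds $\underline\lambda^2\fro{\bU_i - \bV_i^*}^2 + \fro{\C_l - \S^*}^2 \leq 2\fro{\hat\T_l - \T^*}^2$; summing in $i$ and collecting constants delivers the required bound with constant $C_m = 2^{4m+1}(m+1)$. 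The main obstacle is precisely the bookkeeping step that replaces the uniform row-norm control by the per-slice sparsity count $\alpha d_i^-$: doing this cleanly is what saves an otherwise unavoidable factor of $\dmax$.
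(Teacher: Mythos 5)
Your proposal matches the paper's own proof: both reuse the entry-wise decomposition underlying Lemma~\ref{lemma:entrywise}, sum over $\Omega$ using the per-slice count $\sum_{\omega\in\Omega}\op{(\bU_i-\bV_i^*)_{\omega_i:}}^2\leq \alpha d_i^-\fro{\bU_i-\bV_i^*}^2$, bound $\fro{\S^*}^2\lesssim\rmax\kappa_0^2\usigma^2$, and finish with Lemma~\ref{lem:est:tl-t}. You are in fact slightly more careful than the paper's written derivation in retaining the exact dimension factor $r^*d_i/(r_i d^*)$ rather than the coarser $\rmax^{m-1}/\dmin^{m-1}$, which is what makes the $\alpha d_i^-$ count telescope cleanly; the paper glosses over this, but the underlying argument is identical.
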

\begin{proof}
First from \eqref{eq:entrywise} in Lemma~\ref{lemma:entrywise}, we have
$$|\subw{\hat\T_l - \T^*}|^2 \leq (m+1)(2\mu_1\kappa_0)^{4m}\frac{r^*}{d^*}\fro{\C_l-\S^*}^2 + (m+1)(2\mu_1\kappa_0)^{4m-4}\frac{\rmax^{m-1}}{\dmin^{m-1}}\fro{\S^*}^2\sum_{i=1}^m\op{(\bU_i - \bV_i^*)_{\omega_i:}}^2.$$
Since $\Omega$ is an $\alpha$-fraction set, we have
\begin{align}
	\fro{\calP_{\Omega}(\hat\T_l-\T^*)}^2 &= \sum_{\omega\in\Omega}\subw{\hat\T_l-\T^*}^2\notag\\
	&\leq (m+1)(2\mu_1\kappa_0)^{4m}\alpha r^*\fro{\C_l-\S^*}^2 + (m+1)(2\mu_1\kappa_0)^{4m-4}\alpha\rmax^{m-1}\fro{\S^*}^2\sum_{i=1}^m\fro{\bU_i - \bV_i^*}^2\no\\
	&\leq (m+1)(2\mu_1\kappa_0)^{4m}\alpha r^*\fro{\C_l-\S^*}^2 + (m+1)(2\mu_1\kappa_0)^{4m-4}\alpha\rmax^{m}\overline{\lambda}^2\sum_{i=1}^m\fro{\bU_i - \bV_i^*}^2\no\\
	&\leq (m+1)(2\mu_1\kappa_0)^{4m}\rmax^m\alpha \left(\fro{\C_l-\S^*}^2 +\usigma^2\sum_{i=1}^m\fro{\bU_i - \bV_i^*}^2\right)
\end{align}
Now we invoke Lemma \ref{lem:est:tl-t}, and we get
$$	\fro{\calP_{\Omega}(\hat\T_l-\T^*)}^2\leq 2(m+1)(2\mu_1\kappa_0)^{4m}\rmax^m\alpha\fro{\hat\T_l - \T^*}^2,$$
which finishes the proof of the lemma.
\end{proof}


\begin{lemma}[Estimation of $\fro{\hat\T_l - \T^*}^2$]\label{lem:est:tl-t}
	Let $\hat\T_l = \C_l\cdot (\bU_1,\cdots,\bU_m)$ be the $l$-th step value in Algorithm \ref{algo:lowrank+sparse} and let $\T^* = \S^*\cdot (\bV_1^*,\cdots, \bV_m^*)$. Suppose $\hat\T_l$ satisfies $\fro{\hat\T_l - \T^*} \leq 
	\frac{\usigma}{16m\rmax^{1/2}\kappa_0}$. Then we have the following estimation for $\fro{\hat\T_l - \T^*}^2$:
	$$\fro{\hat\T_l - \T^*}^2 \geq 0.5\fro{\C_l - \S^*}^2 + 0.5\usigma^2\sum_{i=1}^m\fro{\bU_i -\bV_i^*}^2.$$
\end{lemma}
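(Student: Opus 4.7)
My plan is to expand $\fro{\hat\T_l-\T^*}^2$ via a telescoping decomposition in the Tucker factors, lower-bound the diagonal contributions, and absorb the cross terms using the second-order smallness of Procrustes residuals combined with an a priori Wedin-type bound on each $\fro{\bU_i-\bV_i^*}$.

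Set $\bE_i:=\bU_i-\bV_i^*$ and $\bD:=\C_l-\S^*$. Multilinearity and telescoping give
\[
\hat\T_l-\T^*\;=\;\bD\cdot(\bU_1,\ldots,\bU_m)\;+\;\sum_{i=1}^m \S^*\cdot(\bV_1^*,\ldots,\bV_{i-1}^*,\bE_i,\bU_{i+1},\ldots,\bU_m)\;=:\;\A_0+\sum_{i=1}^m\A_i.
\]
Since the $\bU_k$'s and $\bV_k^*$'s have orthonormal columns, $\fro{\A_0}=\fro{\bD}$; and matricizing $\A_i$ along mode $i$ yields $\opM_i(\A_i)=\bE_i\opM_i(\S^*)\bK_i^{\top}$ with $\bK_i$ a Kronecker product of orthonormal matrices, so $\fro{\A_i}=\fro{\bE_i\opM_i(\S^*)}$. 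Because orthogonal rotations preserve singular values, $\sigma_{r_i}(\opM_i(\S^*))=\sigma_{r_i}(\opM_i(\T^*))\geq\usigma$, hence $\fro{\A_i}^2\geq\usigma^2\fro{\bE_i}^2$.

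Next I exploit the Procrustes optimality of $\bR_i$. With $\bU_i^{*\top}\bU_i=\bL_i\bS_i\bW_i^{\top}$ and $\bR_i=\bL_i\bW_i^{\top}$, the matrix $\bV_i^{*\top}\bU_i=\bW_i\bS_i\bW_i^{\top}$ is symmetric PSD with eigenvalues $s_{i,k}\in[0,1]$; consequently both $\bU_i^{\top}\bE_i$ and $\bV_i^{*\top}\bE_i$ equal $\pm(\bI-\bW_i\bS_i\bW_i^{\top})$, and
\[
\op{\bU_i^{\top}\bE_i}\;=\;\op{\bV_i^{*\top}\bE_i}\;=\;1-s_{i,\min}\;\leq\;\sum_k(1-s_{i,k})\;=\;\tfrac{1}{2}\fro{\bE_i}^2.
\]
Thus these ``Procrustes residuals'' are quadratic in $\fro{\bE_i}$. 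Using the Tucker inner-product identity $\langle \calA\cdot(\bM_1,\ldots,\bM_m),\calB\cdot(\bN_1,\ldots,\bN_m)\rangle=\langle\calA,\calB\cdot(\bM_1^{\top}\bN_1,\ldots,\bM_m^{\top}\bN_m)\rangle$, each cross term $\langle\A_0,\A_i\rangle$ factors through a residual at mode $i$, and each $\langle\A_i,\A_j\rangle$ (for $i<j$) factors through residuals at both modes $i$ and $j$; all other modes contract to identities or to matrices $\bU_k^{\top}\bV_k^*$ of operator norm $\leq 1$. Combined with the tensor bound $\fro{\calB\cdot(\bM_1,\ldots,\bM_m)}\leq\fro{\calB}\prod_k\op{\bM_k}$ and Cauchy--Schwarz this gives
\[
|\langle\A_0,\A_i\rangle|\;\leq\;\tfrac12\fro{\bD}\fro{\S^*}\fro{\bE_i}^2,\qquad |\langle\A_i,\A_j\rangle|\;\leq\;\tfrac14\fro{\S^*}^2\fro{\bE_i}^2\fro{\bE_j}^2.
\]

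Finally, Wedin's $\sin\Theta$ theorem applied to $\opM_i(\hat\T_l)$ versus $\opM_i(\T^*)$---using $\lambda_{r_i+1}(\opM_i(\hat\T_l))=0$ since $\hat\T_l\in\MM_{\br}$---gives $\fro{\bE_i}\leq \sqrt{2}\,\fro{\hat\T_l-\T^*}/\usigma$, and the hypothesis $\fro{\hat\T_l-\T^*}\leq\usigma/(16m\rmax^{1/2}\kappa_0)$ then yields $\fro{\bE_i}^2\leq 1/(128 m^2\rmax\kappa_0^2)$; together with $\fro{\S^*}^2\leq\rmax\kappa_0^2\usigma^2$ (from $\fro{\T^*}^2\leq \rmin\maxl^2$) and AM--GM, each cross-term bound can be split into a small fraction of $\fro{\bD}^2+\usigma^2\fro{\bE_i}^2$. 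The extra $1/(m^2\rmax\kappa_0^2)$ factor from $\fro{\bE_i}^2$ comfortably beats the $O(m^2)$ cross-term count, so summing everything delivers $\fro{\hat\T_l-\T^*}^2\geq\tfrac12\fro{\bD}^2+\tfrac12\usigma^2\sum_i\fro{\bE_i}^2$. The main obstacle is this last bookkeeping step: a priori the cross terms look first-order in $\fro{\bE_i}$, and it is only the Procrustes symmetry that reduces them to second order; tracking the exact powers of $m,\rmax,\kappa_0$ so that the Wedin bound combined with the hypothesized scale $\usigma/(16m\rmax^{1/2}\kappa_0)$ is \emph{just} enough to absorb all $O(m^2)$ cross terms with diagonal constants $\geq 1/2$ is the most delicate part.
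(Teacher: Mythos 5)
Your proposal follows essentially the same strategy as the paper's proof: the same telescoping Tucker decomposition into $\A_0+\sum_i\A_i$, the same observation that the Procrustes alignment makes the cross-term contractions $\bU_i^{\top}\bE_i$, $\bV_i^{*\top}\bE_i$ quadratic in $\fro{\bE_i}$ (the paper's equation \eqref{eq:ui-vi}), the same Wedin a priori bound, and the same AM--GM absorption into the diagonal terms. The only cosmetic differences are that you use operator-norm bounds $\op{\bU_i^\top\bE_i}\leq\tfrac12\fro{\bE_i}^2$ where the paper uses the Frobenius-norm version, and you invoke the Tucker inner-product identity directly rather than going through explicit matricizations as in the paper's display \eqref{inp:2}; both are equivalent calculations and your constants ($\frac{\sqrt2}{32}$ for the $\A_0$--$\A_i$ terms, $O(m^{-1})$ for the $\A_i$--$\A_j$ terms) comfortably clear the required $1/2$ threshold, so the bookkeeping you flag as delicate does go through.
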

\begin{proof}
	First we have
	\begin{align}
		\hat\T_l - \T^* &= (\C_l-\S^*)\cdot (\bU_1,\cdots, \bU_m) + \sum_{i=1}^m \S^*\cdot (\bV_1^*,\ldots,\bV_{i-1}^*,\bU_i - \bV_i^*,\bU_{i+1},\ldots,\bU_m)
	\end{align}

	Notice that we have
	\begin{align}
		\fro{\S^*\cdot (\bV_1^*,\ldots,\bV_{i-1}^*,\bU_i - \bV_i^*,\bU_{i+1},\ldots,\bU_m)}^2 = \fro{(\bU_i - \bV_i^*)\opM_i(\S^*)}^2
	\end{align}
	Denote $\X_i = \S^*\cdot (\bV_1^*,\ldots,\bV_{i-1}^*,\bU_i - \bV_i^*,\bU_{i+1},\ldots,\bU_m)$, then we have
	\begin{align}\label{tl-t}
		\fro{\hat\T_l - \T^*}^2 &=\fro{\C_l - \S^*}^2 + \sum_{i=1}^m\fro{(\bU_i-\bV_i^*)\opM_i(\S^*)}^2+ 2\sum_{i<j} \inp{\X_i}{\X_j}+ 2\sum_{i=1}^m \inp{(\C_l-\S^*)\cdot (\bU_1,\cdots, \bU_m)}{\X_i} \no\\
		&\geq \fro{\C_l - \S^*}^2 + \sum_{i=1}^m
		\usigma^2\fro{\bU_i-\bV_i^*}^2+ 2\sum_{i<j} \inp{\X_i}{\X_j} + 2\sum_{i=1}^m \inp{(\C_l-\S^*)\cdot (\bU_1,\cdots, \bU_m)}{\X_i} \no\\
	\end{align}
	Notice that $\opM_i(\X_i) = (\bU_i-\bV_i^*)\opM_i(\S^*)\big(\bU_m\otimes\bU_{i+1}\otimes\bV_{i-1}\otimes\bV_{1}\big)^{\top}$. So we have the estimation of $|\inp{(\C_l-\S^*)\cdot (\bU_1,\cdots, \bU_m)}{\X_i}|$ is as follows:
	\begin{align}\label{inp1est}
		|\inp{(\C_l-\S^*)\cdot (\bU_1,\cdots, \bU_m)}{\X_i}|&= |\inp{\opM_{i}\big((\C_l-\S^*)\cdot (\bU_1,\cdots, \bU_m)\big)}{\opM_{i}(\X_i)}|\no\\
		&\leq \op{(\bU_i-\bV_i^*)^{\top}\bU_i}\fro{\C_l-\S^*}\fro{\S^*}\no\\
		&\leq \sqrt{\rmax}\bsigma \fro{\bU_i^{\top}(\bU_i-\bV_i^*)} \fro{\C_l-\S^*}
	\end{align}
	Now we estimate $\fro{\bU_i^{\top}(\bU_i-\bV_i^*)}$ by plugging in the closed form of $\bV_i^*$ as in \eqref{eq:Ri}
	\begin{align}\label{eq:ui-vi}
		\fro{\bU_i^{\top}(\bU_i-\bV_i^*)} 
		= \fro{\bI - \bS_i}\leq \fro{\bI - \bS_i^2}
		= \fro{\bU_{i\perp}^{*T}\bU_i}^2
		\leq \fro{\bU_i - \bU_i^*\bR_i}^2
	\end{align}
	From Wedin' sin$\Theta$ Theorem, we have for $i\in[m]$
	\begin{align}\label{uj-vj}
		\fro{\bU_i- \bV_i^*} \leq \fro{\bU_i- \bU_i^*}&\leq \frac{\sqrt{2}\fro{\hat\T_l-\T^*}}{\usigma - \op{\hat\T_l - \T^*}}\leq \frac{2\sqrt{2}\fro{\hat\T_l-\T^*}}{\usigma}\leq \frac{1}{4m\rmax^{1/2}\kappa_0}
	\end{align}
	where the second last inequality is from $\op{\hat\T_l - \T^*} \leq {\usigma/2}$ and the last inequality is from 
	$\fro{\hat\T_l - \T^*} \leq \frac{\usigma}{16m\rmax^{1/2}\kappa_0}$. Then from \eqref{inp1est} and \eqref{uj-vj}, we have
	\begin{align}\label{inp:1}
		|\inp{(\C_l-\S^*)\cdot (\bU_1,\cdots, \bU_m)}{\X_i}|\leq \frac{1}{8m^2}\fro{\C_l-\S^*}^2 + \frac{1}{8}\usigma^2\fro{\bU_i- \bV_i^*}^2
	\end{align}
	The estimation of $|\inp{\X_i}{\X_j}| (i<j)$ is as follows. From $\eqref{uj-vj}$, we have
	\begin{align}\label{inp:2}
		|\inp{\X_i}{\X_j}| &= |\inp{\opM_i(\S^*)\bM_{i,j}}{(\bU_i-\bV_i^*)^{\top}\bV_i^*\opM_i(\S^*)}|\no\\
		&\leq \bsigma\fro{\S^*}\op{\bM_{i,j}} \fro{(\bU_i-\bV_i^*)^{\top}\bV_i^*}\no\\
		&\leq \bsigma\fro{\S^*}\fro{(\bU_i-\bV_i^*)^{\top}\bV_i^*}\fro{(\bU_j-\bV_j^*)^{\top}\bV_j^*}\no\\
		&\overset{(a)}{\leq} \sqrt{\rmax}\bsigma^2 \fro{\bU_i-\bV_i^*}^2\fro{\bU_j-\bV_j^*}^2\no\\
		&\overset{(b)}{\leq} \frac{1}{16m^2}\usigma^2\fro{\bU_i-\bV_i^*}\fro{\bU_j-\bV_j^*}\no\\
		&\leq \frac{1}{32m^2}\usigma^2\fro{\bU_i-\bV_i^*}^2 + \frac{1}{32m^2}\usigma^2\fro{\bU_j-\bV_j^*}^2
	\end{align}
where $\bM_{i,j} = \bI\otimes\ldots\otimes\bI\otimes \bU_j^{\top}(\bU_j - \bV_j^*)\otimes\bU_{j-1}^{\top}\bV_{j-1}^*\otimes\ldots\bU_{i+1}^{\top}\bV_{i+1}^*\otimes\bI\otimes\ldots\otimes\bI$, $(a)$ holds because of \eqref{eq:ui-vi}, $(b)$ holds because of \eqref{uj-vj}.
	
	As a result of \eqref{tl-t}, \eqref{inp:1} and \eqref{inp:2}, we have
	\begin{align*}
		\fro{\hat\T_l - \T^*}^2 \geq 0.5\fro{\C_l - \S^*}^2 + 0.5\usigma^2\sum_{i=1}^m\fro{\bU_i -\bV_i^*}^2
	\end{align*}
	which finishes the proof of the lemma.
\end{proof}

\end{document}